\documentclass[a4paper,UKenglish]{lipics-v2016}
 
\usepackage{microtype}



\bibliographystyle{plainurl}

\def\mQ{{\rm \mathbb{Q}}}
\def\mP{\mathbb{P}}
\def\Q{{\rm \mathbb{Q}}}
\newcommand{\Po}{\mathbb{P}}

\def\prcl{\mathbf{1}}

\def\csp{{\rm CSP}}
\def\Csp{{\rm CSP}}
\def\CSP{{\rm CSP}}

\def\Sym{{\rm Sym}}
\def\Aut{{\rm Aut}}
\def\End{{\rm End}}
\def\Pol{{\rm Pol}}
\def\pol{{\rm Pol}}

\def\Low{{\rm Low}}

\def\Abv{{\rm Abv}}

\def\cycl{{\rm Cycl}}
\def\sept{{\rm Sep}}

\def\Par{{\rm Par}}

\def\Betw{{\rm Betw}}
\def\Cycl{{\rm Cycl}}
\def\Sept{{\rm Sep}}
\def\lowbot{ {\bot \atop <} }

\newcommand{\fraisse}{Fra\"iss\'{e}}
\newcommand{\botfall}{$\bot$-falling}

\theoremstyle{plain}{

\newtheorem{proposition}[theorem]{Proposition}
\newtheorem{observation}[theorem]{Observation}
\newtheorem{notation}[theorem]{Notation}

}

\title{A complexity dichotomy for poset constraint satisfaction} 
\titlerunning{A complexity dichotomy for poset constraint satisfaction} 

\author[1]{Michael Kompatscher\footnote{The first author as been funded through projects P27600 and I836-N26 of the Austrian Science Fund (FWF).}}
\author[2]{Trung Van Pham\footnote{The second author has received funding from the European Research Council under the European Community's Seventh Framework Programme (FP7/2007-2013 Grant Agreement no. 257039) and project P27600 of the Austrian Science Fund (FWF).}}
\affil[1]{Theory and Logic Group, Technische Universit\"{a}t Wien, Austria\\
  \texttt{michael@logic.at}}
\affil[2]{Theory and Logic Group, Technische Universit\"{a}t Wien, Austria\\
  \texttt{pvtrung@logic.at}}
\authorrunning{M. Kompatscher and T. V. Pham} 

\Copyright{Michael Kompatscher and Trung Van Pham}

\subjclass{F.2.2 Nonnumerical Algorithms and Problems, F.4.1 Mathematical Logic}
\keywords{Constraint Satisfaction, Random Partial Order, Computational Complexity, Universal Algebra, Ramsey Theory}

\EventEditors{John Q. Open and Joan R. Acces}
\EventNoEds{2}
\EventLongTitle{42nd Conference on Very Important Topics (CVIT 2016)}
\EventShortTitle{CVIT 2016}
\EventAcronym{CVIT}
\EventYear{2016}
\EventDate{December 24--27, 2016}
\EventLocation{Little Whinging, United Kingdom}
\EventLogo{}
\SeriesVolume{42}
\ArticleNo{23}

\begin{document}

\maketitle
\begin{abstract}
In this paper we determine the complexity of a broad class of problems that extends the temporal constraint satisfaction problems in \cite{Temp-SAT}. To be more precise we study the problems Poset-SAT($\Phi$), where $\Phi$ is a given set of quantifier-free $\leq$-formulas. An instance of Poset-SAT($\Phi$) consists of finitely many variables $x_1,\ldots,x_n$ and formulas $\phi_i(x_{i_1},\ldots,x_{i_k})$ with $\phi_i \in \Phi$; the question is whether this input is satisfied by any partial order on $x_1,\ldots,x_n$ or not. We show that every such problem is NP-complete or can be solved in polynomial time, depending on $\Phi$.

All Poset-SAT problems can be formalized as constraint satisfaction problems on reducts of the random partial order. We use model-theoretic concepts and techniques from universal algebra to study these reducts. In the course of this analysis we establish a dichotomy that we believe is of independent interest in universal algebra and model theory. \end{abstract}



\section{Introduction}

Reasoning about temporal knowledge is a common task in various areas of computer science, for example Artificial Intelligence, Scheduling, Computational Linguistics and Operations Research. In many application temporal constraints are expressed as collections of relations between time points or time intervals. A typical computational problem is then to determine whether such a collection is satisfiable or not.

A lot of research in this area concerns only linear models of time. In particular there exists a complete classification of all satisfiability problems for linear temporal constraints in \cite{Temp-SAT}. However, it has been observed many times that more complex time models are helpful, for instance in the analysis of concurrent and distributed systems or certain planning domains. A possible generalizations is to model time by partial orders (e.g. in \cite{lamport1986mutual}, \cite{anger1989lamport}). 

Some cases of the arising satisfiability problems have already been studied in \cite{BJ-pointalgebras}. We will give a complete classification in this paper. Speaking more formally, let $\Phi$ be a set of quantifier-free formulas in the language consisting of a binary relation symbol $\leq$. Then Poset-SAT($\Phi$) asks if constraint expressible in $\Phi$ are satisfiable by a partial order or not. We are going to give a full complexity classification of problems of the type Poset-SAT($\Phi$). In particular we are going to show that every such problem is NP-complete or solvable in polynomial time.

The proof of our result is based on a variety of methods and results. A first step is that we give a description of every Poset-SAT problem as \textit{constraint satisfaction problem} over a countably infinite domain, where the constraint relations are first-order definable over the \textit{random partial order}, a well-known structure in model theory.

A helpful result has already been established in the form of the classification of the closed supergroups of the automorphism group of the random partial order in \cite{Poset-Reducts}. We extend this analysis to closed transformation monoids. Informally, then our result implies that we can identify three types of Poset-SAT problems: (1) trivial ones (i.e., if there is a solution, there is a constant solution), (2) problems that can be reduced to the problems studied in \cite{Temp-SAT} and (3) CSPs on templates that are model-complete cores. 

So we only have to study problems in the third class. The basic method to proceed then is the \textit{universal-algebraic approach} to constraint satisfaction problems. Here, one studies certain sets of operations (known as \textit{polymorphism clones}) instead of analysing the constraints themselves. An important tool to deal with polymorphisms over infinite domains is Ramsey theory. We need a Ramsey result for partially ordered sets from \cite{ramsey-poset} for proving that polymorphisms behave regularly on large parts of their domain. This allows us to perform a more simplified combinatorial analysis.

This paper has the following structure: In Section \ref{sect:preliminaries} we introduce some basic notation and show how every Poset-SAT problem is equal to a constraint satisfaction problem on a reduct of the random partial order. In Section \ref{sect:univalg} we give a brief introduction to the universal-algebraic approach and the methods from Ramsey theory that we need for our classification. Section \ref{sect:cores} contains a preclassification, by the analysis of closed transformation monoids containing the automorphism group of the random partial order. This is followed by the actual complexity analysis using the universal algebraic approach. In Section \ref{sect:mainresult} we summarize our results to show the complexity dichotomy for Poset-SAT problems.

We further show in Section \ref{sect:mainresult} that an even stronger dichotomy holds, regarding the question whether certain reducts of the random partial order allow pp-interpretations of all finite structures or not (cf. the discussion in \cite{topological-birkhoff} and \cite{BP-siggers}). In this respect the situation is similar to previous classifications for CSPs where the constraints are first-order definable over the rational order \cite{Temp-SAT}, the random graph \cite{BP-graphsat}, or the homogeneous binary branching C-relation \cite{BPPPhyloCSP}.
\section{Preliminaries} \label{sect:preliminaries}
In this section we fix some standard terminology and notation. When working with relational structures it is often convenient not to distinguish between a relation and its relational symbol. We will also do so on several occasions, but this should never cause any confusion.

Let $\leq$ always denote a partial order relation, i.e. a binary relation that is reflexive, antisymmetric and transitive. Let $<$ be the corresponding strict order defined by $x \leq y \land x \neq y$. Let $x \bot y$ denote the incomparability relation defined by $\neg(x \leq y) \land \neg(y \leq x)$. Sometimes we will write $x < y_1 \cdots y_n$ for the conjunction of the formulas $x < y_i$ for all $1 \leq i \leq n$. Similarly we will use $x \bot y_1\cdots y_n$ if $x \bot y_i$ holds for all $1 \leq i \leq n$.

\subsection{Poset-SAT($\Phi$) }

Let $\phi(x_1,\ldots,x_n)$ be a formula in the language that only consists of the binary relation symbol $\leq$. Then we say that $\phi(x_1,\ldots,x_n)$ is \textit{satisfiable} if there exists a partial order $(A;\leq)$ with elements $a_1,\ldots, a_n$ such that $\phi(a_1,\ldots,a_n)$ holds in $(A;\leq)$. In this case we call $(A;\leq)$ a \textit{solution} to $\phi$.

Let $\Phi = \{\phi_1,\phi_2,\ldots, \phi_k \}$ be a finite set of quantifier free $\leq$-formulas. Then the poset satisfiability problem Poset-SAT($\Phi$) is the following computational problem:\\

\noindent
\fbox{
  \parbox{\textwidth}{
\textbf{Poset-SAT($\Phi$):}\\
\textsc{Instance}: A finite set of variables $\{ x_1,\ldots,x_n \}$ and a finite set of formulas $\Psi$ that is obtained from $\phi \in \Phi$, by substituting the variables of $\phi$ by variables from $\{ x_1,\ldots,x_n \}$\\
\textsc{Question}: Is there a partial order $(A;\leq)$ that is a solution to all formulas in $\Psi$?
  }
}\ \newline

\begin{example} An instance of Poset-SAT($\{<\}$) is given by variables in $\{x_1,\ldots,x_n\}$ and formulas in $\Psi$ of the form $x_i < x_j$. The question is, if there is partial order on $\{x_1,\ldots,x_n\}$ that satisfies all formulas $x_i < x_j$ in $\Psi$. It is easy to see that such a partial order always exists if $\Psi$ does not contain formulas $x_{i_1} < x_{i_2}, \ldots$, $x_{i_{n-1}} < x_{i_n}$, $x_{i_n} < x_{i_1}$. The existence of such ``cycles'' in $\Psi$ can be verified in polynomial time, thus Poset-SAT($\{<\}$) is tractable.
\end{example}

Every partial order can be extended to a total order. Therefore there is a solution to $I$ if and only if there is a totally ordered solution to $I$. So Poset-SAT($\{<\}$) is the same problem as the corresponding \textit{temporal constraint satisfaction problem} Temp-SAT($\{<\}$), i.e. the question if there is a total order that is a solution to the input.

\begin{example}We define the betweenness relation $\Betw(x,y,z): = (x < y \land y < z) \lor (z < y \land y < x)$. Again an instance of Poset-SAT($\{\Betw\}$) is accepted if and only if it is accepted by Temp-SAT($\{\Betw \}$). It is know that Temp-SAT($\{\Betw\}$) is NP-complete, so Poset-SAT($\{\Betw\}$) is NP-complete. We remark that if every relation in $\Phi$ is positively definable in $<$ that then Poset-SAT($\Phi$) is the same problem as Temp-SAT($\Phi$). All temporal constraint satisfaction problems are classified in \cite{Temp-SAT}.
\end{example}

\begin{example}Let every formula $\phi(x_1,\ldots,x_n)$ in $\Phi$ be a $\leq$-Horn formula, that is a formula of the form
\begin{align*}
x_{i_1} \leq x_{j_1} \land x_{i_2} \leq x_{j_2} \land \cdots \land x_{i_k} \leq x_{j_k} &\to x_{i_{k+1}} \leq x_{j_{k+1}} \text{ or} \\
x_{i_1} \leq x_{j_1} \land x_{i_2} \leq x_{j_2} \land \cdots \land x_{i_k} \leq x_{j_k} &\to \text{ 'false'}
\end{align*}
All problems of this ``Horn-type'' are tractable. One can see this by giving an algorithm based on the resolution rule. We discuss this class of tractable problems in Section \ref{sect:horn}.
\end{example}

\begin{example}We define $\Cycl(x,y,z)$ by
\begin{align*}
\Cycl(x,y,z) := & (x < y \land y < z) \lor (z < x \land x < y) \lor (y < z \land z < x) \lor \\ 
&(x < y \land x \bot z \land y \bot z) \lor (y < z \land x \bot y \land x \bot z) \lor (z < x \land y \bot z \land y \bot x).
\end{align*}
We will see in Section \ref{sect:cycl} that Poset-SAT($\{\Cycl\}$) is an NP-complete problem.
\end{example}

Every problem Poset-SAT($\Phi$) is clearly in NP. We can ``guess'' a partial order on the variables $x_1,\ldots,x_n$ and then checks in polynomial time if this partial order is a solution to the input formulas in $\Psi$ or not. The main result of this paper is to give a full classification of the computational complexity, showing the following dichotomy:

\begin{theorem} \label{theorem:main}
Let $\Phi$ be a finite set of quantifier-free $\leq$-formulas. The problem Poset-SAT($\Phi$) is in P or NP-complete.
\end{theorem}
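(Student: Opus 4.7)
The plan is to encode every Poset-SAT($\Phi$) problem as a CSP on a first-order reduct of the random partial order $\mathbb{P}$ (the Fra\"iss\'e limit of the class of finite posets). Each quantifier-free $\leq$-formula $\phi \in \Phi$ defines, by homogeneity of $\mathbb{P}$, a union of orbits of $\Aut(\mathbb{P})$ on tuples; these are precisely the relations that can appear in the template $\Gamma$. Universality of $\mathbb{P}$ then gives Poset-SAT($\Phi$)\,$=$\,CSP($\Gamma$), reducing the problem to a complexity classification of reducts of $\mathbb{P}$.

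First I would execute a preclassification based on the closed transformation monoid $\End(\Gamma)$, extending the classification of closed supergroups of $\Aut(\mathbb{P})$ from \cite{Poset-Reducts} to monoids. The aim is to show that every reduct falls into exactly one of three categories: (1) \emph{trivial} templates, whose satisfiable instances always have a constant solution and thus admit a straightforward polynomial-time algorithm; (2) templates whose endomorphism monoid generates, up to first-order interdefinability, a structure on which the problem collapses to a temporal CSP on $(\mathbb{Q};<)$, so that tractability or NP-hardness is inherited from the dichotomy of \cite{Temp-SAT}; and (3) templates which are \emph{model-complete cores}, necessarily having $\Aut(\mathbb{P})$ as the full automorphism group. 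Cases (1) and (2) are settled in a paragraph each; all genuine difficulty is concentrated in case (3).

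For case (3) I would deploy the universal-algebraic approach: the complexity of CSP($\Gamma$) is determined by the polymorphism clone $\Pol(\Gamma)$. The crucial tool is the Ramsey property of the class of linearly ordered finite posets \cite{ramsey-poset}, which, combined with a standard canonisation argument, allows one to replace any polymorphism of $\Gamma$ by a ``canonical'' one whose behaviour on tuples depends only on the $\Aut(\mathbb{P},\prec)$-orbit of their input for a fixed generic linear extension $\prec$. Since there are only finitely many behaviour types of canonical $k$-ary functions on $\mathbb{P}$, the analysis becomes finite. For each canonical type I would either exhibit a polynomial-time algorithm solving CSP($\Gamma$) (e.g.\ a Horn-style resolution in the spirit of Section~\ref{sect:horn}, or a reduction to an already-classified temporal CSP), or extract from the absence of a Siggers/cyclic-like canonical polymorphism a primitive positive interpretation of all finite structures, yielding NP-hardness via the standard reduction from $3$-Colouring.

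The main obstacle will be the case analysis inside the model-complete cores: once the catalogue of canonical polymorphisms on $\mathbb{P}$ has been written down, each non-trivial behaviour type must be shown \emph{either} to generate enough algebraic structure (a cyclic term modulo outer automorphisms, say) to drive a tractable algorithm, \emph{or} to be so poor that $\Gamma$ pp-interprets a Boolean clone witnessing NP-hardness. Matching the concrete combinatorics of the orbits of $\Aut(\mathbb{P})$ (involving $<$, $>$ and $\bot$) against the algebraic tractability conditions, and in particular handling the $\Cycl$-type relations where neither the temporal reduction nor the constant-collapse applies, is where the bulk of the work lies.
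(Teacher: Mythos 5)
Your high-level plan coincides with the paper's architecture: reformulate Poset-SAT($\Phi$) as $\csp(\Gamma)$ for a reduct $\Gamma$ of the random poset, preclassify via closed monoids $\End(\Gamma)\supseteq\Aut(\Po)$ extending the group classification of \cite{Poset-Reducts}, and attack the model-complete cores with canonical polymorphisms obtained from the Ramsey property of linearly ordered posets. However, there are two genuine problems. First, your trichotomy is structurally wrong where it matters: you assert that the model-complete cores ``necessarily have $\Aut(\Po)$ as the full automorphism group.'' The monoid preclassification actually leaves \emph{four} core cases, where $\End(\Gamma)$ is the closure of $\Aut(\Po)$, $\langle\updownarrow\rangle$, $\langle\circlearrowright\rangle$ or $\langle\updownarrow,\circlearrowright\rangle$; the latter three are cores in which $<$ and $\bot$ need not be pp-definable, and they require their own hardness arguments (in the paper: $\Betw$, $\Cycl$, $\Sept$ are pp-definable there, and after adding finitely many constants one recovers a pp-definition of the hard relation $\Low$, using that adding constants to a core does not change the complexity). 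Your outline has no place for these cases, and your later remark about ``$\Cycl$-type relations'' contradicts your own claim about the automorphism group.

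Second, for the core case with $\Aut(\Po)$ itself, the decisive content is missing rather than merely deferred. Saying ``for each canonical behaviour type, either exhibit a polynomial-time algorithm or extract a pp-interpretation of all finite structures'' restates the goal; the proof requires identifying the actual tractability border. In the paper this is: $\csp(\Gamma)$ is NP-hard exactly when the ternary relation $\Low(x,y,z)$ is pp-definable (hardness via a pp-interpretation of $\rm 1IN3$ in $(P;\Low,a,b)$ with two incomparable constants), and otherwise a nontrivial combinatorial analysis of binary canonical functions (dominated versus non-dominated behaviour) produces one of the two concrete polymorphisms $e_<$ or $e_\leq$, whose presence yields tractability through a Horn-definability/resolution argument. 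Without naming this border relation, the witnessing binary polymorphisms, and the argument that non-dominated canonical binary functions generate $e_<$ or $e_\leq$, the case analysis you describe cannot be carried out, so the proposal does not yet constitute a proof of the dichotomy.
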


\subsection{Poset-SAT($\Phi$) as CSP}

In this section we are going to show that every Poset-SAT problem can be translated into a \textit{constraint satisfaction problem} or \textit{CSP} over an infinite domain. This reformulation will allow us the use of universal-algebraic and Ramsey-theoretical methods.

Let $\Gamma$ be a relational structure with signature $\tau = \{R_1,R_2,\ldots\}$, i.e. $ \Gamma = (D;R_1^{\Gamma},R_2^{\Gamma},\ldots)$ where $D$ is the domain of $\Gamma$ and $R_i^{\Gamma} \subset D^{k_i}$ is a relation of arity $k_i$ over $D$. When $\Delta$ and $\Gamma$ are two $\tau$-structures, then a \textit{homomorphism} from $\Delta$ to $\Gamma$ is a mapping $h$ from the domain of $\Delta$ to the domain of $\Gamma$ such that for all $R \in \tau$ and for all $(x_1,\ldots,x_j) \in R^{\Delta}$ we have $h(x_1,\ldots,x_j) \in R^{\Gamma}$. Injective homomorphisms that also preserve the complement of each relation are called \textit{embeddings}. Bijective embeddings from $\Delta$ onto itself are called \textit{automorphisms} of $\Delta$.

Suppose that the signature $\tau$ of $\Gamma$ is finite. Then the constraint satisfaction problem $\csp(\Gamma)$ is the following decision problem:\\

\noindent
\fbox{
  \parbox{\textwidth}{
\textbf{\csp($\Gamma$):}\\
\textsc{Instance}: A finite $\tau$-structure $\Delta$\\
\textsc{Question}: Is there a homomorphism from $\Delta$ to $\Gamma$?
  }
}\ \newline

We say that $\Gamma$ is the \textit{template} of the constraint satisfaction problem $\csp(\Gamma)$.\\ 

We now need some terminology from model theory. Let $\Delta$ be a structure with domain $D$. We say a relation $R \subset D^n$ is definable in $\Delta$, if there is a first order formula $\phi$ in the signature of $\Delta$ with free variables $(x_1,\ldots,x_n)$ such that $(a_1,\ldots,a_n) \in R$ if and only if $\phi(a_1,\ldots,a_n)$ holds in $\Delta$. A relational structure $\Gamma$ with the same domain as $\Delta$ is called a \textit{reduct} of $\Delta$ if all relations of $\Gamma$ are definable in $\Delta$.

A structure is called \textit{homogeneous} if every isomorphism between finitely generated substructures can be extended to an automorphism of the whole structure. 

Let $\mathcal C$ be a class of relational structures of the same signature. We say $\mathcal C$ has the \textit{amalgamation property} if for every $A,B,C \in \mathcal C$ and all embeddings $u:A \to B$ and $v: A \to C$ there is a $D \in \mathcal C$ with embeddings $u': B \to D$, $v': C \to D$ such that $u' \circ u = v' \circ v$. A class $\mathcal C$ is called an \textit{amalgamation class} if it has the amalgamation property and is closed under isomorphism and taking induced substructures.

\begin{theorem}[\fraisse, see Theorem 7.1.2 in \cite{Hodges}] \label{theorem:fraisse}
Let $\mathcal C$ be an amalgamation class that has only countably many non-isomorphic members. Then there is a countable homogeneous structure $\Gamma$ such that $\mathcal C$ is the \textit{age} of $\Gamma$, i.e. the class of all structures that embeds into $\Gamma$. The structure $\Gamma$, which is unique up to isomorphism, is called the \fraisse{} limit of $\Gamma$. \hfill $\square$
\end{theorem}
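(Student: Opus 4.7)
The plan is to construct $\Gamma$ as the union of a countable chain in $\mathcal{C}$ obtained by iterated amalgamation. Since $\mathcal{C}$ has only countably many isomorphism types of finitely generated members, I would fix an enumeration of all pairs $(A,B)$ with $A,B \in \mathcal{C}$ and $A$ a substructure of $B$ (up to isomorphism). I would then build an increasing chain $\Gamma_0 \subseteq \Gamma_1 \subseteq \cdots$ of members of $\mathcal{C}$ by induction. At stage $n+1$, guided by a dovetailing book-keeping scheme, I would select a pair $(A,B)$ from the enumeration together with an embedding $e : A \to \Gamma_n$ that has not yet been handled, and apply the amalgamation property to the span $\Gamma_n \hookleftarrow_e A \hookrightarrow B$ to obtain $\Gamma_{n+1} \in \mathcal{C}$ that extends $\Gamma_n$ and contains a copy of $B$ realizing the given embedding $e$.

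Setting $\Gamma := \bigcup_n \Gamma_n$, the book-keeping ensures the following \emph{extension property}: for every finitely generated $A \subseteq \Gamma$ and every embedding $A \hookrightarrow B$ with $B \in \mathcal{C}$, there is an embedding $B \hookrightarrow \Gamma$ extending the inclusion $A \hookrightarrow \Gamma$. Because $\mathcal{C}$ is closed under isomorphism and induced substructures, every finitely generated substructure of $\Gamma$ lies in $\mathcal{C}$, so the age of $\Gamma$ is contained in $\mathcal{C}$; conversely, starting from the empty substructure (or applying joint embedding, which is a consequence of amalgamation with the initial object) every $B \in \mathcal{C}$ embeds into $\Gamma$. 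Thus the age of $\Gamma$ equals $\mathcal{C}$.

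For homogeneity I would use a back-and-forth argument. Given an isomorphism $f : A \to A'$ between finitely generated substructures of $\Gamma$, I fix an enumeration $g_0, g_1, \ldots$ of the elements of $\Gamma$ and extend $f$ one element at a time: at even steps I force the next $g_i$ into the domain by appealing to the extension property to find a suitable image in $\Gamma$, and at odd steps I force $g_i$ into the range by a symmetric argument. The union of these partial isomorphisms is an automorphism of $\Gamma$ extending $f$. Uniqueness up to isomorphism of the limit follows from the very same back-and-forth technique, run between any two structures satisfying the conclusion.

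The main obstacle is the book-keeping at the amalgamation stage: one must arrange the enumeration so that every pair (embedding into some $\Gamma_n$, extension in $\mathcal{C}$) is eventually addressed, while keeping each $\Gamma_{n+1}$ in $\mathcal{C}$. This is exactly where the amalgamation property (rather than the weaker joint embedding) is essential, since one needs to glue the new piece $B$ to the growing structure $\Gamma_n$ \emph{over} the common substructure $A$, and not merely disjointly.
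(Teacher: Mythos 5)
Your proof is correct: it is the standard Fra\"iss\'e construction (iterated amalgamation along a dovetailed enumeration of pairs and embeddings, the resulting extension property, and back-and-forth both for homogeneity and for uniqueness), which is precisely the argument behind Theorem 7.1.2 in Hodges that the paper cites without giving any proof of its own. Your parenthetical appeal to amalgamation over the empty structure to obtain joint embedding is legitimate in the paper's setting (relational signature, $\mathcal{C}$ closed under induced substructures, e.g.\ finite posets), which is the one point where the statement in full generality would otherwise require the joint embedding property as a separate hypothesis.
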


Since the class of all finite partial orders forms an amalgamation class, it has a \fraisse{} limit which is called the \textit{random partial order} or \textit{random poset} $\Po = (P;\leq)$. The random poset is a well-studied object in model theory that has a lot of helpful properties. As a homogenous structure in a finite relational language $\Po$ has \textit{quantifier-elimination}, i.e. every formula in $\Po$ is equivalent to a quantifier-free formula. Also it is \textit{$\omega$-categorical}, i.e. all countable structures that satisfy the same first order formulas as $\Po$ are isomorphic to $\Po$. Reducts of $\omega$-categorical structures are $\omega$-categorical as well. Therefore $\Po$ and all of its reducts are $\omega$-categorical. For further model-theoretic background on $\omega$-categorical and homogeneous structures we refer to \cite{Hodges}. \\


Now let $\Phi = \{\phi_1,\ldots, \phi_n\}$ be a finite set of quantifier free $\leq$-formulas. We associate with $\Phi$ the $\tau$-structure $\Po_\Phi = (P;R_1,\ldots,R_n)$ that we obtain by setting $(a_1,\ldots,a_k) \in R_i$ if and only if $\phi_i(a_1,\ldots,a_k)$ holds in $\Po$. We claim that $\csp(\Gamma_\Phi)$ and Poset-SAT($\Phi$) are essentially the same problem. Let $\Psi$ be an instance of Poset-SAT($\Phi$) with the variables $x_1,\ldots,x_n$. Then we define a structure $\Delta_{\Psi}$ with domain $\{x_1,\ldots,x_n\}$ as follows: The relation $R_i^{\Delta}$ contains the tuple $(x_{i_1},\ldots,x_{i_{k_i}})$ if and only if the instance $\Psi$ contains the formula $\phi_i(x_{i_1},\ldots,x_{i_{k_i}})$. It is quite straightforward to see that $\Psi$ is accepted by Poset-SAT($\Phi$) if and only if $\Delta_{\Psi}$ homomorphically maps to $\Po_\Phi$.

Conversely let $\Gamma = (P;R_1,\ldots,R_n)$ be a reduct of $\Po$. Since $\Po$ has quantifier-elimination, every relation $R_i$ in $\Gamma$ has a quantifier-free definition $\phi_i$ in $\Po$. With that in mind, we study the problem Poset-SAT($\{\phi_1,\ldots,\phi_n\}$). Let $\Delta$ be an instance of $\csp(\Gamma)$. Then let $\Psi_\Delta$ be an instance of Poset-SAT($\Phi$) where the variables are the points in $\Delta$ and $\phi_i(x_1,\ldots,x_k) \in \Psi$ if and only if $(x_1,\ldots,x_k) \in R_i^{\Delta}$. Then $\Delta$ is accepted by $\csp(\Gamma)$ if and only if $\Psi_\Delta$ is accepted by Poset-SAT($\Phi$).

So by the observations in the paragraphs above the following holds:

\begin{proposition}
The problems of the form \textnormal{Poset-SAT($\Phi$)} correspond precisely to the problems of the form $\CSP(\Gamma)$, where $\Gamma$ is a reduct of the random partial order $\Po$. \hfill $\square$
\end{proposition}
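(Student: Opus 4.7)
The plan is to prove the correspondence in both directions by exhibiting explicit translations, essentially following the sketch already given in the paragraphs preceding the proposition. First I would fix the forward direction: given a finite set $\Phi = \{\phi_1,\ldots,\phi_n\}$ of quantifier-free $\leq$-formulas, construct the reduct $\Po_\Phi = (P; R_1,\ldots,R_n)$ of $\Po$ by letting $R_i$ be the relation defined by $\phi_i$ in $\Po$, and show Poset-SAT($\Phi$) $=$ CSP($\Po_\Phi$). For the converse, given any reduct $\Gamma = (P; R_1,\ldots,R_n)$ of $\Po$, I would invoke quantifier elimination for $\Po$ (noted earlier in the section) to produce quantifier-free $\leq$-formulas $\phi_i$ defining each $R_i$, and take $\Phi = \{\phi_1,\ldots,\phi_n\}$.

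The heart of the argument is a single lemma: for $\Phi$ as above, an instance $\Psi$ of Poset-SAT($\Phi$) is accepted iff its associated $\tau$-structure $\Delta_\Psi$ (where $R_i^{\Delta_\Psi}$ records which tuples appear in $\Psi$ under $\phi_i$) admits a homomorphism to $\Po_\Phi$. For the ``only if'' direction, given a partial order $(A;\leq)$ and an assignment $x_i \mapsto a_i$ satisfying $\Psi$, the induced suborder on $\{a_1,\ldots,a_n\}$ is a finite poset, which embeds into $\Po$ by universality of the Fra\"iss\'e limit; composing the assignment with this embedding yields the desired homomorphism $\Delta_\Psi \to \Po_\Phi$, since each $\phi_i$ is quantifier-free and hence preserved by embeddings. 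For the ``if'' direction, a homomorphism $h \colon \Delta_\Psi \to \Po_\Phi$ directly exhibits $(\Po,\leq)$ itself as a solution, with $a_i := h(x_i)$: the definition of $R_i$ via $\phi_i$ guarantees that every formula in $\Psi$ holds in $\Po$.

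For the other direction, namely CSP($\Gamma$) $=$ Poset-SAT($\Phi$), an analogous translation works: an instance $\Delta$ of CSP($\Gamma$) is converted into $\Psi_\Delta$ by listing, for each tuple in $R_i^\Delta$, the formula $\phi_i$ applied to the corresponding variables, and the same lemma yields equivalence. The only subtle point to note is that the Poset-SAT formulation permits the satisfying assignment to identify variables, which matches the fact that homomorphisms need not be injective; this is handled automatically because quantifier-free formulas evaluated in $\Po$ are preserved under arbitrary substitutions, so repeated values cause no issue. I do not expect a genuine obstacle here, as the proposition is essentially a bookkeeping translation whose substance is carried by quantifier elimination for $\Po$ and by its universality as the Fra\"iss\'e limit of finite posets.
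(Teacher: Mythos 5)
Your proposal is correct and follows essentially the same route as the paper: the same two translations $\Phi \mapsto \Po_\Phi$ with $\Psi \mapsto \Delta_\Psi$, and $\Gamma \mapsto \Phi$ via quantifier elimination, with the instance translation $\Delta \mapsto \Psi_\Delta$. The paper leaves the key equivalence as ``straightforward to see''; you simply fill in those routine details (universality of the \fraisse{} limit plus preservation of quantifier-free formulas under embeddings for one direction, reading the homomorphism as an assignment into $\Po$ for the other), which matches the intended argument.
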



\section{The universal-algebraic approach} \label{sect:univalg}

We apply the so-called universal-algebraic approach and the Ramsey theoretical methods developed by Bodirsky and Pinsker in \cite{BP-graphsat} to obtain our complexity results. Using the language of universal algebra we can elegantly describe the border between tractability and NP-hardness for CSPs on reducts of the random poset. In this section we give a brief introduction to this approach. For a more detailed introduction we refer to \cite{manuelhabil} as well as the shorter \cite{pinsker-surveyCSP}.

\subsection{Primitive positive definability}

A  first-order formula $\phi(x_1,\ldots, x_n)$ in the language $\tau$ is called \textit{primitive positive} if it is of the form $\exists y_1,\ldots, y_k~(\psi_1 \land \cdots \land \psi_m)$ where $\psi_1, \ldots, \psi_m$ are atomic $\tau$-formulas with free variables from the set $\{x_1,\ldots, x_n, y_1,\ldots, y_k\}$.

Let $\Gamma$ be a $\tau$-structure. We then say a relation $R$ is is \textit{primitively positive definable} or \textit{pp-definable} in $\Gamma$ if there is a primitive positive formula $\phi(x_1,\ldots, x_n)$ such that $(a_1,\ldots,a_n) \in R$ if and only if $\phi(a_1,\ldots, a_n)$ holds in $\Gamma$.

\begin{lemma}[Jeavons \cite{jeavons1998algebraic}] \label{lemma:pp}
Let $\Gamma$ be a relational structure in finite language, and let $\Gamma'$ be the structure obtained from $\Gamma$ by adding a relation $R$. If R is primitive positive definable in $\Gamma$, then $\csp(\Gamma)$ and $\csp(\Gamma')$ are polynomial-time equivalent. \hfill $\square$
\end{lemma}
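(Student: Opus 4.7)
The proof proceeds by establishing polynomial-time reductions in both directions.

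The easy direction is $\CSP(\Gamma) \leq_P \CSP(\Gamma')$: given an instance $\Delta$ of $\CSP(\Gamma)$, view it as an instance of $\CSP(\Gamma')$ by interpreting $R^{\Delta'} = \emptyset$ and keeping all other relations unchanged. A homomorphism $\Delta \to \Gamma$ is then the same as a homomorphism $\Delta' \to \Gamma'$, since the new relation imposes no constraints.

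The substantive direction is $\CSP(\Gamma') \leq_P \CSP(\Gamma)$. Fix a primitive positive definition of $R$ in $\Gamma$, say
\[
R(x_1,\ldots,x_n) \iff \exists y_1\cdots y_k\,\bigl(\psi_1 \wedge \cdots \wedge \psi_m\bigr),
\]
where each $\psi_j$ is an atomic formula in the signature of $\Gamma$. The plan is: given an instance $\Delta'$ of $\CSP(\Gamma')$, construct an instance $\Delta$ of $\CSP(\Gamma)$ by replacing every constraint of the form $R(x_{i_1},\ldots,x_{i_n})$ in $\Delta'$ with a gadget: introduce $k$ fresh variables (copies of $y_1,\ldots,y_k$ for that particular constraint) and add the atomic constraints $\psi_1,\ldots,\psi_m$ with the appropriate substitution of variables. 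All constraints of $\Delta'$ that do not involve $R$ are kept verbatim in $\Delta$. Since the signature is finite and the pp-definition of $R$ has constant size (independent of the instance), this transformation runs in polynomial time.

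It remains to verify correctness: $\Delta' \to \Gamma'$ if and only if $\Delta \to \Gamma$. Given a homomorphism $h \colon \Delta' \to \Gamma'$, every $R$-constraint of $\Delta'$ is satisfied, so the existentially quantified witnesses in $\Gamma$ provided by the pp-definition can be used to extend $h$ to the fresh variables of $\Delta$, yielding a homomorphism to $\Gamma$. Conversely, given a homomorphism $g \colon \Delta \to \Gamma$, the restriction to the original variables of $\Delta'$ satisfies every old constraint directly, and satisfies every $R$-constraint of $\Delta'$ because the values of the fresh variables witness the existential quantifiers in the pp-definition.

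There is no real obstacle to this argument; the only point demanding care is book-keeping — making sure that fresh existential variables are introduced independently for each occurrence of an $R$-constraint in $\Delta'$ (so that witnesses for different constraints do not interfere), and observing that the construction is polynomial because the pp-definition has fixed size relative to the input.
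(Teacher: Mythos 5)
Your proposal is correct and is exactly the standard gadget-replacement argument for this lemma; the paper itself gives no proof but cites Jeavons, and the cited proof proceeds the same way (trivial direction by interpreting $R$ as empty, substantive direction by expanding each $R$-constraint into its pp-definition with fresh existential variables per occurrence). The only caveat worth adding is that if the pp-definition contains equality atoms, these are not constraints over the signature of $\Gamma$ and should be handled by identifying the corresponding variables (or, in the $\omega$-categorical setting, by noting that equality is itself pp-definable when available as a relation); this is routine bookkeeping and does not affect the argument.
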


By $\langle \Gamma \rangle_{pp}$ we denote the set of all primitively positive definable relations on $\Gamma$.
So for two structures $\Gamma$ and $\Delta$ the problems $\csp(\Gamma)$ and $\csp(\Delta)$ have the same complexity if $\langle \Gamma \rangle_{pp} = \langle \Delta \rangle_{pp}$. This means that in our analysis we only have to study reduct of the random poset up to primitive positive definability.

\subsection{Polymorphism clones}

Let $\Gamma$ be a relational structure with domain $D$. By $\Gamma^n$ we denote the direct product of $n$-copies of $\Gamma$. This is, we take a structure on $D^n$ with same signature $\Gamma$. Then for $n$-tuples $\bar x^{(1)},\ldots, \bar x^{(k)}$ we set that $(\bar x^{(1)},\ldots, \bar x^{(k)}) \in R$ if and only if $(x^{(1)}_i,\ldots x^{(k)}_i) \in R$ holds in $\Gamma$ for every coordinate $i \in [n]$.

Then an $n$-ary operation $f$ is called a \textit{polymorphism} of $\Gamma$ if $f$ is a homomorphism from $\Gamma^n$ to $\Gamma$. Unary polymorphisms are called \textit{endomorphisms}. For every relation $R$ on $D$ we say $f$ \textit{preserves} $R$ if $f$ is a polymorphism of $(D;R)$. Otherwise we say $f$ \textit{violates} $R$.


For a given structure $\Gamma$ the set of all polymorphisms $\Pol(\Gamma)$ contains all the projections $\pi_i^n(x_1,\ldots,x_n) = x_i$ and is closed under composition. Every set of operation with these properties is called a \textit{clone} or \textit{function clone} (cf. \cite{szendreiclones}). $\Pol(\Gamma)$ is called the \textit{polymorphism clone} of $\Gamma$. We write $\Pol(\Gamma)^{(k)}$ for the set of $k$-ary functions in $\Pol(\Gamma)$. We write $\End(\Gamma)$ for the monoid consisting of all endomorphisms of $\Gamma$.\\
The clone $\Pol(\Gamma)$ is furthermore \textit{locally closed} in the following sense: Let $k > 1$ be arbitrary and $f$ a $k$-ary operation on $D$. If for every finite subset $A \subseteq D^k$ there is a $g \in \Pol(\Gamma)$ with $g \restriction A = f \restriction A$ then also $f \in \Pol(\Gamma)$. 

For a set of operation $F$ on $D$ being locally closed is equal to be closed in the topology of pointwise convergence. We write $\overline F$ for smallest locally closed set containing $F$. We say a set of operation $F$ \textit{generates} an operation $g$ if $g $ is in the smallest locally closed clone containing $F$. 

It is of central importance to us that primitive positive definability in $\omega$-categorical (and finite) structures can be characterized by preservation under polymorphisms:

\begin{theorem}[Bodirsky, Ne{\v{s}}et{\v{r}}il \cite{BN-ppdefinability}]
Let $\Gamma$ be an $\omega$-categorical structure. Then a relation is pp-definable in $\Gamma$, if and only if it is preserved by the polymorphisms of $\Gamma$.
\end{theorem}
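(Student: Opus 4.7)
\medskip

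\noindent\textbf{Proof plan for Theorem (Bodirsky--Ne\v{s}et\v{r}il).}

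The forward direction is the routine observation that primitive positive formulas are preserved by polymorphisms. I would argue by induction on the structure of a pp-formula $\phi(x_1,\ldots,x_n)$. Atomic formulas of the form $R(\bar{x})$ are preserved by the very definition of polymorphism: an $n$-ary $f \in \Pol(\Gamma)$ is a homomorphism $\Gamma^n \to \Gamma$, so applying $f$ coordinate-wise to tuples in $R$ stays in $R$. Conjunctions are preserved because preservation is tested on each conjunct with the same input tuples. Existential quantification is preserved by combining the existential witnesses row-wise and applying $f$ to the witness column. This gives $R$ pp-definable $\Rightarrow$ $R$ is preserved by $\Pol(\Gamma)$.

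For the hard direction, fix an $n$-ary relation $R \subseteq D^n$ preserved by every polymorphism of $\Gamma$. Let $\langle R \rangle$ denote the intersection of all pp-definable relations of arity $n$ containing $R$; this is precisely the set of $\bar b$ satisfying every pp-formula $\phi(\bar{x})$ that holds on all of $R$ in $\Gamma$. Clearly $R \subseteq \langle R \rangle$, and my task reduces to showing $\langle R \rangle \subseteq R$. I would fix an arbitrary $\bar b \in \langle R \rangle$ and try to exhibit a polymorphism of $\Gamma$ that maps tuples from $R$ onto $\bar b$; since $R$ is preserved, $\bar b$ must lie in $R$.

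The central construction is the indicator-style trick in the direct power $\Gamma^R$. Write $c_i \in D^R$ for the coordinate map $\bar a \mapsto a_i$, and consider $\bar c = (c_1, \ldots, c_n) \in (D^R)^n$. A pp-formula $\phi$ holds at $\bar c$ in $\Gamma^R$ if and only if it holds at every $\bar a \in R$ in $\Gamma$; by the choice of $\bar b$, every such $\phi$ also holds at $\bar b$ in $\Gamma$. From this I would produce a homomorphism from the $\Gamma^R$-substructure generated by $\{c_1,\ldots,c_n\}$ into $\Gamma$ sending $c_i \mapsto b_i$; equivalently, a map from the ``column indices'' to $D$ realising the assignment, i.e.\ a polymorphism-style operation of the correct shape. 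Applying it (row-wise) to the list of tuples enumerating $R$ yields $\bar b$ as the image of tuples from $R$ under an element of $\Pol(\Gamma)$, forcing $\bar b \in R$.

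The crux, and the step I expect to be the main obstacle, is extracting an honest polymorphism from this abstract data. Since $R$ may be infinite, $\Gamma^R$ is an infinite power and the ``operation'' just described lives on an index set that is not finite. To get around this I would invoke $\omega$-categoricity via the Ryll-Nardzewski theorem: for every arity $k$, the automorphism group $\Aut(\Gamma)$ has only finitely many orbits on $D^k$, so the type of any finite tuple over $\Gamma$ is isolated by a single first-order formula. Using this, I approximate by considering only finite subsets $S \subseteq R$, obtain partial operations $f_S$ on $D^{|S|}$ whose relevant orbits agree, and glue them by a standard K\"onig/compactness argument on the finitely many orbits of each arity. The output is a polymorphism of $\Gamma$ witnessing the required preservation failure, which contradicts $\bar b \notin R$. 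Once this is carried out, the pp-definition of $R$ can be read off: by Ryll-Nardzewski again, $\langle R \rangle_{pp}$ contains only countably many $n$-ary relations up to equivalence and, combined with the argument above, $R$ is cut out by a single pp-formula (the conjunction of separating pp-formulas, one per orbit of tuples outside $R$, of which there are finitely many).
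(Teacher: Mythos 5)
The paper itself gives no proof of this theorem (it is quoted from Bodirsky and Ne\v{s}et\v{r}il), so your proposal has to stand on its own. The easy direction and the general skeleton (pp-closure $\langle R\rangle$, indicator-style power, compactness via $\omega$-categoricity) are the right ones, but the step you yourself identify as the crux is exactly where the plan breaks, and the repair you sketch does not work. The missing idea is to use oligomorphicity \emph{before} the power construction: since every automorphism and its inverse is a polymorphism, $R$ is $\Aut(\Gamma)$-invariant, and by Ryll-Nardzewski it is a union of \emph{finitely many} orbits of $n$-tuples. Fixing one representative per orbit, say $m$ of them, one works in the finite power $\Gamma^m$: because pp-formulas are $\Aut(\Gamma)$-invariant, any pp-formula true at all $m$ representatives is true on all of $R$, hence at $\bar b$; the canonical queries of finite substructures of $\Gamma^m$ containing the $n$ row-tuples are pp-formulas, so they yield partial homomorphisms sending the rows to $b_1,\ldots,b_n$, and a K\"onig-style argument (again using that each arity has finitely many orbits) glues these into a total homomorphism $\Gamma^m\to\Gamma$, i.e.\ an honest $m$-ary polymorphism mapping tuples of $R$ onto $\bar b$, which forces $\bar b\in R$.

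Your scheme never makes this reduction. Keeping the full, possibly infinite, index set $R$ and approximating by finite $S\subseteq R$ produces would-be operations of arity $|S|$ with unbounded and mutually different arities; no K\"onig/compactness gluing over these can converge to a single finite-arity polymorphism, and an infinitary operation would not let you invoke the hypothesis, which concerns finitary polymorphisms only. Worse, the individual total homomorphisms $\Gamma^{|S|}\to\Gamma$ sending the rows to $\bar b$ need not exist: a pp-formula may hold at every tuple of $S$ while failing at some other tuple of $R$, and then nothing forces it to hold at $\bar b$ (your hypothesis on $\bar b$ only covers formulas true on \emph{all} of $R$). Finally, the intermediate object you describe --- a homomorphism from the substructure of $\Gamma^R$ induced on $\{c_1,\ldots,c_n\}$ --- only transports atomic formulas and is much weaker than a polymorphism, so it cannot by itself force $\bar b\in R$; the whole content of the hard direction is the extension of this finite partial map to the entire power, which is driven by the existential witnesses of the pp-formulas. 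Your closing observation, that finitely many orbits outside $R$ allow one to combine separating pp-formulas into a single pp-definition once $\langle R\rangle\subseteq R$ is known, is correct.
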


Thus, by Lemma \ref{lemma:pp} the complexity of $\csp(\Gamma)$ only depend on the polymorphism clone $\Pol(\Gamma)$ for $\omega$-categorical $\Gamma$.

We also need the fact that a relation is not definable in $\Gamma$ can be described by polymorphisms of bounded arity.
\begin{theorem}[Bodirsky, Kara \cite{Temp-SAT}] \label{theorem:violate}
Let $\Gamma$ be a relational structure and let $R$ be a $k$-ary relation that is a union of at most $m$ orbits of $\Aut(\Gamma)$ on $D^k$. If $\Gamma$ has a polymorphism $f$ that violates $R$, then $\Gamma$ also has an at most $m$-ary polymorphism that violates $R$. \hfill $\square$
\end{theorem}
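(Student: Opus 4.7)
The plan is to start from the tuples witnessing that $f$ violates $R$ and compress them, using the orbit hypothesis, into an $m$-ary witness. Concretely, if $f$ has arity $n$, then violating $R$ means there exist tuples $\bar a_1,\ldots,\bar a_n \in R \subseteq D^k$ such that the coordinate-wise application $f(\bar a_1,\ldots,\bar a_n)$ lies outside $R$. Since $R$ is a union of at most $m$ orbits of $\Aut(\Gamma)$ on $D^k$, these $n$ tuples split into at most $m$ orbit-classes. I would pick one representative $\bar b_1,\ldots,\bar b_m \in R$, one per orbit used, and for each $i \in \{1,\ldots,n\}$ fix an automorphism $\alpha_i \in \Aut(\Gamma)$ together with an index $j(i) \in \{1,\ldots,m\}$ such that $\alpha_i(\bar b_{j(i)}) = \bar a_i$ (applying $\alpha_i$ coordinate-wise).

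The key construction is then the operation
\[
  g(y_1,\ldots,y_m) := f\bigl(\alpha_1(y_{j(1)}),\,\alpha_2(y_{j(2)}),\,\ldots,\,\alpha_n(y_{j(n)})\bigr).
\]
First I would verify that $g$ is again a polymorphism of $\Gamma$: the $\alpha_i$ are automorphisms, hence polymorphisms, and the polymorphism clone $\Pol(\Gamma)$ is closed under composition and identification of variables; so $g \in \Pol(\Gamma)$. Second, I would check coordinate-wise on the witnesses $\bar b_1,\ldots,\bar b_m$: for every column $c \in \{1,\ldots,k\}$,
\[
  g\bigl(\bar b_1[c],\ldots,\bar b_m[c]\bigr) = f\bigl(\alpha_1(\bar b_{j(1)}[c]),\ldots,\alpha_n(\bar b_{j(n)}[c])\bigr) = f\bigl(\bar a_1[c],\ldots,\bar a_n[c]\bigr),
\]
so the full $k$-tuple produced by $g$ on $(\bar b_1,\ldots,\bar b_m)$ equals the one produced by $f$ on $(\bar a_1,\ldots,\bar a_n)$, which lies outside $R$. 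Since $\bar b_1,\ldots,\bar b_m \in R$, this shows that $g$ violates $R$, and $g$ has arity at most $m$ as required.

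The only mild subtlety is bookkeeping: one must make sure that the index function $j$ is well-defined and that the $\alpha_i$ exist, both of which are immediate from the hypothesis that $R$ is a union of at most $m$ orbits (tuples in the same orbit are $\Aut(\Gamma)$-conjugate). No topology or $\omega$-categoricity is needed — the argument is a direct clone-theoretic manipulation. Thus the statement reduces to the orbit-based compression above, and I do not expect any serious obstacle beyond correctly tracking arities between $D$ and $D^k$ when composing $f$ with the automorphisms.
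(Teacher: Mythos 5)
Your argument is correct: the orbit-compression construction $g(y_1,\ldots,y_m) = f(\alpha_1(y_{j(1)}),\ldots,\alpha_n(y_{j(n)}))$ is a polymorphism (automorphisms are unary polymorphisms and clones are closed under composition and variable identification), and it reproduces the violating tuple on orbit representatives, so it violates $R$ with arity at most $m$. The paper itself only cites this result from Bodirsky and K\'ara without proof, and your argument is essentially the standard proof given in that source, so there is nothing further to compare.
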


\subsection{Structural Ramsey theory}

We apply Ramsey theory to analyse certain regular behaviour of polymorphisms. This approach was invented by Bodirsky and Pinsker and has been proven to be useful in several other classification results (e.g. \cite{Temp-SAT}, \cite{BP-graphsat}, \cite{BPPPhyloCSP}). We only give a brief introduction how Ramsey theory helps us to study polymorphism clones, a detailed introduction can be found in \cite{BP-reductsRamsey}.

Let $\Gamma$ and $\Delta$ be two finite structures of the same signature. Then $\binom{\Delta}{\Gamma}$ denotes all the substructures of $\Delta$ that are isomorphic to $\Gamma$. We write $\Theta \to (\Delta)^\Gamma_k$ if for every coloring $\kappa: \binom{\Theta}{\Gamma} \to [k]$ there is an isomorphic copy $\Delta'$ of $\Delta$ such that $\kappa$ is monochromatic on $\binom{\Delta'}{\Gamma}$.

Let $\mathcal C$ be a category of structures with the same signature that is closed under taking isomorphic copies and substructures.  Then $\mathcal C$ is said to be a \textit{Ramsey class} if for every $k \geq 1$ and every $\Gamma, \Delta \in \mathcal C$ there is a $\Theta \in \mathcal C$ such that $\Theta \to (\Delta)^\Gamma_k$. 

A homogeneous structure is said to be a \textit{homogeneous Ramsey structure} if its age is a Ramsey class. A structure is called \textit{ordered}, if it contains a total order relation.

If we look at the class of all structures $(A;\leq, \prec)$, where $\prec$ is a linear order that extends $\leq$, it has the Ramsey property and the amalgamation property (see \cite{ramsey-poset}). Therefore its \fraisse{} limit $(P;\leq,\prec)$ is an ordered homogeneous Ramsey structure.

\begin{definition}
Let $\Gamma$ and $\Delta$ be two structures and $f$ an $n$-ary operation from the domain of $\Gamma$ to the domain of $\Delta$. Let $A$ be a subset of $\Gamma$. Then $f$ is called \textnormal{canonical on $A$}, if for every integer $s \geq 1$, all automorphisms $\alpha_1, \ldots, \alpha_n \in \Aut(\Gamma)$ and tuples $d_1,\ldots,d_n \in A^s$ there is an automorphism $\beta \in \Aut(\Delta)$ such that
\[ f(\alpha_1(d_1), \ldots, \alpha_n(d_n)) = \beta(f(d_1,\ldots,d_n)). \]
If $f$ is canonical on the domain of $\Gamma$, we say that $f$ is \textnormal{canonical}.
\end{definition}

We remark that sometimes canonical functions are also defined as functions that preserve model-theoretic types. However, by the theorem of Ryll-Nardzewski, Engeler and Svenonius (confer \cite{Hodges}), the $s$-types of a countable $\omega$-categorical structure $\Gamma$ are exactly the orbits of $s$-tuples under the action of $\Aut(\Gamma)$. Therefore the model theoretical definition is equivalent to the one we gave above, for $\omega$-categorical $\Gamma$ and $\Delta$. For this reason we are also going to use (s-)orbit and (s-)type synonymously.

Now Ramsey structures allows us to generate functions that are canonical on arbitrary large finite substructures of $\Gamma$ in the following sense:

\begin{theorem}[Lemma 19 in \cite{BP-reductsRamsey}] \label{theorem:ramsey2}
Let $\Gamma$ be an ordered homogeneous Ramsey structure with domain $D$ and let $f: D^l \to D$. Then for every finite subset $A$ of $\Gamma$ there are automorphisms $\alpha_1,\ldots,\alpha_l \in \Aut(\Gamma)$ such that $f \circ (\alpha_1,\ldots,\alpha_l)$ is canonical on $A^l$. \hfill $\square$
\end{theorem}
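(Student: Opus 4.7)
The plan is to combine the structural Ramsey property of $\mathrm{age}(\Gamma)$ with the homogeneity of $\Gamma$ in order to produce the automorphisms $\alpha_1,\ldots,\alpha_l$ as extensions of carefully chosen embeddings of $A$ inside a sufficiently large substructure of $\Gamma$ where $f$ has been forced to behave in a type-uniform way.

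First, I would reduce canonicity on $A^l$ to a finite combinatorial condition. Because $\Gamma$ is homogeneous in a finite relational language, it is $\omega$-categorical, so for each $s$ the number of $\Aut(\Gamma)$-orbits of $s$-tuples is finite. Since $A$ is finite, every tuple in $A^s$ is determined up to $\Aut(\Gamma)$-orbit by an injective subtuple of length at most $|A|$, so it suffices to verify the canonical equation for finitely many orbit tuples $(O_1,\ldots,O_l)$ of $s$-tuples (with $s\leq l|A|$) and for injective representatives lying in those orbits.

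Second, I would fix a template $T\subseteq\Gamma$ consisting of $l$ pairwise disjoint copies $A_1,\ldots,A_l$ of $A$, placed so that for every $s$ the joint orbit of any $(\bar b_1,\ldots,\bar b_l)$ with $\bar b_i\in A_i^s$ is determined by the individual orbits of the $\bar b_i$. Such a placement exists by homogeneity and amalgamation: in the random poset $\Po$ one may take the $A_i$ pairwise incomparable, and in the general ordered homogeneous Ramsey setting the analogous ``generic'' inter-copy pattern is produced by the disjoint-amalgamation property of $\mathrm{age}(\Gamma)$. For each orbit tuple $(O_1,\ldots,O_l)$ and each $s\leq l|A|$, let $W(O_1,\ldots,O_l,s)$ denote the substructure of $T$ induced by a chosen representative $(\bar a_1,\ldots,\bar a_l)$ with $\bar a_i\in A_i^s$ of orbit $O_i$. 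I would color copies of $W$ inside a large $\Xi\in\mathrm{age}(\Gamma)$ by the $\Aut(\Gamma)$-orbit of the $s$-tuple obtained by applying $f$ columnwise to the copy; by $\omega$-categoricity this is a finite coloring.

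Third, I would iterate the Ramsey property of $\mathrm{age}(\Gamma)$ across the finitely many witnesses $W(O_1,\ldots,O_l,s)$. For the first witness choose $\Xi_1 \to (T)^{W_1}_{k_1}$, pass to a monochromatic copy of $T$ inside $\Xi_1$, use that copy as the ambient structure for the next Ramsey step, and so on. The resulting substructure $T^*\subseteq\Gamma$ has the property that for every orbit tuple the corresponding coloring is constant on copies of its witness. By homogeneity of $\Gamma$, each inclusion $A_i\hookrightarrow A_i^*\subseteq T^*$ extends to an automorphism $\alpha_i\in\Aut(\Gamma)$, and $f\circ(\alpha_1,\ldots,\alpha_l)$ is then canonical on $A^l$: any input $(\bar d_1,\ldots,\bar d_l)\in(A^s)^l$ with orbit tuple $(O_1,\ldots,O_l)$ is sent by $(\alpha_1,\ldots,\alpha_l)$ to a copy of $W(O_1,\ldots,O_l,s)$ in $T^*$, and the joint-orbit-determined-by-marginals property of $T^*$ together with the Ramsey-monochromaticity forces the orbit of the output to depend only on $(O_1,\ldots,O_l)$.

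The main obstacle is arranging the two ingredients together. Selecting the template $T$ so that joint orbits are controlled by marginal orbits is what allows the Ramsey-monochromaticity on individual witnesses to lift to the full canonical condition; this step depends on the concrete structure of $\Gamma$ and, in general, on a disjoint-amalgamation-style property of its age. The iterative Ramsey construction must be organised so that each stage is carried out inside a monochromatic copy of $T$ from the preceding stage, ensuring that all previously achieved monochromaticities survive; this is standard but must be tracked explicitly to keep the ambient structures in $\mathrm{age}(\Gamma)$ throughout.
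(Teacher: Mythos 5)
This theorem is quoted (Lemma~19 of \cite{BP-reductsRamsey}) and not proved in the paper, so your attempt has to be measured against the standard proof in that reference. Your overall scheme -- reduce canonicity on $A^l$ to finitely many orbit conditions, colour copies by the orbit of the $f$-image, iterate the Ramsey property, and extract the $\alpha_i$ by homogeneity -- is the right shape, but there is a genuine gap at exactly the step you flag yourself: the existence of the template $T$ of $l$ copies of $A$ in which the joint orbit of every cross-tuple is determined by the marginal orbits. This is not a consequence of the hypotheses ``ordered homogeneous Ramsey''; in particular it does not follow from disjoint amalgamation, which ordered structures never have -- already the structure this paper applies the lemma to, $(P;\leq,\prec)$, has no free amalgamation because $\prec$ is total (a suitable placement does exist there, namely pairwise $\bot$ blocks that are $\prec$-convex, but verifying that is an ad hoc argument about this one structure, not a proof of the lemma as stated; your parenthetical about taking the $A_i$ pairwise incomparable ``in $\Po$'' ignores the order $\prec$ that the canonicity analysis is carried out with). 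Your construction genuinely needs this property: without it, two inputs with the same marginal orbits may span non-isomorphic configurations in $\Gamma$, and monochromaticity of your witness colourings then says nothing about them. The proof in \cite{BP-reductsRamsey} avoids the issue altogether by viewing $f$ as a unary map on the product structure, whose $s$-types are by definition the $l$-tuples of coordinate types, and invoking a product Ramsey theorem for ordered Ramsey classes: the monochromatic block $B_1\times\cdots\times B_l$ obtained there makes the image orbit constant over \emph{all} joint configurations with prescribed coordinate types, which is exactly what canonicity demands, and no generic placement is ever needed.

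Two smaller points. The iteration as you describe it -- ``use that [monochromatic] copy as the ambient structure for the next Ramsey step'' -- cannot work literally, since $T$ has no Ramsey property of its own; one must first build the whole tower $\Theta_1 \to (T)^{W_1}_{k_1}$, $\Theta_2 \to (\Theta_1)^{W_2}_{k_2}$, \ldots{} inside the age, embed the top level into $\Gamma$, and then descend, extracting nested monochromatic copies (monochromaticity for the earlier witnesses is inherited by substructures). That is routine to repair, but as written the order of quantifiers is wrong. Finally, you never say where ``ordered'' is used: the total order makes every witness structure rigid, and this rigidity is what makes your colourings (apply $f$ columnwise to a copy of $W$) well defined in the first place.
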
 

We can refine this statement by additionally fixing constants. Let $c_1,\ldots,c_n$ be elements of the domain of $\Gamma$. Then $(\Gamma,c_1,\ldots,c_n)$ denotes the structure that we obtain by extending $\Gamma$ by the constants $c_1,\ldots,c_n$.

\begin{theorem}[Lemma 27 in \cite{BP-reductsRamsey}] \label{theorem:ramsey}
Let $\Gamma$ be an ordered homogeneous Ramsey structure with domain $D$. Let $c_1,\ldots,c_n \in D$ and $f: D^l \to D$. Then $\{f\} \cup \Aut(\Gamma,c_1,\ldots,c_n)$ generates a function $g$ that is canonical as operation from $( \Gamma, c_1, \ldots, c_n)$ to $\Gamma$ and satisfies $f \restriction \{c_1,\ldots,c_n\} = g \restriction \{c_1,\ldots,c_n\}$. \hfill $\square$
\end{theorem}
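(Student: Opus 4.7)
The plan is to combine Theorem \ref{theorem:ramsey2} with a pointwise compactness argument, applied to the expanded structure $(\Gamma, c_1, \ldots, c_n)$. The first step is to verify that $(\Gamma, c_1, \ldots, c_n)$ is itself an ordered homogeneous Ramsey structure: the order $\prec$ is inherited, homogeneity holds because any isomorphism between finite substructures of the expansion must fix the constants and therefore extends by homogeneity of $\Gamma$ to an element of $\Aut(\Gamma, c_1, \ldots, c_n)$, and the Ramsey property survives the addition of finitely many named elements by a standard amalgamation-based reduction to the Ramsey property of $\Gamma$.

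With that in place, I enumerate $D = \{d_1, d_2, \ldots\}$ with $d_i = c_i$ for $1 \leq i \leq n$ and set $A_m := \{d_1, \ldots, d_m\}$ for $m \geq n$. For each such $m$, I apply Theorem \ref{theorem:ramsey2} inside $(\Gamma, c_1, \ldots, c_n)$ to the function $f$ and the finite subset $A_m$, obtaining $\alpha_1^{(m)}, \ldots, \alpha_l^{(m)} \in \Aut(\Gamma, c_1, \ldots, c_n)$ such that $g_m := f \circ (\alpha_1^{(m)}, \ldots, \alpha_l^{(m)})$ is canonical on $A_m^l$ as an operation from $(\Gamma, c_1, \ldots, c_n)$ to $(\Gamma, c_1, \ldots, c_n)$, and hence also as an operation into $\Gamma$ (since $\Aut(\Gamma, c_1, \ldots, c_n) \subseteq \Aut(\Gamma)$ gives a weaker witness condition on $\beta$). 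Crucially, because each $\alpha_i^{(m)}$ fixes every $c_j$, we automatically have $g_m(c_{i_1}, \ldots, c_{i_l}) = f(c_{i_1}, \ldots, c_{i_l})$ for every tuple of constants.

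To promote these finite-range canonical approximations to a single function defined on all of $D^l$, I pass to a pointwise limit by diagonalization: $D^l$ is countable, so there is a subsequence $(g_{m_k})$ converging pointwise to some $g: D^l \to D$, and this $g$ lies in the locally closed clone generated by $\{f\} \cup \Aut(\Gamma, c_1, \ldots, c_n)$. For any finite $B \subseteq D$ we have $g_{m_k} \restriction B^l = g \restriction B^l$ for sufficiently large $k$, and the canonicity of $g_{m_k}$ on $A_{m_k}^l \supseteq B^l$ therefore transfers to canonicity of $g$ on $B^l$; since $B$ was arbitrary, $g$ is canonical on all of $D^l$ as an operation from $(\Gamma, c_1, \ldots, c_n)$ to $\Gamma$. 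The agreement $g \restriction \{c_1, \ldots, c_n\} = f \restriction \{c_1, \ldots, c_n\}$ is preserved by the pointwise limit because it holds for every $g_m$.

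The main obstacle is the first step: checking that the Ramsey property survives naming finitely many constants. This is a folklore fact, but it is not entirely automatic because expanding the language changes the notion of embedding, so the reduction to ordinary Ramsey in $\Gamma$ requires treating the constants as a distinguished, rigidly embedded substructure that must appear in every copy under consideration. Once this technicality is dispatched, the rest of the argument is a routine compactness pass made possible by the local closure of polymorphism clones and the countability of $D$.
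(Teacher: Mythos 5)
The paper itself gives no proof of this statement (it is imported as Lemma~27 of \cite{BP-reductsRamsey}), and your plan does mirror the intended argument of the source: expand by the constants, note the expansion is again an ordered homogeneous Ramsey structure, canonize on finite sets via Theorem~\ref{theorem:ramsey2} using automorphisms that fix the constants, and pass to a limit inside the local closure. However, the limit step as you wrote it is a genuine gap. Since $D$ is infinite and discrete, a sequence of functions $g_m:D^l\to D$ has no reason to admit a pointwise convergent subsequence: at a fixed point $x$ the values $g_m(x)=f(\alpha_1^{(m)}(x_1),\ldots,\alpha_l^{(m)}(x_l))$ can run through infinitely many elements of $D$, because the $\alpha_i^{(m)}$ move $x_i$ around an infinite orbit and $f$ is arbitrary. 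Countability of $D^l$ only lets you diagonalize when the value space at each point is finite (or compact), which it is not here. The standard repair --- the one the paper itself carries out in the proof of Lemma~\ref{lemma:condition_e_1} --- is to first thin the sequence so that all $g_m$ have the same behaviour (possible since $\omega$-categoricity leaves only finitely many candidates), and then inductively postcompose with automorphisms $\beta_m\in\Aut(\Gamma)$ so that consecutive functions literally agree on $A_m^l$; the resulting sequence converges and its limit is still generated by $\{f\}\cup\Aut(\Gamma,c_1,\ldots,c_n)$. Doing this without losing the interpolation condition requires extra care that is invisible in your write-up precisely because you never postcompose: the alignment must be anchored so that the values on the tuples from $\{c_1,\ldots,c_n\}$ are never moved (e.g.\ start the induction at $A_n=\{c_1,\ldots,c_n\}$ with $\beta_n=\mathrm{id}$ and thin so that the type of the image of the constant tuples is the same for all $g_m$).

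A second, smaller point: the fact that the Ramsey property survives naming finitely many constants is the real substance of the cited lemma, and it is not a ``standard amalgamation-based reduction'' --- as you yourself observe, the notion of embedding changes, and the known proofs go through the Kechris--Pestov--Todorcevic correspondence (extreme amenability of $\Aut(\Gamma)$ plus the fact that open subgroups of extremely amenable groups are extremely amenable) or a genuinely more careful combinatorial argument. It is legitimate to quote this fact, but your proposal announces that it will verify it and then does not; combined with the flawed compactness step, the proof as written does not go through, even though the overall route is the right one.
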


By the \textit{behaviour} of a canonical function $f: \Delta \to \Lambda$ we denote the set of all tuples $(p,q)$ where $p$ is an $s$-type of $\Delta$, $q$ is a $s$-type of $\Lambda$ and for for every tuple $\bar a$ of type $p$ the image $f(\bar a)$ has type $q$ in $\Lambda$. So we can regard the behaviour of a canonical function as a function from the types of $\Delta$ to the types of $\Lambda$. Let $A$ be a subset of the domain of $\Delta$. We call the behaviour of $f \restriction A$ the \textit {behaviour of $f$ on $A$}. We say a function $f: \Delta \to \Lambda$ \textit{behaves like} $g: \Delta \to \Lambda$ (on $A$) if their behaviour (on $A$) is equal.

\subsection{Model-complete cores}
Let $\Delta$ and $\Gamma$ be to structures with the same signature. We say $\Delta$ is \textit{homomorphically equivalent} to $\Gamma$ if there is a homomorphisms from $\Delta$ to $\Gamma$ and a homomorphism from $\Gamma$ to $\Delta$. By definition, the constraint satisfaction problems $\csp(\Delta)$ and $\csp(\Gamma)$  encode the same computational problem for homomorphically equivalent structures $\Delta$ and $\Gamma$. By homomorphic equivalence we can find for every CSP a template with some nice model-theoretical properties.

A structure $\Delta$ is called a \textit{core} if every endomorphism $e$ of $\Delta$ is a self embedding. A structure is called \textit{model-complete} if every formula in its first order theory is equivalent to an existential formula. Note that endomorphisms preserve existential positive formulas, and embeddings preserve existential formulas. In the case of $\omega$-categorical structures also the opposite holds. Therefore we have:

\begin{lemma}[Lemma 13 in \cite{BP-minimal}]
A countable $\omega$-categorical structure $\Gamma$ is a model-complete core if and only if the endomorphism monoid of $\Gamma$ is generated by $\Aut(\Gamma)$. In this case, every definable relation in $\Gamma$ is also definable by existential positive formulas. \hfill $\square$
\end{lemma}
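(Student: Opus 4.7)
The plan is to treat the two directions of the biconditional separately, and then deduce the existential positive definability statement as a corollary of the proof of the backward direction.

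For the forward direction, suppose $\Gamma$ is a model-complete core and take an arbitrary $e \in \End(\Gamma)$. Because $\Gamma$ is a core, $e$ is already an embedding, so $e$ preserves all quantifier-free formulas and their negations, and hence all existential and all universal formulas. Model-completeness says that every first-order formula over the signature of $\Gamma$ is equivalent modulo $\Th(\Gamma)$ to an existential one, so $e$ preserves every first-order formula, that is, $e$ is an elementary self-embedding. Now I invoke $\omega$-categoricity in the form of Ryll-Nardzewski (mentioned just after the canonicity definition in the paper): the orbits of $\Aut(\Gamma)$ on tuples coincide with the first-order types. For any finite tuple $\bar a$ from the domain of $\Gamma$, the tuples $\bar a$ and $e(\bar a)$ realise the same type, hence lie in the same $\Aut(\Gamma)$-orbit, so there is some $\alpha \in \Aut(\Gamma)$ agreeing with $e$ on $\bar a$. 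This witnesses $e \in \overline{\Aut(\Gamma)}$.

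For the backward direction, assume $\End(\Gamma) = \overline{\Aut(\Gamma)}$. First, every $e \in \End(\Gamma)$ is a pointwise limit of automorphisms, and since the defining conditions for being an embedding (preservation of each relation and its complement) are local on finite tuples, they pass to pointwise limits. So every endomorphism is an embedding, i.e., $\Gamma$ is a core. For model-completeness, I work relationally: any first-order definable relation $R$ on $\Gamma$ is preserved by $\Aut(\Gamma)$, and since endomorphisms are limits of automorphisms and preservation of a fixed relation is a local condition, $R$ is in fact preserved by all of $\End(\Gamma)$. I then apply the Bodirsky–Ne\v{s}et\v{r}il-style theorem for existential positive formulas on $\omega$-categorical structures (the endomorphism analogue of the preservation theorem for primitive positive formulas already stated in Section~\ref{sect:univalg}): in a countable $\omega$-categorical structure a relation is existential positive definable iff it is preserved by all endomorphisms. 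This gives an existential positive (in particular, existential) definition of $R$, which yields model-completeness, and simultaneously establishes the second sentence of the lemma.

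The main subtlety is not in either direction separately but in the auxiliary preservation theorem used in the backward argument, which is the one genuinely non-trivial input (proved in the reference on minimal structures/model-complete cores cited in the excerpt). Given that theorem, the rest is a routine interplay between $\omega$-categoricity, the topology of pointwise convergence, and the elementary definitions of ``core'' and ``model-complete''; and the same computation simultaneously proves the characterisation and the promised existential positive definability of all definable relations.
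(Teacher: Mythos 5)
Your proof is correct in substance; note that the paper itself does not prove this lemma but cites it (Lemma 13 of \cite{BP-minimal}), and your argument is essentially the standard one from that reference: Ryll--Nardzewski plus model-completeness for the forward direction, and locality of preservation plus the existential-positive analogue of the Bodirsky--Ne{\v{s}}et{\v{r}}il preservation theorem (relations preserved by all endomorphisms of a countable $\omega$-categorical structure are exactly the existential positive definable ones) for the backward direction and the final sentence. One small inaccuracy: embeddings do \emph{not} in general preserve universal formulas (e.g.\ minimality of an element is not preserved under the embedding of one chain into a larger one); fortunately you never use this claim --- preservation of existential formulas together with model-completeness already makes $e$ elementary, so the forward direction stands. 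You are also right that the only genuinely non-trivial input is the existential-positive preservation theorem, which you correctly isolate as a black box; everything else is the routine interplay of $\omega$-categoricity, pointwise convergence, and the definitions of core and model-completeness.
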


Now every CSP with $\omega$-categorical template can be reformulated as a CSP on a template with model-complete core by the following theorem:

\begin{theorem}[Theorem 16 from \cite{manuel-core}]
Every $\omega$-categorical structure $\Gamma$ is homomorphically equivalent to a model-complete core which is unique up to isomorphism. This core is $\omega$-categorical or finite. \hfill $\square$
\end{theorem}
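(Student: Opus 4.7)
The plan is to construct the core as a suitably minimal endomorphic image of $\Gamma$, verify it is a model-complete core, and then establish uniqueness by a back-and-forth argument. By $\omega$-categoricity and Ryll--Nardzewski, the set $O_n$ of orbits of $\Aut(\Gamma)$ on $\Gamma^n$ is finite for every $n$. For an endomorphism $e \in \End(\Gamma)$ let $I_n(e) \subseteq O_n$ denote the set of orbits meeting $e(\Gamma)^n$. Note that $I_n(e \circ f) \subseteq I_n(f)$, so left-composition can only shrink $I_n$. Since each $I_n$ is finite, a compactness argument in the topology of pointwise convergence on $\End(\Gamma)$ produces an endomorphism $e^{\star}$ that is simultaneously minimal in every arity, meaning $I_n(e \circ e^{\star}) = I_n(e^{\star})$ for every $e \in \End(\Gamma)$ and every $n$. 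Let $\Delta$ be the substructure of $\Gamma$ induced on $D := e^{\star}(\Gamma)$. Homomorphic equivalence of $\Gamma$ and $\Delta$ is then immediate from $e^{\star} : \Gamma \to \Delta$ and the inclusion $\Delta \hookrightarrow \Gamma$.

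Next I would verify that $\Delta$ is $\omega$-categorical (or finite) and a model-complete core. The orbits of $\Aut(\Delta)$ on $D^n$ are at least as fine as the restrictions of the $\Aut(\Gamma)$-orbits in $I_n(e^{\star})$, so only finitely many arise, and Ryll--Nardzewski gives $\omega$-categoricity (unless $D$ is finite). For the core property, take $f \in \End(\Delta)$. Then $f \circ e^{\star} \in \End(\Gamma)$, and the minimality of $e^{\star}$ forces $f$ to preserve the $\Aut(\Gamma)$-orbit of every tuple in $D^n$. Standard homogeneity considerations promote this to preservation of all $\Aut(\Delta)$-orbits, so $f$ is elementary and in particular an embedding --- the core condition. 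For model-completeness, the cited Lemma 13 reduces the task to showing $\End(\Delta) \subseteq \overline{\Aut(\Delta)}$; given $f \in \End(\Delta)$ and a finite $A \subseteq \Delta$, the orbit preservation just established implies that $a \mapsto f(a)$ is a partial automorphism of $\Delta$, which extends to a full automorphism by homogeneity, placing $f$ in the local closure of $\Aut(\Delta)$.

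For uniqueness, suppose $\Delta_1$ and $\Delta_2$ are both $\omega$-categorical model-complete cores homomorphically equivalent to $\Gamma$, hence to each other. Pick homomorphisms $h : \Delta_1 \to \Delta_2$ and $g : \Delta_2 \to \Delta_1$. Then $g \circ h$ is an endomorphism of the core $\Delta_1$, hence an embedding, and by model-completeness lies in $\overline{\Aut(\Delta_1)}$; in particular it preserves all types over $\Delta_1$. The analogous statement holds on $\Delta_2$. A standard back-and-forth construction, using the finiteness of orbits in each arity on both sides, then upgrades $h$ and $g$ to mutually inverse isomorphisms.

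The main obstacle is producing the simultaneously minimal endomorphism $e^{\star}$. One inductively builds a sequence $(e_k)$ in $\End(\Gamma)$ so that each $e_{k+1}$ arises by right-composing $e_k$ with a map that minimizes $I_k$ under further compositions, and then extracts a pointwise-convergent subsequence; the limit remains an endomorphism because $\End(\Gamma)$ is closed under pointwise convergence. Checking that the limit is minimal in \emph{every} arity simultaneously needs a diagonal argument together with the observation that, being a finite subset of $O_n$, each $I_n$ can only shrink finitely often along the construction. The secondary delicate point is translating orbit preservation under $\Aut(\Gamma)$ into orbit preservation under $\Aut(\Delta)$, which relies on the $\omega$-categoricity of $\Delta$ and the homogeneity granted by Ryll--Nardzewski.
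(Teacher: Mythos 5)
This statement is quoted by the paper from Bodirsky's work (Theorem 16 of \cite{manuel-core}) and is not proved in the paper itself, so your attempt can only be measured against the known proof; against that standard it has a genuine gap at its very foundation. The monotonicity you base everything on is false: for endomorphisms $e,f$ one only has $I_n(e\circ f)\subseteq I_n(e)$, not $I_n(e\circ f)\subseteq I_n(f)$. A homomorphism preserves only positive atomic information, so it can move a tuple lying in the image $f(\Gamma)$ onto a ``more specialized'' orbit that is not realized in $f(\Gamma)^n$ at all; nothing forces the set of realized orbits to be closed under such specializations. (For instance, in the Fra\"iss\'e limit of structures with unary predicates and two edge relations one can build $f$ whose image contains an $E$-but-not-$F$ pair of unlabelled points while omitting all $E$-but-not-$F$ pairs of $P$-labelled points, and then an endomorphism $e$ sending the former pair onto the latter orbit.) Consequently the shrinking/compactness construction of a ``simultaneously minimal'' $e^{\star}$ does not go through as stated, and even if one extracts some minimizer by a lexicographic count of realized orbits, the key inference ``minimality of $e^{\star}$ forces every $f\in\End(\Delta)$ to preserve the $\Aut(\Gamma)$-orbit of every tuple of $D$'' does not follow: an endomorphism of $\Delta$ may permute realized orbits, or hit new ones, without decreasing your measure. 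The quantity that genuinely is monotone under homomorphisms is the \emph{existential positive type} of a tuple (it can only grow), and the known proof works with images realizing only maximal existential positive types; the further step that in such a structure ep-types determine orbits (which is where model-completeness comes from) needs its own argument and is the missing idea here.

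Two further steps would also fail as written. First, your ω-categoricity argument for $\Delta$ is a non sequitur: knowing that the $\Aut(\Delta)$-orbits refine the finitely many restrictions of $\Aut(\Gamma)$-orbits gives an upper bound on nothing --- a refinement of a finite partition can have infinitely many classes, and $\Aut(\Delta)$ bears no a priori relation to $\Aut(\Gamma)$ since automorphisms of the induced substructure need not extend and $D$ is not $\Aut(\Gamma)$-invariant. The standard route instead bounds the number of types of the core's theory using the homomorphic equivalence and ep-type preservation. Second, you repeatedly invoke ``homogeneity granted by Ryll--Nardzewski'': ω-categorical structures are not homogeneous as relational structures; what extends to automorphisms are partial \emph{elementary} maps, i.e.\ maps already known to preserve orbits --- which is exactly what you are trying to establish for $f\in\End(\Delta)$, so the promotion from $\Aut(\Gamma)$-orbit preservation to $\Aut(\Delta)$-orbit preservation is circular. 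The uniqueness paragraph is essentially the standard back-and-forth and is fine once existence and ω-categoricity of the core are in place.
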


So an important step in analyzing the complexity of $\csp(\Gamma)$ is to identify the model-complete core of $\Gamma$. By the following theorem of Bodirsky the complexity of the CSP of a core does not increase if we add finitely many constants.

\begin{theorem}[Theorem 19 from \cite{manuel-core}] \label{theorem:coreconst}
Let $\Gamma$ be a model-complete $\omega$-categorical or finite core, and let $c$ be an element of $\Gamma$. Then $\csp(\Gamma)$ and $\csp(\Gamma,c)$ have the same complexity, up to polynomial time. \hfill $\square$
\end{theorem}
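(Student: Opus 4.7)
One direction is immediate: every instance of $\csp(\Gamma)$ is also an instance of $\csp(\Gamma,c)$ (simply not referring to the constant), so $\csp(\Gamma)$ reduces to $\csp(\Gamma,c)$ in polynomial time. The substance of the theorem is the reverse reduction, which I plan to obtain by ``simulating'' the constant $c$ with a fresh existentially quantified variable constrained to lie in the orbit $O:=\{\alpha(c):\alpha\in\Aut(\Gamma)\}$ of $c$ under the automorphism group.

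\textbf{Key lemma: $O$ is pp-definable in $\Gamma$.} By the Bodirsky--Ne\v{s}et\v{r}il theorem it suffices to show that $O$ is preserved by every polymorphism $f\in\Pol(\Gamma)$. Given such an $f$ of arity $k$ and elements $a_i=\alpha_i(c)\in O$ with $\alpha_i\in\Aut(\Gamma)$, consider the unary operation $e(x):=f(\alpha_1(x),\ldots,\alpha_k(x))$. A routine check shows that $e$ is an endomorphism of $\Gamma$: applying $f$ coordinate-wise to $k$ tuples of a relation $R$ of $\Gamma$ (each obtained from a fixed tuple of $R$ by an $\alpha_i$) again lands in $R$. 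By Lemma~13, $\End(\Gamma)$ equals the closure of $\Aut(\Gamma)$ under pointwise convergence, so $e$ can be approximated on any finite set by automorphisms; in particular there is $\beta\in\Aut(\Gamma)$ with $\beta(c)=e(c)$. Hence $f(a_1,\ldots,a_k)=e(c)=\beta(c)\in O$. The finite case is analogous with the topological closure replaced by literal equality.

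\textbf{Reduction and main obstacle.} Let $\phi(y)\equiv\exists\bar z\,\psi(y,\bar z)$ be a pp-definition of $O$ obtained from the key lemma. Given an instance $\Delta$ of $\csp(\Gamma,c)$, construct $\Delta'$ for $\csp(\Gamma)$ by introducing a fresh variable $y$, replacing every occurrence of $c$ in the constraints of $\Delta$ by $y$, and appending the atomic conjuncts of $\psi(y,\bar z)$ with fresh copies of $\bar z$. Any homomorphism $h:\Delta\to(\Gamma,c)$ extends to a homomorphism $\Delta'\to\Gamma$ by sending $y\mapsto c$ and the $\bar z$-variables to witnesses of $\phi(c)$; conversely, any homomorphism $h':\Delta'\to\Gamma$ satisfies $h'(y)\in O$, so some $\alpha\in\Aut(\Gamma)$ sends $h'(y)$ to $c$, and $\alpha\circ h'$ restricted to the original variables is a homomorphism $\Delta\to(\Gamma,c)$. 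Since $|\Delta'|=O(|\Delta|)$, we obtain $\csp(\Gamma,c)\leq_P\csp(\Gamma)$. The main obstacle is the key lemma: without the model-complete core hypothesis, endomorphisms can map $c$ outside its own automorphism orbit, so $O$ would not be preserved by all polymorphisms; the appeal to Lemma~13 to control $\End(\Gamma)$ is the crux of the argument, and the rest is essentially bookkeeping.
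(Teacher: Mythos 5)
Your argument is correct. Note that the paper does not prove this statement at all — it is imported verbatim from the cited reference \cite{manuel-core} — and what you have reconstructed is essentially the standard proof of that result: the only nontrivial direction is $\csp(\Gamma,c)\leq_P\csp(\Gamma)$, obtained from pp-definability of the orbit $O$ of $c$ together with the simulation of the constant by an existentially constrained fresh variable, exactly as you do. The one point where you take a slightly different route from the original source is the key lemma: you establish that $O$ is preserved by all polymorphisms (via the endomorphism $e(x)=f(\alpha_1(x),\ldots,\alpha_k(x))$ and $\End(\Gamma)=\overline{\Aut(\Gamma)}$) and then invoke the Bodirsky--Ne\v{s}et\v{r}il theorem, whereas the reference argues syntactically: by $\omega$-categoricity the orbit is first-order definable, in a model-complete core it is therefore definable by an existential positive formula (the second sentence of the quoted Lemma~13), and since $O$ is a single orbit, one nonempty pp-disjunct of that formula already defines it. Both routes are sound and of comparable length; your clone-theoretic version has the small virtue of working uniformly for the finite and the $\omega$-categorical case, while the syntactic version avoids any appeal to Inv--Pol duality.
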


\subsection{Primitive positive interpretations} \label{sect:topclones}

A tool to compare the complexity of CSPs of structures $\Delta$ and $\Gamma$ of possibly different domains and signatures are interpretations. We say $\Delta$ is \textit{pp-interpretable} in $\Gamma$ if there is a $n \geq 1$ and a partial map $I:\Gamma^n \to \Delta$ such that 
\begin{itemize}
\item $I$ is surjective,
\item the domain of $I$ is pp-definable in $\Gamma$,
\item the preimage of the equality relation in $\Delta$ is pp-definable in $\Gamma$, 
\item the preimage of every relation in $\Delta$ is pp-definable in $\Gamma$.
\end{itemize}

Then the following result holds:
\begin{lemma}[Theorem 5.5.6 in \cite{manuelhabil}]
If $\Delta$ is pp-interpretable in $\Gamma$ then $\csp(\Delta)$ can be reduced to $\csp(\Gamma)$ in polynomial time. 
\end{lemma}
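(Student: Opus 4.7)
The plan is to construct a polynomial-time many-one reduction sending every instance of $\csp(\Delta)$ to an equivalent instance of $\csp(\Gamma)$. Let $I: \Gamma^n \to \Delta$ be the given pp-interpretation, and fix once and for all pp-formulas $\phi_{\text{dom}}(\bar x)$ defining the domain of $I$ in $\Gamma$, $\phi_{=}(\bar x, \bar y)$ defining $I^{-1}(=)$, and $\phi_R(\bar x_1, \ldots, \bar x_k)$ defining $I^{-1}(R)$ for each relation symbol $R$ in the signature of $\Delta$. These formulas depend only on the interpretation and in particular have constant size.

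Given an input instance $\Delta'$ of $\csp(\Delta)$ with variables $v_1,\ldots,v_m$, I would introduce in the output instance $\Gamma'$ a block of $n$ fresh variables $\bar v_i := (v_{i,1},\ldots,v_{i,n})$ for each $i$. For every $i$ I would add the atomic constraints obtained by writing out $\phi_{\text{dom}}(\bar v_i)$, where existentially quantified subformulas are replaced by atoms involving new auxiliary variables; and for every relational atom $R(v_{i_1},\ldots,v_{i_k})$ of $\Delta'$ I would add the atoms of $\phi_R(\bar v_{i_1},\ldots,\bar v_{i_k})$. Since each pp-formula has constant size, the resulting $\Gamma'$ has size $O(|\Delta'|)$, so the construction runs in polynomial time. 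For the equivalence, I would argue both directions explicitly: a homomorphism $h:\Delta'\to\Delta$ lifts to a homomorphism $g:\Gamma'\to\Gamma$ by choosing, for each $i$, a preimage $\bar a_i \in \Gamma^n$ of $h(v_i)$ under the surjection $I$, setting $g(v_{i,j}):=(a_i)_j$, and witnessing the auxiliary variables by the existential witnesses provided by the truth of $\phi_{\text{dom}}(\bar a_i)$ and $\phi_R(\bar a_{i_1},\ldots,\bar a_{i_k})$ in $\Gamma$ (the latter holds because $(h(v_{i_1}),\ldots,h(v_{i_k}))\in R^{\Delta}$). Conversely, a homomorphism $g:\Gamma'\to\Gamma$ yields for each $i$ a tuple $g(\bar v_i)$ lying in the domain of $I$, so setting $h(v_i):=I(g(\bar v_i))$ is well-defined, and the satisfaction of the atoms of $\phi_R$ under $g$ is exactly the statement that $(h(v_{i_1}),\ldots,h(v_{i_k}))\in R^{\Delta}$.

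The main obstacle is essentially bookkeeping rather than mathematical depth: one has to systematically replace existentially quantified variables in the pp-formulas by fresh CSP variables of $\Gamma'$, and keep track that the constant-size pp-formulas are what ultimately delivers the polynomial-time bound. Note that $\phi_{=}$ is not strictly needed in the reduction itself, since the preimage of equality is already baked into the definition of each $\phi_R$; rather, the role of $\phi_{=}$ in the notion of pp-interpretation is to guarantee that the relational atoms of $\Delta$ indeed pull back to pp-definable relations on $\Gamma^n$ that respect the fibres of $I$, which is precisely what makes the backward direction of the equivalence go through.
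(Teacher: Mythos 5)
Your proposal is correct: this lemma is quoted in the paper from the literature (Theorem 5.5.6 of \cite{manuelhabil}) without a proof, and your gadget reduction --- one block of $n$ variables per original variable, unfolding the fixed pp-formulas $\phi_{\text{dom}}$ and $\phi_R$ into atoms with fresh auxiliary variables, and verifying both directions via surjectivity of $I$ and the definition of the preimage relations --- is exactly the standard argument establishing it. The only bookkeeping detail left implicit is that equality atoms occurring in the pp-formulas should be handled by identifying the corresponding variables of the constructed instance, which does not affect correctness or the polynomial bound.
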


The positive-1-in-3-SAT problem is a well-studied problem in literature that is known to be NP-complete \cite{schaefer}. It can be written as $\csp(\{0,1\};\rm{1IN3})$ with $\rm{1IN3}:= \{(1,0,0),(0,1,0),(0,0,1)\}$. In practice we often show the NP-completeness of $\csp(\Gamma)$ by finding a pp-interpretation of $(\{0,1\};\rm{1IN3})$ in $\Gamma$.

As pp-definability, also pp-interpretation can be translated into the language of clones. A \textit{clone homomorphism} $\xi:\Pol(\Gamma) \to \Pol(\Delta)$ is a map that preserves arities such that
\begin{itemize}
\item $\xi(\pi_i^n) = \pi_i^n$ for all projections,
\item $\xi(f \circ (g_1,\ldots,g_n)) = \xi(f) \circ (\xi(g_1),\ldots, \xi(g_n))$.
\end{itemize}
For $\omega$-categorical or finite structures $\Delta$ and $\Gamma$ it is known that $\Delta$ is pp-interpretable in $\Gamma$  if and only if there exists a continuous \textit{clone homomorphism} from $\Pol(\Gamma)$ onto $\Pol(\Delta)$.

\begin{theorem}[Bodirsky, Pinsker \cite{topological-birkhoff}] \label{theorem:topbirk}
Let $\Gamma$ be finite or $\omega$-categorical and $\Delta$ be finite. Then $\Gamma$ has a primitive positive interpretation in $\Delta$ if and only if $\Pol(\Gamma)$ has a continuous clone homomorphism to $\Pol(\Delta)$. \hfill $\square$
\end{theorem}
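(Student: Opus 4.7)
The plan is to prove each direction of the equivalence separately, with most of the work going into reconstructing a pp-interpretation from a continuous clone homomorphism. The easy direction assumes the existence of a pp-interpretation of $\Delta$ in $\Gamma$ via a partial surjection $I : \Gamma^n \to \Delta$; by hypothesis the domain of $I$, the preimage of equality in $\Delta$, and the preimages of all relations of $\Delta$ are pp-definable in $\Gamma$. Because every $f \in \Pol(\Gamma)^{(k)}$ preserves pp-definable sets, it acts coordinatewise on $n$-tuples of representatives and descends along $I$ to a well-defined $k$-ary operation $\xi(f)$ on $\Delta$. A routine check shows that $\xi$ preserves projections and composition, so is a clone homomorphism, and $\xi$ is continuous because $\xi(f)(t_1,\ldots,t_k)$ is already determined by the value of $f$ on the fixed $n$-tuples of representatives of $t_1,\ldots,t_k$.

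The converse is the topological lift of the classical finite Birkhoff HSP theorem, and I would proceed in three stages. First, fix an enumeration of the domain $D$ of $\Gamma$ and for each $m$ consider the restriction $D_m = \{a_1,\ldots,a_m\}$; since $\Pol(\Gamma) \restriction D_m$ is finitely generated in each arity and $\Delta$ is finite, classical Birkhoff applied to the algebra $(\Delta;\xi(\Pol(\Gamma)))$ yields a finite-stage representation of $\Delta$ as a quotient of a subalgebra of a power of $\Gamma \restriction D_m$, which translates into pp-formulas (over $D_m$) witnessing a partial interpretation. Second, use $\omega$-categoricity via Ryll--Nardzewski to see that each $\Aut(\Gamma)$ has only finitely many orbits on $D^k$, so $\Pol(\Gamma)$-invariant relations are pp-definable over all of $\Gamma$ and are controlled by a finite amount of orbit information. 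Third, invoke continuity of $\xi$ to ensure that the finite-stage interpretations are mutually compatible, and then take a limit to obtain a single pp-interpretation of $\Delta$ in $\Gamma$.

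The main obstacle lies in the third stage: converting a coherent system of finite-stage interpretations into one global pp-interpretation over $\Gamma$. The difficulty is to guarantee that the pp-formulas witnessing the interpretation at level $m$ can be chosen uniformly, so that they embed into those at level $m+1$ and amalgamate into one formula in the fixed finite language of $\Gamma$. Here I would exploit the homogeneity of the ordered Ramsey expansion of $\Gamma$ (as used throughout the paper) to amalgamate finite-stage witnesses, and crucially use continuity of $\xi$ to cap the arities of the resulting formulas: continuity forces each output orbit of $\xi(f)$ to be determined by finitely many orbit-conditions on $f$, which bounds the dimension of the limiting pp-interpretation and allows one to extract it by a K\"onig's lemma or compactness argument on the inverse system of finite-stage interpretations. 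Verifying that these bounds are tight enough to produce a genuine pp-formula, rather than just an infinite pp-type, is where the full strength of the Bodirsky--Pinsker framework is needed.
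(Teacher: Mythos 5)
The paper itself does not prove this statement: it is imported verbatim from Bodirsky and Pinsker's topological Birkhoff theorem (cited as \cite{topological-birkhoff}), so your proposal has to stand on its own merits rather than be compared with a proof in the text. Your first direction is essentially correct and routine: given a pp-interpretation $I:\Gamma^n\to\Delta$, every polymorphism of $\Gamma$ preserves the pp-definable domain, the preimage of equality and the preimages of the relations of $\Delta$, hence acts coordinatewise and descends to an operation on $\Delta$; finiteness of $\Delta$ makes the resulting clone homomorphism continuous, since $\xi(f)$ is determined by the values of $f$ on fixed representative tuples. (Incidentally, the statement as printed swaps $\Gamma$ and $\Delta$ relative to the surrounding discussion; you prove the intended version, namely that $\Delta$ is pp-interpretable in $\Gamma$ iff there is a continuous clone homomorphism $\Pol(\Gamma)\to\Pol(\Delta)$.)

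The converse is where the entire content of the theorem lies, and there your plan has a genuine gap. The decisive step is to upgrade continuity of $\xi$ at each individual function (for each $f$ there is a finite set $S_f$ such that $\xi(f)$ depends only on $f\restriction S_f$) to \emph{uniform} continuity: one finite subset $F$ of the domain, independent of $f$, such that for every arity $k$ the value $\xi(f)$ depends only on $f\restriction F^k$. This is exactly what exhibits the algebra $(\Delta;\xi(\Pol(\Gamma)))$ as a quotient of a subalgebra of a \emph{finite} power of the polymorphism algebra of $\Gamma$; the subuniverse, the kernel and the relation preimages are then invariant under $\Pol(\Gamma)$ and hence pp-definable by the theorem of Bodirsky and Ne\v{s}et\v{r}il \cite{BN-ppdefinability}, which yields the interpretation of dimension $|F|$. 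Establishing this uniformization is the hard part of the Bodirsky--Pinsker proof and uses oligomorphicity of $\Pol(\Gamma)$ essentially; your sketch leaves precisely this step open (you concede as much), and the substitutes you propose do not work. First, $\Pol(\Gamma)\restriction D_m$ is not an algebra or clone, since polymorphisms do not map $D_m$ into $D_m$, so ``classical Birkhoff at stage $D_m$'' is not well posed; second, classical Birkhoff applied to the genuine algebras only gives membership in $\mathsf{HSP}$ with a possibly infinite exponent, which already follows from an abstract (not necessarily continuous) clone homomorphism and is strictly weaker than pp-interpretability --- the finiteness of the exponent is exactly what continuity must buy, not something the finite stages provide for free; third, the homogeneity of the ordered Ramsey expansion is a feature of the random poset and is neither available for an arbitrary $\omega$-categorical $\Gamma$ nor relevant to this theorem. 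Without the uniform-continuity step, your K\"onig/inverse-limit argument has no bound on the dimension of the finite-stage data, so it cannot be made to terminate in a single finite-dimensional pp-interpretation.
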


The polymorphism clone of $(\{0,1\};\rm{1IN3})$ consist only of the projections $\pi_i^n$ on the two element set $\{0,1\}$. We call it the \textit{projection clone} $\prcl$. So $(\{0,1\};\rm{1IN3})$ has a pp-interpretation in the $\omega$-categorical structure $\Gamma$ if and only if there is a continuous clone homomorphism $\xi: \Pol(\Gamma) \to \prcl$. By Theorem \ref{theorem:topbirk} then \emph{every} finite structure $\Delta$ has a pp-interpretation in $\Gamma$, since it is easy to see that there is always a continuous clone homomorphism from $\prcl$ to $\Pol(\Delta)$.\\

We summarize that for finite or $\omega$-categorical and finite structures $\Delta$ and $\Gamma$ the complexity of $\csp(\Delta)$ reduces to $\csp(\Gamma)$ in the following cases:
\begin{enumerate}
\item $\Delta$ is pp-interpretable in $\Gamma$.
\item $\Delta$ is the model-complete core of $\Gamma$.
\item $\Gamma$ is a model-complete core and $\Delta$ is obtained by adding finitely many constants to the signature of $\Gamma$.
\end{enumerate}


By some recent results of Barto, Pinsker we can also express this fact nicely by polymorphism clones. 

\begin{theorem}[Theorem 1.4 in \cite{BP-siggers}] \label{theorem:doubleshrink}
Let $\Gamma$ be finite or $\omega$-categorical and let $\Delta$ be its model-complete core. Then the following are equivalent:
\begin{enumerate}
\item Every finite structure has a pp-interpretation in some extension of $\Delta$ by finitely many constants.
\item There is a continuous clone homomorphism $\Pol(\Delta,c_1,\ldots,c_n) \to \prcl$ for some constants $c_1,\ldots,c_n \in \Delta$.
\item There is a clone homomorphism $\Pol(\Delta,c_1,\ldots,c_n) \to \prcl$ for some $c_1,\ldots,c_n \in \Delta$.
\item $\Pol(\Delta)$ contains no \textnormal{pseudo-Siggers operation}, i.e. a 6-ary operations $s$ such that
\[ \alpha s(x,y,x,z,y,z) = \beta s(y,x,z,x,z,y) \]
for some unary $\alpha,\beta \in \Pol(\Delta)$. \hfill $\square$
\end{enumerate}
\end{theorem}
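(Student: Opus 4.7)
The plan is to establish the cycle $(1) \Rightarrow (2) \Rightarrow (3) \Rightarrow (4) \Rightarrow (1)$, noting that $(1) \Leftrightarrow (2)$ is essentially topological Birkhoff. For $(1) \Rightarrow (2)$, apply the hypothesis to the NP-complete template $(\{0,1\}; \mathrm{1IN3})$, whose polymorphism clone is $\prcl$; Theorem~\ref{theorem:topbirk} then yields the continuous clone homomorphism $\Pol(\Delta, c_1, \dots, c_n) \to \prcl$. For $(2) \Rightarrow (1)$, compose the given continuous clone homomorphism to $\prcl$ with the obvious continuous clone homomorphism from $\prcl$ onto the polymorphism clone of any finite structure, and apply Theorem~\ref{theorem:topbirk} again. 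The implication $(2) \Rightarrow (3)$ is immediate since it just drops the topological requirement.

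For $(3) \Rightarrow (4)$ I would argue by contraposition. Assume a clone homomorphism $\xi : \Pol(\Delta, c_1, \dots, c_n) \to \prcl$ exists and $s \in \Pol(\Delta)$ is a pseudo-Siggers operation with unary witnesses $\alpha, \beta \in \Pol(\Delta)$. Arities are preserved by $\xi$, so $\xi(s) = \pi^6_j$ for some $j \in \{1, \dots, 6\}$, while $\xi(\alpha) = \xi(\beta) = \pi^1_1$ is forced because $\prcl^{(1)} = \{\pi^1_1\}$. Applying $\xi$ to the identity $\alpha s(x,y,x,z,y,z) = \beta s(y,x,z,x,z,y)$ produces $\pi^6_j(x,y,x,z,y,z) = \pi^6_j(y,x,z,x,z,y)$ in $\prcl$; column-by-column inspection shows every coordinate of the two tuples carries a different free variable from $\{x,y,z\}$, so this equation cannot hold on $\{0,1\}$ with independent $x, y, z$, a contradiction.

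The hard direction is $(4) \Rightarrow (2)$, which is the main content of~\cite{BP-siggers}. The strategy combines the classical Siggers theorem for finite idempotent algebras with an inverse-limit construction suited to $\omega$-categorical cores. For each finite $A \subseteq \Delta$ one forms the finite quotient algebra whose universe is the set of $\Aut(\Delta, c_1, \dots, c_n)$-orbits meeting $A$ and whose operations are induced by $\Pol(\Delta, c_1, \dots, c_n)$; absence of any pseudo-Siggers operation (after an adequate choice of constants $c_i$) propagates to absence of an ordinary Siggers term in these finite quotients, so the finite Siggers theorem supplies local clone homomorphisms to $\prcl$. A directed limit along the finite subsets $A$ then assembles a continuous clone homomorphism on the whole of $\Pol(\Delta, c_1, \dots, c_n)$. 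The main obstacle lies entirely here and is two-fold: one must choose the constants $c_1, \dots, c_n$ so that the quotient algebras become idempotent, where the finite Siggers theorem applies, and one must simultaneously ensure that the locally constructed clone homomorphisms cohere along restrictions so that the limit is well-defined and continuous. The refinement from the ordinary Siggers identity to the pseudo-Siggers identity, which absorbs the non-idempotent contribution of $\End(\Delta)$ through the unary witnesses $\alpha$ and $\beta$, is precisely what makes this strategy correct for model-complete cores rather than just for finite idempotent algebras.
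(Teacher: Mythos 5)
First, note that the paper itself does not prove this statement at all: it is quoted verbatim from \cite{BP-siggers} (Theorem 1.4 there) and used as a black box, so there is no in-paper proof to compare your argument with; what you have written is an attempt to reprove the cited result. Within that attempt, the implications (1) $\Rightarrow$ (2), (2) $\Rightarrow$ (1) and (2) $\Rightarrow$ (3) are fine, but there is a concrete gap in (3) $\Rightarrow$ (4). Your contraposition applies $\xi$ to $s$, $\alpha$ and $\beta$, yet these are only assumed to lie in $\Pol(\Delta)$, while $\xi$ is defined on the subclone $\Pol(\Delta,c_1,\ldots,c_n)$; a pseudo-Siggers operation and its unary witnesses need not preserve the constants, so $\xi(s)$, $\xi(\alpha)$, $\xi(\beta)$ are simply undefined and the column-by-column computation cannot be run. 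This is precisely where the model-complete core hypothesis must enter: for $\omega$-categorical model-complete cores, pseudo-identities satisfied in $\Pol(\Delta)$ are also satisfied by polymorphisms fixing any prescribed finite set of constants (cf.\ Lemma 82 of \cite{BPPPhyloCSP}, which the present paper invokes for exactly this purpose in the proof of Lemma \ref{lemma:algdic}). Your proof of (3) $\Rightarrow$ (4) never uses the core assumption, which is a sign that the step is missing; with that transfer lemma inserted, the rest of your computation (that $\pi^6_j$ applied to $(x,y,x,z,y,z)$ and $(y,x,z,x,z,y)$ differ in every coordinate) is correct.

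The second and more serious issue is that the essential direction (4) $\Rightarrow$ (2) is only an outline, and the outline as stated does not constitute a proof: the finite algebras on orbits are not idempotent, the absence of a pseudo-Siggers operation does not descend in any obvious way to the absence of an ordinary Siggers term in these quotients (the unary ``pseudo'' witnesses cannot simply be quotiented away), and the coherence and continuity of the glued homomorphism is exactly the delicate point, not a routine directed limit. The actual argument of \cite{BP-siggers} proceeds differently, via a pseudo-loop lemma for $\omega$-categorical digraphs. Since the theorem is cited rather than proved in the paper, deferring to \cite{BP-siggers} for this direction is legitimate, but then the deferral should be explicit rather than replaced by a sketch that diverges from the actual proof at its crucial step.
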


Note that (4) shows that the question whether $(\{0,1\};\rm 1IN3)$ is pp-interpretable in any $(\Delta,c_1,\ldots,c_n)$ only depends on algebraic but not the topological properties of $\Pol(\Delta)$.

\section{A pre-classification by model-complete cores} \label{sect:cores}

In this section we start our analysis of reducts of the random partial order $\Po = (P;\leq)$. Our aim is to determine the model-complete core for every reduct $\Gamma$ of $\Po$. Therefore we are going to study all possible endomorphism monoids $\End(\Gamma) \supseteq \Aut(\Po)$. Part of the work was already done in \cite{Poset-Reducts} where all the automorphism groups $\Aut(\Gamma) \supseteq \Aut(\Po)$ were determined. Several parts of our proof are very similar to the group case; at that points we are going to directly refer to the corresponding proofs of \cite{Poset-Reducts}.\\

If we turn the partial order $\Po$ upside-down, then the obtained partial order is again isomorphic to $\Po$. Hence there exists a bijection $\updownarrow: P \to P$ such that for all $x,y \in P$ we have $x < y$ if and only if $\updownarrow (y) < \updownarrow (x)$. By the homogeneity of $\Po$ it is easy to see that the monoid generated by $\updownarrow$ and $\Aut(\Po)$ does not depend on the choice of the function $\updownarrow$.

The class of all finite structures $(A;\leq,F)$, where $(A;\leq)$ is a partial order and $F$ is upwards closed set is an amalgamation class.  Its \fraisse{} limit is isomorphic to $\Po$ with an additional unary relation $F$. We say $F$ is a \textit{random filter} on $\Po$. Note that $F$ and $I = P \setminus F$ are both isomorphic to the random partial order. Furthermore for every pair $x \in I$ and $y \in F$ either $x < y$ or $x \bot y$ holds. 

We define a new order relation $<_F$ on by setting $x <_F y$ if and only if
\begin{itemize}
\item $x,y \in F$ and $x < y$ or,
\item $x,y \in I$ and $x < y$ or,
\item $x \in I$, $y \in F$ and $x \bot y$.
\end{itemize}

It is shown in \cite{Poset-Reducts} that the resulting structure $(P;<_F)$ is isomorphic to $(P,<)$. We fix a map $\circlearrowright_F: P \to P$ that maps $(P;<)$ isomorphically to $(P,<_F)$. By the homogeneity of $\Po$ one can see that the smallest closed monoid generated by $\circlearrowright$ and $\Aut(\Po)$ does not depend on the choice of the random filter $F$. We fix a random filter $F$ and set $\circlearrowright := \circlearrowright_F$.

For $B \subseteq \Sym(P)$, let $\langle B \rangle$ denote the smallest closed subgroup of $\Sym(P)$ containing $B$. For brevity, when it is clear we are discussing supergroups of $\Aut(\Po)$, we may abuse notation and write $\langle B \rangle$ to mean $\langle B \cup \Aut(\Po) \rangle$.

\begin{theorem}[Theorem 1 from \cite{Poset-Reducts}] \label{theorem:groups}
Let $\Gamma$ be a reduct of $\Po$. Then $\Aut(\Gamma) \supseteq \Aut(\Po)$ is equal to one of the five groups
$\Aut(\Po)$, $\langle \updownarrow\rangle$, $\langle \circlearrowright\rangle$, $\langle \updownarrow, \circlearrowright\rangle$ or $\Sym(P)$. \hfill $\square$ 
\end{theorem}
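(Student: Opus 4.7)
The plan is to classify the closed supergroups $G$ of $\Aut(\Po)$ in $\Sym(P)$ by analyzing canonical permutations via the ordered homogeneous Ramsey expansion $(P;\leq,\prec)$, where $\prec$ is a generic linear extension of $\leq$.

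First I would use Theorem~\ref{theorem:ramsey2} to canonize. Fix a closed $G \supseteq \Aut(\Po)$ and pick $g \in G \setminus \Aut(\Po)$. For every finite $A \subseteq P$ there is $\alpha \in \Aut(\Po)$ such that $g \circ \alpha$ is canonical on $A$ as a map $(P;\leq,\prec) \to \Po$. Passing to a pointwise limit, using the oligomorphicity of $(P;\leq,\prec)$ and the closedness of $G$, I obtain a fully canonical $h \in G$. Applying the same argument to $g^{-1}$ and using the group structure of $G$, I arrange for $h$ to be a permutation of $P$ that lies in $G \setminus \Aut(\Po)$.

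Next I would enumerate canonical behaviors. Such an $h$ is determined modulo $\Aut(\Po)$ by where it sends the four $2$-types of distinct ordered pairs of $(P;\leq,\prec)$: $x<y$ (which forces $x \prec y$), its reverse $y<x$, and the two orientations $x\bot y,\,x\prec y$ and $x\bot y,\,y\prec x$; each is sent to one of the three $2$-types $<,>,\bot$ of $\Po$. Two constraints apply: reversing a pair reverses the image type (swapping $<$ and $>$, fixing $\bot$), and the surjectivity of $h$ forces all three target types of $\Po$ to occur in the image. A short case check rules out all but four admissible behaviors, corresponding modulo $\Aut(\Po)$ to $\mathrm{id}$, $\updownarrow$, $\circlearrowright$, and $\updownarrow\circ\circlearrowright$. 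By homogeneity of $\Po$, any canonical permutation realising one of the nontrivial behaviors lies in the coset of the corresponding named generator, and so $G$ contains that generator.

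Combining, if the only canonical permutation in $G$ modulo $\Aut(\Po)$ is the identity, then $G=\Aut(\Po)$; otherwise $G$ contains some subset of $\{\updownarrow,\circlearrowright,\updownarrow\circ\circlearrowright\}$, and closing under composition yields one of $\langle\updownarrow\rangle$, $\langle\circlearrowright\rangle$, or $\langle\updownarrow,\circlearrowright\rangle$. The remaining case, $G=\Sym(P)$, arises when $G$ strictly contains $\langle\updownarrow,\circlearrowright\rangle \cdot \Aut(\Po)$; here I would pick any $g$ witnessing this strict inclusion and, using the amalgamation property of finite posets together with the density of the $\bot$-relation in $\Po$, realise from $g$ and $\Aut(\Po)$ an arbitrary transposition of two points --- enough to generate all of $\Sym(P)$. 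The main obstacle is this last maximality step: since any canonical reduction of $g$ already falls into one of the four behaviors, the ``wildness'' of $g$ must be visible only non-canonically on arbitrarily large finite sets, and one must exploit this non-canonical freedom together with the homogeneity of $\Po$ to build enough permutations; this is essentially the hardest combinatorial core of the argument in \cite{Poset-Reducts}.
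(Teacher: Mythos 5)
First, note that the paper does not prove this statement at all: it is imported verbatim as Theorem~1 of \cite{Poset-Reducts}, so the only meaningful comparison is with the proof given there. Your overall strategy --- canonization over the ordered Ramsey expansion $(P;\leq,\prec)$ followed by an enumeration of behaviours --- is indeed the strategy of \cite{Poset-Reducts} (and of Section~\ref{sect:cores} of this paper for monoids), but your sketch has genuine gaps. The central one is the canonization step: if you canonize an arbitrary $g\in G\setminus\Aut(\Po)$ without parameters, the canonical limit may simply behave like the identity, and you have lost the witness that $g$ is not an automorphism. The correct procedure is to fix a tuple $\bar c$ witnessing that $g$ violates some orbit-defined relation ($\Betw$, $\Cycl$, $\Sept$, \dots) and to canonize as a map from $(P;\leq,\prec,\bar c)$ using Theorem~\ref{theorem:ramsey}, not Theorem~\ref{theorem:ramsey2}; one must then analyze behaviours on and between the infinitely many orbits of the expanded structure and the constants, which is the bulk of the work (compare Lemmas~\ref{lemma:singleorbit}--\ref{lemma:allorbits} here, which mirror Lemmas~8, 22, 23 of \cite{Poset-Reducts}). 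Your table of four behaviours on the $2$-types of $(P;\leq,\prec)$ does not capture this. Moreover, the pointwise limit of the functions $g\circ\alpha$ is not an element of $G$ but of the closure of $G$ in $P^P$: it is injective but in general not surjective, so the appeal to ``surjectivity of $h$'' that you use to exclude, e.g., the linearizing behaviour $<\,\mapsto\,<$, $(\bot,\prec)\mapsto\,<$, $(\bot,\succ)\mapsto\,>$ (the group analogue of $g_<$) is not available; canonizing $g$ and $g^{-1}$ separately also does not yield mutually inverse functions, so ``arranging $h$ to be a permutation in $G$'' is unjustified.

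Second, the maximality step is missing rather than proved: you acknowledge that showing $G=\Sym(P)$ whenever $G$ strictly contains $\langle\updownarrow,\circlearrowright\rangle$ is the hardest part, but the proposed route via extracting a transposition from a wild $g$ is only a hope, not an argument --- nothing in the sketch explains how amalgamation and homogeneity would produce a finitary permutation from a function whose canonical reductions all look tame. In \cite{Poset-Reducts} this step is carried out by showing, again via canonical functions with constants, that such a group must be $n$-transitive for every $n$, hence dense in $\Sym(P)$, hence equal to $\Sym(P)$ by closedness. As it stands, your proposal reproduces the correct framework but leaves unproved exactly the two points where the real difficulty lies: the behaviour analysis after adding constants, and the high-transitivity argument for the top case.
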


We are going to show the following extension of Theorem \ref{theorem:groups}:
\begin{proposition} \label{theorem:monoids}
Let $\Gamma$ be a reduct of $\Po$. Then for $\End(\Gamma)$ at least one of the following cases applies:
\begin{enumerate}
\item $\End(\Gamma)$ contains a constant function,
\item $\End(\Gamma)$ contains a function $g_<$ that preserves $<$ and maps $P$ onto a chain,
\item $\End(\Gamma)$ contains a function $g_{\bot}$ that preserves $\bot$ and maps $P$ onto an antichain,
\item The automorphism group $\Aut(\Gamma)$ is dense in $\End(\Gamma)$, i.e. $\Gamma$ is a model-complete core. So by the classification in Theorem \ref{theorem:groups}, $\End(\Gamma)$ is the topological closure of $\Aut(\Po)$, $\langle \updownarrow \rangle$, $\langle \circlearrowright \rangle$, $\langle \updownarrow, \circlearrowright \rangle$ or $\Sym(P)$ in the space of all functions $P^P$.
\end{enumerate}
\end{proposition}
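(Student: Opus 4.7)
The plan is to show that if none of cases 1--3 applies, then case 4 must hold, i.e.\ every endomorphism of $\Gamma$ lies in $\overline{\Aut(\Gamma)}$. I fix an arbitrary $f \in \End(\Gamma)$ and extract a canonical companion from it via the Ramsey machinery. The Fra\"iss\'e limit $(P;\leq,\prec)$ of linearly-extended posets is an ordered homogeneous Ramsey structure, and $\Aut(P;\leq,\prec)\subseteq \Aut(\Po)\subseteq \End(\Gamma)$. Combining Theorem \ref{theorem:ramsey2} with a standard compactness argument over an exhaustion of $\Po$ by finite substructures, I obtain a function $\tilde f$, lying in the closed submonoid generated by $\{f\}\cup \Aut(\Po)$, which is canonical as an operation $(P;\leq,\prec)\to\Po$. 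Such $\tilde f$ is determined up to post-composition with $\Aut(\Po)$ by its action on the four non-equality $2$-types of $(P;\leq,\prec)$, namely $x<y$, $y<x$, $(x\bot y,\,x\prec y)$ and $(x\bot y,\,y\prec x)$, each of which is sent to one of the four $2$-types $\{=,<,>,\bot\}$ of $\Po$.

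The case analysis now proceeds on this behaviour. If some non-equality type is sent to equality---say $\tilde f$ maps $x<y$ to $=$, so that $\tilde f(x)=\tilde f(y)$ whenever $x<y$---then the universality of $\Po$ provides, for any two elements $a,b\in P$, a common upper bound $c$, so $\tilde f(a)=\tilde f(c)=\tilde f(b)$ and $\tilde f$ is constant, giving case 1; the same argument with common lower bounds or common incomparable elements handles the remaining three $2$-types. If instead all four types go to comparabilities, the image $\tilde f(P)$ is a chain, and $\tilde f$ either preserves or reverses $<$; in the latter case $\tilde f\circ\tilde f$ preserves $<$ and still has chain image, so case 2 holds. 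If all four types go to $\bot$, the image is an antichain and $\tilde f$ preserves $\bot$, giving case 3. In the remaining ``mixed'' cases $\tilde f$ is injective, and a canonical-behaviour analysis parallel to the group classification of \cite{Poset-Reducts} shows that, after post-composition with an automorphism of $\Po$, $\tilde f$ behaves like one of the four elementary operations $\mathrm{id},\updownarrow,\circlearrowright,\updownarrow\!\circlearrowright$, and hence lies in the closure of one of the five groups listed in Theorem \ref{theorem:groups}.

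If cases 1--3 all fail for $\Gamma$, every canonical witness $\tilde f$ extracted in this way lies in $\overline{\Aut(\Gamma)}$. A final invocation of Theorem \ref{theorem:ramsey}, with constants fixed from a prescribed finite substructure of $P$, then shows that the original $f$ agrees on every finite subset of $P$ with some $\alpha\circ\tilde f$ for $\alpha \in \Aut(\Po)$, and the local closure of $\overline{\Aut(\Gamma)}$ yields $f \in \overline{\Aut(\Gamma)}$; this is case 4. The main obstacle I anticipate is the mixed subcase: one must enumerate the canonical behaviours that neither collapse nor map into a chain or antichain and match each to exactly one of the five groups of Theorem \ref{theorem:groups}, reproducing at the endomorphism level the filter- and amalgamation-based combinatorics that drive the automorphism classification in \cite{Poset-Reducts}. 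In particular, the random filter $F$ used to define $\circlearrowright$ is needed to interpret those behaviours that swap comparability and incomparability on different parts of $P$; verifying that no exotic endomorphism slips through---i.e.\ that the mixed regime exactly recovers the five groups and nothing new---is the crux.
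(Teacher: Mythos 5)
Your constant-free case analysis (collapse to a constant, all types to comparabilities, all types to $\bot$, mixed) is fine as far as it goes, but the proof has a genuine gap at the transfer step, which is exactly where the paper's real work lies. From the behaviour of a canonical function $\tilde f$ generated by $\{f\}\cup\Aut(\Po)$ you cannot conclude anything about $f$ itself: canonization pre-composes $f$ with automorphisms over large finite sets, so a finite witness of ``bad'' behaviour of $f$ (say, a single comparable pair mapped to an incomparable pair, or two points identified) can be washed out entirely, and all constant-free canonical functions generated by such an $f$ may simply behave like the identity. Your final sentence --- that Theorem \ref{theorem:ramsey} shows ``the original $f$ agrees on every finite subset of $P$ with some $\alpha\circ\tilde f$'' --- is not what that theorem provides: with constants $\bar c$ it yields a function $g$ agreeing with $f$ \emph{only on $\bar c$} and canonical with respect to the expansion $(P;\leq,\prec,\bar c)$, whose type structure has many $1$-types, so the four-type analysis you carried out does not apply to it. Thus the claim ``if cases 1--3 fail, every canonical witness lies in $\overline{\Aut(\Gamma)}$, hence $f\in\overline{\Aut(\Gamma)}$'' is unjustified.

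The paper closes precisely this hole by a different mechanism. It first proves Lemma \ref{lemma:relations}, identifying the candidate closed monoids as endomorphism monoids of concrete reducts: $\End(P;<,\bot)=\overline{\Aut(\Po)}$, $\End(P;\Betw,\bot)=\overline{\langle\updownarrow\rangle}$, $\End(P;\Cycl)=\overline{\langle\circlearrowright\rangle}$, $\End(P;\Sept)=\overline{\langle\updownarrow,\circlearrowright\rangle}$. Then the question ``is $\Aut(\Gamma)$ dense in $\End(\Gamma)$'' reduces to ``which of finitely many relations ($<$, $\bot$, $\Betw$, $\Cycl$, $\Sept$, plus injectivity) does $\End(\Gamma)$ preserve''. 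A violation of any of these is witnessed on a fixed finite tuple $\bar c$, so one canonizes \emph{after fixing $\bar c$ as constants} (Lemma \ref{lemma:blackbox}); the violation survives canonization, and the orbit-by-orbit analysis of canonical functions over $(P;\leq,\prec,\bar c)$ (Lemmas \ref{lemma:singleorbit}--\ref{lemma:allorbits}) shows the canonical witness behaves like the identity off $\bar c$, whence the violation on $\bar c$ either produces a constant function, $g_<$ or $g_\bot$, or forces $\End(\Gamma)$ up to one of the larger closed monoids. Your proposal contains neither an analogue of Lemma \ref{lemma:relations} (some device converting preservation of relations into containment in a closed monoid) nor the with-constants canonical analysis; without them, the argument does not reach case 4.
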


Before we start with the proof Proposition \ref{theorem:monoids} we want to point out its relevance for the complexity analysis of the CSPs on reducts of $\Po$.

Constraint satisfaction problems on reducts of $(\mQ;<)$ are called \textit{temporal satisfaction problems}. The CSPs on reducts of a countable set with a predicate for equality $(\omega;=)$ are called \textit{equality satisfaction problems}. For both classes a full complexity dichotomy is known, see \cite{Temp-SAT} and\cite{BKeqSAT}. As a corollary of Proposition \ref{theorem:monoids} we get the following pre-classification of CSPs reducing all the cases where $\Gamma$ is not a model-complete core to temporal or equality satisfaction problems:

\begin{corollary} \label{corollary:monoids}
Let $\Gamma$ be a reduct of $\Po$. Then one of the following holds
\begin{enumerate}
\item $\CSP(\Gamma)$ is trivial;
\item The model-complete core of $\Gamma$ is a reduct of $(\omega;=)$,\\ so $\CSP(\Gamma)$ is equal to an equality satisfaction problem;
\item The model-complete core of $\Gamma$ is a reduct of $(\Q;<)$,\\ so $\CSP(\Gamma)$ is equal to a temporal satisfaction problem;
\item $\End(\Gamma)$ is the topological closure of $\Aut(\Po)$, $\langle \updownarrow \rangle$, $\langle \circlearrowright \rangle$ or $\langle \updownarrow, \circlearrowright \rangle$.
\end{enumerate}
\end{corollary}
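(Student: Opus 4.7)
The plan is to derive the corollary by case analysis of the four alternatives in Proposition~\ref{theorem:monoids}, translating each into one of the four clauses of the corollary.

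First I handle case (1) of the proposition. If $c \in \End(\Gamma)$ is a constant endomorphism with value $a$, then for every relation $R$ of $\Gamma$ and every tuple $(x_1,\ldots,x_k) \in R^\Gamma$ we have $(a,\ldots,a) = (c(x_1),\ldots,c(x_k)) \in R^\Gamma$. Hence the constant map with value $a$ is a homomorphism from any finite $\tau$-structure $\Delta$ into $\Gamma$ as long as each relation used in $\Delta$ is nonempty in $\Gamma$; this reduces satisfiability to a constant-time preprocessing check, so $\CSP(\Gamma)$ is trivial, giving clause~(1) of the corollary.

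For cases (2) and (3) of the proposition I pass to the substructure $\Gamma'$ of $\Gamma$ induced on the image of $g_<$ (respectively $g_\bot$). Since $g_<$ (resp.\ $g_\bot$) is a surjective homomorphism onto $\Gamma'$ and the inclusion $\Gamma' \hookrightarrow \Gamma$ is also a homomorphism, $\Gamma$ and $\Gamma'$ are homomorphically equivalent, so their CSPs coincide and they share the same model-complete core (unique up to isomorphism). In case (2) the domain of $\Gamma'$ is a chain, so $\leq$ restricts to a total order there and $\bot$ is empty; every quantifier-free $\leq$-formula thus collapses to a $<$-formula over this total order, making $\Gamma'$ a reduct of a countable chain. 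Using that $\Po$ is homogeneous and contains an induced copy of $(\Q;<)$, I can post-compose $g_<$ with a suitable endomorphism of $\Po$ to view $\Gamma'$ as a reduct of $(\Q;<)$, yielding clause~(3). Case~(3) of the proposition is completely dual: the image $g_\bot(P)$ is an antichain, on which $\leq$ collapses to equality, making $\Gamma'$ a reduct of $(\omega;=)$ and yielding clause~(2).

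Finally, in case (4) of the proposition, $\Gamma$ is a model-complete core, so $\End(\Gamma) = \overline{\Aut(\Gamma)}$ and $\Aut(\Gamma)$ is one of the five groups listed in Theorem~\ref{theorem:groups}. If $\Aut(\Gamma) = \Sym(P)$, every relation of $\Gamma$ is preserved by all permutations of $P$, so each $n$-ary relation is determined purely by the equality pattern of its coordinates; hence $\Gamma$ is itself a reduct of $(\omega;=)$, again landing in clause~(2). In the remaining four subcases, $\End(\Gamma)$ is the topological closure of one of $\Aut(\Po)$, $\langle \updownarrow \rangle$, $\langle \circlearrowright \rangle$, or $\langle \updownarrow, \circlearrowright \rangle$, which is precisely clause~(4). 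The main technical subtlety lies in cases (2) and (3), where I must verify that $\Gamma'$ really is a reduct of the canonical templates $(\Q;<)$ or $(\omega;=)$ rather than of some arbitrary chain or antichain; this is where I crucially use homogeneity and $\omega$-categoricity of $\Po$, together with the existence of an induced dense linear order and of an infinite antichain inside $\Po$, to justify the final post-composition step.
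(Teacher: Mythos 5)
Your overall strategy is the same as the paper's (case analysis along Proposition~\ref{theorem:monoids}, homomorphic equivalence of $\Gamma$ with the structure induced on the image of $g_<$ or $g_\bot$, and the same treatment of the constant and core cases; your handling of the $\Sym(P)$ subcase via invariance under all permutations is a harmless variation of the paper's remark that $g_\bot$ lies in $\overline{\Sym(P)}$). However, two steps do not go through as you describe them. First, your ``post-compose $g_<$ with a suitable endomorphism of $\Po$ to view $\Gamma'$ as a reduct of $(\Q;<)$'' cannot work as stated: the image of $g_<$ is a countable chain without endpoints, but it need not be dense --- one can construct a $<$-preserving self-map of $P$ whose image is a chain containing a consecutive pair (split $P$ along a pair $a_0\bot b_0$ into a down-set and an up-set using a linear extension $\prec$, and map onto an embedded copy of $\Q + 2 + \Q$), so the image need not be order-isomorphic to $\Q$; and composing $g_<$ with any self-map of $P$ leaves the domain of $\Gamma'$ a proper subchain of whatever copy of $\Q$ you land in, which does not make $\Gamma'$ literally a reduct of $(\Q;<)$. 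What does work (and what your appeal to homogeneity is implicitly reaching for) is to note that the induced relations on the image chain $C$ are quantifier-free definable over the chain order, that order-embeddings of chains preserve such relations and their complements, and that $C$ contains a copy of $\Q$ (the $g_<$-image of a $\Q$-chain in $P$); order-embeddings $C\hookrightarrow\Q$ and $\Q\hookrightarrow C$ then give homomorphic equivalence with the reduct of $(\Q;<)$ defined by the same formulas. (The paper is itself terse here, simply asserting the image is isomorphic to $\Q$, but your proposed mechanism is the wrong one.)

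Second, and more substantively, clauses (2) and (3) of the corollary assert that the \emph{model-complete core} of $\Gamma$ is a reduct of $(\omega;=)$ respectively $(\Q;<)$. You only establish that $\Gamma$ is homomorphically equivalent to such a reduct, which identifies the cores abstractly but does not by itself show that this common core is again a structure of the required form. The paper closes exactly this gap by citing the known classifications: every reduct of $(\omega;\neq)$ without constant endomorphisms is already a model-complete core \cite{BKeqSAT}, and the model-complete core of every reduct of $(\Q;<)$ is trivial, definable over equality, or the reduct itself \cite{Temp-SAT}. Without invoking these results (or proving a substitute), your argument yields only ``$\csp(\Gamma)$ equals an equality/temporal satisfaction problem'' but not the literal statements about the model-complete core in clauses (2) and (3).
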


\begin{proof}
If there is a constant function in $\End(\Gamma)$, then $\CSP(\Gamma)$ accepts every instance, so we are in the first case. So let $\End(\Gamma)$ contain no constants.

Assume that $g_\bot \in \End(\Gamma)$. Since $g_\bot$ preserves $\bot$, the image of $(P;\bot)$ under $g_\bot$ is isomorphic to a countable antichain, or in other word, a countable set $\omega$ with a predicate for inequality $(\omega;\neq)$. Thus, for every reduct of $\Gamma$ the image $g_\bot(\Gamma)$ can be seen as a reduct of $(\omega;\neq)$. Now clearly $\Gamma$ and $g_\bot(\Gamma)$ are homomorphically equivalent. It is shown in \cite{BKeqSAT} that every reduct of $(\omega;\neq)$ without constant endomorphisms is a model-complete core. So we are in the second case.

Now assume that $g_< \in \End(\Gamma)$ but $g_\bot \not\in \End(\Gamma)$. Since $g_<$ preserves $<$ and is a chain, the image of $(P;<)$ under $g_<$ has to be isomorphic to the rational order $(\mQ;<)$. Thus for every reduct of $\Gamma$ the image $g_<(\Gamma)$ can be seen as a reduct of $\mQ$. Now clearly $\Gamma$ and $g_<(\Gamma)$ are homomorphically equivalent. It is shown in \cite{Temp-SAT} that the model-complete core of every reduct of $(\Q,<)$ is either trivial, definable in $(\omega,\neq)$ or the reduct itself. So we are in the third case.

Note that also in the case where $\End(\Gamma) = \overline{\Sym(P)}$ we have that $e_\bot \in \End(\Gamma)$. So by Proposition \ref{theorem:monoids} we are only left with the cases where $\End(\Gamma)$ is the topological closure of $\Aut(\Po)$, $\langle \updownarrow \rangle$, $\langle \circlearrowright \rangle$ or $\langle \updownarrow, \circlearrowright \rangle$.
\end{proof}

Let us define the following relations on $P$:
\begin{align*}
\Betw(x,y,z) := & (x < y \land y < z) \lor (z < y \land y < x).\\
\Cycl(x,y,z) := & (x < y \land y < z) \lor (y < z \land z < x) \lor (z < x \land x < y) \lor \\
 &(x < y \land z \bot xy) \lor (y < z \land x \bot yz) \lor (z < x \land y \bot zx).\\
\Par(x,y,z) := & (x \bot yz \land y \bot z) \lor (x < yz \land y \bot z) \lor (x  > yz \land y \bot z)\\
\Sept(x,y,z,t) := & (\Cycl(x,y,z) \land \Cycl(y,z,t) \land \Cycl(x,y,t) \land \Cycl(x,z,t)) \lor \\
& (\Cycl(z,y,x) \land \Cycl(t,z,y) \land \Cycl(t,y,x) \land \Cycl(t,z,x)).
\end{align*}
In Lemma \ref{lemma:relations} we are going to give a description of the monoids $\overline{\langle \updownarrow \rangle}$, $\overline{\langle \circlearrowright \rangle}$ and $\overline{\langle \updownarrow, \circlearrowright \rangle}$ as endomorphism monoids with the help of the above relations. We remark that $\Cycl$ and $\Par$ describes the orbits triples under $\langle \circlearrowright \rangle$ and $\Sept$ describes the orbit of a linearly ordered 4-tuple under $\langle \updownarrow, \circlearrowright \rangle$. 

\begin{lemma} \label{lemma:ppdefpar}
The incomparability relation $\bot$ is pp-definable in $(P; <, \Cycl)$ and $\Par$ is pp-definable in $(P;\Cycl)$.
\end{lemma}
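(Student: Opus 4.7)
The plan is to exhibit explicit primitive positive formulas and verify them by case analyses on the six disjuncts defining $\Cycl$.

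For the first claim I would propose
\[ \bot(x,y) \;\iff\; \exists w_1\, \exists w_2\, \bigl(\Cycl(x,y,w_1) \land w_1 < y \land \Cycl(y,x,w_2) \land w_2 < x\bigr). \]
The key step is to show that $\exists w\,(\Cycl(a,b,w) \land w<b)$ is equivalent to $a<b \lor a\bot b$. Going through the six cases of $\Cycl(a,b,w)$, only case~3 ($w<a<b$, which forces $a<b$) and case~6 ($w<b \land a\bot wb$, which forces $a\bot b$) are compatible with $w<b$; the other four cases each require $b<w$ or $w\bot b$, contradicting $w<b$. Conversely, when $a<b$ or $a\bot b$ the required witness is produced by the extension property of $\Po$: pick any $w<a$ in the first case, and any $w<b$ with $w\bot a$ in the second. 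Applying this equivalence to both conjuncts of the proposed formula yields $(x<y \lor x\bot y) \land (y<x \lor x\bot y)$, which collapses to $x\bot y$ because $x<y$ and $y<x$ are mutually exclusive.

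For the second claim the same style of argument has to be carried out with $\Cycl$ alone, so more existential witnesses are needed to compensate for the absence of $<$. Unpacking the definition, a $\Par$-triple is characterised by $y\bot z$ together with $x$ being uniformly $<$, $>$, or $\bot$ to both of $y,z$. My approach would be to introduce auxiliary variables $w_1, w_2, \ldots$ linked to $x,y,z$ by several $\Cycl$-atoms so that, on the one hand, the extension property of $\Po$ supplies the witnesses in each of the three $\Par$-configurations, and on the other hand, for any $(x,y,z)$ outside $\Par$ (e.g.\ in $\Cycl$, in its opposite, or in a C- or D-configuration whose incomparable pair is not $\{y,z\}$) the assigned $\Cycl$-cases for the atoms become pairwise contradictory.

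The main obstacle lies in this second task: in the first claim the atom $w<b$ reduced the six $\Cycl$-cases to only two, but with $\Cycl$ as the sole available relation we must simulate that pruning by an iterated $\Cycl$-configuration, so the bookkeeping becomes significantly more delicate. I expect the work to consist of a careful enumeration showing that in every joint choice of $\Cycl$-cases for the atoms, either the conjunction forces $y\bot z$ together with a uniform relation of $x$ to $y,z$, or the atoms are mutually incompatible, so that only $\Par$-triples can satisfy the whole formula while every $\Par$-triple does.
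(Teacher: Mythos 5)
Your first formula does not define $\bot$, because the key equivalence you rely on is false. With the paper's definition of $\Cycl$, the disjuncts of $\Cycl(a,b,w)$ in which one point is incomparable to the other two are $a<b \land w\bot ab$, \ $b<w\land a\bot bw$ and $w<a\land b\bot wa$; there is no disjunct of the form $w<b\land a\bot wb$ (that is a disjunct of $\Cycl(a,w,b)$, i.e.\ of a different argument order). Hence under the constraint $w<b$ all three incomparability disjuncts of $\Cycl(a,b,w)$ are excluded (they require $w\bot b$, $b<w$, $b\bot w$ respectively), and only $w<a\land a<b$ survives, so $\exists w\,(\Cycl(a,b,w)\land w<b)$ is equivalent to $a<b$ alone, not to $a<b\lor a\bot b$. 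Your proposed definition of $\bot$ therefore collapses to $x<y\land y<x$, i.e.\ the empty relation. Cyclically reordering the arguments does not repair it: $\exists w\,(\Cycl(a,w,b)\land w<b)$ does contain the disjunct $w<b\land a\bot wb$, but it also contains $w<b\land b<a$, so it is equivalent to $a\neq b$. This is precisely why the paper's pp-definition of $\bot$ is much heavier: it uses four auxiliary points $a,b,c,d$ with order constraints $x<a<c$, $x<b<d$, $y<c$, $y<d$ and six $\Cycl$-atoms, and then rules out $x<y$ and $y<x$ by two separate contradiction arguments.

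For the second claim you give no formula at all, only the intention to find one by a large case enumeration, and you yourself flag that the bookkeeping is the obstacle; as it stands this is a gap, not a proof. The paper circumvents the enumeration with a structural idea that is missing from your plan: fix parameters $s<t$; the interval $X=\{x\in P: s<x<t\}$ is pp-defined by $\Cycl(s,x,t)$, the restriction of $<$ to $X$ is pp-defined by $\Cycl(y,z,t)$ together with membership in $X$, and $(X;\leq)$ is isomorphic to $(P;\leq)$; by the first claim $\bot$ restricted to $X$ is then pp-definable in $(P;\Cycl,s,t)$, hence so is the relation ``$(u,v,w)$ is a $3$-antichain inside $X$''. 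Existentially quantifying out $s$ and $t$ yields a pp-formula over $\Cycl$ alone, and one then uses that $\Par$ is the orbit of a $3$-antichain under $\langle \circlearrowright \rangle\subseteq\End(P;\Cycl)$ (and uses $\circlearrowright$ again to normalize a witness pair to $s<t$ in the converse direction) to conclude that this formula defines exactly $\Par$. Without some device of this kind — relativizing to a pp-definable copy of $\Po$ and exploiting the $\circlearrowright$-symmetry to pass between the three $\Par$-configurations — your direct approach has nothing concrete that can be verified.
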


\begin{proof}
To proof the first part of the lemma, let
\begin{align*} \psi(x,y,a,b,c,d) := & \ x < a < c \land x < b < d \land y < c \land y < d  \land \Cycl(x,a,y) \land \Cycl(x,b,y) \\
& \land \Cycl(y,c,b) \land \Cycl(y,d,a) \land \Cycl(b,d,c) \land \Cycl(a,c,d).
\end{align*}

We claim that $x \bot y$ is equivalent to $\exists a,b,c,d~\psi(x,y,a,b,c,d)$. It is not hard to verify that $x \bot y$ implies $\exists a,b,c,d~\psi(x,y,a,b,c,d)$. For the other direction note that $\psi(x,y,a,b,c,d)$ implies that $x \neq y$ because $\Cycl(x,a,y)$ is part of the conjunction $\psi$.

Let us assume that $x < y$ and $\psi(x,y,a,b,c,d)$ holds for some elements $a,b,c,d \in P$. Then $\Cycl(x,a,y)$ implies that $a < y$, symmetrically we have $b < y$. Since $y < c,d$ we have that $a < d$ and $b < c$. Then $\Cycl(b,d,c)$ implies $d < c$ and $\Cycl(a,c,d)$ implies $c < d$, which is a contradiction.

Now assume that $y < x$ and $\psi(x,y,a,b,c,d)$ holds for some elements $a,b,c,d \in P$. Then we have $y < a,b$ by the transitivity of the order. Then $\Cycl(y,c,b)$ implies $c < b$ and $\Cycl(y,d,a)$ implies $d < a$. But this leads to the contradiction $a < c <b$ and $b < d < a$. 

For the second part of the lemma let $s,t \in P$ be two elements with $s < t$. Then the set $X = \{x \in P : s < x < t \}$ is pp-definable in $(P;\Cycl, s,t)$ by the formula $\phi(x) := \Cycl(s,x,t)$. By a back-and-forth argument one can show the two structures $(X;\leq)$ and $(P;\leq)$ are isomorphic. The order relation, restricted to $X$ is also pp-definable in $(P;\Cycl, s,t)$ by the equivalence
\[ y <_{|X} z \leftrightarrow \phi(x) \land \phi(y) \land \Cycl(y,z,t). \]

Since $\bot$ is pp-definable in $(P;<,\Cycl)$, we have that its restriction to $X$ has a pp-definition in $(P;\Cycl,s,t)$. 
Therefore also the relation $R = \{(x,y,z) \in X^3 : x \bot y \land  x \bot z \land  z \bot y \}$ is pp-definable in $(P;\Cycl,s,t)$. Let $\phi(s,t,u,v,w)$ be a primitive positive formula defining $R$.

We claim that $\exists x,y~\phi(x,y,u,v,w)$ is equivalent to $(u,v,w) \in \Par$. Let $(u,v,w) \in \Par$. The relation $\Par$ describes the orbit of a 3-element antichain under the action of $ \langle \circlearrowright \rangle \subseteq \End(P;\Cycl)$. So we can assume that $(u,v,w)$ is a 3-antichain, otherwise we take an image under a suitable function form $ \langle \circlearrowright \rangle$. Now let us take elements $s<t$ such that $s < uvw$ and $uvw < t$. Then clearly $\psi(s,t,u,v,w)$ has to hold.

Conversely let $(s,t,u,v,w)$ be a tuple such that $\psi(s,t,u,v,w)$ holds. We can assume that $s < t$ (otherwise we take the image of $(s,t,u,v,w)$ under a suitable function in $\langle \circlearrowright \rangle$). By what we proved above, $(u,v,w)$ is antichain, hence it satisfies $\Par$. \end{proof}

\begin{lemma} \ \label{lemma:relations}
\begin{enumerate}
\item $\End(P;<, \bot) = \overline{\Aut(\Po)}$
\item $\End(P; \Betw, \bot) = \overline{\langle \updownarrow \rangle}$
\item $\End(P;\Cycl) = \overline{\langle \circlearrowright \rangle}$
\item $\End(P;\Sept) = \overline{\langle \updownarrow, \circlearrowright \rangle}$
\end{enumerate}
\end{lemma}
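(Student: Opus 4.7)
\emph{Inclusion $\supseteq$.} I would first verify that each listed generator preserves the named relations. The inclusion $\Aut(\Po)\subseteq \End(\Gamma')$ is free because all named relations are first-order definable in $\Po$. Next, $\updownarrow$ swaps $<$ with $>$ and fixes $\bot$, so both $\Betw$ and $\bot$ are preserved, giving (2). That $\circlearrowright$ preserves $\Cycl$ is a short case distinction on whether each of the three arguments lies in the filter $F$ or in $I=P\setminus F$, using the definition of $<_F$; this yields (3). The relation $\Sept$ is defined so as to be symmetric under both cyclic rotations and reversal of $<$, so both $\updownarrow$ and $\circlearrowright$ preserve it, giving (4). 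Since every endomorphism monoid is locally closed, the full claimed closures are contained in the corresponding endomorphism monoids.

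\emph{Inclusion $\subseteq$: canonicalisation.} Fix an endomorphism $f$ and a finite set $A \subseteq P$. It suffices to exhibit $g$ in the claimed monoid with $g|_A = f|_A$. The expansion $(P;\leq,\prec)$ by the generic linear extension is an ordered homogeneous Ramsey structure, so Theorem~\ref{theorem:ramsey2} produces an $\alpha \in \Aut(P;\leq,\prec)\subseteq \Aut(\Po)$ with $f\circ\alpha$ canonical on $A^k$ for any required arity $k$. Because every target monoid contains $\overline{\Aut(\Po)}$, I may replace $f$ by $f\circ\alpha$ and assume that $f$ is canonical on $A$ as a map $(P;\leq,\prec)\to (P;\leq)$. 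Such a canonical map is determined by its action on the finitely many $2$-types of $(P;\leq,\prec)$, so the problem reduces to a finite enumeration.

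\emph{Case analysis of canonical behaviours.} For (1), preservation of $<$ and $\bot$ forces the $2$-type behaviour to be the identity (in particular $f$ is injective on $A$), so $f|_A$ embeds the induced sub-poset $(A;\leq)$ into $\Po$; homogeneity extends it to an automorphism. For (2), preservation of $\Betw$ and $\bot$ allows exactly two $2$-type behaviours, ``identity'' and ``swap of $<$ with $>$'', which are realised on $A$ by an automorphism and by $\updownarrow$ composed with an automorphism, respectively. For (3), endomorphisms of $(P;\Cycl)$ automatically preserve $\Par$ by Lemma~\ref{lemma:ppdefpar}, so the canonical $f$ preserves $\Cycl$ and $\Par$; the $3$-orbits compatible with both relations are precisely those respected by $\langle \circlearrowright\rangle$ (this is how $\Cycl$ and $\Par$ were set up in Section~\ref{sect:cores} and in \cite{Poset-Reducts}), and each admissible behaviour is therefore realised on $A$ by a composition of $\circlearrowright$ with an automorphism. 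Part (4) is analogous, with $\Sept$ replacing $\Cycl$ and $\langle \updownarrow,\circlearrowright\rangle$ replacing $\langle \circlearrowright \rangle$.

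\emph{Main obstacle.} The delicate step is the behaviour classification in (3) and (4): one must rule out any canonical map preserving $\Cycl$ (respectively $\Sept$) that is not already represented by a composition from $\langle \circlearrowright\rangle$ (resp.\ $\langle \updownarrow,\circlearrowright\rangle$). The plan is to transfer the analogous classification for automorphism groups in \cite{Poset-Reducts}; Lemma~\ref{lemma:ppdefpar} is what bridges groups and monoids, guaranteeing that the auxiliary relations ($\bot$ and $\Par$) are preserved by the endomorphisms in question, and injectivity of such endomorphisms---which follows because for any $a\neq b$ there is some $c$ with $\Cycl(a,b,c)$, forcing $f(a)\neq f(b)$---then makes the group arguments apply verbatim.
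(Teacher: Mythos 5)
Your $\supseteq$ direction is fine, but the $\subseteq$ direction has a genuine gap at the canonicalisation step. Theorem~\ref{theorem:ramsey2} gives automorphisms $\alpha$ such that $f\circ\alpha$ is canonical on $A$; if you then classify the behaviour of $f\circ\alpha$ and realise it by some $h$ in the target monoid $M$, you have only shown that $f$ agrees with $h\circ\alpha^{-1}\in M$ on the set $\alpha(A)$ --- not on $A$. Since $\alpha$ depends on $A$ and is not under your control, this does not give ``for every finite $A$ there is $g\in M$ with $g\restriction A=f\restriction A$'', which is what local closure requires. The correct tool for this strategy is the canonisation \emph{with constants} (Theorem~\ref{theorem:ramsey}, i.e.\ Lemma~\ref{lemma:blackbox}), which produces a canonical $g$ agreeing with $f$ on $A$; but then $g$ is canonical only as a map from $(P;\leq,\prec,\bar c)$, whose 1- and 2-types involving the constants must all be analysed --- this is exactly the substantially harder orbit-by-orbit analysis the paper postpones to Lemmas~\ref{lemma:singleorbit}--\ref{lemma:allorbits} for Proposition~\ref{theorem:monoids}, not the ``finite enumeration of 2-types of $(P;\leq,\prec)$'' you envisage. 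In addition, your behaviour classification for (3) and (4) is asserted rather than argued: preserving $\Cycl$ and $\Par$ does not by itself tell you that the \emph{complement} of $\Cycl$ is preserved, which is the step needed to know that $f$ is locally a partial isomorphism of $(P;\Cycl)$.

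For comparison, the paper's proof of this lemma avoids Ramsey theory entirely and works with the original (possibly non-canonical) endomorphism. For (1) it uses only that preservation of $<$ and $\bot$ makes every finite restriction a partial isomorphism of posets, plus homogeneity of $\Po$. For (2) it runs a direct propagation argument: if $f$ reverses one comparable pair, $\Betw$-preservation forces it to reverse every comparable pair (linking any two pairs through common upper bounds), so $\updownarrow\circ f$ falls under (1). For (3) it shows $f$ is injective, preserves $\Cycl$, $\Cycl'$ and (via Lemma~\ref{lemma:ppdefpar}) $\Par$, and uses 2-transitivity to conclude that $\neg\Cycl$ is preserved, so $f$ is locally a partial isomorphism of the structure $(P;\Cycl)$, which is homogeneous with $\Aut(P;\Cycl)=\overline{\langle\circlearrowright\rangle}$ by \cite{PPPS-a-new-transformation}. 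For (4) it shows by a chain of $\Sept$-tuples that either $f$ or $\updownarrow\circ f$ preserves $\Cycl$, reducing to (3). If you want to keep your canonical-function framework you must switch to the constants version of canonisation and carry out the type analysis over $(P;\leq,\prec,\bar c)$; otherwise the elementary route above is both shorter and closes the gaps.
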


\begin{proof} \
\begin{enumerate}
\item Clearly $\Aut(\Po) \subseteq \End(P;<, \bot)$. For the other inclusion let $ f \in \End(P;<, \bot)$.
Let $A \subseteq P$ be an arbitrary finite set. The restriction of $f$ to a finite subset $A \subseteq P$ is an isomorphism between posets. By the homogeneity of $\Po$ there is an automorphism $\alpha \in \Aut(\Po)$ such that $f \restriction A = \alpha \restriction A$.
\item Since $\updownarrow$ preserves $\Betw$ and $\bot$, we know that $\overline{\langle \updownarrow \rangle} \subseteq \End(P;\Betw, \bot)$ holds. For the opposite inclusion let $f \in \End(P;\Betw, \bot)$. If $f$ preserves $<$, then  $f \in \End(P;<, \bot)$ and we are done. Otherwise there is a pair of elements $c_1 < c_2$ with $f(c_1) > f(c_2)$. Let $d_1 < d_2$ be an other pair of points in $P$. Then there are $a_1,a_2 \in P$ such that $c_1 < c_2 < a_1< a_2 $ and $d_1 < d_2 <a_1 < a_2$. Since $f$ preserves $\Betw$, $f(a_1) > f(a_2)$ holds and hence also $f(d_1) > f(d_2)$. So $f$ inverts the order, while preserving $\bot$.Therefore $\updownarrow \circ f \in  \End(P;<, \bot)$. We conclude that $f \in \overline{\langle \updownarrow \rangle}$.
\item It is easy to see that $\overline{\langle \circlearrowright \rangle} \subseteq \End(P;\Cycl)$. So let $f \in \End(P;\Cycl)$. Clearly $f$ is injective and preserves also the relation $\Cycl'(x,y,z) := \Cycl(y,x,z)$. By Lemma \ref{lemma:ppdefpar}, $f$ also preserves the relation $\Par$. Furthermore $\overline{\langle \circlearrowright \rangle}$ is 2-transitive: This can be verified by the fact that for every two elements of $P$, we can find a $\alpha \in \Aut(\Po)$ that map one element to the random filter $F$ and the other element to $P \setminus F$. So also $\End(P;\Cycl)$ is 2-transitive. It follows that $\End(P;\Cycl)$ also preserves the negation of $\Cycl$. In other words, $f$ is a self-embedding of $(P;\Cycl)$. So, when restricted to a finite $A \subset P$, $f$ is a partial isomorphism. By the results in \cite{PPPS-a-new-transformation} we know that $(P;\Cycl)$ is a homogeneous structure. Hence for every finite $A \subset P$ we find an automorphism $\alpha \in \Aut(P;\Cycl) = \langle \circlearrowright \rangle$ such that $f\restriction A = \alpha \restriction A$. 
\item Let $f \in \End(P; \Sept)$. We claim that either $f$ or $\updownarrow \circ f$ preserves $\Cycl$. If we can prove our claim we are done by (3). First of all note that $\Sept(x,y,z,u)$ implies $\Cycl(x,y,z) \leftrightarrow \Cycl(y,z,u)$. 

Without loss of generality let there be a elements $x,y,z \in P$ with $\Cycl(x,y,z)$ and $\Cycl(f(x),f(y),f(z))$, otherwise we look at $\updownarrow \circ f$ instead of $f$. Let $(r,s,t)$ be arbitrary tuple satisfying $\Cycl$. 

We can always find elements $a < b < c$ in $P$ that are incomparable with all entries of $(x,y,z)$ and $(r,s,t)$. Further we can choose elements $u,v \in P$ that are incomparable with $(a,b,c)$ such that $z < u < v$ and $\Sept(x,y,z,u) \land \Sept(y,z,u,v)$ holds. This can be done by a case distinction and is left to the reader. By construction we have 
\[  \Sept(x,y,z,u) \land \Sept(y,z,u,v) \land \Sept(z,u,v,a) \land \Sept(u,v,a,b) \land \Sept(v,a,b,c). \] 
So we have that $(f(x),f(y),f(z)) \in \Cycl$ if and only if $(f(a),f(b),f(c)) \in \Cycl$. Repeating the same argument for $(r,s,t)$ gives us that $(f(r),f(s),f(t)) \in \Cycl$. So $f$ preserves $\Cycl$.
\end{enumerate}
\end{proof}

Recall that we obtain an ordered homogeneous Ramsey structure $(P;\leq,\prec)$ by taking the \fraisse{} limit of the class of finite structures $(A;\leq,\prec)$, where $(A;\leq)$ is a partial order on $A$ and $\prec$ an extension of $<$ to a total order. We can regard this structure to be an extension of $\Po$ by a total order. By Theorem \ref{theorem:ramsey} the following holds:

\begin{lemma}\label{lemma:blackbox} Let $f: P \to P$ and $c_1,\ldots, c_n \in P$ be any points. Then there exists a function $g:P \to P$ such that
\begin{enumerate}
\item $g \in \overline{ \langle \Aut(\Po) \cup \{f\} \rangle }$.
\item $g(c_i)=f(c_i)$ for $i=1,\ldots n$.
\item Regarded as a function from $(P;\leq,\prec,\bar{c})$ to $(P;\leq)$, $g$ is a canonical function. \hfill $\square$\end{enumerate} 
\end{lemma}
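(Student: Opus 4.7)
The plan is to invoke Theorem \ref{theorem:ramsey} directly on the ordered homogeneous Ramsey structure $(P;\leq,\prec)$. Specializing that theorem to the unary case ($l=1$), with $\Gamma = (P;\leq,\prec)$, the given constants $c_1,\ldots,c_n$, and the given $f$, yields a function $g : P \to P$ lying in the local closure of the clone generated by $\{f\} \cup \Aut((P;\leq,\prec),c_1,\ldots,c_n)$ that agrees with $f$ on $\{c_1,\ldots,c_n\}$ and is canonical as an operation from $((P;\leq,\prec),c_1,\ldots,c_n)$ to $(P;\leq,\prec)$. It then remains to check that this $g$ meets the three conditions of the lemma.

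Condition (2) is immediate from the last clause of Theorem \ref{theorem:ramsey}. For condition (1), observe that $\Aut((P;\leq,\prec),c_1,\ldots,c_n) \subseteq \Aut(P;\leq,\prec) \subseteq \Aut(\Po)$, since every automorphism of the ordered expansion preserves $\leq$. Hence the clone generated by $\{f\} \cup \Aut((P;\leq,\prec),c_1,\ldots,c_n)$ is contained in the clone generated by $\{f\} \cup \Aut(\Po)$; taking local closures gives $g \in \overline{\langle \Aut(\Po) \cup \{f\} \rangle}$.

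The only mildly subtle point is condition (3): Theorem \ref{theorem:ramsey} supplies canonicity with the richer structure $(P;\leq,\prec)$ as target, whereas here we want canonicity with the poorer target $(P;\leq)$. But canonicity to a target $\Delta$ amounts to mapping $\Aut((\Gamma,\bar c))$-orbits on $s$-tuples into $\Aut(\Delta)$-orbits, and passing from $(P;\leq,\prec)$ to its reduct $(P;\leq)$ only enlarges the automorphism group, hence coarsens the orbit partition. Thus orbits of $\Aut(P;\leq,\prec)$ on $P^s$ are contained in orbits of $\Aut(\Po)$, so canonicity from $((P;\leq,\prec),\bar c)$ to $(P;\leq,\prec)$ automatically upgrades to canonicity from $(P;\leq,\prec,\bar c)$ to $(P;\leq)$, as required.

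I do not foresee any real obstacle: the statement is essentially a repackaging of Theorem \ref{theorem:ramsey} in the unary case, combined with the elementary remark that canonicity is inherited when the target is replaced by a reduct. The structural inputs, namely that $(P;\leq,\prec)$ is an ordered homogeneous Ramsey structure and that $\Aut(P;\leq,\prec) \subseteq \Aut(\Po)$, are already in place.
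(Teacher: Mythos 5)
Your proposal is correct and follows the same route as the paper, which derives the lemma directly from Theorem \ref{theorem:ramsey} applied to the ordered homogeneous Ramsey structure $(P;\leq,\prec)$ without further argument. Your additional remarks (that $\Aut(P;\leq,\prec,\bar c)\subseteq\Aut(\Po)$ gives condition (1), and that canonicity survives replacing the target by its reduct $(P;\leq)$ because orbits only coarsen) simply make explicit the easy observations the paper leaves implicit.
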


Let $\Gamma$ be a reduct of $\Po$. We are going to study all feasible behaviors of a canonical function $f : (P;\leq,\prec,\bar{c}) \to (P;\leq)$ when $f \in \End(\Gamma)$. Note that the behaviour of such $f$ only depends on the behaviour on the 2-types because $(P;\leq,\prec, \bar c)$ is homogeneous and its signature contains at most 2-ary relation symbols. Since there are only finitely many 2-types, the study of all possible behaviors of such canonical functions is a combinatorial problem.
We introduce the following notation:

\begin{notation} \label{notation:canonical}
Let $A,B$ be definable subsets of $\Po$ and let $\phi_1(x,y),\ldots,\phi_n(x,y)$ be formulas. We let $p_{A,B,\phi_1,\ldots,\phi_n}(x,y)$ denote the (partial) type determined by the formula $x \in A \wedge y \in B \wedge \phi_1(x,y) \wedge \ldots \wedge \phi_n(x,y)$. Using this notation, we can describe the 2-types of $(P;\leq,\prec,\bar{c})$. They are all of the form $p_{X,Y,\phi, \psi}= \{(a,b) \in P^2: a \in X, b \in Y, \phi(a,b)$ and $\psi(a,b)\}$, where $X$ and $Y$ are 1-types, $\phi \in \{=,<,>,\bot\}$ and $\psi \in \{=,\prec,\succ\}$.

Let $X,Y$ be two distinct infinite 1-types of $(P;\leq,\prec,\bar c)$. We write $X \lowbot Y$ if there are pairs $(x,y),(x',y') \in X \times Y$ with $x < y$ and $x' \bot y'$.

When it is convenient for us we will abuse notation and write $\bar c$ to describe the set containing all entries of the tuple $\bar c$.
\end{notation}
\begin{observation} \label{observation:types}
The structure $(P;\leq,\prec,\bar c)$ is a homogeneous structure. If $X$ is an 1-type of $(P;\leq,\prec,\bar c)$ with infinite elements, then $(X;\leq, \prec)$ is isomorphic to $(P;\leq, \prec)$ itself. This can be seen by a back-and-forth argument. Similarly, if $X$ and $Y$ are 1-types of $(P;\leq,\prec,\bar c)$ with infinite elements such that $X \lowbot Y$ holds, then $X\cup Y$ is isomorphic to $(P;\leq)$ with $X$ being a random filter. If we define $X \leq Y \leftrightarrow \exists (x,y) \in X\times Y~(x \leq y)$ we get a partial order on the 1-types of $(P;\leq,\prec,\bar c)$ (confer Lemma 18 of \cite{Poset-Reducts}).
But note that the 1-types of $(P;\leq,\prec,\bar c)$ are not necessarily linearly ordered by $\prec$: There can be infinite 1-types $X,Y$ and $(x,y),(x',y') \in X \times Y$ with $x \prec y$, $x \bot y$ and $y' \prec x'$, $x' \bot y'$.
\end{observation}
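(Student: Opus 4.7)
The plan is to verify the assertions of the observation in turn, mostly by standard back-and-forth arguments within Fraïssé classes. First I would observe that homogeneity of $(P;\leq,\prec,\bar c)$ is immediate from homogeneity of $(P;\leq,\prec)$: a partial isomorphism of the expanded structure between finite substructures necessarily contains and fixes $\bar c$ pointwise, hence is a partial isomorphism of $(P;\leq,\prec)$, which extends to an automorphism of $(P;\leq,\prec)$, and this extension still fixes $\bar c$. For the claim $(X;\leq,\prec)\cong (P;\leq,\prec)$ when $X$ is an infinite 1-type, I would argue by back-and-forth: since $X$ is countable, it suffices to verify the one-point extension property characterising the Fra\"iss\'e limit of finite posets with a linear extension. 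Given a finite $A\subset X$ and an extension $A\cup\{a\}$ in that class, I realise $a$ by amalgamating $A\cup\bar c$ with $\{a\}\cup\bar c$ over $\bar c$, where $a$ is assigned the 1-type defining $X$; amalgamation in the class of finite posets-with-linear-extensions produces a consistent structure, which by homogeneity of $(P;\leq,\prec)$ is realised in $P$ by an element automatically lying in $X$.

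The third assertion about $X\cup Y$ under $X\lowbot Y$ is analogous but now uses the Fra\"iss\'e class of finite posets with an upward-closed subset, whose limit is $\Po$ with a random filter. First I would check that (with the obvious sign convention — the type $Y$ acts as the filter, since $X\lowbot Y$ allows pairs with $x<y$ for $x\in X$, $y\in Y$) the set $Y$ is indeed upward-closed inside $X\cup Y$; by homogeneity of $(P;\leq,\prec,\bar c)$, the existence of one pair in $X\times Y$ satisfying $<$ forbids the existence of any pair satisfying $>$, since otherwise we could amalgamate to produce a pair violating antisymmetry of the global order. Then the same back-and-forth scheme as before, applied in the class of finite posets with an upward-closed set, gives the desired isomorphism: the amalgamation property of this class together with the homogeneity of $(P;\leq,\prec,\bar c)$ produces realisations in the right 1-type, and the relevant 2-types between $X$ and $Y$ are precisely the ones allowed by $X\lowbot Y$.

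The assertion that $X\leq Y \iff \exists (x,y)\in X\times Y\,(x\leq y)$ defines a partial order on 1-types I would import directly from Lemma~18 of \cite{Poset-Reducts}; the key content there is that the ``exists'' formulation is equivalent, when $X\neq Y$, to ``for all'', from which antisymmetry and transitivity are routine. Finally, to witness that $\prec$ need not linearly order 1-types, I would exhibit the example $\bar c=(c_1,c_2)$ with $c_1\bot c_2$, together with the infinite 1-types
$X=\{x:x<c_1,\ x\bot c_2,\ x\prec c_2\}$ and $Y=\{y:y\bot c_1,\ y<c_2,\ y\prec c_1\}$; transitivity of $\leq$ forces $x\bot y$ for all $x\in X,\ y\in Y$ (any $x<y$ would give $x<c_2$, and any $y<x$ would give $y<c_1$), but no comparison through $\bar c$ constrains $\prec$ on such a pair, so amalgamation realises pairs with $x\prec y$ and other pairs with $y\prec x$. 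The only subtle point in the whole argument is the second paragraph: one must be careful that ``$X\lowbot Y$'' really does force $Y$ to look like an upward-closed subset of $X\cup Y$ (and not merely that the two realised relations between the types coexist), because only then does the relevant Fra\"iss\'e class have a matching amalgamation property.
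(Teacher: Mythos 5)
Your strategy is the same as the paper's, which itself only gestures at ``a back-and-forth argument'' and a citation of Lemma~18 of \cite{Poset-Reducts}; your concrete witness for the failure of $\prec$-linearity is correct, and so is your remark that, with the paper's conventions for $\lowbot$ and for filters, it is the type on the upper side (your $Y$) that is upward closed in $X\cup Y$. However, two of your local justifications would fail as written. First, amalgamating $A\cup\bar c$ with $\{a\}\cup\bar c$ over $\bar c$ does not realise the prescribed one-point extension of $A$: an amalgam over $\bar c$ is free to choose the relations between $a$ and $A$ (e.g.\ it may make $a$ incomparable to all of $A$), so you obtain \emph{some} element of $X$ but not one in the required position over $A$. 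What is actually needed is to check that the fully specified structure on $A\cup\bar c\cup\{a\}$ --- relations of $a$ to $A$ taken from the one-point extension, relations of $a$ to $\bar c$ taken from the type defining $X$ --- lies in the age; this is where the real (if routine) work sits, and it goes through precisely because every element of $A$ realises the same quantifier-free type over $\bar c$ as $a$ is required to, so no transitivity or antisymmetry clash can arise. Then the extension property of $(P;\leq,\prec)$ and homogeneity of the expansion finish the step. The analogous consistency check, now involving two types and the constants, is also the substantive point in your $X\cup Y$ paragraph, and there it genuinely uses the hypothesis $X\lowbot Y$ to rule out forcing through $\bar c$.

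Second, your argument that $X\lowbot Y$ forbids pairs with $y<x$ --- ``otherwise we could amalgamate to produce a pair violating antisymmetry'' --- is not valid: amalgamation never identifies points, so one pair with $x<y$ and another with $y'<x'$ cannot be merged into a single pair contradicting antisymmetry. The correct justification is the antisymmetry of the order on $1$-types (the content of Lemma~18 you invoke later), or directly: since $X\neq Y$, the two types differ in their quantifier-free relation ($\leq$ or $\prec$) to some constant $c_j$, and a short case analysis on that constant shows that comparabilities between $X$ and $Y$ can only go in one direction. Relatedly, your gloss that Lemma~18 says the ``exists'' formulation is equivalent to ``for all'' is literally false when both $<$ and $\bot$ are realised between $X$ and $Y$; what is true and needed is only this one-directionality. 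With these two repairs your proof is fine and matches the paper's intended argument.
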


In the following lemmas let $\Gamma$ be always be a reduct of $\Po$ and let $f \in \End(\Gamma)$ be a canonical function from $(P;\leq,\prec,\bar c)$ to $(P;\leq)$.

\begin{lemma} \label{lemma:singleorbit}
Let $X$ be a 1-type of $(P;\leq,\prec,\bar c)$ with infinite elements. Then $f$ behaves like $id$ or $\updownarrow$ on $X$, otherwise $\End(\Gamma)$ contains a constant function, $g_{<}$ or $g_{\bot}$.
\end{lemma}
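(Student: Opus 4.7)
My plan is to carry out a finite case analysis on the canonical behavior of $f \restriction X$. By Observation \ref{observation:types}, $(X;\leq,\prec)$ is isomorphic to $(P;\leq,\prec)$, so $f \restriction X$ can be viewed as a canonical function $(P;\leq,\prec)\to(P;\leq)$. Its behavior is recorded by how it acts on the four $2$-types of $(P;\leq,\prec)$; by the swap symmetry $(a,b)\mapsto(b,a)$ these collapse to two parameters $R_<, R_\bot \in \{<,>,\bot,=\}$ describing the images of $<$-pairs and $(\bot,\prec)$-pairs respectively. The identity behavior corresponds to $(R_<,R_\bot)=(<,\bot)$ and $\updownarrow$ to $(>,\bot)$, so the lemma reduces to showing that each of the remaining fourteen combinations puts a constant, $g_<$, or $g_\bot$ into $\End(\Gamma)$.

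I would dispatch three generic families first. If $R_<$ or $R_\bot$ equals $=$, then $f$ identifies many pairs inside $X$: for $R_<={=}$ every chain in $X$ collapses, and since any two points of $X\cong\Po$ have a common upper bound, $f \restriction X$ is forced to be constant; the case $R_\bot={=}$ is similar via $\bot$-triangles connecting any two elements. If $R_<,R_\bot\in\{<,>\}$ then $f(X)$ is a chain, and if $R_<=R_\bot=\bot$ then $f(X)$ is an antichain. In each sub-case, I would pre-compose $f$ with automorphisms of $\Po$ sending arbitrary finite subsets of $P$ into $X$ (possible because every infinite $1$-type of $(P;\leq,\prec,\bar c)$ is isomorphic to $\Po$ by Observation \ref{observation:types}), and take a pointwise-convergence limit inside the locally closed monoid $\End(\Gamma)$. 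This yields, respectively, a global constant, a function with chain image (which is $g_<$; and if the limit reverses $<$, its canonical square, obtained via Lemma \ref{lemma:blackbox}, still has chain image but now preserves $<$), or a function with antichain image preserving $\bot$, namely $g_\bot$.

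The genuinely delicate case is $R_<=\bot$ with $R_\bot \in \{<,>\}$, where $f$ interchanges comparability and incomparability on $X$: here $f$ preserves neither $<$ nor $\bot$ and so is itself neither $g_<$ nor $g_\bot$. My plan is to exploit that the restriction of $f$ to any chain of $X$ is an injection onto an antichain of $P$. I would fix an infinite chain $C\subseteq X$ so that $f(C)$ is an antichain, and for each finite $A\subseteq P$ pick $\alpha\in\Aut(\Po)$ with $\alpha(A)\subseteq C$; then $f\circ\alpha$ maps $A$ injectively into the antichain $f(C)$. A diagonal / compactness extraction inside the locally closed $\End(\Gamma)$, canonicalized along the way via Theorem \ref{theorem:ramsey}, should produce $g\in\End(\Gamma)$ with $g(P)\subseteq f(C)$ that is injective on every finite set, hence preserves $\bot$ and has antichain image, i.e.\ $g=g_\bot$. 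The main obstacle is keeping the limit injective so that the image is genuinely an infinite antichain and does not collapse to a constant along the extraction; the choice of $\alpha$'s and the Ramsey canonicalization must be coordinated so that distinct pairs in $P$ keep distinct $f$-images in $f(C)$, and this is where Theorem \ref{theorem:ramsey} is essential to stabilize the $2$-type behavior through the limit.
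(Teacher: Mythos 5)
Your reduction to a finite case analysis of the behaviour of $f\restriction X$ on the 2-types of $(P;\leq,\prec)$ is in spirit exactly the analysis the paper delegates to Lemma 8 of \cite{Poset-Reducts}, and most of your cases are handled soundly: collapsing a 2-type to equality forces $f$ to be constant on $X$ (via common upper bounds, respectively common incomparable elements), sending both $<$-pairs and $\bot$-pairs to comparable pairs yields a chain image and hence $g_<$ (after squaring when the order is reversed), and sending both to $\bot$ yields $g_\bot$; the passage from $X$ to all of $P$ by precomposing with automorphisms and taking limits in the locally closed $\End(\Gamma)$ is legitimate because every finite poset embeds into $X$.

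The genuine gap is in the case you call delicate, $R_<=\bot$ with $R_\bot\in\{<,>\}$. Your construction requires, for an arbitrary finite $A\subseteq P$, an $\alpha\in\Aut(\Po)$ with $\alpha(A)\subseteq C$ for a fixed chain $C\subseteq X$; since automorphisms preserve $\leq$, this is impossible as soon as $A$ contains an incomparable pair, so no limit function $g$ with $g(P)\subseteq f(C)$ can be extracted this way, and Ramsey canonicalization cannot repair a precomposition that does not exist. Moreover, this case should be eliminated rather than exploited: because $\prec$ is merely a linear extension of $\leq$, the structure $(X;\leq,\prec)\cong(P;\leq,\prec)$ contains a configuration $x<y$ together with $z$ satisfying $z\bot x$, $z\bot y$ and $x\prec z\prec y$. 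If $\bot$-pairs increasing in $\prec$ were mapped to $<$-pairs, canonicity would give $f(x)<f(z)<f(y)$; if to $>$-pairs, then $f(x)>f(z)>f(y)$; either way $f(x)$ and $f(y)$ are comparable, contradicting the assumption that $<$-pairs are mapped to $\bot$. So these two behaviours simply do not occur, and with this contradiction argument in place of your antichain construction the case analysis closes and the lemma follows as in the paper.
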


\begin{proof}
Note that $(X;\leq,\prec)$ is isomorphic to $(P;\leq,\prec)$. Then we can prove the statement with the same arguments as in Lemma 8 of \cite{Poset-Reducts}.
%
%
%
%
%
\end{proof}

\begin{lemma} \label{lemma:onXY}
Let $X,Y$ two infinite 1-types of $(P;\leq,\prec,\bar c)$ with $X \lowbot Y$. Assume $f$ behaves like $id$ on $X$. Then $f$ behaves like $id$ or $\circlearrowright_X$ on $X \cup Y$, otherwise $\End(\Gamma)$ contains a constant function, $g_{<}$ or $g_{\bot}$.
\end{lemma}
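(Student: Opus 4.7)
The plan is to enumerate the possible canonical behaviors of $f$ on $X \cup Y$ and show that any behavior not matching $id$ or $\circlearrowright_X$ forces $\End(\Gamma)$ to contain a constant function, $g_<$, or $g_\bot$. First I would apply Lemma~\ref{lemma:singleorbit} to the $1$-type $Y$: either $f$ behaves like $id$ or like $\updownarrow$ on $Y$, or one of the escape endomorphisms is already produced and we are done. This leaves two sub-cases for the behavior of $f$ on $Y$ alone.

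Next I would unpack the cross $2$-types. Because $X \lowbot Y$, no pair $(x,y) \in X \times Y$ satisfies $y < x$ or $y = x$, so the $2$-types on $X \times Y$ are precisely the four $p_{X,Y,\phi,\psi}$ with $\phi \in \{<,\bot\}$ and $\psi \in \{\prec,\succ\}$. Canonicity forces $f$ to send each such type to a single $2$-type over $(P;\leq)$. By Observation~\ref{observation:types} the poset $(X \cup Y;\leq)$ is isomorphic to $\Po$ with $X$ a random filter, so $f$ restricted to $X \cup Y$ can be regarded as a canonical operation from the \fraisse{} limit of posets with a distinguished filter into $\Po$. The two canonical behaviors compatible with $f$ acting as $id$ on the $2$-types inside $X$ should turn out to be exactly $id$ and $\circlearrowright_X$; this is the monoid analogue of the group-case classification in Lemma~9 of~\cite{Poset-Reducts}.

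The core of the argument is then a case split on how $f$ acts on the four cross $2$-types, combined with the earlier split on whether $f$ restricted to $Y$ behaves like $id$ or $\updownarrow$. For the two ``good'' assignments one reads off $id$-behavior or $\circlearrowright_X$-behavior on $X \cup Y$ directly. For every other assignment I would construct an escape endomorphism by composing $f$ with elements of $\Aut(\Po)$ and appealing to Lemma~\ref{lemma:blackbox} to re-canonicalize the composition, thereby iterating the anomalous $2$-type behavior throughout $P$. Concretely, if $f$ sends a cross-$<$ type to $\bot$ while preserving $<$ within $X$ and within $Y$, then iterating yields an endomorphism of $\Gamma$ whose image is an antichain, placing $g_\bot$ in $\End(\Gamma)$; dually, collapsing a cross-$\bot$ type to $<$ produces $g_<$; and any assignment that forces the image of a cross type to coincide with a fixed element yields a constant endomorphism.

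The main obstacle is the systematic handling of the ``mixed'' assignments where $f$ treats the two $\prec$-variants of a cross type $p_{X,Y,\phi,\cdot}$ differently. Here the Ramsey-theoretic canonization of Theorem~\ref{theorem:ramsey} is crucial: by repeatedly canonizing compositions $f \circ \alpha$ for suitable $\alpha \in \Aut(\Po)$ chosen to realign the $\prec$-orientations of relevant cross pairs, one reduces each mixed case to a uniform assignment, after which only a small number of residual configurations need to be checked. The density of both cross-comparable and cross-incomparable pairs in the random-filter structure on $X \cup Y$ then ensures that each anomalous image can be propagated throughout $P$, producing the claimed escape endomorphism.
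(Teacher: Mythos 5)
Your skeleton matches the paper's (reduce the behaviour on $Y$ to $id$ or $\updownarrow$ via Lemma~\ref{lemma:singleorbit}, classify the images of the cross $2$-types, and treat the $\prec$-orientation of cross-incomparable pairs as the delicate point), but two of your concrete steps do not survive scrutiny. First, your dividing line between ``good'' and ``bad'' assignments is wrong: you claim that if $f$ sends the cross-$<$ type to $\bot$ while behaving like $id$ inside $X$ and inside $Y$, then iteration produces an antichain image and hence $g_\bot$. That is exactly the behaviour of $\circlearrowright_X$ (which maps $p_{X,Y,<}$ to $p_\bot$ and $p_{X,Y,\bot,\prec}$ to a comparability type), i.e.\ one of the two \emph{allowed} outcomes of the lemma; it lies in $\overline{\langle\circlearrowright\rangle}=\End(P;\Cycl)$ and certainly does not generate $g_\bot$. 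Dually, ``collapsing a cross-$\bot$ type to a comparability'' is also part of the rotation behaviour and does not by itself give $g_<$. What matters is the \emph{joint} assignment on the cross types: the paper (following Lemma~22 of \cite{Poset-Reducts}) shows the only possibilities are $f(p_{X,Y,<})=p_<,\ f(p_{X,Y,\bot,\prec})=p_\bot$ or $f(p_{X,Y,<})=p_\bot,\ f(p_{X,Y,\bot,\prec})=p_>$, and escapes like $g_\bot$ arise only from combinations outside this list (e.g.\ both cross types mapped to $p_\bot$). As written, your case split would ``refute'' a legitimate case of the lemma. (A small related slip: since $\prec$ extends $<$, the type $p_{X,Y,<,\succ}$ is empty, so there are three, not four, potentially nonempty cross types.)

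Second, the genuinely new difficulty compared with the group classification -- that $\prec$ need not orient $X$ against $Y$ uniformly on $\bot$-pairs, so $p_{X,Y,\bot,\succ}$ may be nonempty alongside $p_{X,Y,\bot,\prec}$ -- is not resolved by your proposal. Precomposing with $\alpha\in\Aut(\Po)$ and re-canonizing via Lemma~\ref{lemma:blackbox} cannot ``realign'' the $\prec$-orientations: $\Aut(\Po)$ does not act on $\prec$, re-canonization may change the function and its behaviour, and in any case the issue is not to normalize orientations but to prove that $f$ must treat both orientations identically. The paper does this by a direct finite-configuration argument: assuming, say, $f(p_{X,Y,\bot,\succ})=p_<$ or $p_>$, it exhibits a third point $x'\in X$ (whose existence follows from the age of $(P;\leq,\prec,\bar c)$) forcing a comparability between two points whose images must be incomparable. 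Some explicit argument of this kind is needed; without it, and with the miscalibrated case split above, the proof does not go through.
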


\begin{proof}
Assume that $f$ does not contains a constant function, $g_{<}$ or $g_{\bot}$. Note that the union of $X$ and $Y$ is isomorphic to $\Po$ and $X$ is a random filter of $X \cup Y$. By following the arguments of Lemma 22 in \cite{Poset-Reducts} one can show that we only have the two possibilities that
\begin{enumerate}
\item $f(p_{X,Y,<}) = p_{<}$ and $f(p_{X,Y,\bot,\prec}) = p_{\bot}$ or
\item $f(p_{X,Y,<}) = p_{\bot}$ and $f(p_{X,Y,\bot,\prec}) = p_{>}$.
\end{enumerate}
By Lemma \ref{lemma:singleorbit} we may assume that $f$ behaves like $id$ or $\updownarrow$ on $Y$. But if $f$ behaves like $\updownarrow$ on $Y$, the image of $y_1, y_2 \in Y$ and $x \in X$ with $x \prec y_1 < y_2$, $x \bot y_1$ and $x < y_2$ would be a non partially ordered set. So if the type $p_{X,Y,\bot,\succ}$ is empty, $f$ behaves like $id$ or $\circlearrowright_X$ on $X \cup Y$ and we are done.

If $p_{X,Y,\bot,\succ}$ is not empty, there are $x \in X$ and $y \in Y$ with $x \succ y$ and $x \bot y$. We claim that in this case $f(p_{X,Y,\bot,\succ}) = f(p_{X,Y,\bot,\prec})$. We only prove this claim for (1), the proof for (2) is the same. 

Assume that $f(p_{X,Y,\bot,\succ}) = p_{<}$. Then let $x' \in X$ be an element such that $y \prec x'$ and $x < x'$ and $y \bot x'$. The fact that such an element exists can be verified by checking that the extension of $\{x,y\} \cup \bar c$ by such an element $x'$ still lies in the age of $(P;\leq,\prec,\bar c)$. By our assumption we then have $f(x) < f(x') < f(y)$, which contradicts to $f(x) \bot f(y)$.

Now assume that $f(p_{X,Y,\bot,\succ}) = p_{>}$. Then let $x' \in X$ be such that $x \prec y \prec x'$ and $x < y$ and $x' \bot x y$. Again the fact that $x'$ exists can be verified by the homogeneity of $(P;\leq,\prec,\bar c)$. Then $f(x) < f(y) < f(x')$, which contradicts to $f(x') \bot f(x')$.
\end{proof}

\begin{lemma} \label{lemma:everyorbit}
Either $f$ behaves like $id$ or $\updownarrow$ on every single 1-type or $\End(\Gamma)$ contains a constant function, $g_{<}$ or $g_{\bot}$.
\end{lemma}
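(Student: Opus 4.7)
My plan is to derive Lemma \ref{lemma:everyorbit} as a straightforward iteration of Lemma \ref{lemma:singleorbit} over the 1-types of $(P;\leq,\prec,\bar c)$. I would assume that the second disjunct in the conclusion fails, i.e.\ that $\End(\Gamma)$ contains no constant function, no $g_<$, and no $g_\bot$, and then argue that $f$ is forced to behave like $id$ or like $\updownarrow$ on every single 1-type.

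The 1-types of $(P;\leq,\prec,\bar c)$ split into two classes: the singleton types $\{c_i\}$ coming from the named constants, and the infinite 1-types, which consist of all elements realising a common quantifier-free type over $\bar c$. On a singleton $\{c_i\}$ the conclusion is vacuous, since for every $s\geq 1$ the only $s$-tuple is $(c_i,\ldots,c_i)$, so any unary map behaves there simultaneously like $id$ and like $\updownarrow$. For each infinite 1-type $X$ I would apply Lemma \ref{lemma:singleorbit} directly: it forces $f$ to behave like $id$ or like $\updownarrow$ on $X$, unless one of the three forbidden functions lies in $\End(\Gamma)$, which contradicts our standing assumption. Combining the two cases completes the argument.

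I expect the only real technical obstacle to have already been absorbed into the proof of Lemma \ref{lemma:singleorbit}, which mirrors Lemma 8 of \cite{Poset-Reducts}. There one uses homogeneity of $(X;\leq,\prec)\cong(P;\leq,\prec)$ (cf.\ Observation \ref{observation:types}) to enumerate the canonical 2-type behaviours of $f$ on $X$; for every behaviour that deviates from the identity and the reversal, one exhibits a witness in $\End(\Gamma)$ that either collapses a pair of points (yielding a constant under iteration via Theorem \ref{theorem:ramsey}), linearises an incomparable pair (yielding $g_<$ after closure), or replaces a comparable pair by an incomparable one (yielding $g_\bot$ after closure). Once that per-orbit case analysis is settled, Lemma \ref{lemma:everyorbit} is its immediate global reformulation and needs no further Ramsey-theoretic or amalgamation input.
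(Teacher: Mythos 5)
Your reading of the lemma is too weak, and that is exactly where the gap lies. The statement is used in Lemma \ref{lemma:allorbits} in the uniform sense: either $f$ behaves like $id$ on \emph{all} infinite 1-types simultaneously, or it behaves like $\updownarrow$ on all of them — the reduction there (``without loss of generality the first case holds, otherwise consider $\updownarrow \circ f$'') only works if the choice of behaviour is the same for every 1-type. Iterating Lemma \ref{lemma:singleorbit} type by type, as you propose, yields only the per-type disjunction and does not exclude a function that behaves like $id$ on some infinite 1-types and like $\updownarrow$ on others; for such an $f$ neither $f$ nor $\updownarrow \circ f$ behaves like $id$ on every type, and the subsequent argument breaks down. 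So the lemma is not ``an immediate global reformulation'' of Lemma \ref{lemma:singleorbit}; its whole content is the consistency of the behaviour across different 1-types, and your proposal, which explicitly claims that no further input is needed, omits precisely that.

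The paper's own proof supplies this missing step. If $f$ behaves like $id$ on $X$ and $X \lowbot Y$, then by Lemma \ref{lemma:onXY} (unless a constant, $g_<$ or $g_\bot$ arises) $f$ behaves like $id$ or $\circlearrowright_X$ on $X \cup Y$, hence in particular like $id$ on $Y$; a symmetric argument rules out mixing in the $\updownarrow$ case. One then uses that the infinite 1-types are connected under $\lowbot$: for infinite types $X < Y$ there is an infinite type $Z$ with $X \lowbot Z$ and $Z \lowbot Y$, and for $X \bot Y$ there is $Z$ with $X < Z$ and $Y < Z$, which reduces to the previous situation. Propagating along these short chains forces the same behaviour on every infinite 1-type. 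Your argument invokes neither Lemma \ref{lemma:onXY} nor this connectivity of types, so it proves only a strictly weaker statement than the one the paper needs.
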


\begin{proof}
For every two infinite orbits $X < Y$ there is a infinite orbit $Z$ with $X \lowbot Z$ and $Z \lowbot Y$. For every two infinite orbits $X \bot Y$ there is an infinite orbit $Z$ with $X < Z$ and $Y < Z$. So this statement holds by Lemma \ref{lemma:onXY}.
(cf Lemma 23 of\cite{Poset-Reducts})
\end{proof}

\begin{lemma}  \label{lemma:allorbits}
Assume $\End(\Gamma)$ does not contains constant functions, $g_{<}$ or $g_{\bot}$. Then there is a $g \in \overline{\langle \circlearrowright, \updownarrow \rangle} \cap \End(\Gamma)$ such that $g \circ f$ is canonical from $(P;\leq,\prec,\bar c)$ to $(P;\leq)$ and behaves like $id$ on every set $(P \setminus \bar c ) \cup \{c\}$, with $c \in \bar c$. 
\end{lemma}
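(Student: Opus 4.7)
The plan is to use the strong canonicity obtained from the previous lemmas to reduce the problem to a finite combinatorial one, and then to build $g$ iteratively. By Lemma \ref{lemma:everyorbit}, on each infinite $1$-type $X$ of $(P;\leq,\prec,\bar c)$ the function $f$ behaves either like $id$ or like $\updownarrow$; and by Lemma \ref{lemma:onXY}, the behaviour of $f$ on each union $X \cup Y$ of two $1$-types with $X \lowbot Y$ is then completely determined up to a ``twist'' $\circlearrowright_X$. Thus the canonical behaviour of $f$ is encoded by finitely many ``flip/twist'' binary choices, indexed by $1$-types and ordered pairs of $1$-types, and the task is to cancel all of them.

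I would correct these defects one at a time, so that $g$ arises as a composition of finitely many ``local'' correcting functions. For a single $1$-type $X$ on which $f$ behaves like $\updownarrow$, we need a correcting function $h$ that acts like $\updownarrow$ on $f(X)$ but like $id$ on the images $f(X')$ of the other $1$-types. The natural candidate is built by first applying an automorphism of $\Po$ that moves $f(X)$ into a random filter of $\Po$, then applying $\updownarrow$ (respectively $\circlearrowright$ for a twist defect on a pair), and finally applying the inverse automorphism. By Theorem \ref{theorem:ramsey} applied to the composition of such a function with $f$, we may assume each correcting block is canonical on any given finite subset of $(P;\leq,\prec,\bar c)$, and by Lemma \ref{lemma:relations} each such block lies in $\overline{\langle \circlearrowright, \updownarrow \rangle} = \End(P;\Sept)$. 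The book-keeping has to be done carefully so that conjugating by the automorphism does not re-introduce flips or twists on the previously handled $1$-types; because there are only finitely many $1$-types, this converges in finitely many steps.

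The main obstacle is to guarantee that $g$ itself, not merely the individual correcting blocks, lies in $\End(\Gamma)$. The hypothesis that $\End(\Gamma)$ contains no constant, $g_<$, or $g_\bot$ is crucial here: at each step we do not pick an abstract element of $\overline{\langle \circlearrowright, \updownarrow \rangle}$, but rather an element generated from $f$ (which is already in $\End(\Gamma)$) together with elements of $\Aut(\Po, \bar c)$, so by Lemma \ref{lemma:blackbox} the block remains in $\End(\Gamma)$. The previous lemmas, re-applied to that generated block, force its canonical behaviour back into the $\{id, \updownarrow, \circlearrowright_X\}$-patterns, and hence the block also lies in $\overline{\langle \circlearrowright, \updownarrow \rangle}$. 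After finitely many iterations the accumulated composition $g \in \overline{\langle \circlearrowright, \updownarrow \rangle} \cap \End(\Gamma)$ satisfies that $g \circ f$ is canonical from $(P;\leq,\prec,\bar c)$ to $(P;\leq)$ and behaves like $id$ on each set $(P \setminus \bar c) \cup \{c\}$, as required.
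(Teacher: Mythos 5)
Your overall shape (compose $f$ with finitely many correcting elements of $\overline{\langle \updownarrow,\circlearrowright\rangle}\cap\End(\Gamma)$) matches the paper, but the construction of the correcting block for a ``flip'' defect does not work. A conjugate $\alpha^{-1}\circ\updownarrow\circ\alpha$ with $\alpha\in\Aut(\Po)$ is again a \emph{global} order-reversal; it cannot act like $\updownarrow$ on $f(X)$ while acting like $id$ on the images of the other $1$-types, so your block does not exist in the form you describe (and a genuinely localized flip is not available in $\overline{\langle\updownarrow,\circlearrowright\rangle}$). The point you are missing is that no local flip is needed: since $\circlearrowright_X$ restricted to each single $1$-type behaves like $id$, Lemma \ref{lemma:onXY} forces the flip behaviour to be \emph{uniform} across all infinite $1$-types, so one global composition with $\updownarrow$ removes all flips at once; only the twist defects have to be corrected pairwise, and there the right correction is $\circlearrowright_A$ for the explicitly definable filter $A=\{x\in P: y<x\lor y\bot x \text{ for all } y\in f(Y)\}$ (a union of orbits of $\Aut(P;\leq,\prec,\bar c)$), not an abstract conjugate. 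Moreover, your ``bookkeeping \dots converges in finitely many steps'' is exactly the nontrivial claim: finiteness of the set of $1$-types does not by itself rule out that a correction re-introduces defects. What makes the iteration terminate is a coherence argument that you omit: if $g\circ f$ behaves like $id$ on $X\cup Y$ and on $Y\cup Z$, then (in the relevant configurations $X\lowbot Y$, $Y\lowbot Z$, etc.) it must behave like $id$ on $X\cup Z$, since otherwise the image of a suitable triple would not be partially ordered.

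Two further gaps. First, the conclusion requires $g\circ f$ to behave like $id$ on every set $(P\setminus\bar c)\cup\{c\}$, i.e.\ also on the $2$-types between each constant $c\in\bar c$ and the infinite orbits; your argument never touches these types. This is precisely where the hypothesis that $\End(\Gamma)$ contains no constant, no $g_<$ and no $g_\bot$ is used: e.g.\ if $c<X$ but $g\circ f$ sends this type to $\bot$, then iterating $g\circ f$ composed with automorphisms maps any finite set onto an antichain, so $g_\bot\in\End(\Gamma)$, a contradiction (and similarly for the other misbehaviours). Second, you invoke that hypothesis to guarantee that the correcting blocks lie in $\End(\Gamma)$, but that is not its role: anything generated by $f$ and $\Aut(\Po)$ lies in $\End(\Gamma)$ simply because $\End(\Gamma)$ is a closed monoid containing $\Aut(\Po)$ (membership of $\updownarrow$, resp.\ $\circlearrowright$, in $\End(\Gamma)$ is witnessed by the behaviour of $f$ on a union of types isomorphic to $\Po$). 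The hypothesis is needed instead to apply Lemmas \ref{lemma:singleorbit}--\ref{lemma:everyorbit} and for the constants-versus-orbits step just described.
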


\begin{proof}
By Lemma \ref{lemma:everyorbit}, $f$ behaves like $id$ or $\updownarrow$ on every infinite orbit. Without loss of generality we can assume that the first case holds, otherwise consider $\updownarrow \circ f$. 

Let $X \lowbot Y$, $Y \lowbot Z$ and $X \lowbot Z$ or $X < Z$. If $f$ behaves like $id$ on $X \cup Y$ and $Y \cup Z$ it also has to behave like $id$ on $X \cup Z$; otherwise the image of a triple $(x,y,z)\in X \times Y\times Z$ with $x<y<z$ would not be partially ordered. Let $X < Z$, $Y < Z$ and $X \bot Z$. Again, if $f$ behaves like $id$ on $X \cup Y$ and $Y \cup Z$ it also has to behave like $id$ on $X \cup Z$, otherwise we get a contradiction.

By Lemma \ref{lemma:onXY} $f$ either behaves like $id$ or like $\circlearrowright_X$ on the union two orbits $X \lowbot Y$. In the second case $\circlearrowright \, \in \End(\Gamma)$. The set $A = \{x \in P: y < x \lor y \bot x \text{ for all } y \in f(Y) \}$ is a union of orbits of $\Aut(P;\leq,\prec,\bar c)$ and a random filter of $P$. So $\circlearrowright_{A} \circ f$ is canonical and behaves like $id$ on $X \cup Y$. Repeating this step finitely many times gives us a function $g \in \langle \circlearrowright \rangle$ such that $g \circ f$ behaves like $id$ on the union of infinite orbits, by the observations in the paragraph above.

It is only left to show that $g \circ f$ behaves like $id$ between a given constant $c$ in $\bar c$ and an infinite orbit $X$. Assume for example that $c < X$ and $g \circ f(p_{c,X,<}) = p_{\bot}$. Let $A \subseteq P$ with $a \in A$. By homogeneity of $\Po$ we find an automorphism of $\Po$ that maps $a$ to $c$ and all points that are greater than $a$ to $X$. If we then apply $g \circ f$ and repeat this process at most $|A|$-times we can map $A$ to an antichain. Thus $g_{\bot} \in \End(\Gamma)$ which contradicts to our assumption.

Similarly all other cases where $g \circ f$ does not behave like $id$ between $c$ and $X$ contradict our assumptions. We leave the proof to the reader. Hence $g \circ f$ behaves like $id$ everywhere except on $\bar c$.
\end{proof}

Now we are ready to proof the main result of the section.

\begin{proof}[Proof of Proposition \ref{theorem:monoids}]

Let $\Gamma$ be a reduct of $\Po$ such that $\End(\Gamma)$ does not contains constant functions, $g_{<}$ or $g_{\bot}$. We show that then $\End(\Gamma)$ is equal to $\overline{\Aut(\Po)}$, $\overline{\langle \updownarrow \rangle}$, $\overline{\langle \circlearrowright \rangle}$ or $\overline{\langle \updownarrow, \circlearrowright \rangle}$.

First assume that $\End(\Gamma)$ contains a non injective function. This can be witnessed by constants $c_1 \neq c_2$ and a function $f \in \End(\Gamma)$ with $f(c_1) = f(c_2)$ that is canonical as function $f: (P;\leq,\prec, c_1, c_2) \to (P; \leq)$. By Lemma \ref{lemma:allorbits} we can assume that $f$ behaves like $id$ everywhere except from $c_1,c_2$. But this is not possible, since there is a point in $a \in P$ with $a \bot c_1$ but $\neg(a \bot c_2)$. Since $f(c_1) = f(c_2)$ either $<$ or $\bot$ is violated, which contradicts to $f$ behaving like $id$ everywhere except on $\{c_1, c_2\}$. So from now on let $\End(\Gamma)$ only contain injective functions.

Assume $\End(\Gamma)$ violates $\Sept$. This can also be witnessed by a canonical function $f: (P;\leq,\prec, \bar c) \to (P; \leq)$ such that $\bar c \in \Sept$ but $f(\bar c) \not\in \Sept$. By Lemma \ref{lemma:allorbits} we can assume that $f$ behaves like $id$ on every set $(P \setminus \bar c ) \cup \{c\}$, with $c \in \bar c$. If there are $c_i < c_j$ with $f(c_i) \bot f(c_j)$ it is easy to see that $\End(\Gamma)$ generates $g_{\bot}$ which contradicts to our assumptions. If there are $c_i < c_j$ or $c_i \bot c_j$ with $f(c_i) > f(c_j)$ let $a$ be an element of $(P \setminus \bar c )$ with $a < c_j$ and $a \bot c_i$. Then the image of $a,c_i,c_j$ under $f$ induces a non partially ordered structure - contradiction.

So $\End(\Gamma)$ preserves $\Sept$. By Lemma \ref{lemma:relations} we know that $\End(\Gamma) \subseteq \overline{\langle \updownarrow, \circlearrowright \rangle}$. If $\End(\Gamma)$ violates $\Cycl$ and $\Betw$ or $\Cycl$ and $\bot$ we can proof as in the paragraph above that $\End(\Gamma) = \overline{\langle \updownarrow, \circlearrowright \rangle}$.

Similarly, if $\End(\Gamma)$ preserves $\Cycl$ but violates $\Betw$ or $\bot$ then $\End(\Gamma) =  \overline{\langle \circlearrowright \rangle}$.

If $\End(\Gamma)$ preserves $\Betw$ and $\bot$ but violates $\Cycl$. Then $\End(\Gamma) =  \overline{\langle \updownarrow \rangle}$.

Finally, if $\End(\Gamma)$ preserves $\Betw$, $\bot$ and $\Cycl$ we have $\End(\Gamma) = \overline{\Aut(\Po)}$.
\end{proof}


%

\section{The case where $<$ and $\bot$ are pp-definable} \label{sect:lowbotpp}

Throughout the remaining parts of this paper we are going to study the complexity of $\csp(\Gamma)$ for model-complete reducts $\Gamma$ of $\Po$. We start with the case where $\End(\Gamma)$ is the topological closure of the automorphism group of $\Po$. In this case the two relations $<$ and $\bot$ are pp-definable by Theorem \ref{theorem:violate}. So throughout this section let $\Gamma$ be a reduct of $\Po$ in which $<$ and $\bot$ are pp-definable. We are first going to discuss the binary part of the $\Pol(\Gamma)$. This will be essential for proving the dichotomy in this case.

\begin{observation} \label{obs:lowbot}
The binary relation $x \lowbot y$ defined by $x < y \lor x \bot y$ is equivalent to the primitive positive formula $\exists z~(z < y) \land z \bot x$. Therefore $x \lowbot y$ is pp definable in $\Gamma$.
\end{observation}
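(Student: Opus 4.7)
The plan is to verify the claimed equivalence as a combinatorial fact about the random partial order $\Po$, then invoke the standing hypothesis that both $<$ and $\bot$ are pp-definable in $\Gamma$ to conclude pp-definability of $\lowbot$.

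For the implication from the existential formula to $x \lowbot y$, I would argue by contradiction. Suppose $z < y$ and $z \bot x$ hold for some witness $z$, but $\neg(x < y) \land \neg(x \bot y)$. Then either $x = y$ or $y < x$. In the first case $z < x$, contradicting $z \bot x$; in the second case transitivity gives $z < y < x$, so again $z < x$, contradicting $z \bot x$. Hence $x < y \lor x \bot y$.

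For the converse, I would use the universality of $\Po$, i.e. the fact that every finite partial order embeds into $\Po$, to produce the witness $z$. There are two cases. If $x < y$, consider the three-element partial order on $\{x,y,z\}$ with $z < y$, $x < y$ and $x \bot z$; this is a valid poset (both $x$ and $z$ lie below $y$, and they are mutually incomparable), so by the universality of $\Po$ such a $z \in P$ exists. If $x \bot y$, consider the partial order on $\{x,y,z\}$ with $z < y$, $x \bot y$ and $x \bot z$; again this is a valid poset, so a witness $z$ exists. In both subcases the tuple $(x,y,z)$ realises $z < y \land z \bot x$.

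Having established the equivalence, the pp-definability follows immediately: since by assumption $<$ and $\bot$ each admit a primitive positive definition in $\Gamma$, substituting those definitions into $\exists z\,(z < y \land z \bot x)$ yields a primitive positive formula in the language of $\Gamma$ that defines $\lowbot$. There is no real obstacle here; the only small check is confirming that the two three-element posets used as witnesses are genuinely partial orders (reflexive, antisymmetric, transitive), which is immediate from the listed incomparabilities.
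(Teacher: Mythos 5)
Your proof is correct and takes essentially the approach the paper intends: the paper states this as an unproved observation, and your two-direction verification (transitivity/antisymmetry ruling out $x=y$ and $y<x$ in one direction, realizing a consistent three-element poset to produce the witness $z$ in the other, then substituting the pp-definitions of $<$ and $\bot$) is exactly the standard justification. One tiny refinement: to extend the \emph{given} pair $x,y \in P$ by the witness $z$ you should invoke the one-point extension property of $\Po$, which follows from universality \emph{together with} homogeneity, rather than universality alone.
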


By $e_<$ we denote an embedding of the structure $(P;<)^2$ into $(P;<)$. Clearly $e_<$ is canonical when regarded as map $e_<: (P;\leq, \prec)^2 \to (P;\leq)$. It has the following behaviour:
\[
\begin{array}{c|cccc}
e_< &    =  &   <   &    >  & \bot \\
\hline
= &    =  & \bot & \bot & \bot \\
< & \bot &    <  & \bot &  \bot \\
> & \bot & \bot &   >   &  \bot \\
\bot  & \bot & \bot & \bot  &  \bot
\end{array}
\]

By $e_{\leq}$ we denote an embedding of $(P;\leq)^2$ into $(P;\leq)$ that is canonical function when regarded as map $e_{\leq}: (P;\leq, \prec)^2 \to (P;\leq)$. It has the following behaviour:

\[
\begin{array}{c|cccc}
e_{\leq} &    =  &   <   &    >  & \bot \\
\hline
= & =  & < & > & \bot \\
< & < &    <  & \bot &  \bot \\
> & > & \bot &   >   &  \bot \\
\bot  & \bot & \bot & \bot  &  \bot
\end{array}
\]

\subsection{Horn tractable CSPs given by $e_<$ and $e_{\leq}$} \label{sect:horn}

The two functions $e_<$ or $e_{\leq}$ are of central interest to us. We will show in this section that if one of them is a polymorphisms of $\Gamma$, then the problem $\csp(\Gamma)$ is tractable. 

Let $\Delta$ and $\Lambda$ be relational structures of the same signature. We say a map $h: \Delta \to \Lambda$ is a \textit{strong homomorphism} if $\bar x \in R \leftrightarrow h(\bar x) \in R$. By $\hat \Delta$ we denote the extension of $\Delta$ that contains the negation $\neg R$ for every $R$ is in $\Delta$.

\begin{theorem}[Proposition 14 from \cite{BCK-maximalCSP}] \label{theorem:horn}
Let $\Delta$ be an $\omega$-categorical structure and let $\Gamma$ be a reduct of $\Delta$. Suppose $\csp(\hat \Delta)$ is tractable. If $\Gamma$ has a polymorphism that is a strong homomorphism from $\Delta^2$ to $\Delta$, then also $\Gamma$ is tractable. \hfill $\square$
\end{theorem}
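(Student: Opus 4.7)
The plan is to design a polynomial-time algorithm for $\csp(\Gamma)$ that uses $\csp(\hat\Delta)$ as an oracle, via a Horn-resolution style procedure exploiting the binary polymorphism $f$. First I would analyse the relations of $\Gamma$. Using the $\omega$-categoricity of $\Delta$, each relation $R$ of $\Gamma$ can be represented by a Boolean combination of $\Aut(\Delta)$-invariant relations from a finite fixed set (by closing $\Delta$'s signature under orbits we may assume quantifier elimination, adding only finitely many relations relevant to any specific instance). Writing this combination in CNF we obtain $R$ as a conjunction of clauses whose literals are atomic $\Delta$-formulas or their negations. The central combinatorial lemma is: because $f$ preserves $R$ and is simultaneously a strong homomorphism of $\Delta$, we may take every such clause to contain at most one positive literal. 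Indeed, if some minimal CNF representation contained a clause $C$ with two positive literals $\phi, \psi$, choose tuples $\bar a, \bar b \in R$ such that $\bar a$ witnesses the necessity of $\phi$ (i.e.\ only $\phi$ among the positive literals of $C$ holds of $\bar a$) and $\bar b$ witnesses the necessity of $\psi$; applying $f$ coordinatewise yields $f(\bar a, \bar b) \in R$, but the strong-homomorphism property of $f$ falsifies $\phi$, $\psi$ and, by minimality of the CNF, every remaining literal of $C$ on $f(\bar a, \bar b)$, contradicting $R(f(\bar a, \bar b))$. Hence every relation of $\Gamma$ has a \emph{$\Delta$-Horn} definition.

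Given an instance $\mathcal I$ of $\csp(\Gamma)$ I would expand every constraint into its $\Delta$-Horn clauses, yielding a conjunction $C_1 \wedge \cdots \wedge C_m$ over atomic $\Delta$-literals on the variables of $\mathcal I$. I then run a standard Horn-SAT fixed-point propagation: maintain a set $U$ of atomic formulas forced to hold in any solution, initially empty, and repeatedly add $\phi_k$ to $U$ whenever a clause $\neg\phi_1 \vee \cdots \vee \neg\phi_{k-1} \vee \phi_k$ has $\phi_1, \ldots, \phi_{k-1} \in U$; reject if a purely negative clause has all its negated atoms in $U$. Since the number of atomic formulas on the variables of $\mathcal I$ is polynomial, this terminates in polynomial time. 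If propagation does not reject, I would call the $\csp(\hat\Delta)$ oracle on the $\hat\Delta$-instance imposing the atoms in $U$ as positive constraints and every atomic formula that never entered $U$ as negative constraints. The polymorphism $f$ guarantees that this ``maximally negative'' assignment is compatible with every Horn clause — each unresolved clause either has its unique positive literal in $U$ or a negative literal whose atom was never forced — so $\mathcal I$ is satisfiable if and only if this oracle call accepts.

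The main obstacle is the Horn reduction lemma itself: proving that an arbitrary finitary relation preserved by a binary strong-homomorphism polymorphism of an $\omega$-categorical structure admits a Horn definition over the host structure. The minimality-plus-polymorphism argument sketched above is the crux; propagation, termination bounds (via $\omega$-categoricity), and the final oracle call are then routine once this structural lemma is in place.
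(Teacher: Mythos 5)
Your first half (Horn definability) is essentially the right argument, but two points need care. First, the minimality argument needs witnesses $\bar a,\bar b\in R$ on which \emph{only} the chosen literal among \emph{all} literals of the clause $C$ holds (negative literals included), not merely among the positive ones; this is exactly what a reduced CNF gives you, so it is fixable, but as phrased the step ``the strong-homomorphism property falsifies every remaining literal'' does not follow. Second, the ``wlog quantifier elimination by closing the signature under orbits'' move is not harmless: being a strong homomorphism from $\Delta^2$ to $\Delta$ is a condition relative to the signature of $\Delta$, and it is generally destroyed by expanding the signature -- for instance $e_\leq$ is a strong homomorphism from $(P;\leq)^2$ to $(P;\leq)$ but not once the orbit relations $<$ and $\bot$ are added -- and the tractability hypothesis on $\csp(\hat\Delta)$ also refers to the original signature. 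You really need the relations of $\Gamma$ to admit quantifier-free definitions over $\Delta$ as given (true in the intended applications, where $\Delta$ is homogeneous in a finite relational language), and your reduction does not supply this.

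The genuine gap is in the decision procedure. Your propagation is purely propositional, but satisfaction is over $\Delta$: an atom can be entailed by $U$ together with the first-order theory of $\Delta$ without ever being derivable by your syntactic rule. Concretely, over $\Delta=(P;\leq)$ the instance $x\leq y\wedge y\leq z$ gives $U=\{x\leq y,\,y\leq z\}$, and your final oracle call then asserts $\neg(x\leq z)$ (and even $\neg(x\leq x)$), which is unsatisfiable in $\hat\Delta$; so you reject a trivially satisfiable instance, i.e.\ the algorithm is incomplete. The repair is to make the propagation semantic (a clause fires when $U$ together with $\Delta$ entails its negated atoms, which is tested by oracle calls of the form ``is $U\wedge\neg\phi_i$ satisfiable?''), and -- this is the idea your sketch is missing -- to use the polymorphism a second time in the correctness proof: for non-entailed atoms $\phi,\psi$ you only have \emph{separate} witnesses realizing $U\wedge\neg\phi$ and $U\wedge\neg\psi$, and it is by applying the strong homomorphism $f$ to the two witnesses (atoms of $U$ stay true, $\phi$ and $\psi$ both become false) and iterating with a compactness argument that one obtains a single ``maximally negative'' solution. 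Without that argument the realizability of your maximally negative assignment is precisely the point at issue, not a given. For comparison, the paper does not prove this statement at all but imports it as Proposition 14 of \cite{BCK-maximalCSP}; your overall route (Horn definability plus Horn-style propagation with a $\csp(\hat\Delta)$ oracle) is the standard one behind that result, but as written the algorithmic half fails.
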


By definition $e_<$ is a strong homomorphism from $(P;<)^2 \to (P;<)$ and $e_{\leq}$ s a strong homomorphism from $(P;\leq)^2 \to (P;\leq)$. Let $\not <$ respectively $\not\leq$ denote the negation of the order relation $<$ respectively $\leq$. One can see that every input to $\csp(P;<,\not<)$ and $\csp(P;\leq,\not\leq)$ is accepted as long as it does not contradict to the transitivity of $<$ respectively $\leq$. But this can be checked in polynomial time, thus the two problems are tractable. So by Theorem \ref{theorem:horn} every template $\Gamma$ with polymorphism $e_<$ or $e_\leq$ gives us a tractable problem.

In the following theorem we additionally give a semantic characterization of these tractable problems via Horn formulas. This characterisation works also in the general setting, we refer to \cite{BCK-maximalCSP} for the proof.

\begin{theorem} \label{theorem:horn2}
Let $\Gamma$ be a reduct of $\Po$. Suppose that $e_\leq \in \Pol(\Gamma)$. Then $\csp(\Gamma)$ is tractable and every relation in $\Gamma$ is equivalent to Horn formula in $(P;\leq)$:
\begin{align*}
x_{i_1} \leq x_{j_1} \land x_{i_2} \leq x_{j_2} \land \cdots \land x_{i_k} \leq x_{j_k} &\to x_{i_{k+1}} \leq x_{j_{k+1}} \text{ or} \\
x_{i_1} \leq x_{j_1} \land x_{i_2} \leq x_{j_2} \land \cdots \land x_{i_k} \leq x_{j_k} &\to \text{ 'false'}
\end{align*}
Suppose that $e_< \in \Pol(\Gamma)$. Then $\csp(\Gamma)$ is tractable and every relation in $\Gamma$ is equivalent to a Horn formula in $(P;<)$, i.e. a formula of the form:
\begin{align*}
x_{i_1} \vartriangleleft_1 x_{j_1} \land x_{i_2} \vartriangleleft_2 x_{j_2} \land \cdots \land x_{i_k} \vartriangleleft_k x_{j_k} &\to x_{i_{k+1}} \vartriangleleft_{k+1} x_{j_{k+1}} \text{ or} \\
x_{i_1} \vartriangleleft_1 x_{j_1} \land x_{i_2} \vartriangleleft_2 x_{j_2} \land \cdots \land x_{i_k} \vartriangleleft_k x_{j_k} &\to \text{ 'false'},
\end{align*}
where $\vartriangleleft_i~\in \{<,=\}$ for all $i = 1, \ldots, k+1$.\hfill $\square$
\end{theorem}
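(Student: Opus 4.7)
The tractability of $\csp(\Gamma)$ in both cases is already established in the paragraph preceding the theorem via Theorem \ref{theorem:horn}, so the plan is to concentrate on the semantic Horn characterisation of the relations of $\Gamma$.

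I would begin with the $e_\leq$ case. Any relation $R$ of $\Gamma$ is first-order definable in $\Po$ and hence is a union of orbits of $\Aut(\Po)$ on $P^n$. Each such orbit is determined by its \emph{$\leq$-type} $T(\bar a) := \{(i,j) \in [n]^2 : a_i \leq a_j\}$, and conversely homogeneity of $\Po$ guarantees that every preorder on $[n]$ is realised as $T(\bar a)$ for some $\bar a \in P^n$ (possibly with repeated entries). Hence $R$ corresponds bijectively to the family $\mathcal{T}(R) := \{T(\bar a) : \bar a \in R\}$ of preorders. The key combinatorial observation, extracted from the sixteen entries of the behaviour table of $e_\leq$, is that $T(e_\leq(\bar a, \bar b)) = T(\bar a) \cap T(\bar b)$. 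Therefore the fact that $e_\leq$ preserves $R$ is equivalent to $\mathcal{T}(R)$ being closed under componentwise intersection.

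I would then invoke the classical propositional Horn theorem: a subset of $\{0,1\}^m$ is the set of satisfying assignments of a conjunction of Horn clauses if and only if it is closed under componentwise conjunction. Applying this with the $n^2$ Boolean variables identified with the atoms $x_i \leq x_j$ yields a Horn CNF whose satisfying assignments are exactly the characteristic functions of elements of $\mathcal{T}(R)$. Interpreted in $(P;\leq)$ this becomes the desired conjunction of Horn clauses of the form $\bigwedge_\ell (x_{i_\ell} \leq x_{j_\ell}) \to \alpha$ with $\alpha$ an atom or $\mathrm{false}$; because $\leq$-atoms interpreted on $P^n$ automatically form a preorder, no further clauses for reflexivity or transitivity need to be added.

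The $e_<$ case is handled analogously, but with two kinds of atoms. Inspection of the behaviour table of $e_<$ shows that $e_<(\bar a, \bar b)$ satisfies $x_i < x_j$ (respectively $x_i = x_j$) iff both $\bar a$ and $\bar b$ do; the latter is immediate from injectivity of $e_<$. Hence the family of $(<,=)$-patterns $\bigl(\{(i,j) : a_i < a_j\},\, \{(i,j) : a_i = a_j\}\bigr)$ arising from tuples in $R$ is closed under componentwise intersection in each coordinate, and the same Horn theorem yields a conjunction of Horn clauses whose literals are of the form $x_i \vartriangleleft x_j$ for $\vartriangleleft \in \{<,=\}$. The main technical subtlety I would flag is the translation from propositional Horn back to a quantifier-free $\leq$-formula on $P^n$: one needs that every consistent propositional assignment (preorder, respectively consistent $(<,=)$-specification) is actually realised by a tuple in $P^n$, which is where universality of $\Po$ is crucial and where one should verify that no ``phantom'' assignments are introduced in the reduction.
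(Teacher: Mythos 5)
Your argument is correct, but it takes a different route from the paper: the paper does not prove the Horn characterisation itself at this point. It obtains tractability exactly as you do (via Theorem \ref{theorem:horn}, using that $e_<$ and $e_\leq$ are strong homomorphisms from $(P;<)^2$ resp.\ $(P;\leq)^2$ and that $\csp(P;<,\not<)$, $\csp(P;\leq,\not\leq)$ reduce to transitivity checks), and then simply defers the syntactic Horn statement to the general result in \cite{BCK-maximalCSP}. You instead give a self-contained proof: since each relation of $\Gamma$ is a union of orbits of $\Aut(\Po)$, and an orbit is determined by the atomic pattern of the tuple, the embedding property of $e_\leq$ (resp.\ $e_<$) into the product structure gives precisely that the pattern of $e_\leq(\bar a,\bar b)$ is the componentwise meet of the patterns of $\bar a$ and $\bar b$ (your table check is the same as noting $(a_i,b_i)\leq(a_j,b_j)$ componentwise iff the image pair is $\leq$-related); then the classical propositional Horn/McKinsey criterion (a Boolean relation is Horn-definable iff its set of models is closed under componentwise conjunction) applied to the atoms $x_i\leq x_j$, resp.\ $x_i<x_j$ and $x_i=x_j$, produces the required quantifier-free Horn definition, and the translation back is sound because the models of your Horn CNF are exactly the realised patterns of tuples in $R$ (so your worry about ``phantom'' assignments resolves itself; and in any case only the implication ``$T(\bar a)\models\Phi \Rightarrow \bar a\in R$'' is needed, which follows from exactness plus $R$ being a union of orbits). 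Two small remarks: realising every preorder on $[n]$ uses universality of $\Po$ (its age contains all finite posets, plus a quotient for the equivalence classes) rather than homogeneity, and this realisability is not actually needed for the argument. What each approach buys: the paper's citation-based route is short and yields the statement in the general setting of arbitrary $\omega$-categorical base structures with such an embedding polymorphism, while your argument is elementary and makes the mechanism (type meets plus the finite Horn criterion) explicit for the random poset, at the cost of redoing work that the reference already covers.
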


\subsection{Canonical binary functions on $(P;\leq,\prec)$}
A first step in analysing the binary part of $\Pol(\Gamma)$ is to look at the special case of canonical functions. So in the following text we are going to study the behaviour of binary functions $f \in \Pol(\Gamma)$ that are canonical seen as functions from $(P;\leq,\prec)^2$ to $(P;\leq)$. We are going to specify conditions for which $\Pol(\Gamma)$ contains $e_<$ or $e_\leq$.

\begin{definition} \label{def:dominated}
Let $f: \Po^2 \to \Po$ be a function. Then $f$ is called \textnormal{dominated on the first argument} if
\begin{itemize}
\item $f(x,y) < f(x',y')$ for all $x < x'$ and
\item $f(x,y) \bot f(x',y')$ for all $x \bot x'$.
\end{itemize}
We say $f$ is \textnormal{dominated} if $f$ or $(x,y) \mapsto f(y,x)$ is dominated on the first argument.
\end{definition}

We are going to prove the following theorem:

\begin{theorem} \label{theorem:binarycanonical}
Let $\Gamma$ be a reduct of $\Po$ in which $<$ and $\bot$ are pp-definable. Let $f(x,y) \in \Pol(\Gamma)$ be canonical when seen as a function from $(P;\leq,\prec)^2$ to $(P;\leq)$. Then at least one of the following cases holds:
\begin{itemize}
\item $f$ is dominated
\item $\Pol(\Gamma)$ contains $e_<$
\item $\Pol(\Gamma)$ contains $e_{\leq}$
\end{itemize}
\end{theorem}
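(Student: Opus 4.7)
My plan is to carry out a finite case analysis on the behaviour table of $f$, viewed as a function from pairs of 2-types of $(P;\leq,\prec)$ to 2-types of $(P;\leq)$. The input 2-types on distinct elements are drawn from $\{<,>,\bot\prec,\bot\succ\}$ (since $<$ implies $\prec$), and the output 2-types lie in $\{=,<,>,\bot\}$. Because $<$ and $\bot$ are pp-definable in $\Gamma$, the polymorphism $f$ preserves both, which immediately pins down the diagonal cells of this table: input $(<,<)$ maps to $<$, input $(>,>)$ maps to $>$, and any input $(\bot\ast,\bot\ast)$ maps to $\bot$. In terms of this table, the dominated-on-first-argument condition reads: for every $t_1 \in \{<,>,\bot\}$, the output on $(t_1,t_2)$ equals $t_1$ independently of $t_2$, with the symmetric condition for dominated on the second argument.

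Assuming $f$ is neither dominated on the first nor on the second argument, I would aim to show that $\Pol(\Gamma)$ contains $e_{<}$ or $e_{\leq}$. By negation of both dominance conditions there exist witnessing cells where the output does not follow the first (respectively second) coordinate. The key step is then to propagate such a single deviation across the whole table, using preservation of $<$ and $\bot$ and transitivity of the output order: by constructing 3- and 4-element configurations in $(P;\leq,\prec)$ with prescribed $\leq$- and $\prec$-patterns -- available by the homogeneity of $\Po$ and of its Ramsey expansion -- one deduces that every ``crossed'' cell $(t_1,t_2)$ with $t_1 \neq t_2$ and both in $\{<,>,\bot\}$ must be mapped to $\bot$. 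This narrows $f$ to one of two candidate behaviour tables, distinguished by the slice cells $(<,=)$ and $(=,<)$: if their output is $<$, the table matches that of $e_{\leq}$; if it is $\bot$, it matches that of $e_{<}$.

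Once $f$'s behaviour coincides with that of $e_{<}$ (respectively $e_{\leq}$) on every 2-type, canonicity together with the homogeneity of $\Po$ implies that $f$ agrees on each orbit of pairs with some embedding of $(P;<)^2$ (respectively $(P;\leq)^2$) into $\Po$. Since any two such embeddings differ by post-composition with an automorphism of $\Po$, and since in the present setting $\Aut(\Po) \subseteq \End(\Gamma) \subseteq \Pol(\Gamma)$ with $\End(\Gamma) = \overline{\Aut(\Po)}$, the operation $e_{<}$ (respectively $e_{\leq}$) is generated from $f$ by post-composition with automorphisms and hence lies in $\Pol(\Gamma)$.

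The main obstacle is the propagation step, in particular dealing with the subtype distinction $\bot\prec$ versus $\bot\succ$: a priori the cells $(<,\bot\prec)$ and $(<,\bot\succ)$ may receive different outputs, and one must show that a single non-dominated deviation forces all crossed cells -- uniformly across both subtypes -- to output $\bot$. I expect this to require carefully constructed 3- and 4-point configurations in $(P;\leq,\prec)$, in the spirit of the transitivity-based arguments already used in the proofs of Lemma~\ref{lemma:onXY} and Lemma~\ref{lemma:allorbits}, so that any asymmetric output pattern yields a partial-order violation on the image of some concrete witnessing tuple.
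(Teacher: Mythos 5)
There is a genuine gap in the central propagation step. You claim that once $f$ is not dominated in either argument, preservation of $<$ and $\bot$ together with transitivity arguments forces \emph{every} crossed cell $(t_1,t_2)$ with $t_1\neq t_2$ to be mapped to $\bot$, so that the behaviour table of $f$ itself collapses to that of $e_<$ or $e_\leq$. This is false: the constraints one can actually derive (as in Lemma~\ref{lemma:behaviour1}) only give that \emph{at least one} of $f(p_<,p_{\bot,\prec})$ and $f(p_{\bot,\prec},p_<)$ is $\bot$, and the asymmetric behaviour $f(p_<,p_{\bot,\prec})=p_<$, $f(p_{\bot,\prec},p_<)=p_\bot$, $f(p_<,p_>)=f(p_<,p_{\bot,\succ})=f(p_{\bot,\prec},p_>)=p_\bot$ is perfectly consistent with preservation of $<$ and $\bot$ and is neither dominated nor of the desired ``$\bot$-falling'' shape. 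No amount of 3- or 4-point configurations in $(P;\leq,\prec)$ rules it out, because the deviation is tied to the linear order $\prec$, which $f$ is canonical with respect to but which $\Aut(\Po)$ does not preserve. You correctly identified the $p_{\bot,\prec}$ versus $p_{\bot,\succ}$ distinction as the main obstacle, but the resolution you propose (uniformizing the table of the given $f$) cannot work; the paper instead leaves the table of $f$ asymmetric and \emph{builds new polymorphisms}: it precomposes $f$ with automorphisms $\alpha\in\Aut(\Po)$ that flip $\prec$ on the relevant incomparable pairs (turning a $p_{\bot,\prec}$-instance into a $p_{\bot,\succ}$-instance, where the output is $\bot$), then iterates such compositions over a finite tuple, and finally uses compactness plus Ramsey recanonization (Lemma~\ref{lemma:condition_e_1} together with Theorem~\ref{theorem:ramsey2}) to extract a canonical $\bot$-falling polymorphism. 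This generation step, entirely absent from your argument, is where the real work lies.

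A secondary inaccuracy: even in the genuinely $\bot$-falling case, the slice cells $(p_<,p_=)$ and $(p_=,p_<)$ need not agree, so the table of $f$ need not literally match $e_<$ or $e_\leq$; the mixed case (say $f(p_=,p_<)=p_<$ and $f(p_<,p_=)=p_\bot$) is handled in the paper by composing $f$ with itself, e.g.\ $(x,y)\mapsto f(f(x,y),x)$, as in Lemma~\ref{lemma:botfall}. Your final paragraph about recovering $e_<$ or $e_\leq$ from a behaviour match via homogeneity and closure of $\Pol(\Gamma)$ is essentially fine, but it only applies after these two missing steps are supplied.
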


First of all we make some general observations for binary canonical functions preserving $<$ and $\bot$. We are again going to use the notation introduced in Notation \ref{notation:canonical}. Let us fix a function $-:(P;\leq,\prec) \to (P;\leq,\prec)$ such that $x \prec y \leftrightarrow - y \prec - x$ holds. It is easy to see that such a function exists. 

\begin{lemma} \label{lemma:behaviour1}
Let $f: (P;\leq,\prec)^2 \to (P;\leq)$ be canonical and $f \in \Pol(\Gamma)$. Then the following statements are true:
\begin{enumerate}
\item $f(p_<,p_<) = p_<$, $f(p_\bot,p_\bot) = p_\bot$
\item $f(p,q) = - f(-p,-q)$, for all types $p,q$.
\item $f(p_<,p_{\bot, \prec})$, $f(p_<,p_{\bot, \succ})$,  $f(p_{\bot, \succ}, p_<)$ and $f(p_{\bot, \prec}, p_<)$ can only be equal to $p_<$ or $p_\bot$.
\item At least one of $f(p_<,p_{\bot, \prec})$ and $f(p_{\bot, \prec},p_<)$ is equal to $p_\bot$.
\item At least one of $f(p_<,p_{\bot, \succ})$ and $f(p_{\bot, \prec},p_>)$ is equal to $p_\bot$.
\item It is not possible that $f(p_<,p_>) = p_=$ holds.
\item $f(p_{\bot,\prec},p_<) = p_\bot \to f(p_{\bot},p_=) = p_\bot$
\end{enumerate}
\end{lemma}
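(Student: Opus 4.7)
The plan centers on three pp-definable relations that every polymorphism of $\Gamma$ preserves: $<$, $\bot$ (pp-definable by hypothesis) and $\lowbot := {<}\cup{\bot}$ (pp-definable by Observation~\ref{obs:lowbot}). Together with the canonicity of $f$, these determine much of the behavior on 2-types. Part~(1) is immediate from preservation of $<$ and $\bot$; part~(3) is immediate from preservation of $\lowbot$, since both input 2-types in each of the four listed cases satisfy $\lowbot$, so the output must also satisfy $\lowbot$ on $(P;\leq)$, hence lie in $\{p_<, p_\bot\}$. Part~(2) is essentially notational: fixing $-$ as an anti-isomorphism of $(P;\leq,\prec)$ (which exists since the opposite structure is the \fraisse{} limit of the same class), the identity records that reading both input pairs backwards just reverses the output pair, so $f(-p,-q) = -f(p,q)$.

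Parts~(4) and~(5) are proven by contradiction using~(2) and $\lowbot$-preservation. For~(4), assume $f(p_<,p_{\bot,\prec}) = f(p_{\bot,\prec},p_<) = p_<$ and pick a 3-point configuration $a_1<a_2$, $a_2\bot a_3$, $a_1\bot a_3$ with $a_1\prec a_2\prec a_3$, and analogously $b_1\bot b_2$, $b_2<b_3$, $b_1\bot b_3$ with $b_1\prec b_2\prec b_3$. The two hypotheses force $f(a_1,b_1)<f(a_2,b_2)<f(a_3,b_3)$, and the extreme pair $((a_1,a_3),(b_1,b_3))$ has type $(p_{\bot,\prec},p_{\bot,\prec})$, so $f(p_{\bot,\prec},p_{\bot,\prec}) = p_<$. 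Part~(2) then yields $f(p_{\bot,\succ},p_{\bot,\succ}) = p_>$, contradicting $\lowbot$-preservation. Part~(5) reduces to~(4) by rewriting both hypotheses using~(2): ``at least one equals $p_\bot$'' becomes the same obstruction with $\prec$ replaced by $\succ$.

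Part~(6) uses preservation of $\bot$: assuming $f(p_<,p_>)=p_=$, pick $x_1<x_2,\, y_1>y_2$ together with an auxiliary pair $x_1<x_4,\, y_1>y_4$ such that $x_4\bot x_2$ and $y_4\bot y_2$. The hypothesis gives $f(x_1,y_1) = f(x_2,y_2) = f(x_4,y_4)$, while preservation of $\bot$ on the pair $((x_2,x_4),(y_2,y_4))$ forces $f(x_2,y_2)\bot f(x_4,y_4)$, contradicting their equality. For~(7), the plan is to combine canonicity with~(1) and preservation of $<$/$\bot$: assuming $f(p_{\bot,\prec},p_<)=p_\bot$, pick $x\bot x'$ with $x\prec x'$ and introduce witnesses $z<y<w$; the hypothesis gives $f(x,z)\bot f(x',y)$ and $f(x,y)\bot f(x',w)$, and a short case analysis on the canonical value of $f(p_=,p_<)$ shows that the only value of $f(p_{\bot,\prec},p_=)$ compatible with these two $\bot$-assertions together with the preservation properties is $p_\bot$; the $\succ$-case then follows by~(2).

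The main obstacle is the bookkeeping of $\prec$-orientations in the configurations for~(4)-(7): since canonicity distinguishes $p_{\bot,\prec}$ from $p_{\bot,\succ}$, every pair in every configuration must be chosen so that it carries exactly the 2-type whose behavior has been assumed or already established, which requires some care in interleaving the $\prec$-order of the configuration points with the $\leq$-structure.
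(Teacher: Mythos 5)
Parts (1)--(6) of your argument are correct and essentially the paper's: the same three-point configurations appear in (4)--(6) (your detour in (4) through $f(p_{\bot,\prec},p_{\bot,\prec})=p_<$ and statement (2) is harmless, though the chain already contradicts $f(p_\bot,p_\bot)=p_\bot$ from (1) directly), and treating (5) as the $\prec/\succ$-mirror of (4) via (2) is exactly what the paper's ``similarly'' amounts to.

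The gap is in (7). Your configuration keeps the first argument inside the fixed incomparable pair $\{x,x'\}$ and varies only the second argument along $z<y<w$, and you claim a case analysis on $f(p_=,p_<)$ then forces $f(p_{\bot,\prec},p_=)=p_\bot$. This cannot succeed: in the case $f(p_=,p_<)=p_\bot$ (a value you cannot exclude, since it is precisely the behaviour of $e_<$), the assignment sending the six pairs onto a three-element antichain with $f(x,z)=f(x',z)$, $f(x,y)=f(x',y)$, $f(x,w)=f(x',w)$ satisfies every constraint visible in your configuration --- the two hypothesized $\bot$-assertions, $f(p_=,p_<)=p_\bot$, and $\bot$- and $\lowbot$-preservation on all pairs whose first-coordinate type is $p_{\bot,\prec}$ or $p_{\bot,\succ}$ --- and yet has $f(p_{\bot,\prec},p_=)=p_=$. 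So no case analysis restricted to this configuration can pin down $p_\bot$. The missing ingredient is a $<$-comparable pair in the \emph{first} argument: the paper assumes for contradiction that $f(a_1,b)\leq f(a_2,b)$ for some $a_1\bot a_2$ and $b$, and then adds a point $a_3$ with $a_2<a_3$, $a_1\bot a_3$, $a_1\prec a_3$ together with $b'>b$; preservation of $<$ gives $f(a_2,b)<f(a_3,b')$, hence $f(a_1,b)<f(a_3,b')$, contradicting $f(a_1,b)\bot f(a_3,b')$, which follows from the hypothesis $f(p_{\bot,\prec},p_<)=p_\bot$. With that repair (and your use of (2) for the $\succ$-orientation), item (7) goes through.
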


\begin{proof} \
\begin{enumerate}
\item This is clear, since $f$ is a polymorphism of $\Gamma$ and hence preserves $<$ and $\bot$.
\item This is true by definition of $-$.
\item This is true since $f$ preserves the relation $\lowbot$, see Observation \ref{obs:lowbot}.
\item Assume $f(p_<,p_{\bot, \prec}) = f(p_{\bot, \prec},p_<) = p_<$. Let $a_1 \prec a_2 \prec a_3$ with $a_1 < a_2$, $a_3 \bot a_1 a_2$ and $b_1 \prec b_2 \prec b_3$ with $b_2 < b_3$, $b_1 \bot b_2 b_3$. By our assumption  $f(a_1,b_1) < f(a_2,b_2) < f(a_3,b_3)$ holds, which is a contradiction to $f$ preserving $\bot$.
\item This can be proven similarly to (4).
\item Assume that $f(p_<,p_>) = p_=$ holds. Let $a_1 \prec a_2 \prec a_3$ with $a_1< a_3$, $a_2 < a_3$, $a_1 \bot a_2$ and $b_1 \succ b_2 \succ b_3$ with $b_1 > b_3$, $b_2 > b_3$, $b_1 \bot b_2$. Then $f(a_1,b_1) \bot f(a_2,b_2)$ but also $f(a_1,b_1) = f(a_3,b_3) = f(a_2,a_2)$ have to hold, which is a contradiction.

\item Assume that there are $a_1 \bot a_2$ and $b$ such that $f(a_1,b) \leq f(a_2,b)$ holds. Then we take elements $a_3$ and $b'$ with $a_2 < a_3$, $a_1 \bot a_3$ , $a_1 \prec a_3$ and $b' > b$. Then $f(a_1,b) \leq f(a_2,b) < f(a_3,b')$ holds, which is a contradiction to $f(a_1,b) \bot f(a_3, b')$.
\end{enumerate}
\end{proof}

By Lemma \ref{lemma:behaviour1} (2) we only have to consider pairs of types where the first entry is $p_{=}$, $p_<$ or $p_{\bot,\prec}$ when studying the behaviour of $f$. Further Lemma \ref{lemma:behaviour1} implies that $f(x,y) \neq f(x',y')$ always holds for $x \neq x'$ and $y \neq y'$.

\begin{lemma} \label{lemma:dominated}
Let $f \in \Pol(\Gamma)$. Then the following are equivalent:
\begin{enumerate}
\item $f(p_<,p_>) = p_<$
\item $f(p_<,p_{\bot, \succ}) = p_<$
\item $f(p_<,q) = p_<$ for all 2-types $q$
\item $f$ is dominated in the first argument
\end{enumerate}
\end{lemma}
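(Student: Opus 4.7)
The plan is to close the equivalence by proving $(2) \Rightarrow (1)$, $(1) \Rightarrow (3)$, and $(3) \Rightarrow (4)$, since $(4) \Rightarrow (3)$ is immediate from the definition of ``dominated'' and $(3) \Rightarrow (1)$, $(3) \Rightarrow (2)$ follow by specializing $q$. Throughout I will work with the behaviour table of Lemma~\ref{lemma:behaviour1} together with small explicit configurations in $\Po$, realised by homogeneity and the fact that in $\Po$ any finite antichain has both an upper and a lower bound.

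For $(2) \Rightarrow (1)$ I would realise the three-point configuration $a_1 < a_2 < a_3$ together with $b_1, b_2, b_3$ satisfying $b_1 > b_3$, $b_1 \bot b_2$, $b_2 \bot b_3$, and $b_3 \prec b_2 \prec b_1$, so that $(b_1, b_2)$ and $(b_2, b_3)$ both have type $p_{\bot, \succ}$ while $(b_1, b_3) \in p_>$. Then (2) forces $f(a_1, b_1) < f(a_2, b_2) < f(a_3, b_3)$, and transitivity of $<$ yields $f(p_<, p_>) = p_<$ by canonicity.

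For $(1) \Rightarrow (3)$ I first derive the auxiliary $f(p_<, p_=) = p_<$ from the three-point configuration $a_1 < a_2 < a_3$, $b_1 = b_3 = b$, $b_2 > b$: here $f(p_<, p_<) = p_<$ and (1) combine to give $f(a_1, b) < f(a_2, b_2) < f(a_3, b)$. The remaining cases $f(p_<, p_{\bot, \prec}) = p_<$ and $f(p_<, p_{\bot, \succ}) = p_<$, which lie in $\{p_<, p_\bot\}$ by Lemma~\ref{lemma:behaviour1}(3), I would handle with the four-point configuration $a_1 < a_2 < a_3 < a_4$, $b_1 \bot b_2$ of the required $\prec$-orientation, an auxiliary $b_3$ chosen above both $b_1, b_2$ (in the $p_{\bot, \succ}$ case) or below both (in the $p_{\bot, \prec}$ case), and the repeated point $b_4 = b_2$. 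Assuming the value is $p_\bot$ yields $f(a_1, b_1) \bot f(a_4, b_2)$, whereas the chain $f(a_1, b_1) < f(a_3, b_3) < f(a_4, b_2)$, forced by (1) together with $f(p_<, p_<) = p_<$, gives $f(a_1, b_1) < f(a_4, b_2)$ by transitivity---a contradiction.

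Finally, for $(3) \Rightarrow (4)$ the first clause of ``dominated in the first argument'' is immediate from (3); for the second clause, (3) combined with parts (4), (5), and (7) of Lemma~\ref{lemma:behaviour1} yields $f(p_{\bot, \prec}, q) = p_\bot$ for $q \in \{p_<, p_>, p_=\}$, and part (1) of the same lemma covers $q \in \{p_{\bot, \prec}, p_{\bot, \succ}\}$; the self-duality $f(p, q) = -f(-p, -q)$ of Lemma~\ref{lemma:behaviour1}(2) then extends the conclusion to $p_{\bot, \succ}$ as first argument. The main obstacle is expected to be the $p_{\bot, \prec}$ and $p_{\bot, \succ}$ cases of $(1) \Rightarrow (3)$: finding the correct four-point configuration---in particular the use of the repeated point $b_4 = b_2$ to close the contradictory loop---is the non-obvious step.
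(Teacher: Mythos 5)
Your proposal is correct and follows essentially the same route as the paper: the trivial implications $(4)\Rightarrow(3)\Rightarrow(2)$, $(3)\Rightarrow(1)$, then $(2)\Rightarrow(1)$ and $(1)\Rightarrow(3)$ via three-point chain configurations exploiting transitivity and $\bot$-preservation, and $(3)\Rightarrow(4)$ from parts (1), (2), (4), (5), (7) of Lemma~\ref{lemma:behaviour1}. The only cosmetic difference is that the paper settles all second-argument cases of $(1)\Rightarrow(3)$ with a single configuration $b_1 b_3 < b_2$, whereas you split into $p_=$, $p_{\bot,\prec}$, $p_{\bot,\succ}$ with separate (and slightly redundant) auxiliary points and a contradiction framing that is not actually needed.
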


\begin{proof}
It is clear that the implications (4) $\to$ (3) $\to$ (2) and $(3)$ $\to$ (1) are true.

(1) $\to$ (3):
Let $a_1 < a_2 < a_3$ and $b_1 b_3 < b_2$. Then $f(a_1,b_1) < f(a_2,b_2) < f(a_3,b_3)$ has to hold regardless if the type of $(b_1,b_3)$ is $p_{\bot, \prec}$, $p_{\bot, \succ}$ or $p_=$. So $f(p_<,q) = p_<$ for all 2-types $q$.
%

(2) $\to$ (1):
Let $a_1 < a_2 < a_3$ and $b_1 \succ b_2 \succ b_3$ with $b_1 > b_3$, $b_2 \bot b_1 b_3$. Then $f(a_1,b_1) < f(a_2,b_2) < f(a_3,b_3)$ implies $f(a_1,b_1) < f(a_3,b_3)$ and so $f(p_<,p_>) = p_<$.
%

(3) $\to$ (4): We have to consider all the pairs of 2-types where the first entry is $p_{\bot,\prec}$. By Lemma \ref{lemma:behaviour1} (4) and (5) we know that $f(p_{\bot,\prec},p_<) = f(p_{\bot,\prec},p_>) = p_{\bot}$. From Lemma \ref{lemma:behaviour1}(7) follows that $f(p_\bot,p_=) = p_\bot$. 

We want to point out that we did not require $f$ to be canonical; it can be easily verified that all proof steps also work for general binary functions.
\end{proof}

\begin{lemma} \label{lemma:notdominated}
Let $f: (P;\leq,\prec)^2 \to (P;\leq)$ be canonical and $f \in \Pol(\Gamma)$. If $f$ is not dominated the following statements are true:
\begin{enumerate}
\item $f(p_<,p_>) = f(p_<,p_{\bot, \succ}) = f(p_{\bot, \prec},p_>) = p_\bot$.
\item $f(p_<,p_=) = p_<$ or $f(p_<,p_=) = p_\bot$.
\item $f(p_{\bot, \prec},p_=) = p_{\bot}$ or $f(p_{\bot, \prec},p_=) = p_{<}$.
\end{enumerate}
\end{lemma}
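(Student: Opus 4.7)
The plan is to work through the remaining entries of $f$'s behaviour table by exploiting that $f$ not dominated means neither $f$ nor the swap $\hat f(x,y) := f(y,x)$ is dominated on the first argument. I apply Lemma~\ref{lemma:dominated} to both $f$ and $\hat f$, translating between the two via $\hat f(p,q) = f(q,p)$ and Lemma~\ref{lemma:behaviour1}(2), which states $f(p,q) = -f(-p,-q)$. For part~(1): non-domination of $f$ combined with Lemma~\ref{lemma:dominated} and Lemma~\ref{lemma:behaviour1}(3) gives $f(p_<, p_{\bot,\succ}) = p_\bot$; the same reasoning for $\hat f$ yields $f(p_{\bot,\succ}, p_<) = p_\bot$, and Lemma~\ref{lemma:behaviour1}(2) then produces $f(p_{\bot,\prec}, p_>) = p_\bot$. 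For $f(p_<, p_>)$: Lemma~\ref{lemma:behaviour1}(6) rules out $p_=$; non-domination of $f$ rules out $p_<$; non-domination of $\hat f$ combined with Lemma~\ref{lemma:behaviour1}(2) rules out $p_>$; so $f(p_<, p_>) = p_\bot$.

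For part~(2) I rule out $f(p_<, p_=) = p_=$ first. Assuming this, given $c_1 \bot c_2$ with $c_1 \prec c_2$, the ordered homogeneity of $(P;\leq,\prec)$ supplies a $c_3$ with $c_3 < c_1$, $c_3 < c_2$ and $c_3 \prec c_1 \prec c_2$; both $(c_3, c_i)$ have type $p_<$, so $f(c_3, d) = f(c_1, d) = f(c_2, d)$, proving $f(p_{\bot,\prec}, p_=) = p_=$; the $-$-image gives $f(p_{\bot,\succ}, p_=) = p_=$, so $f(c, d)$ is independent of $c$. Consequently, for arbitrary $c_1, c_2$, picking a suitable $c'$ with $c_1 < c'$ (resp.\ $c_1 \bot c'$) yields $f(c_1, d_1) < f(c', d_2) = f(c_2, d_2)$ for $d_1 < d_2$ and $f(c_1, d_1) \bot f(c', d_2) = f(c_2, d_2)$ for $d_1 \bot d_2$, via the product order and preservation of $<, \bot$; thus $f$ is dominated on its second argument, a contradiction. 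To rule out $f(p_<, p_=) = p_>$: take $c_1 \prec c_2$, $d_1 \prec d_2$ with $c_1 < c_2$ and $d_1 < d_2$; the $(p_<, p_<)$ entry gives $f(c_1, d_1) < f(c_2, d_2)$, and the assumption (together with its $-$-image) gives $f(c_1, d_2) > f(c_2, d_2)$ and $f(c_1, d_1) > f(c_2, d_1)$. Chaining produces $f(c_1, d_2) > f(c_2, d_1)$, so $f(p_<, p_>) = p_>$, contradicting~(1).

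For part~(3) both forbidden cases reduce to~(2). If $f(p_{\bot,\prec}, p_=) = p_=$, pick $c_3 < c_1$ with $c_3 \bot c_2$ and $c_3 \prec c_1 \prec c_2$ (where $c_1 \bot c_2$, $c_1 \prec c_2$); then $(c_3, c_2)$ has type $p_{\bot,\prec}$, so $f(c_3, d) = f(c_2, d)$; combined with $f(c_1, d) = f(c_2, d)$ this gives $f(c_3, d) = f(c_1, d)$, and $(c_3, c_1)$ having type $p_<$ yields $f(p_<, p_=) = p_=$, contradicting~(2). If $f(p_{\bot,\prec}, p_=) = p_>$, pick $c_3 > c_1$ with $c_3 \bot c_2$ and $c_1 \prec c_2 \prec c_3$; then $(c_2, c_3)$ has type $p_{\bot,\prec}$ giving $f(c_2, d) > f(c_3, d)$, while the assumption gives $f(c_1, d) > f(c_2, d)$. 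Chaining yields $f(c_1, d) > f(c_3, d)$, and $(c_1, c_3)$ has type $p_<$, so $f(p_<, p_=) = p_>$, again contradicting~(2). The main obstacle is not conceptual but combinatorial: every step needs an auxiliary element simultaneously satisfying several $\leq$- and $\prec$-constraints; existence inside $(P;\leq,\prec)$ always follows from its ordered homogeneity, but the case distinctions must be verified individually.
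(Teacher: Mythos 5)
Your proof is correct and takes essentially the same route as the paper: part (1) by applying Lemma~\ref{lemma:dominated} to $f$ and to its swap $(x,y)\mapsto f(y,x)$ together with Lemma~\ref{lemma:behaviour1}(2),(3),(6), and parts (2)--(3) by canonicity plus small witnessing configurations in $(P;\leq,\prec)$. The only (harmless) deviation is in the details: the paper eliminates $p_=$ and $p_>$ in (2) and (3) each with a single auxiliary configuration contradicting $f(p_{\bot,\succ},p_<)=p_\bot$ resp.\ $f(p_{\bot,\prec},p_>)=p_\bot$, while you split into cases, handling the $p_=$ case of (2) by showing $f$ would collapse its first argument and hence be dominated in the second, and reducing (3) to (2); all of these steps check out.
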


\begin{proof} \
\begin{enumerate}
\item is a direct consequence of Lemma \ref{lemma:dominated}.
\item Suppose there are $a_1 < a_2$ and $b$ such that $f(a_1,b) \geq f(a_2,b)$. Then we take elements $a_3,b' \in P$ with $a_2 \bot a_3$ $a_2 \succ a_3$, $a_1 < a_3$ and a $b' > b$. Then $f(a_2,b) \leq f(a_1,b) <f(a_3,b')$ holds, which is a contradiction to $f(a_2,b) \bot f(a_3,b')$.

\item Assume that there are $a_1 \bot a_2$, $a_1 \prec a_2$ and $b$ such that $f(a_1,b) \geq f(a_2,b)$ holds. There are elements $a_3$ and $b'$ with $a_2 > a_3$, $a_1 \bot a_3$, $a_1 \prec a_3$ and $b' < b$. Then $f(a_2,b) > f(a_3,b')$ and $f(a_1,b) \bot f(a_3, b')$. But this contradicts to our assumption.

\end{enumerate}
\end{proof}

\begin{definition}
Let us say a binary function is \textnormal{\botfall}, if it has the same behaviour as $e_<$ respectively $e_\leq$ on pairs of partial type $(p_{\neq},p_{\neq})$.
\end{definition}

\begin{lemma} \label{lemma:botfall}
Let $f \in \Pol(\Gamma)$ be a canonical function $f: (P;\leq,\prec)^2 \to (P;\leq)$ of \botfall{} behaviour. Then $\Pol(\Gamma)$ contains $e_<$ or $e_\leq$.
\end{lemma}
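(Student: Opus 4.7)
Since $f$ is \botfall{}, its behaviour on 2-type pairs $(p_1,p_2)$ with $p_1,p_2 \neq p_=$ already coincides with that of $e_<$ (equivalently $e_\leq$, which agree outside the diagonal). So only the pairs involving $p_=$ need to be analysed, and my overall plan is: (i) pin down all possible behaviours of $f$ on these diagonal pairs, (ii) do a four-way case analysis showing that $f$ (or a short composition built from $f$) already has the behaviour of $e_<$ or $e_\leq$, and (iii) pass from behaviour-equality to membership in $\Pol(\Gamma)$ via local closedness.

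For step~(i), observe that $f(p_=,p_=)=p_=$ is automatic. From $f(p_<,p_>)=p_\bot$ (part of being \botfall{}) and Lemma~\ref{lemma:dominated}, $f$ is not dominated in its first argument, and applying the same lemma to $(x,y)\mapsto f(y,x)$ shows it is not dominated in the second either. Lemma~\ref{lemma:notdominated}(2), together with its symmetric counterpart, then forces $f(p_<,p_=),\,f(p_=,p_<)\in\{p_<,p_\bot\}$. Furthermore, the \botfall{} behaviour on $(p_{\bot,\prec},p_<)$ combined with Lemma~\ref{lemma:behaviour1}(7) gives $f(p_\bot,p_=)=p_\bot$, and symmetrically $f(p_=,p_\bot)=p_\bot$; the remaining diagonal values on $(p_>,p_=)$ and $(p_=,p_>)$ are fixed by Lemma~\ref{lemma:behaviour1}(2).

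For step~(ii), I split into four cases according to $(f(p_<,p_=),f(p_=,p_<))$. If this pair equals $(p_<,p_<)$, a direct comparison of tables shows that $f$ has the same behaviour as $e_\leq$; if it equals $(p_\bot,p_\bot)$, $f$ has the same behaviour as $e_<$. In each of the two mixed cases I set $h(x,y):=f(f(x,y),f(y,x))\in\Pol(\Gamma)$, which is again canonical as a map $(P;\leq,\prec)^2\to(P;\leq)$, and verify directly from the behaviour tables that $h$ sends both $(p_<,p_=)$ and $(p_=,p_<)$ to $p_\bot$ while retaining the \botfall{} behaviour on non-diagonal pairs; hence $h$ has the behaviour of $e_<$.

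For step~(iii), suppose $g\in\{f,h\}\subseteq\Pol(\Gamma)$ has the same behaviour as some target $e\in\{e_<,e_\leq\}$. Since $\Po$ is $\omega$-categorical and $\Aut(\Po)\subseteq\End(\Gamma)\subseteq\Pol(\Gamma)$, for any finite $S\subseteq P^2$ the tuples $g(S)$ and $e(S)$ lie in the same orbit of $\Aut(\Po)$ on $P^{|S|}$, so there is $\alpha\in\Aut(\Po)$ with $\alpha\circ g\restriction S=e\restriction S$. Local closedness of $\Pol(\Gamma)$ then yields $e\in\Pol(\Gamma)$. The only real obstacle is the case analysis for $h$: it is a mechanical check of the outer $f$-table on the possible behaviour-classes of the inner pair $(f(x,y),f(y,x))$, but it has to be done without error, especially on the diagonal entries that do the work.
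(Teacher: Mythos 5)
Your proposal is correct and follows essentially the same route as the paper: derive $f(p_\bot,p_=)=f(p_=,p_\bot)=p_\bot$ and $f(p_<,p_=),f(p_=,p_<)\in\{p_<,p_\bot\}$ from Lemma~\ref{lemma:behaviour1}(7) and Lemma~\ref{lemma:notdominated}, then do the four-way case distinction and compose in the mixed cases. The only (harmless) variations are that you use the single composition $f(f(x,y),f(y,x))$ where the paper uses $f(f(x,y),x)$ resp.\ $f(f(y,x),y)$, and that you spell out the non-domination hypothesis and the local-closedness step that the paper leaves implicit.
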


\begin{proof}
From Lemma \ref{lemma:behaviour1} (7) follows that $f(p_\bot, p_=) = p_\bot$ and $f(p_=, p_\bot) = p_\bot$. By Lemma \ref{lemma:notdominated} we further know that $f(p_<,p_=), f(p_=,p_<) \in \{ p_{\bot}, p_{<} \}$. So we have to do a simple case distinction:
\begin{itemize}
\item If $f(p_=,p_<) = f(p_<,p_=) = p_{\bot}$, then $f$ behaves like $e_<$, hence $e_< \in \Pol(\Gamma)$. 
\item If $f(p_=,p_<) = p_{<}$ and $f(p_<,p_=) = p_{\bot}$, the function $(x,y)\to f(f(x,y),x)$ has the same behaviour as $e_<$, thus $e_< \in \Pol(\Gamma)$.
\item Symmetrically if $f(p_=,p_<) = p_{\bot}$ and $f(p_<,p_=) = p_{<}$, the function $(x,y)\to f(f(y,x),y)$ has the same behaviour as $e_<$, thus $e_< \in \Pol(\Gamma)$.
\item If $f(p_=,p_<) = p_{<} = f(p_<,p_=) = p_{<}$, then $f$ has the same behaviour as $e_{\leq}$, thus $e_{\leq} \in \Pol(\Gamma)$.
\end{itemize}
\end{proof}

We now give a simple criterium for the existence of a canonical \botfall{} function in $\Pol(\Gamma)$. This criterium will allows us to finish the proof of Theorem \ref{theorem:binarycanonical}.

\begin{lemma} \label{lemma:condition_e_1}
Assume that for every $k > 1$, every pair of tuples $\bar a, \bar b \in P^k$ and every indices $p,q \in [k]$ with $a_p < a_q$ and $\neg(b_p \leq b_q)$ there exists a binary function $g \in \Pol(\Gamma)$ such that $g(a_p,b_p) \bot g(a_q,b_q)$ and for all $i, j \in [k]$:
\begin{enumerate}
\item $a_i < a_j$ implies $g(a_i,b_i) < g(a_j,b_j)$ or $g(a_i,b_i) \bot g(a_j,b_j)$,
\item $a_i \bot a_j$ implies $g(a_i,b_i) \bot g(a_j,b_j)$.
\end{enumerate}
Then $\Pol(\Gamma)$ contains $e_<$ and $e_\leq$.
\end{lemma}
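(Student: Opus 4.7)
The plan is to iterate the hypothesis to break arbitrarily many ``bad'' pairs simultaneously, extract a canonical $\bot$-falling polymorphism via Theorem~\ref{theorem:ramsey2}, and then invoke Lemma~\ref{lemma:botfall}.

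First, I would show by induction on $|S|$ that for every $\bar a, \bar b \in P^k$ and every set $S$ of pairs $(p,q)$ with $a_p < a_q$ and $\neg(b_p \leq b_q)$ there exists $h_S \in \Pol(\Gamma)$ simultaneously $\bot$-breaking every pair in $S$ while still satisfying the monotonicity conditions~(1) and~(2) of the hypothesis on the whole tuple. The base case is the hypothesis itself. For the inductive step, pick $(p_0,q_0) \in S$ and take $h_{S'}$ given by induction on $S' = S \setminus \{(p_0,q_0)\}$; if $h_{S'}$ already breaks $(p_0,q_0)$ set $h_S := h_{S'}$, otherwise condition~(1) of $h_{S'}$ forces $h_{S'}(a_{p_0},b_{p_0}) < h_{S'}(a_{q_0},b_{q_0})$, so the hypothesis applied to the image tuple $\bigl(h_{S'}(a_i,b_i)\bigr)_i$ together with $\bar b$ at $(p_0,q_0)$ supplies $g$, and $h_S(x,y) := g(h_{S'}(x,y), y)$ works: previously $\bot$-broken pairs remain so by condition~(2) of $g$, and conditions~(1) and~(2) for $h_S$ follow by combining those of $g$ and $h_{S'}$.

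Next, I would fix $\bar a, \bar b \in P^k$ containing many witnesses of every 2-type of $(P;\leq,\prec)^2$ and apply the iterated hypothesis with $S$ equal to the set of all bad pairs, yielding $h \in \Pol(\Gamma)$. A direct case check shows that $h$ sends every $(p_{\neq},p_{\neq})$-pair in the tuple except $(p_<,p_<)$ and $(p_>,p_>)$ to $\bot$: pairs with $a$-type $p_<$ and $b$-type in $\{p_>, p_{\bot,\prec}, p_{\bot,\succ}\}$ by construction; $(p_\bot, \cdot)$ and $(\cdot, p_\bot)$ off-diagonal pairs by condition~(2); the flipped $p_>$ cases by Lemma~\ref{lemma:behaviour1}(2); and $(p_<,p_<)$ and $(p_\bot,p_\bot)$ go to $p_<$ and $p_\bot$ respectively, by the pp-definability of $<$ and $\bot$ in $\Gamma$. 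Theorem~\ref{theorem:ramsey2} together with a pointwise limit along an exhausting chain of finite subsets of $P$ then produces a canonical $f \in \Pol(\Gamma)$ from $(P;\leq,\prec)^2$ to $(P;\leq)$ whose canonical behaviour on $(p_{\neq},p_{\neq})$ agrees with that of $e_<$ (equivalently $e_\leq$), i.e.\ $f$ is $\bot$-falling.

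Finally, $f$ is $\bot$-falling, so Lemma~\ref{lemma:botfall} already gives at least one of $e_<, e_\leq$ in $\Pol(\Gamma)$, the choice depending on the canonical values $f(p_=,p_<)$ and $f(p_<,p_=)$. To conclude that \emph{both} embeddings are polymorphisms, I would re-run the construction with two differently chosen super-tuples so that, in one run, the canonical values come out as $f(p_=,p_<) = f(p_<,p_=) = p_<$ (which by the case analysis inside Lemma~\ref{lemma:botfall} yields $e_\leq$) and, in the other, at least one of these values equals $p_\bot$ (yielding $e_<$). The main obstacle is precisely this last engineering step: the hypothesis imposes no constraint on pairs with equal first coordinate, so forcing a prescribed canonical value on the diagonal types $(p_=,p_<)$ and $(p_<,p_=)$ requires careful selection of the super-tuple together with exploitation of the freedom in the colour selected by the Ramsey step.
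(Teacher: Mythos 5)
Your overall route is the same as the paper's: iterate the hypothesis to break all bad pairs of a given tuple (your inductive composition $h_S(x,y)=g(h_{S'}(x,y),y)$ is exactly the paper's recursion $f^{(k+1)}(x,y)=g^{(k)}(f^{(k)}(x,y),y)$), then canonize via the Ramsey machinery and finish with Lemma~\ref{lemma:botfall}. The genuine gap is in your canonization step. After the iteration you only know that $h$ is $\bot$-falling on the one fixed tuple $(\bar a,\bar b)$; but Theorem~\ref{theorem:ramsey2} makes $h\circ(\alpha_1,\alpha_2)$ canonical on a finite set $A$ by precomposing with automorphisms over which you have no control, and the canonical behaviour of $h\circ(\alpha_1,\alpha_2)$ on $A$ is the behaviour of $h$ on $\alpha_1(A)\times\alpha_2(A)$, which may be disjoint from your witnesses. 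So nothing forces the pointwise limit to be $\bot$-falling, and packing ``many witnesses of every 2-type'' into $(\bar a,\bar b)$ does not help. The paper closes exactly this hole with an intermediate compactness step: since for every finite $X\times Y\subseteq P^2$ there is a polymorphism that is $\bot$-falling there, local closure of $\Pol(\Gamma)$ yields a single $h$ that is $\bot$-falling on \emph{all} of $P^2$; only then is Theorem~\ref{theorem:ramsey2} applied (with a thinning-out of behaviours and a gluing of the sequence $\beta_n\circ h\circ(\alpha_1^{(n)},\alpha_2^{(n)})$ along the exhaustion $F_n$), and because $h$ is $\bot$-falling everywhere, every canonical restriction automatically inherits the $\bot$-falling behaviour. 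You need this step (or some equivalent control of where the canonizing automorphisms send $A$) before taking the limit.

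Concerning your final obstacle about obtaining both $e_<$ and $e_\leq$: the paper's own proof does not do this either; it concludes via Lemma~\ref{lemma:botfall} only that $e_<$ \emph{or} $e_\leq$ lies in $\Pol(\Gamma)$, and this disjunction is all that is used later (in Theorem~\ref{theorem:binarycanonical} and Lemmas~\ref{lem:impos1}, \ref{lem:impos2}), so the ``and'' in the statement should be read as the disjunction. Your proposed re-run with specially chosen super-tuples is therefore unnecessary, and in any case the Ramsey step gives you no freedom to prescribe the diagonal values $f(p_=,p_<)$ and $f(p_<,p_=)$, since the hypothesis says nothing about pairs with equal first or second coordinates.
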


\begin{proof}
First we are going to show that for all $\bar a, \bar b \in P^k$ there is a binary function $f \in \Pol(\Gamma)$ that has \botfall{} on $(\bar a,\bar b)$. To be more precise we want to construct an $f \in \Pol(\Gamma)$ such that:
\begin{itemize}
\item $f(a_i,b_i) < f(a_j,b_j)$ if $a_i < a_j$ and $b_i < b_j$,
\item $f(a_i,b_i) \bot f(a_j,b_j)$ if $a_i < a_j$ and $\neg(b_p \leq b_q)$.
\item $f(a_i,b_i) \bot f(a_j,b_j)$ if $a_i \bot a_j$ and $b_i \neq b_j$
\end{itemize}

We are going to construct $f$ by a recursive argument.\\

Let $f^{(0)}(x,y)  = g^{(0)}(x, y) = x$ and $\bar a^{(0)} = f^{(0)}(\bar a, \bar b)$. If already $f^{(0)}$ has the desired properties we set $f(x,y) = f^{(0)}(x,y)$ and are done.

Otherwise, in the $(k+1)$-th recursion step, we are given a function $f^{(k)}(x,y)$ and a tuple $\bar a^{(k)} = f^{(k)} (\bar a, \bar b)$. Let us assume that there are indices $p,q$ with $a_p < a_q$, $\neg(b_p \leq b_q)$ and $a_p^{(k)} < a_q^{(k)}$. Then by our assumption there is a function $g^{(k+1)}(x,y) \in \Pol(\Gamma)$ such that $g^{(k+1)}(a_p^{(k)},b_p) \bot g^{(k)}(a_p^{(k)},b_p)$. We set $f^{(k+1)}(x,y) = g^{(k)}(f^{(k)}(x,y),y)$ and $\bar a^{(k)} = f^{(k)} (\bar a, \bar b)$.

Note that by the properties (1) and (2) of the function $g^{k}$ the only possible cases for $f^{k}$ being not \botfall{} is the case above. It is clear that the recursion ends after finitely many steps.

So on every finite subset $X \times Y$ of $P^2$ we find a \botfall{} function. By a compactness argument there exists a $h \in \Pol(\Gamma)$ that is \botfall{} on $P^2$. It remains to show that there is also a canonical \botfall{} function in $\Pol(\Gamma)$.

By Theorem \ref{theorem:ramsey2} we have that $h$ is canonical on arbitrarily large substructures of $P^2$. Let $(F_n)_{n \in \omega}$ be an increasing sequence of finite substructures such that its union is equal to $P$. Then for every $n \in \omega$ there are $\alpha_{1}^{(n)},\alpha_{2}^{(n)} \in \Aut(\Gamma)$ such that $f \circ (\alpha_{1}^{(n)},\alpha_{2}^{(n)})$ is canonical on $F_n$. By thinning out the sequence we can assume that $f \circ (\alpha_{1}^{(n)},\alpha_{2}^{(n)})$ has the same behaviour for every $n \in \omega$.

Since the behaviour $f \circ (\alpha_{1}^{(n)},\alpha_{2}^{(n)})$ on all $F_n$ is the same, we can inductively pick automorphisms $\beta_n \in \Aut(\Po)$ such that $\beta_n \circ f \circ (\alpha_{1}^{(n)},\alpha_{2}^{(n)})$ agrees with $\beta_{n+1} \circ f \circ (\alpha_{1}^{(n+1)},\alpha_{2}^{(n+1)})$ on $F_n$. The limit of this sequence is a canonical function in $\Pol(\Gamma)$ with \botfall{} behaviour.

By Lemma \ref{lemma:botfall} we have that $e_<$ or $e_\leq$ is an element of $\Pol(\Gamma)$. This concludes the proof.
\end{proof}

\begin{proof}[Proof of Theorem \ref{theorem:binarycanonical}]
Let $f: (P;\leq,\prec)^2 \to (P;\leq)$ be canonical and $f \in \Pol(\Gamma)$. Let us assume that $f$ is not dominated. By Lemma \ref{lemma:notdominated} we know $f(p_<,p_>) = f(p_<,p_{\bot, \succ}) = f(p_{\bot, \prec},p_>) = p_\bot$. 

By Lemma \ref{lemma:behaviour1} (3) and (4) we have to look at the following cases:
\begin{enumerate}
\item $f(p_<,p_{\bot, \prec}) = f(p_{\bot, \prec},p_<) = p_\bot$. 
\item $f(p_<,p_{\bot, \prec}) = p_<$ and $f(p_{\bot, \prec},p_<) = p_\bot$.
\item $f(p_<,p_{\bot, \prec}) = p_\bot$ and $f(p_{\bot, \prec},p_<) = p_<$.
\end{enumerate}

In the first case $f$ has \botfall{} behaviour therefore we are done by Lemma \ref{lemma:botfall}.

For the remaining cases we can restrict ourselves to (2), otherwise we take $(x,y) \to f(y,x)$. From Lemma \ref{lemma:behaviour1} (7) follows that $f(p_\bot, p_=) = p_\bot$. Thus $f(p_\bot, q) = p_\bot$ holds for every 2-type $q$. 

We are going to show that then the conditions in Lemma \ref{lemma:condition_e_1} are satisfied.  Let $\bar a, \bar b \in P^k$ be two tuples of arbitrary length $k$ and let $p,q \in [k]$ such that $a_p < a_q$, $b_p \prec b_q$ and $b_p \bot b_q$ hold. Then let $\alpha \in \Aut(\Po)$ with $\alpha(b_p) \succ \alpha(b_q)$. Such an automorphism exists by the homogeneity of $\Po$. Then we set $g(x,y) = f(x,\alpha(y))$. 

Clearly $g(a_p,a_q) \bot g(b_p,b_q)$, since $\alpha(b_p) \succ \alpha(b_q)$. Also the other conditions in Lemma \ref{lemma:condition_e_1} are satisfied, by the properties of $f$. Therefore $\Pol(\Gamma)$ contains $e_<$ or $e_\leq$.
\end{proof}

\section{The NP-hardness of $\Low$}

Let $\Low$ be the 3-ary relation defined by
\[ \Low(x,y,z) := (x < y \land z \bot xy) \lor (x < z \land y \bot xz). \]

Clearly $\bot$ and $<$ are pp-definable in $\Low$. Note that $\Low$ is not preserved by $e_<$ or $e_\leq$, so $\Csp(P;\Low)$ is not covered by the tractability result in Theorem \ref{theorem:horn2}. In this section we prove the NP-hardness of $\csp(P;\Low)$.

\begin{lemma} \label{lemma:abv}
Let us define the relations
\begin{align*}\Abv(x,y,z):=&(y<x\wedge xy\bot z)\vee(z<x\wedge xz\bot y)\\
U(x,y,z):=& (y<x \vee z < x)\wedge(y \bot z)
\end{align*}
Then $\Abv$ and $U$ are pp-definable in $\Low$.
\end{lemma}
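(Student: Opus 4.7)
The plan is to give explicit primitive positive formulas in the language $\{\Low\}$ (using $<$ and $\bot$, which the section has already observed to be pp-definable in $\Low$) defining $\Abv$ and $U$. The starting observation is that, since the last two arguments of any $\Low$-atom must be incomparable, $\Low(y,x,z)$ forces $x\bot z$ and reduces to $(y<x \land z\bot y) \lor (y<z \land x \bot y)$; conjoining with $y \bot z$ kills the second disjunct, so $\Low(y,x,z) \land y\bot z$ pp-defines exactly the orbit $P_2 := \{(x,y,z) : y<x,\; z \bot x,\; y \bot z\}$. Symmetrically $P_3 := \{(x,y,z) : z<x,\; y \bot x,\; y \bot z\}$ is pp-defined by $\Low(z,x,y) \land y \bot z$, and $P_1 := \{(x,y,z) : y<x,\; z<x,\; y \bot z\}$ is directly pp. Thus $\Abv = P_2 \cup P_3$ and $U = P_1 \cup P_2 \cup P_3$ are each unions of individually pp-definable $\Aut(\Po)$-orbits.

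The central difficulty is expressing these unions without a disjunction. The first natural candidate is
\[ \phi(x,y,z) \;:=\; \exists v\colon \Low(y,x,v) \land \Low(z,x,v) \land y \bot z. \]
A case analysis through the two disjuncts of each $\Low$-atom shows that $\phi$ captures all three of $P_1, P_2, P_3$ (using $v$ as, respectively, a common $\bot$-point, a point above $z$ with $v \bot x,y$, or a point above $y$ with $v \bot x,z$). However, by the homogeneity of $\Po$ the formula is also satisfied on the antichain orbit $Q := \{(x,y,z) : x\bot y,\; x\bot z,\; y\bot z\}$, by taking $v$ above $y,z$ with $v \bot x$. So $\phi$ pp-defines $U \cup Q$, and a further pp-constraint is needed to exclude $Q$ without discarding $P_1, P_2, P_3$.

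To rule out $Q$ while preserving $U$, the plan is to augment $\phi$ by additional existential witnesses whose $\Low$-atoms, through the incomparability conditions built into $\Low$, force $x$ to be strictly above one of $y,z$; a transitivity argument in the antichain case $Q$ will then yield the required contradiction. For $\Abv$, one further excludes $P_1$ by using an $\Low$-atom in which $x$ occupies the middle (chain) position, so that on $\Abv$-tuples exactly one of $y,z$ is pushed into the $\bot$-orbit relative to $x$ while the case $y,z<x$ is cut off. The hardest step will be the corresponding case analysis: verifying that every tuple in $U$ (respectively $\Abv$) admits the required witness structure by the homogeneity of $\Po$, while every tuple outside these sets yields a contradiction via transitivity of $<$ and the built-in $\bot$-relations of $\Low$.
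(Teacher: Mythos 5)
Your reduction of the lemma to merging the pp-definable orbits $P_1,P_2,P_3$ and your analysis of the candidate $\exists v\,(\Low(y,x,v)\land\Low(z,x,v)\land y\bot z)$ --- in particular that it also admits the antichain orbit $Q$ --- are correct, and this candidate is essentially the core of the formula the paper uses. But there is a genuine gap: the construction that actually excludes $Q$, which is the whole content of the lemma, is only announced, never exhibited or verified. The paper's device is to keep $v$ \emph{free} and add one more witness, $\phi(x,y,z,v):=\exists u\,(u\bot v\land\Low(u,y,z)\land\Low(y,x,v)\land\Low(z,x,v))$: the point $u$ lies below exactly one of $y,z$ and is incomparable to $v$, so by transitivity that element cannot lie below $v$, and $\Low(\cdot,x,v)$ then forces it below $x$. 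Hence $\phi$ expresses ``$x\bot v$, $y\bot z$, at least one of $y,z$ below $x$, at most one below $v$'', and $\exists v\,\phi(x,y,z,v)$ defines $U$. You describe exactly this mechanism (``additional witnesses \ldots transitivity argument in the antichain case'') but never write down the witness atoms or carry out the two-directional check, which you yourself defer as ``the hardest step''; since the lemma asserts precisely the existence of such pp-definitions, this is a missing step, not a stylistic omission.

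For $\Abv$ the gap is more serious: your plan --- adjoin an $\Low$-atom with $x$ in the middle position to cut off $P_1$ --- is not obviously realizable, because any such atom ($\Low(y,x,\cdot)$ versus $\Low(z,x,\cdot)$) distinguishes $y$ from $z$ and therefore threatens to define only one of the two orbits $P_2,P_3$ rather than their symmetric union (indeed $\Low(y,x,z)\land y\bot z$ is exactly $P_2$); you give no device for restoring the symmetry. The paper instead reuses the same auxiliary formula with the roles of $x$ and the free parameter exchanged: $\Abv(x,y,z)\leftrightarrow\exists v_1,v_2\,(\phi(x,y,z,v_1)\land\phi(v_2,y,z,x))$, where the second conjunct forces that at most one of $y,z$ lies below $x$ and, via the built-in incomparabilities of $\Low(y,v_2,x)$ and $\Low(z,v_2,x)$, that the other is incomparable to $x$. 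Note that this trick requires the four-variable version of $\phi$ with $v$ free, which your three-variable candidate (with $v$ existentially quantified from the start) does not support. Until formulas of this kind are supplied and both inclusions are checked, the proof is incomplete.
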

\begin{proof}
Note that the formula
\[ \phi(x,y,z,v) := \exists u~u \bot v \land \Low(u,y,z) \land \Low(y,x,v) \land \Low(z,x,v) \]
is equivalent to the statement that $v\bot x$ and $y \bot z$ and at least one element of $\{y,z\}$ is smaller that $x$ and at most one element of $\{y,z\}$ is smaller than $v$\\
With that in mind one can see that
\[\exists v_1,v_2~\phi(x,y,z,v_1) \land \phi(v_2,y,z,x) \]
is equivalent to $\Abv(x,y,z)$ and 
\[ \exists v~\phi(x,y,z,v) \]
is equivalent to $U(x,y,z)$.
\end{proof}


\begin{theorem} \label{theorem:lowhard}
Let $a,b \in P$ with $a \bot b$. There is a pp-interpretation of $(\{0,1\};\rm 1IN3)$ in $(P;\Low,a,b)$. Thus $\csp(P;\Low)$ is NP-hard.
\end{theorem}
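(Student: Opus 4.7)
The plan is to construct an explicit primitive positive interpretation of $(\{0,1\};\mathrm{1IN3})$ in $(P;\Low,a,b)$. Once this is in place, NP-hardness of $\csp(P;\Low)$ follows at once: positive-$1$-in-$3$-SAT is NP-hard; $(P;\Low)$ is a model-complete core by Proposition~\ref{theorem:monoids} (since we are in the case where $<$ and $\bot$ are pp-definable); and Theorem~\ref{theorem:coreconst}, applied twice, shows that adjoining the two constants $a,b$ does not change the complexity of the CSP. I would use a one-dimensional interpretation with domain $D:=\{x\in P:\Abv(x,a,b)\}$, which is pp-definable by Lemma~\ref{lemma:abv}. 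Since $a\bot b$, this $D$ splits into the two disjoint pp-definable pieces $R_0:=\{x:a<x\wedge x\bot b\}$ and $R_1:=\{x:b<x\wedge x\bot a\}$; set $I(x)=0$ for $x\in R_0$ and $I(x)=1$ for $x\in R_1$.

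For the equality preimage I would take $\phi_{=}(x,y):=\exists w\;\Abv(w,a,b)\wedge w<x\wedge w<y$. The point is that, restricted to $D\times D$, this formula holds exactly when $x,y$ lie in the same class: if both are in $R_0$ (resp.\ $R_1$) the homogeneity of $\Po$ produces a common lower bound inside the same class by amalgamation; and a cross-class pair admits no witness, because $w\in R_0$ with $w<x$ forces $a<w<x$ and hence $a<x$, which is incompatible with $x\in R_1$ (symmetrically for $R_1$).

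The main technical step is the construction of a pp-formula $\phi_{\mathrm{1IN3}}(x_1,x_2,x_3)$ whose restriction to $D^3$ is the set of triples with a unique entry in $R_1$. My approach exploits the ternary nature of $\Low$ combined with the constants. A useful building block is the observation that, for $x_i,x_j\in D$, $\Low(a,x_i,x_j)$ holds precisely when $x_i$ and $x_j$ lie in different classes (and are incomparable), since $a<x$ on $D$ is equivalent to $x\in R_0$. By introducing existential witnesses of prescribed class, coupled to the $x_i$ through $\Low$- and $\Abv$-constraints involving $a$ and $b$, and using the binary relation $\phi_{=}$ already constructed, one forces exactly one $x_i$ to share a common lower bound (in $D$) with an $R_1$-witness while the other two share common lower bounds with $R_0$-witnesses; the amalgamation property of $\Po$ guarantees that the pp-formula is satisfiable precisely on the desired union of orbits.

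The hard part will be to pin down $\phi_{\mathrm{1IN3}}$ explicitly and to verify that the resulting formula captures the three valid patterns $(1,0,0),(0,1,0),(0,0,1)$ symmetrically, while excluding the forbidden patterns (all zero, all one, or two entries in $R_1$); in particular the counting flavour of ``exactly one of three'' must be produced purely by conjunctions and existential witnesses. Once $\phi_{=}$ and $\phi_{\mathrm{1IN3}}$ are established, together with $\delta(x):=\Abv(x,a,b)$ they yield a pp-interpretation of $(\{0,1\};\mathrm{1IN3})$ in $(P;\Low,a,b)$, and the reductions outlined above give the claimed NP-hardness of $\csp(P;\Low)$.
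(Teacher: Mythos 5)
Your setup (the pp-definable domain split into two classes above exactly one of $a,b$, the kernel formula via a common lower bound in the domain, and the reduction of NP-hardness to the interpretation via the model-complete core and Theorem~\ref{theorem:coreconst}) is fine, but the proof has a genuine gap exactly at its crux: the pp-formula $\phi_{\mathrm{1IN3}}$ is never constructed, and the paragraph sketching it is a restatement of the goal rather than an argument. Worse, the mechanism you describe cannot suffice as stated. The building blocks you actually exhibit -- ``same class'' (your $\phi_=$), ``different class'' (e.g.\ $\Low(a,x_i,x_j)$ on the domain), and the class predicates $R_0,R_1$ -- correspond under the encoding to the Boolean relations $=$, $\neq$, $\{0\}$, $\{1\}$, and every relation pp-definable from these is preserved by the Boolean minority operation $x\oplus y\oplus z$, which does \emph{not} preserve $\rm 1IN3$ (apply it to $(1,0,0),(0,1,0),(0,0,1)$). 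So no amount of existential witnesses ``of prescribed class'' coupled only through equality/disequality-of-class constraints can yield the $\rm 1IN3$-preimage; you need a genuinely different relation coupling three variables, and producing it is the whole content of the theorem. The ``counting'' part (``at most one entry in $R_1$'') is a negative condition, and primitive positive formulas cannot express it directly.

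The paper avoids this difficulty by not interpreting $\rm 1IN3$ head-on: it pp-interprets $(\{0,1\};{\rm NAE},0,1)$ instead, whose polymorphism clone is the projection clone, and then invokes Theorem~\ref{theorem:doubleshrink}/Theorem~\ref{theorem:topbirk} to conclude. The key point is that NAE only needs the \emph{positive} conditions ``at least one $c_i$ below $a$'' and ``at least one $c_i$ below $b$'', each captured by one pp-formula: the relation $R = \{(x,y,z,t): (x>y\vee x>z\vee x>t)\wedge\neg(x\leq yzt)\}$, shown pp-definable from $\Abv$ and $U$ in Lemma~\ref{lemma:abv} via two existential witnesses. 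If you insist on exhibiting a $\rm 1IN3$-preimage for your one-dimensional map $I$, the honest route is to first build this NAE-preimage (or an analogue on your $\Abv$-side domain) together with the class predicates, and then pull back a Boolean pp-definition of $\rm 1IN3$ from $\{{\rm NAE},0,1\}$ through the interpretation; either way the $R$-type construction is the missing heart of the proof, and your proposal explicitly defers it.
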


\begin{proof}
Let ${\rm NAE}$ be the Boolean relation $\{0,1\}^3\backslash \{(0,0,0),(1,1,1)\}$. It is easy to see that $\Pol(\{0,1\}, \rm NAE, 0, 1)$ is the projection clone $\prcl$. So by Theorem \ref{theorem:doubleshrink} it suffices to show that $(\{0,1\};{\rm NAE,0,1})$ has a pp-interpretation in $(P;\Low,a,b)$ to prove the Lemma.

Let $D:=\{x\in P:\Low(x,a,b)\},D_0:=\{x\in D:x<a\},D_1:=\{x\in D:x<b\}$. Note that $D_0\bot D_1$. Let $I:D\to \{0,1\}$ be given by:
\[
I(x):=
\begin{cases}
0&\text{if}~x\in D_0\\
1&\text{if}~x\in D_1
\end{cases}.
\]

Clearly the domain $D$ of $I$ is pp-definable in $(P;\Low,a,b)$. Since the order relation $<$ is pp-definable in $\Low$ also the sets $D_0$ and $D_1$ are pp-definable. Let $R = \{(x,y,z,t)\in P^4: (x>y\vee x>z\vee x>t) \land \neg(x \leq yzt) \}$. We claim that the relation $R$ is pp-definable in $\Low$. Observe that $(x,y,z,t) \in R$ is equivalent to
\[\exists u,v~(\Abv(x,u,v)\land U(x,y,u) \land U(x,z,u) \land U(x,t,v)) \]
and therefore pp-definable in $\Low$ by Lemma \ref{lemma:abv}. By the definition of $R$ we have that $I(c_1,c_2,c_3) \in \rm NAE$ if and only if $(a,c_1,c_2,c_3)\in R$ and $(b,c_1,c_2,c_3)\in R$. Thus the preimage of $\rm NAE$ is pp-definable in $(P;\Low,a,b)$.
\end{proof}

The following lemma gives us an additional characterization of reducts, in which $\Low$ is pp-definable.

\begin{lemma} \label{lemma:lowdominated}
The relation $\Low$ is pp-definable in $\Gamma$ if and only if every binary polymorphism of $\Gamma$ is dominated.
\end{lemma}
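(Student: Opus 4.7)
The backward direction ($\Leftarrow$) is essentially a direct calculation. By the Bodirsky--Ne\v{s}et\v{r}il characterisation and $\omega$-categoricity, $\Low$ is pp-definable in $\Gamma$ iff $\Pol(\Gamma)$ preserves $\Low$; since $\Low$ is a union of only two orbits of $\Aut(\Po)$, Theorem~\ref{theorem:violate} further reduces this to polymorphisms of arity at most $2$. Every unary polymorphism lies in $\End(\Gamma) = \overline{\Aut(\Po)}$ (the standing assumption of this section), hence is a self-embedding of $(P;\leq)$ and so preserves $\Low$. If $f$ is binary and dominated on, say, its first argument, then for any $(a_1,b_1,c_1),(a_2,b_2,c_2)\in\Low$ the image $(f(a_1,a_2),f(b_1,b_2),f(c_1,c_2))$ has exactly the same $\{<,\bot\}$-type as $(a_1,b_1,c_1)$, and so lies in $\Low$; the case of dominance on the second argument is symmetric.

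For the forward direction ($\Rightarrow$), assume $\Low$ is pp-definable in $\Gamma$ and, for contradiction, let $f \in \Pol(\Gamma)$ be a binary polymorphism that is not dominated. The plan is to produce $e_<$ or $e_\leq$ in $\Pol(\Gamma)$ from $f$, which gives the desired contradiction because neither of these preserves $\Low$: picking $(a,b,c)\in\Low$ with $a<b,\,c\bot ab$ and $(d,e,g)\in\Low$ with $d<g,\,e\bot dg$, one reads off from the behaviour tables of $e_<$ and $e_\leq$ that each of the three resulting image pairs is $\bot$-related, so the coordinatewise image is an antichain and not in $\Low$.

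The passage from $f$ to $e_<$ or $e_\leq$ goes via a canonical function. First I choose finitely many constants $\bar c$ from $P$ which witness $f$'s failure to be dominated on both arguments, and apply Theorem~\ref{theorem:ramsey} to obtain a binary polymorphism $f''\in\overline{\{f\}\cup\Aut(\Po,\bar c)}\subseteq\Pol(\Gamma)$ that is canonical as a map $(\Po,\bar c)\to\Po$ and agrees with $f$ on $\bar c$; in particular $f''$ still fails the dominance identities on $\bar c$. A second Ramsey-plus-compactness step, patterned after the proof of Lemma~\ref{lemma:condition_e_1} (exhausting chain $A_n\uparrow P$, canonicity on each $A_n^2$ via Theorem~\ref{theorem:ramsey2}, pigeonhole on the finitely many 2-type behaviours, post-composition with suitable automorphisms for gluing, and a pointwise diagonal limit), then yields a fully canonical binary polymorphism $f'\in\Pol(\Gamma)$ that is still not dominated. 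By Theorem~\ref{theorem:binarycanonical}, $f'$ forces $e_<\in\Pol(\Gamma)$ or $e_\leq\in\Pol(\Gamma)$, which gives the contradiction.

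The main technical hurdle is the last step: arguing that the canonical $f'$ produced by the diagonal limit actually inherits the non-dominance of $f''$. My plan is to arrange, when choosing the $A_n$, that a finite substructure realising the 2-types on which $f''$ fails the dominance identities is contained in every $A_n$; the common canonical behaviour extracted by pigeonholing then coincides with that of $f''$ on the relevant 2-types, so $f'$ automatically inherits the violation of the dominance identities.
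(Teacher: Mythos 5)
Your backward direction is fine and matches the paper's (dominated binary functions preserve $\Low$, unary polymorphisms are self-embeddings, and Theorem~\ref{theorem:violate} reduces to arity $2$). The problem is the forward direction, specifically the step that produces a \emph{fully} canonical non-dominated binary polymorphism $f'$ from the non-dominated $f''$ that is canonical only as a map from $(P;\leq,\prec,\bar c)^2$. The diagonal-limit construction of Lemma~\ref{lemma:condition_e_1} / Theorem~\ref{theorem:ramsey2} precomposes with automorphisms $\alpha_1^{(n)},\alpha_2^{(n)}$ of $(P;\leq,\prec)$ -- not of $(P;\leq,\prec,\bar c)$ -- so the canonical behaviour you extract is the behaviour of $f''$ on the relocated copies $\alpha_1^{(n)}(A_n)\times\alpha_2^{(n)}(A_n)$, which sit in unknown position relative to $\bar c$. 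Arranging that the witnesses of non-domination lie in every $A_n$ therefore buys you nothing: the values of $f''\circ(\alpha_1^{(n)},\alpha_2^{(n)})$ at those witnesses are values of $f''$ at completely different points, and the pigeonholed limit $f'$ need not agree with $f''$ on any of the orbits where domination fails. Indeed $f''$ could behave in a dominated way on every product of infinite orbits and violate domination only across orbits or at tuples involving the constants, in which case your $f'$ may simply be dominated and Theorem~\ref{theorem:binarycanonical} yields nothing. Closing this gap is essentially the content of Theorem~\ref{theorem:symcanexistence} (Sections~\ref{sec:firstcase} and~\ref{sec:secondcase}): there one keeps the constants, proves coherence of the domination across the orbits $B_1,B_2$, and only then feeds the resulting partial behaviour into Lemma~\ref{lemma:condition_e_1}; it is not a routine compactness argument.

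Note also that the detour through $e_<$/$e_\leq$ is unnecessary here. The paper's forward direction is a short direct argument: since $f$ preserves $\bot$ and $\lowbot$ (both pp-definable once $\Low$ is), non-domination in both arguments yields $a_1<a_2$, $b_1\bot b_2$ with $f(a_1,b_1)\bot f(a_2,b_2)$ and $a_1'\bot a_2'$, $b_1'<b_2'$ with $f(a_1',b_1')\bot f(a_2',b_2')$, so $f$ violates $S(x_1,x_2,y_1,y_2):=(x_1<x_2\land y_1\bot y_2)\lor(x_1\bot x_2\land y_1<y_2)$; since $S$ is pp-interdefinable with $\Low$ (using $\Abv$ from Lemma~\ref{lemma:abv}), $f$ violates $\Low$, contradicting pp-definability. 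No Ramsey-theoretic canonicalisation is needed for this lemma, and replacing that two-line argument by your canonicalisation route would require reproving the hard part of Theorem~\ref{theorem:symcanexistence}.
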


\begin{proof}
Every dominated function $f:P^2 \to P$ preserves $\Low$. 
For the other direction observe that by Lemma \ref{lemma:dominated} we have that $f$ is dominated in the first argument if and only if $f(a_1,b_1) < f(a_2,b_2)$ for all $a_1 < a_2$ and $b_1 \bot b_2$. Note that Lemma \ref{lemma:dominated} also works for non-canonical functions.\\
So if $f \in \Pol(\Gamma)$ is a binary, not dominated function, there are $a_1 < a_2, b_1 \bot b_2, a_1' \bot a_2' $ and $b_1' < b_2'$ such that $f(a_1,b_1) \bot f(a_2,b_2)$ and $f(a_1',b_1') \bot f(a_2',b_2')$.
Hence $f$ violates the relation 
\[S(x_1,x_2,y_1,y_2) := (x_1 < x_2 \land y_1 \bot y_2) \lor (x_1 \bot x_2 \land y_1 < y_2). \]
But the relation $S$ and $\Low$ are pp-interdefinable:
\begin{align*}
\Low(x,y,z) \leftrightarrow & S(x,y,x,z) \land y \bot z \\
S(x_1,x_2,y_1,y_2) \leftrightarrow & \exists u,v,w~(\Low(x_1,x_2,u) \land \Abv(u,x_1,v), \\
&\land \Low(u,v,w) \land \Abv(w,y_1,v) \land \Low(y_1,y_2,w)).
\end{align*}
We conclude that $f$ violates $\Low$.
\end{proof}

\section{Violating the $\Low$ relation}
We saw in Theorem \ref{theorem:horn2} that $\csp(\Gamma)$ is tractable if $e_<$ or $e_\leq$ are polymorphisms of $\Gamma$. By Theorem \ref{theorem:lowhard} we know that $\csp(\Gamma)$ is NP-complete if $\Low$ is pp-definable in $\Gamma$. In this section we are going to show that these results already cover all possible reducts where $<$ and $\bot$ are pp-definable.

\begin{theorem}\label{theorem:symcanexistence}
Let $\Gamma$ be a reduct of $\Po$ such that $\bot$ and $<$ are pp-definable in $\Gamma$.  Then $\Low$ is not pp-definable in $\Gamma$ if and only if $\pol(\Gamma)$ contains one of the functions $e_<$ or $e_{\leq}$.
\end{theorem}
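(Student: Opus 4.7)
The plan is to prove both directions, with the easy one handled first.

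For the direction $e_{<}$ or $e_{\leq} \in \Pol(\Gamma) \;\Rightarrow\; \Low$ is not pp-definable, I would show that $e_{<}$ and $e_{\leq}$ both violate $\Low$ and invoke the Bodirsky--Ne\v{s}et\v{r}il characterization of pp-definability by polymorphism preservation. Take two triples from different $\Aut(\Po)$-orbits of $\Low$: $(u,v,w)$ with $u<v$, $w\bot uv$ and $(u',v',w')$ with $u'<w'$, $v' \bot u'w'$. A direct inspection of the tables of $e_{<}$ and $e_{\leq}$ shows that each of the three pairs in the coordinatewise image --- one has coordinate types $(p_<,p_\bot)$, one has $(p_\bot,p_<)$, and one has $(p_\bot,p_\bot)$ --- is sent to $p_\bot$, so the image triple is an antichain, which fails $\Low$.

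For the direction $\Low$ not pp-definable $\;\Rightarrow\; e_{<}$ or $e_{\leq} \in \Pol(\Gamma)$, I would begin by extracting a binary non-dominated polymorphism. Since $\Low$ is a union of only two $\Aut(\Po)$-orbits on $P^3$, hence of at most two $\Aut(\Gamma)$-orbits, Theorem \ref{theorem:violate} delivers a polymorphism of arity at most $2$ violating $\Low$. Unary polymorphisms preserve the pp-definable relations $<$ and $\bot$, and hence $\Low$, so the violator is binary; call it $f$. Lemma \ref{lemma:lowdominated} then tells me $f$ is not dominated and furnishes the two witnesses $(a_1,a_2,b_1,b_2)$ with $a_1<a_2$, $b_1\bot b_2$, $f(a_1,b_1) \bot f(a_2,b_2)$, and $(a'_1,a'_2,b'_1,b'_2)$ with $a'_1\bot a'_2$, $b'_1<b'_2$, $f(a'_1,b'_1) \bot f(a'_2,b'_2)$.

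The strategy from here is to canonicalize $f$ and invoke Theorem \ref{theorem:binarycanonical}. Let $\bar c$ enumerate the eight witness points. The expansion $(P;\leq,\prec,\bar c)$ is still an ordered homogeneous Ramsey structure, and its automorphism group consists exactly of the elements of $\Aut(P;\leq,\prec)$ that fix $\bar c$ pointwise. I would run the Ramsey--compactness argument of the last paragraph of the proof of Lemma \ref{lemma:condition_e_1} relative to this enriched structure: for an exhaustion $\bar c \subseteq F_1 \subseteq F_2 \subseteq \cdots$ of $P$, Theorem \ref{theorem:ramsey2} applied inside $(P;\leq,\prec,\bar c)$ provides $\alpha_1^{(n)},\alpha_2^{(n)} \in \Aut(P;\leq,\prec,\bar c)$ such that $f \circ (\alpha_1^{(n)},\alpha_2^{(n)})$ is canonical on $F_n^2$ as a map $(P;\leq,\prec,\bar c)^2 \to (P;\leq)$. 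Because the $\alpha_i^{(n)}$ fix $\bar c$ pointwise, each such composition agrees with $f$ on $\bar c \times \bar c$, so both non-dominance witnesses survive. Passing to a subsequence with a common canonical behaviour and then to a convergent sub-subsequence yields $g \in \Pol(\Gamma)$ canonical as $(P;\leq,\prec,\bar c)^2 \to (P;\leq)$ with $g|_{\bar c}=f|_{\bar c}$, hence not dominated. By Observation \ref{observation:types}, an infinite $1$-type $X$ of $(P;\leq,\prec,\bar c)$ satisfies $(X;\leq,\prec) \cong (P;\leq,\prec)$, and homogeneity of $\Po$ supplies $\phi \in \End(\Gamma)$ mapping $\Po$ isomorphically onto $X$; then $g \circ (\phi,\phi) \in \Pol(\Gamma)$ is canonical as a map $(P;\leq,\prec)^2 \to (P;\leq)$, and Theorem \ref{theorem:binarycanonical} forces $\Pol(\Gamma)$ to contain $e_{<}$ or $e_{\leq}$.

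The subtle point, and the main obstacle I anticipate, is the descent in the last step: the non-dominance of $g$ is a priori only recorded at the constant pair in $\bar c^2$, whereas the image of $\phi$ lies in a generic orbit $X$ disjoint from $\bar c$. I expect to handle this by enlarging $\bar c$ up front with a few auxiliary points chosen so that the eventual $1$-type $X$ realises, inside $X \times X$, pairs whose $(P;\leq,\prec,\bar c)$-$2$-type is already forced by canonicity of $g$ to agree with the $2$-type of the witness pair; once this is arranged, canonicity transports the value $p_\bot$ from the witness to these generic pairs, so $g\circ(\phi,\phi)$ remains not dominated and Theorem \ref{theorem:binarycanonical} applies.
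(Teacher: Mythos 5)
Your easy direction is correct, and the opening of the hard direction (extract a binary polymorphism violating $\Low$ via Theorem~\ref{theorem:violate}, note by Lemma~\ref{lemma:lowdominated} that it is not dominated, and canonicalize over $(P;\leq,\prec,\bar c)$ while keeping the values at the witness constants) is essentially how the paper also begins, except that the paper first normalizes the violation to the form $f(a,a),f(b,c),f(c,b)$ with $a<b$, $ab\bot c$, which leaves exactly two cases. The gap is the final ``descent'' step, which you flag yourself. Canonicity of $g$ as a map from $(P;\leq,\prec,\bar c)^2$ to $(P;\leq)$ only identifies the images of argument pairs having the \emph{same} type over $(P;\leq,\prec,\bar c)$. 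Each entry of $\bar c$ is a singleton orbit, so a pair of points lying in an infinite $1$-type $X$ never has the same type as a pair of constants, no matter which auxiliary points you add to $\bar c$ beforehand. Hence canonicity cannot ``transport the value $p_\bot$'' from the witness pairs to generic pairs in $X\times X$; your proposed fix does not work as described.

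Moreover, the situation your descent would have to exclude is precisely the hard case of the paper's proof: at this stage it may be that every binary polymorphism of $\Gamma$ that is canonical over $(P;\leq,\prec)$ \emph{without} constants is dominated. In that case the restriction of $g$ to any product of infinite orbits is dominated, so $g\circ(\phi,\phi)$ is dominated for every choice of $\phi$ and of auxiliary constants, and Theorem~\ref{theorem:binarycanonical} yields nothing along your route. The paper handles this case with the material of Sections~\ref{sec:firstcase} and~\ref{sec:secondcase} (Lemmas~\ref{lem:impos1} and~\ref{lem:impos2} together with Lemmas~\ref{lem:firstdomination}--\ref{lem:thirdfiner} and their analogues): working with two infinite orbits $B_1,B_2$ of $(P;\leq,\prec,a,b,c)$, it shows that $f$ has the same domination on all products $B_i\times B_j$ and then combines this domination with the recorded violation at the constants to pin down the behaviour of $f$ \emph{across} $B_1$ and $B_2$ (e.g.\ $f(u,u')\bot f(v,v')$ whenever the second arguments come from different orbits), which verifies the hypotheses of Lemma~\ref{lemma:condition_e_1} and produces $e_<$ or $e_\leq$ directly, bypassing Theorem~\ref{theorem:binarycanonical}. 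Some argument of this two-orbit kind, exploiting the interaction between distinct orbits and the constants rather than a single generic copy of $\Po$, is exactly what is missing from your proposal.
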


\begin{proof}
Note that by Theorem \ref{theorem:violate} $\Low$ is not pp-definable in $\Gamma$ if and only if there is a binary $f \in \Pol(\Gamma)$ violating $\Low$. This means that there are $a,b,c\in P$ such that $a<b\wedge ab\bot c$ and $f(a,a)<f(b,c)\wedge f(a,a)<f(c,b)$, or $f(a,a)\bot f(b,c)$ and $f(a,a)\bot f(c,b)$. 

We have only these two cases since $f$ preserves $\lowbot$ and $\bot$. We can assume that $a\prec b\prec c$ since otherwise we can find an automorphism $\alpha\in \Aut(\mP)$ such that $\alpha(a)\prec \alpha(b)\prec \alpha(c)$. Then we consider the map $(x,y)\mapsto f(\alpha^{-1}(x),\alpha^{-1}(y))$ with three elements $\alpha(a)$, $\alpha(b)$ and $\alpha(c)$ instead. 

By Theorem \ref{theorem:ramsey} we can assume that $f$ is canonical as a function from $(P;<,\prec,a,b,c)^2$ to $(P;<)$. We deal with the two cases in Lemma \ref{lem:impos1} and Lemma \ref{lem:impos2} in the following subsections.
\end{proof}

\begin{notation}
For simplicities sake, a canonical binary function in this section means a function that is canonical as a function from $(P;\leq,\prec)^2 \to (P;<)$.\\
Let $f:P^2\to P$ be a function and $X,Y,X',Y'$ be subsets of $P$ such that $f$ is dominated on $X\times Y$ and $X'\times Y'$. We say that $f$ has the \textit{same domination} on $X\times Y$ and $X'\times Y'$ if $f$ is dominated by the first argument on both $X\times Y$ and $X'\times Y'$, or dominated by the second argument on both $X\times Y$ and $X'\times Y'$. Otherwise, we say that $f$ has the \textit{different domination} on $X\times Y$ and $X'\times Y'$.
\end{notation}

%

\subsection{$f(a,a)<f(b,c)\wedge f(a,a)<f(c,b)$}  \label{sec:firstcase} \
The aim of this subsection is to prove the following lemma.
\begin{lemma}\label{lem:impos1}
Let $f \in \Pol(\Gamma)$ be canonical as a function from $(P;<,\prec,a,b,c)^2$ to $(P;<)$. If $f(a,a)<f(b,c)\wedge f(a,a)<f(c,b)$ then $\pol(\Gamma)$ contains $e_<$ or $e_\leq$.\end{lemma}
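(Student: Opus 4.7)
The plan is to extract from $f$ enough structural information to verify the hypotheses of Lemma~\ref{lemma:condition_e_1}, which will then yield $e_<$ or $e_{\leq}$ in $\Pol(\Gamma)$.

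First, I would pin down the canonical behaviour of $f$ on pairs of 2-types of $(P;\leq,\prec,a,b,c)$. The hypothesis $f(a,a)<f(b,c)$, together with $a<b$, $a\bot c$, and symmetrically $f(a,a)<f(c,b)$, fixes $f$'s behaviour on the two singleton-orbit pairs $((a,a),(b,c))$ and $((a,a),(c,b))$. I would then propagate this to \emph{infinite} orbits of $\Aut(P;\leq,\prec,a,b,c)$ by canonicity: for example, by picking $x$ in the orbit $\{u : u>a,\ u\prec a,\ u\bot bc\}$ and $y,z$ in analogous orbits close to $b,c$, the orbit-pair behaviour of $f$ on $((a,a),(y,z))$ and on $((x,x),(b,c))$ is forced by canonicity combined with $f$ preserving $<$ and $\bot$. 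Arguments parallel to those in Lemma~\ref{lemma:behaviour1} and Lemma~\ref{lemma:notdominated} should then show that $f$ cannot be dominated in either argument on a sufficiently rich pair of infinite orbits.

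Next, I would verify the hypothesis of Lemma~\ref{lemma:condition_e_1}. Given arbitrary tuples $\bar{a},\bar{b}\in P^k$ and indices $p,q$ with $a_p<a_q$ and $\neg(b_p\leq b_q)$, I would use homogeneity of $\Po$ together with Theorem~\ref{theorem:ramsey} to pick automorphisms $\alpha,\beta\in\Aut(\Po)$ that map $(a_p,a_q)$ and $(b_p,b_q)$ into positions matching $(a,b)$ and either $(a,c)$ or $(c,a)$ respectively (depending on whether $b_p\bot b_q$ or $b_q<b_p$), and at the same time send the remaining coordinates of $\bar a,\bar b$ into generic orbits where $f$'s behaviour is already understood. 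The candidate $g(x,y)=f(\alpha(x),\beta(y))$ then satisfies $g(a_p,b_p)\bot g(a_q,b_q)$ by the propagated spreading, and conditions (1)--(2) of Lemma~\ref{lemma:condition_e_1} follow from $f$ preserving $<$ and $\bot$. By Lemma~\ref{lemma:condition_e_1} we conclude $e_< \in \Pol(\Gamma)$ and $e_{\leq}\in\Pol(\Gamma)$.

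The main obstacle will be the combinatorial bookkeeping to ensure that the spreading at the single triple $(a,b,c)$ genuinely transfers, via canonicity, to non-dominated behaviour on generic orbits, so that conditions (1) and (2) of Lemma~\ref{lemma:condition_e_1} can be met simultaneously for \emph{all} indices $i,j$. A recursive construction in the spirit of the proof of Lemma~\ref{lemma:condition_e_1}, combined with repeated re-canonicalisation via Theorem~\ref{theorem:ramsey} after composing with the automorphisms $\alpha,\beta$, should be sufficient; a careful case analysis on the position of $\bar b$-entries versus the image orbits will be the technical heart of the argument.
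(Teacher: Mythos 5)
Your proposal targets the right endpoint (verifying the hypothesis of Lemma~\ref{lemma:condition_e_1}), but the step that carries the real weight is missing. Conditions (1) and (2) of Lemma~\ref{lemma:condition_e_1} do \emph{not} ``follow from $f$ preserving $<$ and $\bot$'': preservation only constrains $f$ on pairs that are related coordinatewise, whereas the conditions must hold for mixed pairs, e.g.\ $a_i<a_j$ while $b_i\bot b_j$ or $b_j<b_i$, and $a_i\bot a_j$ while $b_i<b_j$. Controlling exactly these mixed types is the whole difficulty. Moreover, your route to them---arguing that the hypothesis $f(a,a)<f(b,c)\wedge f(a,a)<f(c,b)$ propagates by canonicity to non-dominated behaviour on some pair of infinite orbits---is unjustified: the violation of $\Low$ is witnessed only at the constants, and $f$ may well be dominated on every product of infinite orbits of $(P;\leq,\prec,a,b,c)$, so no contradiction with domination arises there.

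The paper instead makes the opposite case split. If some canonical binary polymorphism is not dominated, one is done at once by Theorem~\ref{theorem:binarycanonical}; otherwise every canonical binary polymorphism is dominated, and then, transporting along the isomorphisms of the orbits $B_1=\{x: x>c,\ x\bot a,\ x\bot b\}$ and $B_2=\{x: x>b,\ x>c\}$ with $(P;\leq,\prec)$, the function $f$ is dominated, with the same domination, on all products $B_i\times B_j$. The hypothesis then enters through the constants' relation to these orbits: for $u\in B_1$, $u'\in B_2$ one gets $f(c,b)<f(u,u')$ by $<$-preservation, and combining this with $f(a,a)<f(c,b)$ and $\bot$-preservation against $(a,a)$ (every element of $B_1$ is incomparable to $a$) forces $f(u,u')\bot f(v,v')$ for $u'\in B_2$, $v,v'\in B_1$ with $u<v$ or $u\bot v$; domination then yields the remaining mixed-type facts (Lemmas~\ref{lem:firstfiner}--\ref{lem:thirdfiner}). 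Finally, the witnessing function for Lemma~\ref{lemma:condition_e_1} is built globally: all first coordinates are mapped into $B_1$, and the second coordinates are split into the up-set of $b_p$ (sent into $B_2$) and its complement (sent into $B_1$); your idea of matching only $(a_p,a_q)$ and $(b_p,b_q)$ to the constants controls a single pair of indices, whereas the lemma requires control of all pairs $i,j$ simultaneously, which is exactly what the cross-orbit lemmas provide and what mere preservation cannot.
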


We define the following two sets: 
\begin{itemize}
\item $B_1:=\{x\in P:x>c\wedge x\bot a\wedge x\bot b\}$,
\item $B_2:=\{x\in P:x>b\wedge x>c\}$.
\end{itemize}
Let $x,y\in B_1\cup B_2$. We say that $x$ and $y$ \emph{are in the same orbit} if $x\in B_i$ and $y\in B_i$ for an $i\in [2]$. 
\begin{observation}\label{obs:firstobservation} \
$B_1$ and $B_2$ are orbits of $\Aut(P;<,\prec,a,b,c)$. By the homogeneity of $(P;\leq,\prec)$ we can show that $(B_1;\leq,\prec)$, $(B_2;\leq,\prec)$ are isomorphic to $(P;\leq,\prec)$. Further also the union of $B_1$ and $B_2$ is an isomorphic copy of $(P;\leq,\prec)$, in which $B_1$ forms a random filter.
\end{observation}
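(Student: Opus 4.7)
The proof is a \fraisse{}-theoretic exercise throughout, leveraging the homogeneity of the ordered random poset $(P;\leq,\prec)$ and the universality of its age (finite partial orders with a linear extension). First I would verify that $B_1$ and $B_2$ are orbits of $\Aut(P;\leq,\prec,a,b,c)$. Each $B_i$ is cut out by a quantifier-free $\{\leq,\prec\}$-formula over $\{a,b,c\}$: note that any $x\in B_i$ satisfies $x>c$, which forces $x\succ c\succ b\succ a$ since $\prec$ extends $<$ and $a\prec b\prec c$. Hence all elements of $B_i$ realise the same atomic type over $\{a,b,c\}$, and the homogeneity of $(P;\leq,\prec)$ yields the orbit claim. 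Non-emptiness of each orbit follows from universality: the one-point extension of $\{a,b,c\}$ by a new element satisfying the defining formula of $B_i$ is a valid finite member of the age.

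Next I would show $(B_i;\leq,\prec)\cong(P;\leq,\prec)$ by verifying the \fraisse{} conditions. Countability is immediate, and the age of $(B_i;\leq,\prec)$ is contained in the class of finite partial orders with a linear extension. Conversely, any such finite $(A;\leq,\prec)$ embeds into $B_i$ by placing each point of $A$ into $B_i$ via the defining formula of $B_i$, retaining the internal structure of $A$, and checking transitivity. For homogeneity of $B_i$, any partial isomorphism $\phi:F\to F'$ between finite subsets of $B_i$ extends to $F\cup\{a,b,c\}\to F'\cup\{a,b,c\}$ fixing $a,b,c$ pointwise — because all elements of $F\cup F'$ realise the same atomic type over $\{a,b,c\}$ — and the homogeneity of $(P;\leq,\prec)$ then extends this to an automorphism of $P$, which setwise preserves $B_i$ since it fixes $\{a,b,c\}$.

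Finally, the same argument gives $(B_1\cup B_2;\leq,\prec)\cong(P;\leq,\prec)$, and the main obstacle is the one-point extension property for the union. Given finite $F\subseteq B_1\cup B_2$ and a consistent $\{\leq,\prec\}$-type $p(y)$ over $F$, I must extend $p$ to a type over $F\cup\{a,b,c\}$ placing $y$ in $B_1\cup B_2$. A case analysis shows that $p$ forces $y\in B_2$ precisely when it requires $y>f$ for some $f\in F\cap B_2$ (then $y>f>a,b,c$), and forces $y\in B_1$ precisely when it requires $y<f$ for some $f\in F\cap B_1$ (else $y\in B_2$ would yield $b<y<f$, contradicting $f\bot b$). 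These two forcings are mutually exclusive, as $f_1<y<f_2$ with $f_1\in B_2,\, f_2\in B_1$ would force $b<f_1<f_2$, contradicting $f_2\bot b$. In any remaining situation, placing $y$ in $B_1$ is consistent, so a suitable extension always exists, and the standard back-and-forth then yields both the isomorphism and the setwise preservation of $B_1, B_2$ required for the random filter claim. For that last property one verifies directly that the upward-closed part of the union (namely $B_2$) is closed upward in $B_1\cup B_2$, and that every pair $x\in B_1,\, y\in B_2$ satisfies $x<y$ or $x\bot y$ (since $y\leq x$ would give $b<y\leq x$, contradicting $x\bot b$); this matches the amalgamation-class definition of a random filter from Section~\ref{sect:cores}.
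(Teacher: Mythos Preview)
Your argument is correct and follows precisely the route the paper itself indicates (``by the homogeneity of $(P;\leq,\prec)$''), namely a \fraisse{}/back-and-forth verification that the paper does not spell out any further. You are also right that it is $B_2$, not $B_1$, that is upward closed in $B_1\cup B_2$ and hence plays the role of the random filter; the paper's wording has this reversed.
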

If there is a canonical $g \in \Pol(\Gamma)$ that is not dominated, then Lemma \ref{theorem:binarycanonical} gives us that $e_<$ or $e_\leq$ is in $\Pol(\Gamma)$. So throughout the lemmata and corollaries below in this section, we assume that every binary canonical function in $\Pol(\Gamma)$ is dominated and $f(a,a)<f(b,c)\wedge f(a,a)<f(c,b)$.

\begin{lemma}\label{lem:firstdomination}
$f$ is dominated on $B_i\times B_j$ for every $i,j\in [2]$.
\end{lemma}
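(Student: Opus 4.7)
The strategy is to transfer the behaviour of $f$ on $B_i\times B_j$ to a genuine canonical binary polymorphism of $\Gamma$ on all of $P^2$, so that the standing assumption of this subsection applies. By Observation \ref{obs:firstobservation}, for each $k\in\{1,2\}$ we may fix an isomorphism $\iota_k:(P;\leq,\prec)\to(B_k;\leq,\prec)$. Regarded as a map $P\to P$, $\iota_k$ is a self-embedding of $\Po=(P;\leq)$; by the homogeneity and $\omega$-categoricity of $\Po$ it lies in $\overline{\Aut(\Po)}\subseteq\End(\Gamma)$. For fixed $i,j\in\{1,2\}$ define
\[ f_{ij}(x,y):=f(\iota_i(x),\iota_j(y)). \]
Then $f_{ij}\in\Pol(\Gamma)$, as it arises from $f\in\Pol(\Gamma)$ by pre-composing each coordinate with a unary endomorphism of $\Gamma$.

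The heart of the argument is the verification that $f_{ij}$ is canonical from $(P;\leq,\prec)^2$ to $(P;\leq)$. Given $\alpha_1,\alpha_2\in\Aut(P;\leq,\prec)$, the conjugate $\iota_k\circ\alpha_k\circ\iota_k^{-1}$ is an automorphism of the induced substructure $(B_k;\leq,\prec)$. Since $B_k$ is an orbit of the homogeneous structure $(P;\leq,\prec,a,b,c)$, a standard back-and-forth argument extends this automorphism to some $\tilde\alpha_k\in\Aut(P;\leq,\prec,a,b,c)$ satisfying $\tilde\alpha_k\circ\iota_k=\iota_k\circ\alpha_k$. Applying the canonicality of $f$ from $(P;\leq,\prec,a,b,c)^2$ to $(P;\leq)$ to the pair $(\tilde\alpha_1,\tilde\alpha_2)$ then yields a $\beta\in\Aut(P;\leq)$ with $f_{ij}(\alpha_1 d_1,\alpha_2 d_2)=\beta(f_{ij}(d_1,d_2))$ for all tuples $d_1,d_2$, which is exactly canonicality of $f_{ij}$.

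The standing hypothesis of this subsection, namely that every canonical binary polymorphism of $\Gamma$ is dominated, therefore forces $f_{ij}$ to be dominated on $P^2$. Since $\iota_i$ and $\iota_j$ are bijections onto $B_i$ and $B_j$ respectively, this translates at once into $f$ being dominated on $B_i\times B_j$, which is the conclusion of the lemma. The only non-routine step is the canonicality check in the second paragraph, which rests on extending automorphisms of the orbit $B_k$ inside the homogeneous structure $(P;\leq,\prec,a,b,c)$; once that is granted, the rest is a direct reduction.
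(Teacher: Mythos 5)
Your argument is correct and follows the paper's own proof: the paper likewise composes $f$ with isomorphisms from $(P;\leq,\prec)$ onto $(B_i;\leq,\prec)$ and $(B_j;\leq,\prec)$ to obtain a binary canonical polymorphism (only phrased as a contradiction with non-domination), and then invokes the standing assumption that all binary canonical polymorphisms are dominated. Your extra canonicality verification is the content of Observation \ref{obs:firstobservation} in the paper, and note that canonicality only requires orbit-equivalence of finite tuples over $(P;\leq,\prec,a,b,c)$, which already follows from homogeneity, so the global extension of automorphisms of $B_k$ is more than you need.
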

\begin{proof}
For a contradiction, we assume that $f$ is not dominated on $B_i\times B_j$. Since $(B_i;\leq,\prec)$ and $(B_j;\leq,\prec)$ are isomorphic to $(P;\leq,\prec)$, there are $\alpha:P\to B_i$ and $\beta:P\to B_j$ such that $\alpha$ is an isomorphism from $(P;\leq,\prec)$ to $(B_i;\leq,\prec)$ and $\beta$ is an isomorphism from $(P;\leq,\prec)$ to $(B_j;\leq,\prec)$. Let $g:P^2\to P$ be given by $g(x,y):=f(\alpha(x),\beta(y))$. It follows from Observation \ref{obs:firstobservation} that $g$ is canonical and is not dominated, a contradiction.
\end{proof}
\begin{lemma}\label{lem:domination}
$f$ has the same domination on all sets $B_i\times B_j$, $i,j\in [2]$. 
\end{lemma}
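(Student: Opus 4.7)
I plan to prove this by contradiction. Assume $f$ has different dominations on two of the four product sets $B_i\times B_j$: say $f$ is dominated by the first argument on $B_{i_0}\times B_{j_0}$ and by the second argument on $B_{i_1}\times B_{j_1}$. The strategy is to pull $f$ back to a binary polymorphism of $\Gamma$ on $P\times P$ that fails to be dominated, and then canonicalise it, contradicting the standing assumption of this subsection that every canonical binary element of $\Pol(\Gamma)$ is dominated.

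By Observation \ref{obs:firstobservation}, each $B_i$ and the union $B_1\cup B_2$ (with $B_1$ playing the role of a random filter) is isomorphic to $(P;\leq,\prec)$. Let $X:=B_{i_0}\cup B_{i_1}$ (shrunk to $B_{i_0}$ if $i_0=i_1$) and define $Y$ analogously for the second coordinate. Fix isomorphisms $\alpha\colon(P;\leq,\prec)\to(X;\leq,\prec)$ and $\beta\colon(P;\leq,\prec)\to(Y;\leq,\prec)$ and set $h(x,y):=f(\alpha(x),\beta(y))$. Since $\alpha,\beta$ are embeddings of $(P;\leq)$ into $(P;\leq)$ and $f\in\Pol(\Gamma)$, we have $h\in\Pol(\Gamma)$. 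I would then verify that $h$ is not dominated: if $h$ were dominated by the first argument, pick $x<x'$ and $y\bot y'$ in $P$ whose images under $(\alpha,\beta)$ land in $B_{i_1}\times B_{j_1}$; there $f$ is dominated by the second argument, so $\beta(y)\bot\beta(y')$ forces $h(x,y)\bot h(x',y')$ instead of the required $h(x,y)<h(x',y')$. A symmetric choice of witnesses inside $B_{i_0}\times B_{j_0}$ rules out domination by the second argument. Such witnesses exist precisely because Observation \ref{obs:firstobservation} guarantees that each relevant 2-type is realised in each of the two subregions.

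To finish, apply Theorem \ref{theorem:ramsey2} to $h$ on an increasing exhaustion of $(P;\leq,\prec)^2$ by finite substructures and pass to a pointwise-convergent subsequence, exactly as at the end of the proof of Lemma \ref{lemma:condition_e_1}. This yields a canonical binary $g\in\Pol(\Gamma)$ whose behaviour on every 2-type is equal to that of $h$ composed with automorphisms of $(P;\leq,\prec)$, so the non-dominance of $h$ is inherited by $g$. But then $g$ is a canonical binary polymorphism of $\Gamma$ that is not dominated, contradicting the standing assumption of this section (and, via Theorem \ref{theorem:binarycanonical}, forcing $e_<$ or $e_\leq$ into $\Pol(\Gamma)$). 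The main technical care lies in the non-dominance-transfer step: one must pick the type-witnesses for $h$ so that they remain realised under the Ramsey-limit passage, which is handled by choosing them on orbit-types that are preserved throughout the compactness construction.
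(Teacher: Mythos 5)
The first half of your argument is fine: pulling $f$ back along isomorphisms onto $B_{i_0}\cup B_{i_1}$ and $B_{j_0}\cup B_{j_1}$ does give a binary $h\in\Pol(\Gamma)$, and under the different-domination hypothesis $h$ is indeed not dominated (this is the same pullback trick the paper uses in Lemma \ref{lem:firstdomination}). The gap is in the last step: canonicalising $h$ via Theorem \ref{theorem:ramsey2} does \emph{not} transfer non-domination. The automorphisms $\alpha_1^{(n)},\alpha_2^{(n)}$ produced by the Ramsey argument are automorphisms of $(P;\leq,\prec)$ only; they are free to move the finite sets $F_n$ entirely into $\alpha^{-1}(B_{i_0})\times\beta^{-1}(B_{j_0})$ (or the other region), where $h$ behaves in a dominated way, and the limit function then records only that dominated behaviour. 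Your proposed fix --- choosing witnesses ``on orbit-types that are preserved throughout the compactness construction'' --- cannot work, because the two regions are not unions of orbits of $\Aut(P;\leq,\prec)$ (they are only definable with the constants $a,b,c$), and every 2-type of $(P;\leq,\prec)$ is realised inside each region; the non-domination of $h$ is a cross-region phenomenon that is invisible to the behaviour on 2-types that the canonical limit retains. If instead you canonicalise with the constants kept (Theorem \ref{theorem:ramsey}), you only get a function canonical over $(P;\leq,\prec,\bar c)$, to which the standing assumption (``every binary canonical function, canonical from $(P;\leq,\prec)^2$, is dominated'') does not apply --- indeed $f$ itself is already such a function --- so no contradiction is obtained either way.

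For contrast, the paper avoids this transfer problem entirely: having Lemma \ref{lem:firstdomination} (domination on each $B_i\times B_j$, which is where the canonicality/standing assumption is used), it rules out mixed dominations by a direct four-point configuration. Assuming first-argument domination on $B_1\times B_k$ and second-argument domination on $B_2\times B_k$, one picks $x,y\in B_1$, $z,t\in B_2$ with $x<y$, $y<z$, $x\bot t$, and $x',y',z',t'\in B_k$ with $x'\bot t'$, $y'<z'$, $z'<t'$; the two dominations plus $<$-preservation yield $f(x,x')<f(y,y')<f(z,z')<f(t,t')$, contradicting $\bot$-preservation since $x\bot t$ and $x'\bot t'$. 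Then the swap $(x,y)\mapsto f(y,x)$ handles the second coordinate. If you want to keep your structure, you should replace the Ramsey-limit step by such an explicit cross-region configuration argument; as written, the proof does not go through.
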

\begin{proof}
We claim that $f$ has the same domination on $B_1\times B_k$ and $B_{2}\times B_k$ for any $k\in [2]$. For a contradiction, we assume that $f$ does not have the same domination $B_1\times B_k$ and $B_{2}\times B_k$. Without loss of generality we can assume that $f$ is dominated by the first argument on $B_1\times B_k$ and dominated by the second argument on $B_{2}\times B_k$. Let $x,y\in B_1,z,t\in B_2$ be such that $x<y\wedge y<z\wedge x\bot t$. Let $x',y',z',t'\in B_k$ be such that $x'\bot t'\wedge y'<z'\wedge z'<t'$. Since $f$ is dominated by the first argument on $B_1\times B_k$, we have $f(x,x')<f(y,y')$. Since $f$ is dominated by the second argument on $B_2\times B_k$, we have $f(z,z')<f(t,t')$. Since $f$ preserves $<$, we have $f(y,y')<f(z,z')$. Thus $f(x,x')<f(t,t')$, a contradiction to the fact that $f$ preserves $\bot$.

By considering the map $(x,y)\mapsto f(y,x)$ we have that $f$ has the same domination on $B_k\times B_1$ and $B_k\times B_2$ for every $k\in [2]$. This implies that $f$ has the same dominations on all products $B_i\times B_j,i,j\in [2]$. 
\end{proof}

In the rest of this section, we assume that $f$ is dominated by the first argument on $B_i\times B_j$ for every $i,j\in [2]$. The other case can be reduced to this case by considering the map $(x,y)\mapsto f(y,x)$.
\begin{lemma}\label{lem:firstfiner}

Let $u,v\in B_1$ and $u'\in B_2 ,v'\in B_1$ be such that $u<v\vee u\bot v$. Then $f(u,u')\bot f(v,v')$.
\end{lemma}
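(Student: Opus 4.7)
The plan is to establish this lemma by case analysis on the 2-types of $(u, v)$ in $B_1 \times B_1$ and $(u', v')$ in $B_2 \times B_1$, invoking the canonicality of $f$ as a function from $(P; \leq, \prec, a, b, c)^2$ to $(P; \leq)$. By canonicality, the 2-type of $(f(u, u'), f(v, v'))$ in $(P; \leq)$ depends only on these two input 2-types, so it suffices to show that in every admissible configuration the output type is $p_\bot$.

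First I would observe that since $u' > b$ (as $u' \in B_2$) but $v' \bot b$ (as $v' \in B_1$), the pair $(u', v')$ must be of type $p_{>}$ or $p_{\bot}$; the relation $u' \leq v'$ is ruled out since it would force $v' > b$. Combined with the hypothesis $u < v$ or $u \bot v$, this leaves exactly four possible input 2-type combinations. The configuration $u \bot v$ together with $u' \bot v'$ is handled immediately: by Lemma \ref{lemma:behaviour1}(1), $f$ preserves $\bot$, so $f(u, u') \bot f(v, v')$.

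For each of the three remaining subcases I would argue by contradiction, assuming $f(u, u')$ to be comparable or equal to $f(v, v')$. By Observation \ref{obs:firstobservation}, the set $B_1 \cup B_2$ is isomorphic to $(P; \leq, \prec)$ with $B_1$ a random filter; this provides enough combinatorial freedom to introduce auxiliary witnesses $w \in B_1$ or $w \in B_2$ standing in any desired relation to the given elements $u, v, u', v'$. Using the domination of $f$ by the first argument on $B_1 \times B_1$ and on $B_1 \times B_2$ (established in Lemma \ref{lem:firstdomination}), the assumed relation $u < v$ or $u \bot v$ propagates into $f(u, u') < f(v, w)$ or $f(u, u') \bot f(v, w)$, depending on the subcase. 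Canonicality then determines the relation between $f(v, w)$ and $f(v, v')$ (as it depends only on the trivial 2-type $(v,v)$ and the 2-type of $(w, v')$), and choosing $w$ with a suitably calibrated position relative to $v'$ forces an incompatibility with the supposed relation between $f(u, u')$ and $f(v, v')$, typically via preservation of $<$ and $\bot$ on a three-element subposet.

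The main obstacle is the careful identification of the correct auxiliary witness $w$ in each subcase, since neither domination alone nor preservation of $<$, $\bot$ directly bridges the distinct products $B_1 \times B_2$ and $B_1 \times B_1$. The key input that breaks this stalemate is the standing hypothesis $f(a, a) < f(b, c) \wedge f(a, a) < f(c, b)$: these two asymmetric comparisons, pushed through canonicality, pin down precisely the cross behavior of $f$ between the two products and determine which $w$ yields the contradiction in the remaining three subcases.
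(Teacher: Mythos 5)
Your setup (reducing to four input type-combinations, dispatching $u\bot v$, $u'\bot v'$ by $\bot$-preservation, and aiming for a contradiction in the rest) is reasonable, but the decisive content of the lemma is missing. You correctly identify the obstacle — domination on each product $B_i\times B_j$ says nothing about \emph{mixed} pairs $(u,u')\in B_1\times B_2$, $(v,v')\in B_1\times B_1$ — and then you assert that the standing hypothesis $f(a,a)<f(b,c)\wedge f(a,a)<f(c,b)$, ``pushed through canonicality,'' pins down this cross behaviour and tells you which witness $w$ to use. That assertion is exactly what has to be proved; as written it is circular. In particular, your intermediate step that canonicality ``determines the relation between $f(v,w)$ and $f(v,v')$'' is empty: canonicality only says this relation is a function of the 2-type of $(w,v')$ across $B_2\times B_1$, and what that function is on cross pairs is precisely the unknown (it is the content of this lemma and of Lemmas \ref{lem:secondfiner} and \ref{lem:thirdfiner}); nothing established so far (domination constrains only pairs with $x<x'$ or $x\bot x'$ in the first coordinate, not $x=x'$) supplies its value.

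The two concrete mechanisms that close the argument are absent from your sketch. First, to exclude $f(u,u')\leq f(v,v')$ one does not need witnesses or a type split at all: since $c<u$ and $b<u'$, preservation of $<$ gives $f(c,b)<f(u,u')$, so the standing hypothesis yields $f(a,a)<f(c,b)<f(u,u')\leq f(v,v')$, contradicting $f(a,a)\bot f(v,v')$, which holds by $\bot$-preservation because $a\bot v$ and $a\bot v'$ (both in $B_1$). Second, to exclude $f(u,u')>f(v,v')$ one picks $s,t\in B_1$ with $s\bot t$, $s<v$, $u<t$, and $s'\in B_1$, $t'\in B_2$ with $s'\bot t'$; domination by the first argument on $B_1\times B_1$ and $B_1\times B_2$ gives $f(s,s')<f(v,v')$ and $f(u,u')<f(t,t')$, hence $f(s,s')<f(t,t')$, contradicting $f(s,s')\bot f(t,t')$. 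Without these (or equivalent) explicit chains and witness configurations, your proposal remains a plan whose key step — how the hypothesis about $f(a,a),f(b,c),f(c,b)$ actually forces the cross behaviour — is not carried out. (A minor point: domination by the \emph{first} argument on all $B_i\times B_j$ is the normalisation adopted after Lemma \ref{lem:domination}, not the statement of Lemma \ref{lem:firstdomination} alone.)
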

\begin{proof}
First, we claim that $f(u,u')>f(v,v')\vee f(u,u')\bot f(v,v')$. For a contradiction, we assume that $f(u,u')\leq f(v,v')$. Since $f$ preserves $<$, we have $f(c,b)<f(u,u')$. Therefore $f(a,a)<f(c,b)<f(u,u')<f(v,v')$, a contradiction to the $\bot$-preservation of $f$. Thus the claim follows.

The proof is completed by showing that $f(u,u')>f(v,v')$ is impossible. For a contradiction, we assume that $f(u,u')>f(v,v')$. Let $s,t\in B_1$ be such that $s\bot t\wedge s<v\wedge u<t$. Let $s'\in B_1,t'\in B_2$ be such that $s'\bot t'$. By the domination of $f$, we have $f(s,s')<f(v,v')\wedge f(u,u')<f(t,t')$. It follows from $f(u,u')>f(v,v')$, we have $f(s,s')<f(t,t')$, a contradiction to $\bot$-preservation of $f$.
\end{proof}

\begin{lemma}\label{lem:secondfiner}
Let $u,v\in B_1$ be such that $u\bot v$. Then for every $u',v'\in B_1\cup B_2$, we have $f(u,u')\bot f(v,v')$.
\end{lemma}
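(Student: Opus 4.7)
My plan is a straightforward case split based on the orbits of $u'$ and $v'$. There are four cases to consider: $(u',v') \in B_i \times B_j$ for $i,j \in \{1,2\}$.

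In the two ``diagonal'' cases ($u',v' \in B_1$, and $u',v' \in B_2$), the pairs $(u,u')$ and $(v,v')$ both lie inside a single product $B_1 \times B_i$. By Lemma~\ref{lem:firstdomination} together with our standing assumption that the domination is by the first argument, $f$ is dominated on $B_1 \times B_i$, so the hypothesis $u \bot v$ gives $f(u,u') \bot f(v,v')$ immediately from the definition of domination.

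For the ``off-diagonal'' case $u' \in B_2,\ v' \in B_1$, the conclusion is precisely the $u \bot v$ branch of Lemma~\ref{lem:firstfiner}, so nothing further is needed. For the remaining off-diagonal case $u' \in B_1,\ v' \in B_2$, I would invoke Lemma~\ref{lem:firstfiner} after swapping the roles of the two pairs. Specifically, set $\tilde u := v,\ \tilde v := u,\ \tilde u' := v',\ \tilde v' := u'$. Then $\tilde u, \tilde v \in B_1$, $\tilde u' \in B_2$, $\tilde v' \in B_1$, and the symmetry of $\bot$ gives $\tilde u \bot \tilde v$, so Lemma~\ref{lem:firstfiner} applies and yields $f(\tilde u,\tilde u') \bot f(\tilde v,\tilde v')$, i.e., $f(v,v') \bot f(u,u')$. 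By the symmetry of $\bot$ this is the desired $f(u,u') \bot f(v,v')$.

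I do not anticipate a substantive obstacle here: the lemma is essentially bookkeeping that combines the orbit-wise domination of Lemma~\ref{lem:firstdomination} with the single genuinely new input from Lemma~\ref{lem:firstfiner}, using only the symmetry of $\bot$ to cover the one remaining off-diagonal configuration.
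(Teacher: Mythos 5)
Your proof is correct, but it takes a genuinely different route from the paper's. The paper does not split into cases according to the orbits of $u'$ and $v'$; instead it argues uniformly by contradiction: assuming (without loss of generality) $f(u,u')\leq f(v,v')$, it picks $s,t\in B_1$ with $s<u$, $v<t$, $s\bot t$, and $s',t'$ with $s'\bot t'$ chosen in the same orbits as $u'$ and $v'$ respectively; domination by the first argument then forces $f(s,s')<f(u,u')\leq f(v,v')<f(t,t')$, contradicting the fact that $f$ preserves $\bot$ (since $s\bot t$ and $s'\bot t'$). Your decomposition is also sound: in the two diagonal cases the claim is indeed immediate from the definition of domination by the first argument on $B_1\times B_i$ (Lemma~\ref{lem:firstdomination}, Lemma~\ref{lem:domination} and the standing assumption), and the two mixed cases are exactly Lemma~\ref{lem:firstfiner}, applied once directly and once after swapping the two pairs and using the symmetry of $\bot$; since Lemma~\ref{lem:firstfiner} is established beforehand, there is no circularity. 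What each approach buys: yours is shorter and makes explicit that the only genuinely new content beyond domination is the mixed-orbit configuration already handled by Lemma~\ref{lem:firstfiner}; the paper's argument is self-contained (independent of Lemma~\ref{lem:firstfiner}) and uniform over all four configurations, which is why essentially the same template is reused verbatim for Lemma~\ref{lem:thirdfiner} and for the analogous lemmas in Section~\ref{sec:secondcase}.
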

\begin{proof}
For a contradiction, we assume that $\neg(f(u,u')\bot f(v,v'))$. Without loss of generality, we assume that $f(u,u')\leq f(v,v')$. Let $s,t\in B_1$ be such that $s<u\wedge v<t\wedge s\bot t$. Let $s',t'\in B_1\cup B_2$ be such that $s'\bot t'$, $s',u'$ are in the same orbit and $t',v'$ are in the same orbit. By the domination of $f$, we have $f(s,s')<f(u,u')\wedge f(v,v')<f(t,t')$. Since $f(u,u')<f(v,v')$, we have $f(s,s')<f(t,t')$, a contradiction to the $\bot$-preservation of $f$.
\end{proof}
\begin{lemma}\label{lem:thirdfiner}
Let $u,v\in B_1$ and $u',v'\in B_1\cup B_2$ be such that $u<v$. Then $f(u,u')<f(v,v')\vee f(u,u')\bot f(v,v')$. 
\end{lemma}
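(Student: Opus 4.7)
My plan is to split into cases according to the $B_j$-memberships of $u'$ and $v'$, using direct domination or the preceding lemmas for three of the four subcases, and spending the only real work on the remaining one. If $u', v' \in B_j$ for the same $j \in \{1,2\}$, then $(u,u')$ and $(v,v')$ both lie in $B_1 \times B_j$, so the standing assumption that $f$ is dominated by the first argument on $B_1 \times B_j$ (Lemmas \ref{lem:firstdomination} and \ref{lem:domination}) gives $f(u,u') < f(v,v')$ directly from $u < v$. The case $u' \in B_2$, $v' \in B_1$ falls under Lemma \ref{lem:firstfiner} (the disjunction $u < v \vee u \bot v$ is satisfied since $u < v$), yielding $f(u,u') \bot f(v,v')$.

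The remaining case is $u' \in B_1$, $v' \in B_2$, which I would handle by contradiction. Assume $f(u,u') \geq f(v,v')$, meaning either $f(u,u') > f(v,v')$ or $f(u,u') = f(v,v')$. Using that $(B_1;\leq)$ is isomorphic to the random partial order (Observation \ref{obs:firstobservation}), pick an auxiliary element $w \in B_1$ with $u < w$ and $w \bot v$. Domination on $B_1 \times B_1$ applied to $u < w$ then yields $f(u,u') < f(w,u')$, while Lemma \ref{lem:secondfiner}, applied to $w \bot v$ together with $u', v' \in B_1 \cup B_2$, yields $f(w,u') \bot f(v,v')$. Chaining the two gives $f(v,v') \leq f(u,u') < f(w,u')$, hence $f(v,v') < f(w,u')$, contradicting the incomparability.

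The main obstacle is to choose the auxiliary element cleverly. The natural imitation of the proof of Lemma \ref{lem:secondfiner} would bracket $u$ and $v$ by $s < u < v < t$ with $s \bot t$, but under $u < v$ this is impossible by transitivity. Rerouting through a single $w$ above $u$ but incomparable to $v$ sidesteps the obstruction: it preserves the dominance step on the $u$-side while letting Lemma \ref{lem:secondfiner} supply the incomparability on the $v$-side, and the fact that the chain is strict at its second link lets the same argument simultaneously rule out both $f(u,u') > f(v,v')$ and $f(u,u') = f(v,v')$.
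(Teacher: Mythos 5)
Your proof is correct, but it is organized differently from the paper's. The paper gives a single uniform contradiction argument with no case distinction on the orbits of $u',v'$: assuming $f(v,v')\leq f(u,u')$, it picks $s,t\in B_1$ with the \emph{crossed} bracket $u<s$, $t<v$, $s\bot t$ (note this is consistent with $u<v$, unlike the bracket $s<u<v<t$ you correctly rule out), together with $s',t'$ incomparable and lying in the same orbits as $u',v'$ respectively; domination then yields $f(t,t')<f(v,v')\leq f(u,u')<f(s,s')$, contradicting $\bot$-preservation of $f$ applied to $(s,s')$, $(t,t')$. You instead split into four cases by orbit membership, dispatch three of them by domination and Lemma~\ref{lem:firstfiner}, and in the remaining case $u'\in B_1$, $v'\in B_2$ use a single auxiliary point $w\in B_1$ with $u<w$, $w\bot v$ (which exists by Observation~\ref{obs:firstobservation}), domination on $B_1\times B_1$, and Lemma~\ref{lem:secondfiner}; the chain $f(v,v')\leq f(u,u')<f(w,u')$ then contradicts $f(w,u')\bot f(v,v')$. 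Your route buys the reuse of the already-proved Lemmas~\ref{lem:firstfiner} and~\ref{lem:secondfiner} and avoids having to construct incomparable second-coordinate witnesses across orbits; the paper's route buys brevity and uniformity, since the crossed bracket makes the case analysis unnecessary. Both are valid in the section's standing hypotheses (first-argument domination on all $B_i\times B_j$ and preservation of $<$ and $\bot$).
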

\begin{proof}
For a contradiction, we assume that $f(v,v')\leq f(u,u')$. Let $s,t\in B_1$ be such that $t<v\wedge u<s\wedge s\bot t$. Let $s',t'\in B_1\cup B_2$ be such that $s'\bot t'$, $s',u'$ are in the same orbit, and $t',v'$ are in the same orbit. By the domination of $f$, we have $f(t,t')<f(v,v')\wedge f(u,u')<f(s,s')$. Since $f(v,v')<f(u,u')$, we have $f(t,t')<f(s,s')$, a contradiction to the $\bot$-preservation of $f$.
\end{proof}


\begin{proof}[Proof of Lemma \ref{lem:impos1}]
We are going to show that $\Pol(\Gamma)$ contains a function that behaves like $e_<$ or like $e_\leq$ by checking the conditions of Lemma \ref{lemma:condition_e_1}.

So let $\bar a,\bar b \in P^k$ with $a_p < a_q$ and $\neg(b_p \leq b_q)$. We set $Y:=\{b_i:b_i\geq b_p\}, Z:=\{b_i:\neg(b_i\geq b_p)\}$. By definition we have $b_q\in Z$. By the homogeneity of $\mP$, there is $\alpha\in \Aut(\mP)$ such that $\alpha(Y)\subseteq B_2$ and $\alpha(Z)\subseteq B_1$. Let $\beta\in \Aut(\mP)$ such that $\beta(\{a_i:i\in [k]\})\subseteq B_1$. Let $g(x,y):=f(\beta(x),\alpha(y))$. Clearly, $g\in \pol(\Gamma)$. \\
By Lemma \ref{lem:firstfiner} we have that $g(a_p,b_p) \bot g(a_q,b_q)$. Further we know by Lemma \ref{lem:thirdfiner} that $g(a_i,b_i) < g(a_j,b_j)$ or $g(a_i,b_i) \bot g(a_j,b_j)$ holds for all $a_i < a_j$. By Lemma \ref{lem:secondfiner} we know that $g(a_i,b_i) \bot g(a_j,b_j)$ holds for all $a_i \bot a_j$.  So the conditions of Lemma \ref{lemma:condition_e_1} are satisfied. Hence $e_<$ or $e_\leq$ is a polymorphism of $\Gamma$.

\end{proof}

\subsection{$f(a,a)\bot f(b,c)\wedge f(a,a)\bot f(c,b)$}
\label{sec:secondcase}
The aim of this section is to prove the following.
\begin{lemma}\label{lem:impos2}
Let $f \in \Pol(\Gamma)$ be canonical as a function from $(P;<,\prec,a,b,c)^2$ to $(P;<)$. If $f(a,a)\bot f(b,c)\wedge f(a,a)\bot f(c,b)$, then $\pol^{(2)}(\Gamma)$  contains $e_<$ or $e_\leq$.
\end{lemma}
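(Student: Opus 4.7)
The plan is to show that the hypothesis of this lemma already rules out $f$ being dominated, after which Theorem~\ref{theorem:binarycanonical} immediately finishes the argument. This is in marked contrast with the first case, where dominated behaviour was compatible with the hypothesis and required the lengthy case analysis of Section~\ref{sec:firstcase}.

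First, I would recall the setup: $f$ is a canonical polymorphism of $\Gamma$, both $<$ and $\bot$ are pp-definable in $\Gamma$, and we have fixed $a\prec b\prec c$ with $a<b$, $a\bot c$, $b\bot c$. The strategy mirrors the opening of Section~\ref{sec:firstcase}: if $f$ happens not to be dominated, then Theorem~\ref{theorem:binarycanonical} directly produces $e_<$ or $e_\leq$ in $\Pol(\Gamma)$, and we are done. So we only have to eliminate the case that $f$ is dominated.

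Second, I would dispatch the dominated case by reading off Definition~\ref{def:dominated}. If $f$ is dominated on the first argument, then from $a<b$ we obtain $f(a,a) < f(b,c)$, which contradicts the assumption $f(a,a)\bot f(b,c)$. If instead $f$ is dominated on the second argument, then $(x,y)\mapsto f(y,x)$ is dominated on its first argument, so again $a<b$ (now in the second coordinate) gives $f(a,a) < f(c,b)$, contradicting $f(a,a)\bot f(c,b)$. Either orientation of dominance is therefore impossible, so $f$ must fail to be dominated and Theorem~\ref{theorem:binarycanonical} applies.

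Since the whole argument is just an unpacking of Definition~\ref{def:dominated}, I do not expect any genuine obstacle; the only thing to be careful about is the orientation. The hypothesis $f(a,a)\bot f(b,c) \wedge f(a,a)\bot f(c,b)$ is precisely tailored so that each of the two possible `dominating coordinates' is contradicted by one of the two incomparabilities, which is why the case split closes in one line each. It is exactly this symmetry in the two hypotheses that lets the second case of Theorem~\ref{theorem:symcanexistence} be handled so much more cheaply than the first.
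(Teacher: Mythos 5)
There is a genuine gap at the final step. Your deduction that $f$ cannot be dominated is correct (either orientation of Definition~\ref{def:dominated} forces $f(a,a)<f(b,c)$ or $f(a,a)<f(c,b)$, contradicting the hypothesis), but you then invoke Theorem~\ref{theorem:binarycanonical}, which only applies to functions that are canonical as maps from $(P;\leq,\prec)^2$ to $(P;\leq)$. The $f$ in this lemma is only canonical as a map from $(P;<,\prec,a,b,c)^2$ to $(P;<)$: the types of the structure with constants are strictly finer, so canonicity with constants is a strictly weaker property, and $f$ need not be canonical in the sense Theorem~\ref{theorem:binarycanonical} requires. Knowing that this particular $f$ is not dominated therefore gives you nothing directly; it does not even contradict the section's standing assumption, which is that every \emph{constant-free} canonical binary polymorphism is dominated. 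A telling sanity check: your shortcut would equally ``prove'' Lemma~\ref{lem:impos1} in one line, since for that hypothesis the $\bot$-clause of Definition~\ref{def:dominated} together with $a\bot c$ yields the same kind of contradiction; if that worked, essentially all of Sections~\ref{sec:firstcase} and~\ref{sec:secondcase} would be superfluous.

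The actual content of the paper's proof is precisely the bridge you skip: it restricts $f$ to the infinite orbits $B_1=\{x: a<x<b \wedge x\bot c\}$ and $B_2=\{x: x<b\wedge x<c\wedge x\bot a\wedge x\prec a\}$ of $\Aut(P;<,\prec,a,b,c)$, where $f$ does behave like a constant-free canonical function, shows (using the standing domination assumption) that $f$ is dominated with the same orientation on all products $B_i\times B_j$, and then exploits the hypothesis $f(a,a)\bot f(b,c)$ to prove the finer behavioural facts of Lemmas~\ref{lem:dirreversion}--\ref{lem:dircompatible}. These are exactly what is needed to verify the hypotheses of Lemma~\ref{lemma:condition_e_1} (after moving arbitrary tuples into $B_1$ and $B_1\cup B_2$ by automorphisms), and it is Lemma~\ref{lemma:condition_e_1}, with its compactness and Ramsey argument, that finally produces $e_<$ or $e_\leq$. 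To repair your proof you would need to supply this transfer from canonicity-with-constants to a genuinely canonical function; the one-line domination argument cannot replace it.
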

 We define the following sets. 
\begin{align*}
B_1&:=\{x\in P:a<x<b\wedge x\bot c\}\\
B_2&:=\{x\in P: x < b \wedge x < c \wedge x \bot a \wedge x \prec a\}.
\end{align*}

Throughout the lemmata and corollaries below in this section, we assume that every binary canonical function in $\Gamma$ is dominated and $f(a,a) \bot f(b,c)\wedge f(a,a) \bot f(c,b)$.\\

Observe that by the homogeneity of $(P;\leq; \prec)$ and the back-and-forth argument, we can show that $(B_1\cup B_2;\leq,\prec)$ is isomorphic to $(P;\leq,\prec)$, with $B_2$ being a random filter.
For every two $k$-tuples $\bar x$ and $\bar y$ in $B_i^k$, $\bar x$ and $\bar y$ are in the same orbit of $\Aut(\mP)$ if and only if $\bar x$ and $\bar y$ are in the same orbit of $\Aut(P;a,b,c)$.

\begin{lemma}
$f$ has the same domination on sets $B_i\times B_j,i,j\in [2]$.
\end{lemma}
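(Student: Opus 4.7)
The plan is to follow the strategy of Lemma~\ref{lem:domination} of Section~\ref{sec:firstcase}, but with the witnessing configuration reversed to account for the different position of $B_1$ and $B_2$ in the ambient poset. First I would verify the analog of Lemma~\ref{lem:firstdomination}: for each $i,j \in [2]$, $f$ is dominated on $B_i \times B_j$. Since $(B_i;\leq,\prec)$ and $(B_j;\leq,\prec)$ are each isomorphic to $(P;\leq,\prec)$, composing $f$ with suitable isomorphisms $\alpha : P \to B_i$ and $\beta : P \to B_j$ turns $f$ restricted to $B_i \times B_j$ into a canonical binary member of $\Pol(\Gamma)$, which by the standing hypothesis of the section must be dominated.

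Next, I would establish same-domination across all four products. It suffices to prove the two \emph{row-wise} statements ``$f$ has the same domination on $B_1 \times B_k$ and $B_2 \times B_k$'' for $k \in [2]$; the corresponding \emph{column-wise} statements then follow by applying them to $(x,y) \mapsto f(y,x)$, which still lies in $\Pol(\Gamma)$, is still canonical, and still satisfies $f(a,a) \bot f(b,c) \wedge f(a,a) \bot f(c,b)$. Together the four products then lie in a single equivalence class under ``has the same domination''.

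For the row-wise claim I would argue by contradiction: assume that $f$ is dominated by the first argument on $B_1 \times B_k$ and by the second argument on $B_2 \times B_k$ (the opposite case is symmetric). I would then choose $x, y \in B_2$ and $z, t \in B_1$ with $y < z$, $z < t$ and $x \bot t$, together with $x', y', z', t' \in B_k$ satisfying $x' < y' < z'$ and $x' \bot t'$; such a configuration exists because $(B_1 \cup B_2;\leq,\prec)$ and $(B_k;\leq,\prec)$ are each isomorphic to $(P;\leq,\prec)$, and no transitivity clash is forced because we never require $x < y$. The two dominations together with preservation of $<$ then yield
\[ f(x,x') < f(y,y') < f(z,z') < f(t,t'), \]
while preservation of $\bot$ applied to $x \bot t$ and $x' \bot t'$ gives $f(x,x') \bot f(t,t')$, a contradiction.

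The main obstacle I expect is the verification that this configuration can be realised in $(P;\leq,\prec,a,b,c)$. Unlike in Section~\ref{sec:firstcase}, here $B_2$ sits \emph{below} $B_1$, so the chain has to run upwards from $B_2$ into $B_1$; one has to check carefully that the constraints $x \bot a$, $t > a$ and $y \bot a$ do not force unwanted order relations between $x$ and the chain $y < z < t$, and similarly that $x' \bot t'$ can coexist with $x' < y' < z'$ inside $B_k$. Once this compatibility is confirmed the rest reduces to a routine application of canonicity together with $<$- and $\bot$-preservation of $f$.
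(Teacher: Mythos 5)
Your proposal is correct and follows essentially the same route as the paper, which simply says the lemma "can be shown as in Lemma~\ref{lem:firstdomination} and Lemma~\ref{lem:domination}"; your spelled-out adaptation, with the chain running upward from $B_2$ into $B_1$ (using $x'<y'$ for the second-argument domination on $B_2\times B_k$, $y<z$, $y'<z'$ for $<$-preservation, and $z<t$ for the first-argument domination on $B_1\times B_k$), is exactly the right way to carry that reference out. The configurations you require are indeed realizable in $(P;\leq,\prec,a,b,c)$, since the only cross-comparabilities forced here are of the form $B_2$-element $<$ $B_1$-element and $B_k$ is isomorphic to $(P;\leq,\prec)$, so the remaining verification is routine as you anticipate.
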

\begin{proof}
 This lemma can be shown as in Lemma \ref{lem:firstdomination} and Lemma \ref{lem:domination}.
\end{proof}
In the rest of this section we assume that $f$ is dominated by the first argument on $B_i\times B_j$ for every $i,j\in \{1,2\}$. Similarly, to Lemma \ref{lem:firstfiner}, we have the following.

\begin{lemma}\label{lem:dirreversion}
Let $u,v\in B_1$ and $u'\in B_1,v'\in B_2$ be such that $u<v\vee u\bot v$. Then $f(u,u')\bot f(v,v')$.
\end{lemma}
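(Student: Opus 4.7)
My plan is to adapt the strategy of Lemma \ref{lem:firstfiner} to the hypotheses of Section \ref{sec:secondcase}. I will rule out both $f(u, u') \leq f(v, v')$ and $f(u, u') > f(v, v')$, thereby forcing $f(u, u') \bot f(v, v')$.

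To exclude $f(u, u') \leq f(v, v')$, I sandwich it between $f(a, a)$ and $f(b, c)$. Since $u, u' \in B_1$, we have $a < u$ and $a < u'$, so $<$-preservation of $f$ gives $f(a, a) < f(u, u')$. Since $v \in B_1$ and $v' \in B_2$, we have $v < b$ and $v' < c$, so likewise $f(v, v') < f(b, c)$. Chaining these with the assumed inequality yields $f(a, a) < f(b, c)$, contradicting the standing hypothesis $f(a, a) \bot f(b, c)$ of the subsection. This is precisely where my argument diverges from Lemma \ref{lem:firstfiner}: there the reference pair was $(c, b)$ and the contradiction came from incomparability of $a$ with elements of $B_1$; here the reference pair is $(b, c)$ and the contradiction is immediate from the $\bot$-hypothesis on $f(a,a)$ and $f(b,c)$.

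To exclude $f(u, u') > f(v, v')$, I replicate step two of Lemma \ref{lem:firstfiner} with only superficial changes. Pick $s, t \in B_1$ with $s \bot t$, $s < v$, $u < t$ (realizable in both subcases $u < v$ and $u \bot v$ because $B_1$ is isomorphic to the random poset, by the observation at the start of Section \ref{sec:secondcase} identifying $B_1 \cup B_2$ with $(P; \leq, \prec)$ and $B_2$ with a random filter). Pick further $s' \in B_1$ and $t' \in B_2$; then $s' \bot t'$ is automatic, since $s' > a$ and $t' \bot a$ preclude $s' < t'$, while the random-filter structure of $B_2$ inside $B_1 \cup B_2$ precludes $s' > t'$. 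Applying first-argument domination of $f$, I obtain $f(s, s') < f(v, v') < f(u, u') < f(t, t')$, but $(s, s') \bot (t, t')$ componentwise, contradicting $\bot$-preservation of $f$. The only non-routine step is the first one, where one must spot the correct reference pair $(b, c)$; the second step is a verbatim transcription of the earlier argument, since its ingredients — first-argument domination, homogeneity of $B_1$, and the random-filter structure of $B_1 \cup B_2$ — are all already in place.
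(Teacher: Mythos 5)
The first half of your argument (excluding $f(u,u')\leq f(v,v')$ by chaining $f(a,a)<f(u,u')\leq f(v,v')<f(b,c)$ against $f(a,a)\bot f(b,c)$) is correct and is exactly the paper's step; you also identified the right reference pair $(b,c)$. The second half, however, has a genuine gap. The domination hypothesis is only available \emph{inside each orbit-product} $B_i\times B_j$: $f$ is canonical only over $(P;\leq,\prec,a,b,c)$, and the preceding lemma fixes its behaviour on pairs of 2-types where both second coordinates stay in the same $B_j$. With your choice $s'\in B_1$, $t'\in B_2$, the compared pairs $(s,s')\in B_1\times B_1$ and $(v,v')\in B_1\times B_2$ (likewise $(u,u')\in B_1\times B_1$ and $(t,t')\in B_1\times B_2$) straddle two different products, so ``first-argument domination'' gives no information about $f(s,s')$ versus $f(v,v')$. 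Worse, the behaviour on such mixed pairs is precisely what Lemma \ref{lem:dirreversion} is determining, and its conclusion is that there the image is $\bot$ even when the first coordinates increase --- so the inequality $f(s,s')<f(v,v')$ you assert is not only unjustified but contradicts the statement being proved. The paper avoids this by making the mirror choice $s'\in B_2$, $t'\in B_1$ with $s'\bot t'$, which keeps each compared pair inside a single product ($(s,s'),(v,v')\in B_1\times B_2$ and $(u,u'),(t,t')\in B_1\times B_1$), where domination does apply; your argument is repaired simply by swapping to that choice.

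A secondary error: ``$s'\bot t'$ is automatic'' is false. For $s'\in B_1$ and $t'\in B_2$ the relation $t'<s'$ is realizable (a configuration with $t'<s'$, $t'<b$, $t'<c$, $t'\bot a$, $a<s'<b$, $s'\bot c$ is a consistent finite poset, hence embeds into $\Po$); in $B_1\cup B_2$ it is $B_1$ that is upward closed, the possible relations between $B_1$ and $B_2$ being $t'<s'$ or $t'\bot s'$ (the printed Observation at the start of the subsection has the roles of $B_1$ and $B_2$ swapped, which is what misled you). So incomparability of $s'$ and $t'$ must be arranged by choice, as the paper does, not deduced.
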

\begin{proof}
First we prove that $f(v,v')<f(u,u') \vee f(v,v')\bot f(u,u')$. For a contradiction we assume that $f(u,u')\leq f(v,v')$. Since $a<u\wedge a<u'$, we have $f(a,a)<f(u,u')$. Since $v<b\wedge v'<c$, we have $f(v,v')<f(b,c)$. Thus $f(a,a)<f(b,c)$, a contradiction to the fact that $f(a,a)\bot f(b,c)$. Thus $f(v,v')<f(u,u')\vee f(v,v')\bot f(u,u')$.

The proof is completed by showing that $f(u,u')>f(v,v')$ is impossible. For a contradiction, we assume that $f(u,u')>f(v,v')$. Let $s,t\in B_1$ be such that $s\bot t\wedge s<v\wedge u<t$. Let $s'\in B_2,t'\in B_1$ be such that $s'\bot t'$. By the domination of $f$, we have $f(s,s')<f(v,v')\wedge f(u,u')<f(t,t')$. It follows from $f(u,u')>f(v,v')$, we have $f(s,s')<f(t,t')$, a contradiction to $\bot$-preservation of $f$.
\end{proof}

\begin{lemma}\label{lem:botpreservation}
Let $u,v\in B_1$ be such that $u\bot v$. Then for every $u',v'\in B_1\cup B_2$, we have $f(u,u')\bot f(v,v')$.
\end{lemma}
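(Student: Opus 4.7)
The plan is to adapt the argument of Lemma~\ref{lem:secondfiner} from Subsection~\ref{sec:firstcase} essentially verbatim, since the two statements are structurally identical and only the underlying sets $B_1, B_2$ have been redefined. First I will assume towards a contradiction that $\neg(f(u,u') \bot f(v,v'))$; after exchanging the pairs $(u,u')$ and $(v,v')$ if necessary (which is legitimate because $u \bot v$ and the conclusion is symmetric in the two pairs), I may assume $f(u,u') \leq f(v,v')$.

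The next step is to produce witnesses. I will pick $s, t \in B_1$ with $s < u$, $v < t$, and $s \bot t$: these relations, combined with the defining conditions of $B_1$ (namely $a < s, t < b$ and $s, t \bot c$) and with the already fixed order relations among $a, b, c, u, v$, describe a consistent finite partial order on $\{a, b, c, u, v, s, t\}$. Hence by the amalgamation property of the class of finite posets, such $s, t$ exist in $\Po$. Since $B_1$ and $B_2$ are orbits of $\Aut(P;\leq,\prec,a,b,c)$ and $(B_1 \cup B_2; \leq,\prec)$ is isomorphic to $(P;\leq,\prec)$ with $B_2$ a random filter, I will then pick $s', t' \in B_1 \cup B_2$ with $s' \bot t'$, with $s'$ in the same orbit as $u'$ and $t'$ in the same orbit as $v'$; existence follows by the same consistency check and homogeneity of $\Po$.

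Then I will apply the domination assumption. Since $f$ is dominated by the first argument on each product $B_i \times B_j$, the pairs $(s,s')$ and $(u,u')$ lie in a common product $B_1 \times B_j$ (where $u' \in B_j$), and $s < u$ yields $f(s,s') < f(u,u')$; similarly $v < t$ yields $f(v,v') < f(t,t')$. Chaining with the standing assumption $f(u,u') \leq f(v,v')$, one obtains $f(s,s') < f(t,t')$. On the other hand $s \bot t$ and $s' \bot t'$ together with the fact that $f$ preserves $\bot$ (which is pp-definable in $\Gamma$ by assumption of this section) force $f(s,s') \bot f(t,t')$, contradicting the previous inequality.

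The only genuine verification, and hence the main mild obstacle compared to Lemma~\ref{lem:secondfiner}, is to check that the witnessing configurations $(s,t)$ and $(s',t')$ can really be realized inside the present subsection's $B_1$ and $B_2$. This boils down to a routine case distinction on whether $u'$ and $v'$ lie in $B_1$ or in $B_2$, verifying in each case that the prescribed finite structure is a legitimate partial order satisfying the defining formulas of $B_1$ and $B_2$, and invoking homogeneity of $\Po$ to realize it.
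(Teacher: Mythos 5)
Your proof is correct and is essentially the paper's own argument: the paper disposes of this lemma with the single remark that it is ``analogous to Lemma~\ref{lem:secondfiner}'', and your write-up is exactly that analogous argument, carried out with the redefined sets $B_1,B_2$ of this subsection, including the realizability check for the witnesses $s,t,s',t'$ via homogeneity. No gaps.
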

\begin{proof}
analogous to Lemma \ref{lem:secondfiner}.
\end{proof}
\begin{lemma}\label{lem:dircompatible}
Let $u,v\in B_1$ and $u',v'\in B_1\cup B_2$ be such that $u<v$. Then $f(u,u')<f(v,v')\vee f(u,u')\bot f(v,v')$.
\end{lemma}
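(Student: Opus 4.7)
The plan is to mimic the proof of Lemma \ref{lem:thirdfiner} almost verbatim, adjusting only the orbit considerations to the new sets $B_1$ and $B_2$ defined in this section. I would argue by contradiction: assume that $f(v,v')\leq f(u,u')$, and produce auxiliary witnesses that force $f$ to violate $\bot$-preservation. Concretely, I would first pick elements $s,t\in B_1$ with $u<s$, $t<v$ and $s\bot t$. Such $s,t$ exist because $(B_1;\leq,\prec)$ is isomorphic to $(P;\leq,\prec)$ (observed at the beginning of the section), so any pattern of $<$ and $\bot$ that is realised in $\mathbb{P}$ and is consistent with the fixed pair $u<v$ in $B_1$ is also realised inside $B_1$.

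Next I would pick $s',t'\in B_1\cup B_2$ with $s'\bot t'$, $s'$ in the same orbit as $u'$, and $t'$ in the same orbit as $v'$. This is where the structural observation that $(B_1\cup B_2;\leq,\prec)\cong (P;\leq,\prec)$ with $B_2$ a random filter is used: since any incomparability pattern between an element of the orbit of $u'$ and one of the orbit of $v'$ is realised in the random poset, it is realised inside $B_1\cup B_2$ as well. Homogeneity of $(P;\leq,\prec,a,b,c)$ (rather than of $\mathbb{P}$ alone) guarantees that we may moreover keep $s'$ and $t'$ in the same respective orbits as $u'$ and $v'$, which is what the domination hypothesis needs.

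Having fixed these witnesses, the domination of $f$ on $B_1\times(B_1\cup B_2)$ yields $f(t,t')<f(v,v')$ and $f(u,u')<f(s,s')$. Combining these with the assumed inequality gives
\[ f(t,t')<f(v,v')\leq f(u,u')<f(s,s'), \]
so $f(t,t')<f(s,s')$. On the other hand, $t\bot s$ and $t'\bot s'$, and $f$ preserves $\bot$, so we must have $f(t,t')\bot f(s,s')$, a contradiction. Thus $f(v,v')\leq f(u,u')$ is impossible and the remaining case $f(u,u')<f(v,v')\vee f(u,u')\bot f(v,v')$ is exactly the conclusion.

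The main (minor) obstacle is verifying that the auxiliary tuple $(s,t,s',t')$ really lies in the intended orbit-configuration, i.e.\ that the joint type we want to realise is consistent. This is purely a matter of unfolding the homogeneity statements collected at the start of Section \ref{sec:secondcase}: because $B_1\cup B_2$ realises every $2$-type of $\mathbb{P}$ between its two designated orbits, and because $B_1$ by itself is an isomorphic copy of $\mathbb{P}$, no additional combinatorial work is required beyond the analogue already carried out in the first case.
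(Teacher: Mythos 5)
Your argument is exactly the one the paper intends: its proof of this lemma is literally "analogous to Lemma \ref{lem:thirdfiner}", and you have carried out that analogy correctly — the witnesses $s,t\in B_1$ with $u<s$, $t<v$, $s\bot t$ and $s',t'$ chosen incomparable and in the same orbits as $u',v'$ exist by the structure of $B_1\cup B_2$, the domination by the first argument on each $B_i\times B_j$ gives $f(t,t')<f(v,v')$ and $f(u,u')<f(s,s')$, and together with the assumed $f(v,v')\leq f(u,u')$ this contradicts $\bot$-preservation. No gaps; your phrasing "domination on $B_1\times(B_1\cup B_2)$" is slightly loose, but your orbit-matching of $s',t'$ with $u',v'$ is precisely what makes the application of the domination hypothesis legitimate.
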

\begin{proof}
analogous to Lemma \ref{lem:thirdfiner}.
\end{proof}


\begin{proof}[Proof of Lemma \ref{lem:impos2}]
We are again going to show that $\Pol(\Gamma)$ contains a function that behaves like $e_<$ or like $e_\leq$ by checking the conditions of Lemma \ref{lemma:condition_e_1}.

So let $\bar a,\bar b \in P^k$ with $a_p < a_q$ and $\neg(b_p \leq b_q)$. We set $Y:=\{b_i:b_i\geq b_p\}, Z:=\{b_i:\neg(b_i\geq b_p)\}$. By definition we have $b_q\in Z$. By the homogeneity of $\mP$, there is $\alpha\in \Aut(\mP)$ such that $\alpha(Y)\subseteq B_1$ and $\alpha(Z)\subseteq B_2$. Let $\beta\in \Aut(\mP)$ such that $\beta(\{a_i:i\in [k]\})\subseteq B_1$. Let $g(x,y):=f(\beta(x),\alpha(y))$. Clearly, $g\in \pol(\Gamma)$. \\
By Lemma \ref{lem:firstfiner} we have that $g(a_p,b_p) \bot g(a_q,b_q)$. Further we know by Lemma \ref{lem:thirdfiner} that $g(a_i,b_i) < g(a_j,b_j)$ or $g(a_i,b_i) \bot g(a_j,b_j)$ holds for all $a_i < a_j$. By Lemma \ref{lem:secondfiner} we know that $g(a_i,b_i) \bot g(a_j,b_j)$ holds for all $a_i \bot a_j$.  So the conditions of Lemma \ref{lemma:condition_e_1} are satisfied. Hence $e_<$ or $e_\leq$ is a polymorphism of $\Gamma$.
\end{proof}

\section{The NP-hardness of $\Betw$, $\sept$ and $\cycl$}\label{sect:cycl}

By Corollary \ref{corollary:monoids} we are now left with the cases where $\End(\Gamma)$ is equal to one of the monoids $\overline{\langle \updownarrow \rangle}$, $\overline{\langle \circlearrowright \rangle}$ or $\overline{\langle \updownarrow, \circlearrowright \rangle}$.

 We are going to deal with all these remaining cases in this section. Interestingly, we can treat them all similarly: By fixing finitely many constants $c_1,\ldots,c_n$ on $\Gamma$ we obtain definable subsets of $(\Gamma,c_1,\ldots,c_n)$ on which $<$ and $\Low$ are pp-definable. This enables us to reduce every every such case to the NP-completeness of $\Low$.

\begin{lemma} \label{lemma:betwhard} Let $u,v \in P$ with $u < v$. Then the relations $<$ and $\Low$ are pp-definable in $(P,\Betw,\bot,u,v)$.
\end{lemma}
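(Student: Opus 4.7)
The plan splits into a direct pp-definition of $<$ and a polymorphism argument for $\Low$.

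For $<$: since $u < v$, the formula $\Betw(w_1, u, v)$ simplifies to $w_1 < u$, and $\Betw(u, v, w_2)$ to $v < w_2$. Building on this, I claim that
\[
\phi(x,y) := \exists w_1, w_2\, \bigl(\Betw(w_1, u, v) \land \Betw(u, v, w_2) \land \Betw(w_1, x, y) \land \Betw(x, y, w_2)\bigr)
\]
defines $x < y$. For the direction $x < y \Rightarrow \phi(x,y)$, one invokes the standard extension property of $\Po$: any two points admit a common lower and a common upper bound, so one can choose $w_1$ with $w_1 < u$ and $w_1 < x$, and $w_2$ with $w_2 > v$ and $w_2 > y$. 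For the converse, resolving the last two $\Betw$-formulas yields four cases; two give the direct contradiction $x < y \land y < x$, and the remaining non-trivial case $y < x < w_1$ together with $w_2 < y < x$ produces the cycle $v < w_2 < y < x < w_1 < u$, contradicting $u < v$. The only surviving case is $w_1 < x < y < w_2$, yielding $x < y$.

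For $\Low$: set $\Gamma := (P; \Betw, \bot, u, v)$. By the first part, both $<$ and $\bot$ are pp-definable in $\Gamma$, so Theorem \ref{theorem:binarycanonical} is applicable. I aim to show every binary polymorphism of $\Gamma$ is dominated; Lemma \ref{lemma:lowdominated} then gives pp-definability of $\Low$. Suppose for contradiction that some binary $f \in \Pol(\Gamma)$ is not dominated, witnessed by a finite set of points. Applying Theorem \ref{theorem:ramsey} with these witnesses together with $u$ and $v$ as constants yields a canonical $g \in \Pol(\Gamma)$ that agrees with $f$ on them and is therefore also not dominated. Theorem \ref{theorem:binarycanonical} then forces $e_<$ or $e_\leq$ into $\Pol(\Gamma)$.

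The key observation closing the argument is that neither $e_<$ nor $e_\leq$ preserves $\Betw$. Consider $\Betw$-triples $(a_1, b_1, c_1)$ with $a_1 < b_1 < c_1$ and $(a_2, b_2, c_2)$ with $c_2 < b_2 < a_2$. The behaviour tables for $e_<$ and $e_\leq$ both assign $\bot$ to the combinations $(<, >)$ and $(>, <)$, so in the image both consecutive pairs become incomparable and $\Betw$ fails. This contradicts $e_<, e_\leq \in \Pol(\Gamma)$, so every binary polymorphism of $\Gamma$ is dominated and Lemma \ref{lemma:lowdominated} produces a pp-definition of $\Low$. The main technical point I expect to have to check carefully is that the canonicity provided by Theorem \ref{theorem:ramsey} is relative to $(P;\leq,\prec,\bar c)$ rather than $(P;\leq,\prec)$; one must verify that the 2-type case analysis underlying Theorem \ref{theorem:binarycanonical} remains sound when these additional constants appear in the type structure.
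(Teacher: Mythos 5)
Your pp-definition of $<$ is correct: the verification that $\Betw(w_1,u,v)$ and $\Betw(u,v,w_2)$ pin down $w_1 < u < v < w_2$, and the four-case resolution of the remaining two atoms, all go through, and the forward direction only needs common lower/upper bounds, which exist in $\Po$. This is a minor variant of the paper's own formula $\exists a,b\,(\Betw(x,y,a) \land \Betw(y,a,b) \land \Betw(u,v,a) \land \Betw(v,a,b))$, so the first half is fine. Likewise, your observation that neither $e_<$ nor $e_\leq$ preserves $\Betw$ (image of a $<$-increasing triple paired with a $<$-decreasing triple is an antichain) is exactly the paper's key point.

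The gap is in how you convert that observation into pp-definability of $\Low$. Theorem \ref{theorem:binarycanonical} is proved only for functions that are canonical as maps from $(P;\leq,\prec)^2$ to $(P;\leq)$, whereas the function $g$ you obtain from Theorem \ref{theorem:ramsey} is canonical only over the expansion $(P;\leq,\prec,\bar c)$ by your witnesses and $u,v$. The issue you flag at the end is therefore not a routine soundness check: with constants present there are many $1$-types, the $2$-type analysis of Lemmas \ref{lemma:behaviour1}--\ref{lemma:notdominated} no longer applies as stated, and bridging exactly this mismatch is the content of the paper's Theorem \ref{theorem:symcanexistence}, whose proof occupies the two sections on violating $\Low$ (Lemmas \ref{lem:impos1} and \ref{lem:impos2}, with the orbit-wise domination analysis on the sets $B_1,B_2$ tailored to the specific constants arising there). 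So as written your argument re-derives the hard implication of Theorem \ref{theorem:symcanexistence} while skipping its actual proof. A secondary soft spot of the same kind: the ``dominated $\Rightarrow$ $\Low$ pp-definable'' direction of Lemma \ref{lemma:lowdominated} rests on Theorem \ref{theorem:violate} with $\Low$ a union of two orbits of $\Aut(\Po)$, and orbits refine once $u,v$ are named, so quoting it verbatim for $(P;\Betw,\bot,u,v)$ also needs justification. The short repair is the paper's own route: having shown $<$ and $\bot$ pp-definable and that $e_<,e_\leq$ violate $\Betw$, hence are not polymorphisms, invoke Theorem \ref{theorem:symcanexistence} in contrapositive to conclude that $\Low$ is pp-definable, rather than rebuilding that implication from Theorem \ref{theorem:binarycanonical}.
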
 

\begin{proof}
It is easy to verify that there is a pp-definition of the order relation by the following equivalence:
\[ x < y \leftrightarrow \exists a,b~(\Betw(x,y,a) \land \Betw(y,a,b) \land \Betw(u,v,a) \land \Betw(v,a,b)). \]

The two maps $e_<:P^2 \to P$ and $e_\leq: P^2 \to P$ do not preserve $\Betw$, since for every triple $ \bar a = (a_1,a_2,a_3)$ with $a_1 < a_2 < a_3$ and $\bar b = (b_1,b_2,b_3)$ with $b_1 > b_2 >b_3$, the image of $(\bar a,\bar b)$ forms an antichain.

By Theorem \ref{theorem:symcanexistence} we have that $\Low$ is pp-definable in $(P,\Betw,\bot,u,v)$.
\end{proof}

\begin{theorem}  \label{theorem:betwhard}
Let $\Gamma$ be a reduct of $\Po$ such that $\End(\Gamma) = \overline{\langle \updownarrow \rangle}$. Then there are constants $u,v,w,t \in P$ such that $(\{0,1\},\rm 1IN3)$ is pp-interpretable in $(\Gamma, u,v,w,t)$. Hence $\csp(\Gamma)$ is NP-complete.
\end{theorem}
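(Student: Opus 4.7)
The plan is to reduce NP-hardness to the established hardness of $\csp(P;\Low)$ from Theorem \ref{theorem:lowhard}, by showing that after fixing four well-chosen constants, $\Low$ becomes pp-definable in $\Gamma$. The hypothesis $\End(\Gamma) = \overline{\langle \updownarrow \rangle}$ combined with Lemma \ref{lemma:relations}(2) gives $\End(\Gamma) = \End(P;\Betw,\bot)$, so both $\Betw$ and $\bot$ are preserved by every endomorphism of $\Gamma$.

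The main step --- and the principal obstacle --- is to promote this to pp-definability of $\Betw$ and $\bot$ in $\Gamma$ itself. For this I would apply Theorem \ref{theorem:violate}: since $\Aut(\Gamma) = \langle \updownarrow \rangle$ and $\updownarrow$ interchanges the two disjuncts in the definitions of $\Betw$ and of $\bot$, each of these relations is a single orbit of $\Aut(\Gamma)$ on its respective arity. Any polymorphism of $\Gamma$ violating one of them would therefore be witnessed by a unary polymorphism, i.e.\ by an endomorphism --- which is impossible by the previous paragraph. Hence $\Pol(\Gamma)$ preserves $\Betw$ and $\bot$, and since $\Gamma$ is $\omega$-categorical, both relations are pp-definable in $\Gamma$.

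Next I pick $u, v \in P$ with $u < v$. Since $\Betw$ and $\bot$ are already pp-definable in $\Gamma$, the explicit pp-formula in the proof of Lemma \ref{lemma:betwhard} makes $<$ pp-definable in $(\Gamma, u, v)$, and the same lemma yields $\Low$ pp-definable there. The internal use of Theorem \ref{theorem:symcanexistence} inside Lemma \ref{lemma:betwhard} requires $e_<, e_\leq \notin \Pol(\Gamma, u, v)$; but this is immediate from the behaviour tables of $e_<$ and $e_\leq$, since on pairs of partial type $(<,>)$ both functions output $\bot$, so the image of a triple $x_1 < x_2 < x_3$ paired coordinatewise with $y_1 > y_2 > y_3$ is an antichain --- violating the $\Betw$-relation which $\Pol(\Gamma) \supseteq \Pol(\Gamma, u, v)$ preserves.

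Finally I pick $w, t \in P$ with $w \bot t$ and invoke Theorem \ref{theorem:lowhard}: $(\{0,1\}; \rm 1IN3)$ is pp-interpretable in $(P;\Low, w, t)$. Since $\Low$ is pp-definable in $(\Gamma, u, v)$, composing with this pp-interpretation yields the desired pp-interpretation of $(\{0,1\}; \rm 1IN3)$ in $(\Gamma, u, v, w, t)$, and hence a polynomial-time reduction of positive-1-in-3-SAT to $\csp(\Gamma, u, v, w, t)$. NP-completeness of $\csp(\Gamma)$ then follows from Theorem \ref{theorem:coreconst} applied once for each of the four constants, together with the trivial containment of every Poset-SAT problem in NP.
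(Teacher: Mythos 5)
Your proposal is correct and follows essentially the same route as the paper: use Theorem \ref{theorem:violate} (with $\Betw$ and $\bot$ each being a single orbit) to get their pp-definability in $\Gamma$, then Lemma \ref{lemma:betwhard} with $u<v$ to pp-define $\Low$, then Theorem \ref{theorem:lowhard} with $w\bot t$, and finally the model-complete-core/constants argument for NP-completeness. Your extra re-verification that $e_<,e_\leq$ fail to preserve $\Betw$ merely repeats what is already inside the proof of Lemma \ref{lemma:betwhard}, and is harmless.
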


\begin{proof}
Note that the betweenness relation $\Betw$ is an orbit of $\End(\Gamma) = \overline{\langle \updownarrow \rangle}$ on $P^3$. Now Theorem \ref{theorem:violate} implies that $\Betw$ is primitively positive definable in $\Gamma$. For the same reason $\bot$ is pp-definable in $\Gamma$. By Lemma \ref{lemma:betwhard} there is pp-definition of $\Low$ in $(\Gamma,u,v)$. By Theorem \ref{theorem:lowhard} we can find a pp-interpretation of $(\{0,1\},\rm 1IN3)$ in $(\Gamma,u,v,w,t)$, where $w,t$ are two additional constants. Hence $\csp(\Gamma)$ is NP-complete.
\end{proof}

For the case where $\End(\Gamma) = \overline{\langle \circlearrowright \rangle}$, we first need the following lemma:

\begin{lemma}\label{lemma:cyclhard}
Let $c,d$ be two constants in $P$ such that $c<d$. Then there is a pp-interpretation of $(P;\Low)$ in $(P;\cycl,c,d)$
\end{lemma}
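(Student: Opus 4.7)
The plan is to combine the constructions from Lemma \ref{lemma:ppdefpar} with an application of Theorem \ref{theorem:symcanexistence}. Recall from the proof of Lemma \ref{lemma:ppdefpar} that the set $X := \{x \in P : \Cycl(c,x,d)\}$ coincides with $\{x : c<x<d\}$ and is pp-definable in $(P;\Cycl,c,d)$, that $(X;\leq)$ is isomorphic to $(P;\leq)$ via a back-and-forth argument, and that the restrictions of $<$ and $\bot$ to $X$ are pp-definable in $(P;\Cycl,c,d)$. The pp-interpretation of $(P;\Low)$ in $(P;\Cycl,c,d)$ that I will exhibit is $1$-dimensional, with domain $X$, equality as the equivalence, and the above isomorphism as the interpreting map $I\colon X \to P$; the only non-trivial condition to check is that the preimage $\{(x,y,z)\in X^3 : \Low(x,y,z)\}$ of $\Low$ is pp-definable in $(P;\Cycl,c,d)$.

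To verify this, I introduce the auxiliary structure $\Gamma'$ on the domain $X$ whose relations consist of all restrictions to $X^n$ of relations pp-definable in $(P;\Cycl,c,d)$. Under the isomorphism $(X;\leq) \cong (P;\leq)$, $\Gamma'$ identifies with a reduct of $\Po$ expanded by the constants corresponding to $c$ and $d$, in which $<$ and $\bot$ are pp-definable by the first step. Moreover, $\Gamma'$ contains the restriction of $\Cycl$ to $X$, which is pp-definable in $(P;\Cycl,c,d)$ by the quantifier-free formula $\Cycl(x,y,z)\wedge\Cycl(c,x,d)\wedge\Cycl(c,y,d)\wedge\Cycl(c,z,d)$.

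The key observation is that neither $e_<$ nor $e_\leq$ belongs to $\Pol(\Gamma')$. Since every polymorphism of $\Gamma'$ preserves the restriction of $\Cycl$ to $X$, which under the isomorphism corresponds to $\Cycl$ on $P$, it suffices to exhibit tuples on which $e_<$ and $e_\leq$ violate $\Cycl$ in $P$. Taking a triple $(a,b,r)$ with $a<b$ and $r\bot a$, $r\bot b$ (which satisfies $\Cycl$ via the disjunct $x<y\wedge z\bot xy$) together with a triple $(u,w,v)$ with $v<u$ and $w\bot u$, $w\bot v$ (which satisfies $\Cycl$ via the disjunct $z<x\wedge y\bot zx$), the behaviour tables of $e_<$ and $e_\leq$ from Section \ref{sect:lowbotpp} show that the coordinate-wise image under either operation is a $3$-antichain, which does not satisfy $\Cycl$.

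Theorem \ref{theorem:symcanexistence}, applied to the reduct $\Gamma'$, now yields that $\Low$ is pp-definable in $\Gamma'$, equivalently that $\{(x,y,z)\in X^3 : \Low(x,y,z)\}$ is pp-definable in $(P;\Cycl,c,d)$, which completes the construction of the pp-interpretation. The main delicate step is to verify that the proof of Theorem \ref{theorem:symcanexistence} transfers to reducts of $\Po$ with additional constants; this is ensured by the fact that adding finitely many constants to the ordered homogeneous Ramsey structure $(P;\leq,\prec)$ still yields an ordered homogeneous Ramsey structure, so that the canonical-function lemmas (Lemmas \ref{lem:impos1} and \ref{lem:impos2}) apply with $c$ and $d$ added to the constants already considered there.
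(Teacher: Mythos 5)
Your set-up (the pp-definable interval $X=\{x: c<x<d\}$, the isomorphism $(X;\leq)\cong(P;\leq)$, the relativization of pp-formulas to $X$, and the observation that $e_<$ and $e_\leq$ violate $\Cycl$) is exactly the paper's strategy, and those parts are fine. The gap is in the final step, where you apply Theorem \ref{theorem:symcanexistence} to your structure $\Gamma'$. First, the claim that $\Gamma'$ ``identifies with a reduct of $\Po$ expanded by the constants corresponding to $c$ and $d$'' is not meaningful: $c,d\notin X$, so after transporting along $I$ there are no such constants; what you would actually need is that $\Gamma'$ is (isomorphic to) a reduct of $\Po$, which is true but requires an argument you do not give (the restrictions to $X$ of automorphisms of $(\Po,c,d)$ are dense in $\Aut(X;\leq)$, so every $\Aut(\Po,c,d)$-invariant relation on $X$ is $\Aut(X;\leq)$-invariant and hence, by $\omega$-categoricity, parameter-free definable in $(X;\leq)$). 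Second, your substitute justification --- that the proof of Theorem \ref{theorem:symcanexistence} transfers verbatim to reducts expanded by finitely many constants because $(P;\leq,\prec,c,d)$ is still an ordered homogeneous Ramsey structure --- glosses over real obstacles: the very first step of that proof invokes Theorem \ref{theorem:violate} to obtain a \emph{binary} polymorphism violating $\Low$, using that $\Low$ is a union of only two orbits of $\Aut(\Gamma)\supseteq\Aut(\Po)$; once constants are added the orbit count grows and the binary bound is lost. Likewise, Lemmas \ref{lem:impos1} and \ref{lem:impos2} are proved for functions canonical over $(P;\leq,\prec,a,b,c)$ using that the specific sets $B_1,B_2$ are orbits isomorphic to $(P;\leq,\prec)$ whose union is a copy of $\Po$ with a random filter; with extra constants these configurations must be re-chosen and re-verified, which is not mere bookkeeping you can wave through in one sentence.

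The repair is easy and is what the paper does: since $X$, $<_{|X}$ and $\Cycl_{|X}$ are pp-definable in $(P;\Cycl,c,d)$, the map $I$ already gives a pp-interpretation of the \emph{constant-free} reduct $(P;<,\Cycl)$ in $(P;\Cycl,c,d)$; then $\bot$ is pp-definable in $(P;<,\Cycl)$ by Lemma \ref{lemma:ppdefpar}, $e_<$ and $e_\leq$ do not preserve $\Cycl$ (your witness triples are correct), and Theorem \ref{theorem:symcanexistence} applies to $(P;<,\Cycl)$ exactly as stated, yielding a pp-definition of $\Low$ there and hence of the preimage of $\Low$ in $(P;\Cycl,c,d)$. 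Alternatively you could keep your $\Gamma'$ but then you must actually prove the density/orbit argument above showing it is a plain reduct of $\Po$; as written, the appeal to an unproven strengthening of Theorem \ref{theorem:symcanexistence} with constants is a genuine gap.
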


\begin{proof}
Let $X:=\{x\in P:c<x<d\}$. By using back-and-forth argument one can show easily that $(P;<)$ and $(X;<_{|X})$  are isomorphic. We first show that $X$ (as a unary predicate) and $<_{|X}$ are pp-definable in $(P;\cycl,c,d)$. It is easy to verify that the set $X$ can be defined in $(P;\cycl,c,d)$ by $\phi(x):=\cycl(c,x,d)$ and that $x<_{|X}y \leftrightarrow \phi(x)\wedge \phi(y)\wedge \cycl(c,x,y)$. Now a pp-interpretation of $(P;<,\cycl)$ in $(P;\cycl,c,d)$ is simply given by the identity on $X$.

By Lemma \ref{lemma:ppdefpar} we have that $\bot$ is pp-definable in $(P;<,\cycl)$. It is easy to verify that $e_<$ and $e_\leq$ do not preserve $\Cycl$. Therefore, by Theorem \ref{theorem:symcanexistence}, $\Low$ is pp-definable in $(P;<,\cycl)$, which concludes the proof of the Lemma.
\end{proof}

\begin{theorem} \label{theorem:cyclhard}
Let $\Gamma$ be a reduct of $\Po$ such that $\End(\Gamma) = \overline{\langle \circlearrowright \rangle}$. Then there are constants $a,b,c,d \in P$ such that $(\{0,1\},\rm 1IN3)$ is pp-interpretable in $(\Gamma, a,b,c,d)$. Hence $\csp(\Gamma)$ is NP-complete.
\end{theorem}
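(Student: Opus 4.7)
The plan is to follow the template established in Theorem \ref{theorem:betwhard}. First I would establish that $\Gamma$ is a model-complete core: by Proposition \ref{theorem:monoids}, since $\End(\Gamma) = \overline{\langle \circlearrowright \rangle} = \End(P;\Cycl)$ consists entirely of self-embeddings (see the proof of Lemma \ref{lemma:relations}~(3)), $\End(\Gamma)$ contains no constant function, no $g_<$ and no $g_\bot$, so case (4) of the proposition applies. Next I would show that $\Cycl$ is pp-definable in $\Gamma$. Since $\updownarrow$ does not preserve $\Cycl$ while $\langle \circlearrowright \rangle$ consists of automorphisms of $(P;\Cycl)$, the group of invertible elements of $\End(\Gamma)$ is exactly $\langle \circlearrowright \rangle$, so $\Aut(\Gamma) = \langle \circlearrowright \rangle$. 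The remark preceding Lemma \ref{lemma:ppdefpar} tells us that this group acts transitively on $\Cycl$, so $\Cycl$ is a single $\Aut(\Gamma)$-orbit on $P^3$. Theorem \ref{theorem:violate} with $m = 1$ then implies that any polymorphism of $\Gamma$ violating $\Cycl$ would produce a unary violator, but every endomorphism of $\Gamma$ preserves $\Cycl$ by definition. By the theorem of Bodirsky and Ne\v{s}et\v{r}il, $\Cycl$ is therefore pp-definable in $\Gamma$.

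Once this is in place, I would pick constants $c, d \in P$ with $c < d$ and invoke Lemma \ref{lemma:cyclhard} to obtain a pp-interpretation of $(P;\Low)$ in $(\Gamma, c, d)$. The interpretation is given by the identity map on the pp-definable subset $X := \{x \in P : \Cycl(c, x, d)\}$, which is isomorphic to the random partial order; inside $X$ I can therefore choose elements $a, b$ with $a \bot b$ and add them as constants. This extends the interpretation to one that pp-interprets $(P;\Low, a, b)$ in $(\Gamma, a, b, c, d)$. Theorem \ref{theorem:lowhard} then yields a pp-interpretation of $(\{0,1\}; \mathrm{1IN3})$ in $(P;\Low, a, b)$, and composing these two pp-interpretations gives a pp-interpretation of $(\{0,1\}; \mathrm{1IN3})$ in $(\Gamma, a, b, c, d)$.

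To finish, iterating Theorem \ref{theorem:coreconst} four times (using that $\Gamma$ is a model-complete core) shows that $\csp(\Gamma)$ and $\csp(\Gamma, a, b, c, d)$ have the same polynomial-time complexity. NP-hardness of $\csp(\Gamma, a, b, c, d)$ follows from the pp-interpretation of $\mathrm{1IN3}$, and $\csp(\Gamma)$ is trivially in NP as a Poset-SAT problem, so $\csp(\Gamma)$ is NP-complete. The only step requiring real care is verifying that $\Cycl$ is a single orbit of $\Aut(\Gamma)$; everything else is routine assembly of Lemmas \ref{lemma:relations}, \ref{lemma:cyclhard} and Theorems \ref{theorem:lowhard}, \ref{theorem:violate}, \ref{theorem:coreconst}.
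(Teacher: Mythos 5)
Your proposal is correct and follows essentially the same route as the paper's proof: establish that $\Cycl$ is pp-definable in $\Gamma$ via Theorem \ref{theorem:violate} (since $\Cycl$ is a single orbit and every endomorphism in $\overline{\langle \circlearrowright \rangle}$ preserves it), then chain Lemma \ref{lemma:cyclhard} with Theorem \ref{theorem:lowhard}, and conclude NP-completeness using that $\Gamma$ is a model-complete core so adding the constants does not change the complexity. Your write-up merely makes explicit some steps (the orbit/automorphism-group argument, the composition of pp-interpretations, iterating Theorem \ref{theorem:coreconst}) that the paper leaves implicit.
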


\begin{proof}
The cyclic order relation $\cycl$ is an orbit of $\End(\Gamma) = \overline{\langle \circlearrowright \rangle}$ on $P^3$. So Theorem \ref{theorem:violate} implies that $\Cycl$ is primitively positive definable in $\Gamma$. By Lemma \ref{lemma:cyclhard} there is pp-definition of $\Low$ in $(\Gamma,c,d)$ with $c<d$. By Theorem \ref{theorem:lowhard} we can find a pp-interpretation of $(\{0,1\},\rm 1IN3)$ in $(\Gamma,a,b,c,d)$, where $a,b$ are two additional constants. Hence $\csp(\Gamma)$ is NP-complete.
\end{proof}

In the following, we prove the NP-hardness of $\csp(P;\sept)$ by using the same proof idea as the proof of NP-hardness of $\csp(P;\cycl)$ in Section \ref{sect:cycl}.

\begin{lemma} \label{lemma:septhard}
Let $c,d,u$ be constants in $P$ such that $c<d<u$. Then $(P;\Low)$ has a pp-interpretation in $(P;\Sept,c,d,u)$.
\end{lemma}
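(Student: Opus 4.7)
The approach mirrors the proof of Lemma \ref{lemma:cyclhard}. The plan is to construct a one-dimensional pp-interpretation of $(P;\Low)$ inside $(P;\Sept, c, d, u)$ whose domain is the pp-definable interval $X := \{x \in P : c < x < d\}$, which is isomorphic to $(P;\leq)$ by a back-and-forth argument. The additional constant $u$, lying above $d$, serves to orient $\Sept$: plugging $u$ into one of the four arguments of $\Sept$ forces one of its two disjuncts and thereby singles out a cyclic direction.

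First, I would pp-define $X$ by $\psi(x) := \Sept(c,x,d,u)$. A short case analysis of the two disjuncts in the definition of $\Sept$ shows $\psi(x) \Leftrightarrow c < x < d$: for such $x$ all four $\Cycl$-conjuncts of the first disjunct hold, while the second disjunct always fails since $\Cycl(u,d,c)$ is false when $c < d < u$. Any other position of $x$ breaks at least one conjunct in each disjunct.

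Second, I would pp-define the order $<$ on $X$, for instance via $\Sept(c,x,y,u)$ whose case analysis again reduces to $x < y$ on $X \times X$, and then pp-define the incomparability $\bot$ on $X$ (or equivalently the cyclic relation $\Cycl$ on $X$, from which $\bot$ is then recovered by Lemma \ref{lemma:ppdefpar}). A natural candidate is an existentially quantified formula in $\Sept$ that uses witnesses in $P \setminus X$; one exploits that $\Sept$ encodes a cyclic relation on 4-tuples of which $\Cycl$ is the 3-ary shadow once one argument is pinned down consistently. Once the interpretation realizes a reduct $\Gamma$ of $\Po$ with $<$ and $\bot$ pp-definable on $X$, I would invoke Theorem \ref{theorem:symcanexistence}: since neither $e_<$ nor $e_\leq$ preserves $\Sept$ (a linearly ordered 4-tuple and its reverse both satisfy $\Sept$, but their componentwise image under either function is an antichain, which violates $\Sept$), $\Low$ is pp-definable in $\Gamma$, completing the pp-interpretation.

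The main technical obstacle is the pp-definition of $\bot_X$ (or $\Cycl_X$) from $\Sept$ and the constants: formulas in which all arguments are fixed among $\{c,d,u,x,y\}$ tend to encode only the order $<$ or the betweenness relation $\Betw$ on $X$. Extracting genuine incomparability information requires a more elaborate existentially quantified formula with auxiliary witnesses; this is the analogue, for $\Sept$, of the pp-definition of $\bot$ in $(P;<,\Cycl)$ given by Lemma \ref{lemma:ppdefpar}.
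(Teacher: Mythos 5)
Your set-up is essentially the paper's: the paper also interprets $(P;\Low)$ on an open interval cut out by two of the three constants (it uses $X=\{x: d<x<u\}$, you use $(c,d)$; this difference is immaterial), pp-defines the interval and the restricted order by substituting constants into $\Sept$, and then hands the problem over to the $\Cycl$/$\Low$ machinery (Lemma \ref{lemma:cyclhard}, Theorem \ref{theorem:symcanexistence}). Your concrete formulas are fine: $\Sept(c,x,d,u)$ does define $\{x: c<x<d\}$ (the second disjunct dies because $\Cycl(u,d,c)$ fails), $\Sept(c,x,y,u)$ does define $<$ on that interval, and $e_<$, $e_\leq$ indeed violate $\Sept$.

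The genuine gap is the step you explicitly defer: you never exhibit a pp-definition of $\bot$ (equivalently of $\Cycl$) on the interval, only the hope that ``a more elaborate existentially quantified formula with auxiliary witnesses'' exists. That step is the entire content of the lemma. With only $<_{|X}$ and $\Betw$-type relations on $X$ the argument cannot be closed: Theorem \ref{theorem:symcanexistence} is only applicable to reducts in which \emph{both} $<$ and $\bot$ are pp-definable, and without incomparability nothing hard has been interpreted (for instance $(P;<,\Betw)$ still has $g_<$ as an endomorphism, so $\bot$ and hence $\Low$ are not pp-definable there). The paper fills exactly this hole with one further substitution instance: on $X=\{x:d<x<u\}$ it takes $\phi(x)\land\phi(y)\land\phi(z)\land\Sept(c,x,y,z)$, using the remaining constant $c$ lying strictly below the interval, as the definition of the missing ternary relation, and then concludes via Lemma \ref{lemma:cyclhard}. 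Note that your own (correct) observation --- that plugging a constant comparable to all of $X$ into $\Sept$ tends to collapse each conjunct $\Cycl(c,\cdot,\cdot)$ or $\Cycl(\cdot,\cdot,c)$ to a strict comparison, yielding only $<$ or $\Betw$ on $X$ --- shows this step is delicate and applies to that formula as well, so it must be verified with care rather than waved at. As submitted, your proposal is incomplete precisely at this point.
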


\begin{proof}
Let $X:=\{x\in P:d<x<u\}$. By using a back-and-forth argument, one can show easily that $(X;\leq)$ and $\mP$ are isomorphic. Similarly as in the proof of Theorem \ref{theorem:cyclhard}, $X$ and $<_{|X}$ are pp definable in $(P;\sept,c,d,u)$ as follows.

The set $X$ can be defined by the formula $\phi(x):=\sept(c,d,x,u)$, and $<_{|X}$ can be defined by $x<_{|X}y:\Leftrightarrow \phi(x)\wedge \phi(y)\wedge \sept(c,d,x,y)$. Also $\Cycl(x,y,z)_{|X}$ can be defined by the primitive positive formula $\phi(x)\wedge \phi(y) \wedge \phi(z) \land \sept(c,x,y,z)$

So a pp-interpretation of $(P;<,\Cycl)$ in $(P;\Sept,c,d,u)$ is simply given by the identity, restricted to $X$. By Lemma \ref{lemma:cyclhard}, $\Low$ is pp-definable in $(P;<,\Cycl)$, which concludes the proof of the Lemma.
\end{proof}

\begin{theorem} \label{theorem:septhard}
Let $\Gamma$ be a reduct of $\Po$ such that $\End(\Gamma) = \overline{\langle \updownarrow, \circlearrowright \rangle}$. Then there are constants $a,b,c,d,u \in P$ such that $(\{0,1\},\rm 1IN3)$ is pp-interpretable in $(\Gamma, a,b,c,d,u)$. Hence $\csp(\Gamma)$ is NP-complete.
\end{theorem}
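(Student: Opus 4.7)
The plan is to mirror the proofs of Theorems \ref{theorem:betwhard} and \ref{theorem:cyclhard} almost verbatim, using Lemma \ref{lemma:septhard} in place of Lemma \ref{lemma:betwhard} or Lemma \ref{lemma:cyclhard}. First I would argue that $\Sept$ is primitive positive definable in $\Gamma$. As recalled in the paragraph preceding Lemma \ref{lemma:ppdefpar}, $\Sept$ is a single orbit of $\langle \updownarrow,\circlearrowright\rangle$ on $P^4$, and by Theorem \ref{theorem:groups} together with our hypothesis we have $\Aut(\Gamma)=\langle \updownarrow,\circlearrowright\rangle$, so $\Sept$ is a single orbit of $\Aut(\Gamma)$ on $P^4$. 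If some polymorphism of $\Gamma$ violated $\Sept$, then Theorem \ref{theorem:violate} would yield a \emph{unary} polymorphism of $\Gamma$ violating $\Sept$, contradicting $\End(\Gamma)=\overline{\langle \updownarrow,\circlearrowright\rangle}\subseteq \End(P;\Sept)$ from Lemma \ref{lemma:relations}(4). Therefore every polymorphism of $\Gamma$ preserves $\Sept$, and by the Bodirsky--Ne{\v{s}}et{\v{r}}il theorem $\Sept$ is pp-definable in $\Gamma$.

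Next I would pick constants $c,d,u\in P$ with $c<d<u$. Since $\Sept$ is pp-definable in $\Gamma$, every relation that is pp-definable in $(P;\Sept,c,d,u)$ is also pp-definable in $(\Gamma,c,d,u)$. In particular Lemma \ref{lemma:septhard} gives a pp-interpretation of $(P;\Low)$ in $(\Gamma,c,d,u)$.

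Finally I would fix any two elements $a,b\in P$ with $a\bot b$. Composing the pp-interpretation of $(\{0,1\};\rm 1IN3)$ in $(P;\Low,a,b)$ furnished by Theorem \ref{theorem:lowhard} with the pp-interpretation of $(P;\Low)$ in $(\Gamma,c,d,u)$ produces a pp-interpretation of $(\{0,1\};\rm 1IN3)$ in $(\Gamma,a,b,c,d,u)$. NP-completeness of $\csp(\Gamma)$ then follows: membership in NP is clear (as for every Poset-SAT problem), and NP-hardness follows from the reduction supplied by pp-interpretations together with Theorem \ref{theorem:coreconst}, which applies because $\Gamma$ is a model-complete core by Proposition \ref{theorem:monoids} (case~(4) of the dichotomy). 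There is no real obstacle here; the only thing to double-check is that $\Sept$ is indeed a single $\Aut(\Gamma)$-orbit on $P^4$ so that the one-orbit case of Theorem \ref{theorem:violate} yields a unary witness, but this was already established when $\Sept$ was introduced.
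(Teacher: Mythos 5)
Your proposal is correct and follows essentially the same route as the paper's proof: pp-definability of $\Sept$ via Theorem \ref{theorem:violate} (since $\Sept$ is a single orbit, in fact on $P^4$, not $P^3$ as the paper misstates), then Lemma \ref{lemma:septhard} to get $\Low$ with constants $c<d<u$, then Theorem \ref{theorem:lowhard} with two further constants, and Theorem \ref{theorem:coreconst} since $\Gamma$ is a model-complete core. You merely spell out the unary-witness and constant-addition steps that the paper leaves implicit.
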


\begin{proof}
The relation $\Sept$ is an orbit of $\End(\Gamma) = \overline{\langle \updownarrow, \circlearrowright \rangle}$ on $P^3$. So Theorem \ref{theorem:violate} implies that $\Sept$ is primitively positive definable in $\Gamma$. By Lemma \ref{lemma:septhard} there is pp-definition of $\Low$ in $(\Gamma,c,d,u)$ with  $c<d<u$. By Theorem \ref{theorem:lowhard} we can find a pp-interpretation of $(\{0,1\},\rm 1IN3)$ in $(\Gamma,a,b,c,d,u)$, where $a,b$ are two additional constants. Hence $\csp(\Gamma)$ is NP-complete.
\end{proof}

\section{Main Results} \label{sect:mainresult}

In this section we complete the proof of the complexity dichotomy for the Poset-SAT($\Phi$) problems that we announced in Theorem \ref{theorem:main} and that we reformulated as CSPs on the reducts of the random poset $\Po$. We have proven an even stronger dichotomy that remains interesting even if P=NP. This dichotomy regards model-theoretic properties of the reducts of $\Po$ and can be also stated in terms of universal-algebra by what we saw in Section \ref{sect:topclones}. We will phrase it in Theorem \ref{theorem:algdic}.

\subsection{An algebraic dichotomy} \label{sect:algdic}

Let $\Gamma$ be a reduct of $\Po$ and $\Delta$ be its model-complete core. Throughout this paper we have studied the question whether there is a pp-interpretation of the structure $(\{0,1\};\rm 1IN3)$ in $\Delta$, extended by finitely many constants or not.\\
By Theorem \ref{theorem:topbirk} and Theorem \ref{theorem:doubleshrink} we know that this fact can be elegantly described with the help of topological clones. We sum up our results and show that - for reducts of the random poset - we can also give an additional characterization by weak near unanimity polymorphisms (modulo endomorphism). First we are going to look in detail at the case, where $<$ and $\bot$ are pp-definable.

\begin{lemma} \label{lemma:algdic}
Let $\Gamma$ be a reduct of $\Po$ in which $<$ and $\bot$ are pp-definable. Then the following are equivalent:
\begin{enumerate}
\item There is a binary $f \in \Pol(\Gamma)$ which is not dominated.
\item The relation $\Low$ is not pp-definable in $\Gamma$.
\item $e_<$ or $e_\leq$ is a polymorphism of $\Gamma$.
\item There is a binary $f \in \Pol(\Gamma)$ and endomorphisms $e_1,e_2 \in \End(\Gamma)$ such that 
\[ e_1(f(x,y)) = e_2(f(y,x)) \]
\item For all $c_1,\ldots,c_n \in \Gamma$ there is no clone homomorphism from $\Pol(\Gamma,c_1,\ldots,c_n)$ onto $\prcl$.
\item For all $c_1,\ldots,c_n \in \Gamma$ there is no continuous clone homomorphism from $\Pol(\Gamma,c_1,\ldots,c_n)$ onto $\prcl$. 
\item There is no pp-interpretation of $(\{0,1\}; \rm1IN3)$ in any expansion of $\Gamma$ by finitely many constants.
\end{enumerate}
\end{lemma}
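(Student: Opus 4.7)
My plan is to leverage the already-established equivalences (1)--(3) together with the algebraic characterizations (5)--(7) from Bodirsky--Pinsker, and to bridge the two chains via condition (4). Specifically, (1)$\Leftrightarrow$(2) will be Lemma~\ref{lemma:lowdominated} and (2)$\Leftrightarrow$(3) will be Theorem~\ref{theorem:symcanexistence}. Before invoking Theorems~\ref{theorem:topbirk} and~\ref{theorem:doubleshrink} for (5)$\Leftrightarrow$(6)$\Leftrightarrow$(7), I would first point out that $\Gamma$ is itself a model-complete core under our hypotheses: $\Aut(\Gamma)$ must preserve the pp-definable relations $<$ and $\bot$, so $\Aut(\Gamma) \subseteq \Aut(P;<,\bot) = \Aut(\Po)$, and combined with $\End(\Gamma) = \overline{\Aut(\Po)}$ from Lemma~\ref{lemma:relations}(1) this yields $\End(\Gamma) = \overline{\Aut(\Gamma)}$.

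The main step will be (3)$\Rightarrow$(4). Fix $f \in \{e_<, e_\leq\}$ in $\Pol(\Gamma)$. Both functions are injective, and inspecting their behaviour tables shows that the output type on input $(p,q)$ coincides with the output type on $(q,p)$ for every pair of types. Consequently the map $\phi \colon f(x,y) \mapsto f(y,x)$ is well-defined on the image of $f$ and is a partial isomorphism of $\Po$. By homogeneity, every finite restriction of $\phi$ extends to an automorphism $\alpha_F \in \Aut(\Po)$. Enumerating $P = \{p_1, p_2, \ldots\}$ and extracting via a diagonal argument a subsequence along which $\alpha_{F_n}(p_i)$ stabilizes for each $i$, I obtain in the pointwise limit a function $e_1 \in \overline{\Aut(\Po)} = \End(\Gamma)$ satisfying $e_1(f(x,y)) = f(y,x)$ for all $x, y$; setting $e_2 = \mathrm{id}$ gives (4).

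The remaining implications are short. For (4)$\Rightarrow$(1): if the witness $f$ were dominated, say on the first argument, then for $x < x'$ and $y \bot y'$ we would have $f(x,y) < f(x',y')$ while $f(y,x) \bot f(y',x')$; since $e_1, e_2$ preserve both $<$ and $\bot$, applying pseudo-commutativity at $(x,y)$ and $(x',y')$ would force the same pair of images to be simultaneously $<$-related and $\bot$-related, a contradiction. For (4)$\Rightarrow$(5): composing $f$ with a suitable $\alpha \in \Aut(\Po)$ (available because the endomorphism $x \mapsto f(x,x)$ of the model-complete core preserves types) and adjusting $e_1, e_2$ accordingly, I may assume $f \in \Pol(\Gamma, \bar c)$ and $e_1, e_2 \in \End(\Gamma, \bar c)$. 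Any clone homomorphism $\xi \colon \Pol(\Gamma, \bar c) \to \prcl$ would then force $\xi(f)(x,y) = \xi(f)(y,x)$ via $\xi(e_1) = \xi(e_2) = \mathrm{id}$, which is impossible for a binary projection on $\{0,1\}$. Finally, to close the cycle via (7)$\Rightarrow$(2): if $\Low$ were pp-definable in $\Gamma$, Theorem~\ref{theorem:lowhard} would supply a pp-interpretation of $(\{0,1\};\mathrm{1IN3})$ in $(\Gamma, a, b)$ for any $a \bot b$, contradicting (7).

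The main obstacle will be (3)$\Rightarrow$(4): the symmetry of the $e_<$ and $e_\leq$ behaviour tables is only orbit-level information, so promoting it to a pointwise identity $e_1(f(x,y)) = f(y,x)$ requires extending a partial isomorphism with infinite domain to an element of $\overline{\Aut(\Po)}$. The back-and-forth / pointwise compactness argument on the countable set $P$ handles this cleanly, provided one carefully verifies that $\phi$ respects each of $<, >, \bot, =$ between arbitrary image pairs, which is exactly the symmetry the tables express.
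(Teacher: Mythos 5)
Your overall architecture is the same as the paper's: (1)$\leftrightarrow$(2) is Lemma \ref{lemma:lowdominated}, (2)$\leftrightarrow$(3) is Theorem \ref{theorem:symcanexistence}, (4)$\Rightarrow$(5) goes via moving the identity to an expansion by constants, (5)--(7) come from Theorem \ref{theorem:doubleshrink} (and your preliminary observation that $\Gamma$ is itself a model-complete core, since $\End(\Gamma)\subseteq\End(P;<,\bot)=\overline{\Aut(\Po)}$, is a correct and useful point the paper leaves implicit), and (7)$\Rightarrow$(2) is the contraposition of Theorem \ref{theorem:lowhard}; your extra implication (4)$\Rightarrow$(1) is also correct. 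The key observation behind (3)$\Rightarrow$(4) -- that the behaviour tables of $e_<$ and $e_\leq$ are symmetric, so $(x,y)\mapsto f(x,y)$ and $(x,y)\mapsto f(y,x)$ induce the same relations between image pairs -- is exactly the right one.

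However, your execution of (3)$\Rightarrow$(4) has a genuine gap. The diagonal extraction does not work: the automorphisms $\alpha_{F_n}$ stabilize automatically only on the image set $S=f(P^2)$; on points outside $S$ their values range over the infinite discrete set $P$, and no subsequence need converge, since the space $P^P$ with pointwise convergence is not compact. What your argument with $e_2=\mathrm{id}$ really requires is that the isomorphism $\phi\colon f(x,y)\mapsto f(y,x)$, defined on the \emph{infinite} substructure $S$, extends to a total self-embedding of $\Po$. Homogeneity and $\omega$-categoricity only guarantee realization of types over \emph{finite} parameter sets, and the analogous extension statement already fails in $(\mathbb{Q};<)$: an embedding of the induced substructure $\mathbb{Q}\setminus\{0\}$ mapping the negative rationals onto the rationals below $\sqrt{2}$ and the positive rationals onto those above $\sqrt{2}$ extends to no self-embedding of $(\mathbb{Q};<)$, because the image of $0$ would have to be a rational equal to $\sqrt{2}$. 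So taking $e_2=\mathrm{id}$ is unjustified. The repair is the standard two-sided construction: since $f$ and its flip are injective and have identical behaviour, build $e_1$ and $e_2$ \emph{simultaneously} by a forth argument (equivalently, by local closure over finite subsets of $P^2$), at each finite stage choosing a common image for the linked pair $f(a,b)$, $f(b,a)$; the symmetry of the behaviour guarantees the finitely many constraints are consistent, and the limit gives $e_1,e_2\in\overline{\Aut(\Po)}=\End(\Gamma)$ with $e_1(f(x,y))=e_2(f(y,x))$, which is all that (4) demands and is what the paper's terse ``set $f=e_<$ respectively $f=e_\leq$'' implicitly invokes.
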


\begin{proof} \ \\
The equivalences of the points (5)-(7) hold for all $\omega$-categorical structures $\Gamma$ and were discussed in Theorem \ref{theorem:doubleshrink}.\\
(1) $\leftrightarrow$ (2) This is the statement of Lemma \ref{lemma:lowdominated}.\\
(2) $\to$ (3): This is the statement of Theorem \ref{theorem:symcanexistence}.\\
(3) $\to$ (4): Set $f = e_<$ respectively $f = e_\leq$.\\
(4) $\to$ (5): If there are $e_1,e_2, f \in \Pol(\Gamma)$ satisfying the equation $e_1(f(x,y)) = e_2(f(y,x))$ then there are also such polymorphisms fixing finitely many elements $c_1,\ldots,c_n$. This is true for all $\omega$-categorical cores, see Lemma 82 of \cite{BPPPhyloCSP}. It follows that there is no clone homomorphism from $\Pol(\Gamma,c_1,\ldots,c_n)$ onto $\prcl$.\\
(7) $\to$ (2): This follows from the contraposition of Theorem \ref{theorem:lowhard}.
\end{proof}

With Lemma \ref{lemma:algdic} we are now able to show the following Theorem.

\begin{theorem} \label{theorem:algdic}
Let $\Gamma$ be a reduct of $\Po$ and let $\Delta$ be the model-complete core of $\Gamma$. Then the following are equivalent:
\begin{enumerate}
\item There is a binary $f \in \Pol(\Delta)$ and endomorphisms $e_1,e_2 \in \End(\Delta)$ such that 
\[e_1(f(x,y)) = e_2(f(y,x)) \]
or there is a ternary $f \in \Pol(\Delta)$ and endomorphisms $e_1,e_2, e_3 \in \End(\Delta)$ such that
\[e_1(f(x,x,y)) = e_2(f(x,y,x)) = e_3(f(y,x,x)).\]
\item There is a pseudo Siggers polymorphism, i.e. a function $f \in \Pol(\Delta)^{(6)}$ and endomorphism $e_1,e_2 \in \End(\Delta)$ such that
\[e_1(f(x,y,x,z,y,z)) = e_2(f(y,x,z,x,z,y)). \]
\item For all $c_1,\ldots,c_n \in \Delta$ there is no clone homomorphism from $\Pol(\Delta)$ onto $\prcl$.
\item For all $c_1,\ldots,c_n \in \Delta$ there is no continuous clone homomorphism from $\Pol(\Delta)$ onto $\prcl$.
\item There is no pp-interpretation of $(\{0,1\},\rm 1IN3)$ in any expansion of $\Delta$ by finitely many constants.
\end{enumerate}
\end{theorem}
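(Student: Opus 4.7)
The equivalences $(2) \Leftrightarrow (3) \Leftrightarrow (4) \Leftrightarrow (5)$ are immediate from Theorem \ref{theorem:doubleshrink} applied to $\Delta$, which is finite or $\omega$-categorical as a model-complete core. The novel content is the equivalence of $(1)$ with these conditions, which I would establish in both directions.

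The implication $(1) \Rightarrow (2)$ is a direct construction. In the binary case, setting $s(x_1,\ldots,x_6) := f(x_1,x_2)$ yields a pseudo-Siggers, since
\[ e_1(s(x,y,x,z,y,z)) = e_1(f(x,y)) = e_2(f(y,x)) = e_2(s(y,x,z,x,z,y)). \]
In the ternary case, a pseudo-Siggers is extracted from a pseudo-WNU by lifting the classical Mar{\'o}ti--McKenzie derivation of a Siggers operation from a WNU to the pseudo-identity setting, i.e.\ by symmetrising the WNU modulo $\End(\Delta)$ and composing it along the Siggers tableau.

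For $(2) \Rightarrow (1)$, I would proceed by case analysis on $\End(\Delta)$ using Corollary \ref{corollary:monoids}. The three cases where $\End(\Delta) \in \{\overline{\langle \updownarrow \rangle}, \overline{\langle \circlearrowright \rangle}, \overline{\langle \updownarrow, \circlearrowright \rangle}\}$ are ruled out by hypothesis: Theorems \ref{theorem:betwhard}, \ref{theorem:cyclhard}, and \ref{theorem:septhard} supply pp-interpretations of $(\{0,1\};{\rm 1IN3})$ in suitable expansions of $\Delta$ by constants, contradicting $(5)$. Among the remaining cases: if $\Delta$ is trivial, $(1)$ holds via any constant polymorphism; if $\Delta$ is a reduct of $(\omega;=)$ or $(\mQ;<)$, the polymorphism characterisations from \cite{BKeqSAT} and \cite{Temp-SAT} supply a binary injection (in the equality case) or operations such as $\min$, $\max$, $\mathrm{ll}$, $\mathrm{mi}$, $\mathrm{mx}$ (in the temporal case), and $(1)$ follows either directly by commutativity or through the orbit-swapping step below. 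The remaining case is $\End(\Delta) = \overline{\Aut(\Po)}$: here Lemma \ref{lemma:algdic} combined with Theorem \ref{theorem:symcanexistence} forces $e_<$ or $e_\leq$ to lie in $\Pol(\Delta)$.

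In this last case, set $f := e_\leq$ (the case $f := e_<$ is analogous). Since $f$ is an embedding of $(P;\leq)^2$ into $(P;\leq)$, its image $f(P^2)$ is an isomorphic copy of the product poset inside $\Po$, and the coordinate swap $f(x,y) \leftrightarrow f(y,x)$ is a partial automorphism of this image. By homogeneity of $\Po$, every finite restriction of this swap extends to an automorphism of $\Po$, and a standard back-and-forth / pointwise-limit construction produces $\sigma \in \overline{\Aut(\Po)} = \End(\Delta)$ satisfying $\sigma(f(x,y)) = f(y,x)$ for all $x,y \in P$. Taking $e_1 = \mathrm{id}$ and $e_2 = \sigma$ then yields $e_1(f(x,y)) = f(x,y) = \sigma(f(y,x)) = e_2(f(y,x))$, establishing $(1)$. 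The main technical hurdle here is the global extension of the coordinate swap: one needs the image $f(P^2)$ to interact compatibly with the ambient $\Po$-structure under swapping, which follows from the canonicity of $e_\leq$ together with the $\omega$-categoricity and \fraisse{}-limit structure of $\Po$, but must be verified carefully via the back-and-forth.
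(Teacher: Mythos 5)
Your treatment of the $(2)$--$(5)$ block and the exclusion of the monoids $\overline{\langle \updownarrow \rangle}$, $\overline{\langle \circlearrowright \rangle}$, $\overline{\langle \updownarrow,\circlearrowright\rangle}$ matches the paper, but the two places where you depart from it contain genuine gaps. The most serious one is the temporal case. When $\Delta$ is a reduct of $(\Q,<)$ preserved only by operations like $\mathrm{ll}$, $\mathrm{mi}$ or $\mathrm{mx}$ (with $<$ pp-definable in $\Delta$), there is in general \emph{no} binary pseudo-commutative polymorphism at all: if $e_1(f(x,y))=e_2(f(y,x))$ with $e_1,e_2$ order-embeddings and $f$ an injection preserving $<$, comparing $f(a,b)$ and $f(b,a)$ for $a<b$ gives an immediate contradiction. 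So condition $(1)$ can only hold through the ternary pseudo-WNU disjunct, and producing that operation for the $\mathrm{ll}/\mathrm{mi}/\mathrm{mx}$ languages is a nontrivial result; the paper obtains it by citing Theorem 10.1.1 of \cite{manuelhabil}, whereas your ``commutativity or the orbit-swapping step below'' does not apply (the orbit-swapping argument is specific to $e_<,e_\leq$ on $\Po$). Relatedly, your $(1)\Rightarrow(2)$ for the ternary disjunct via ``lifting Mar\'oti--McKenzie to the pseudo setting'' is not a known derivation: pseudo-identities do not compose syntactically because the outer endomorphisms block substitution. The correct (and easy) route is $(1)\Rightarrow(3)$: in an $\omega$-categorical model-complete core, polymorphisms witnessing a pseudo-identity can be chosen to fix any prescribed constants (Lemma 82 of \cite{BPPPhyloCSP}, exactly as used in the proof of Lemma \ref{lemma:algdic}), and a clone homomorphism to $\prcl$ would then turn the pseudo-identity into a genuine identity of projections; then $(3)\Rightarrow(2)$ by Theorem \ref{theorem:doubleshrink}.

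The final orbit-swapping step also does not work as written. Homogeneity of $\Po$ only extends \emph{finite} partial isomorphisms; the coordinate swap is an isomorphism of the \emph{infinite} substructure $e_\leq(P^2)$, and extending it to a total self-embedding $\sigma$ of $\Po$ with $e_1=\mathrm{id}$ requires realizing types over an infinite set, which $\omega$-categoricity does not provide (the analogous statement already fails in $(\Q,<)$). The standard repair is to let both $e_1$ and $e_2$ be nontrivial: amalgamate two copies of $\Po$ over the image of $e_\leq$, identified along the swap, and embed the countable amalgam back into $\Po$ by universality; the two resulting self-embeddings satisfy $e_1(e_\leq(x,y))=e_2(e_\leq(y,x))$. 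In fact you do not need any of this: Lemma \ref{lemma:algdic}, which you already invoke to get $e_<$ or $e_\leq$, contains the binary pseudo-commutativity as its item $(4)$, so in the case $\End(\Delta)=\overline{\Aut(\Po)}$ you can simply cite it, which is what the paper does.
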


\begin{proof}
First of all we remark that the equivalence of the points (2)-(5) holds for all $\omega$-categorical core structures and was discussed in Theorem \ref{theorem:doubleshrink}.

In Theorem \ref{theorem:monoids} we saw that the model-complete core $\Delta$ is either equal to $\Gamma$ or a reduct of $(\Q,<)$ or $(\omega,=)$. 

Suppose the core $\Delta$ is a reduct of $(\Q,<)$ or $(\omega,=)$. We know from the analysis of temporal constraint satisfaction problems that then the statement is true: By Theorem 10.1.1. in \cite{manuelhabil} there is no pp-interpretation of $(\{0,1\},\rm 1IN3)$ in $\Delta$, if and only if an equation $e_1(f(x,x,y)) = e_2(f(x,y,x)) = e_3(f(y,x,x))$ holds in $\Pol(\Delta)$.

So let $\Gamma = \Delta$. By Lemma \ref{lemma:algdic} the equivalence (1)$\leftrightarrow$(4) holds when $<$ and $\bot$ are pp-definable in $\Gamma$. In the remaining cases $\End(\Gamma)$ is equal to $\overline{\langle \updownarrow \rangle}$, $\overline{\langle \circlearrowright \rangle}$ or $\overline{\langle \updownarrow, \circlearrowright \rangle}$ and we have a pp-interpretation of $(\{0,1\};\rm 1IN3)$ in an extension of $\Gamma$ with finitely many constants Theorems \ref{theorem:lowhard}, \ref{theorem:betwhard}, \ref{theorem:cyclhard} and \ref{theorem:septhard}.
\end{proof}

%

\subsection{A complexity dichotomy}

For the complexity of the CSPs of reducts of $\Po$ that are model-complete we have proven the following dichotomy:

\begin{theorem} \label{theorem:compdic}
Let $\Gamma$ be a reduct of $\Po$ in a finite relational language and a model-complete core. Under the assumption P$\neq$NP either
\begin{itemize}
\item one of the relations $\Low$, $\Betw$, $\Cycl$, $\Sept$ is pp-definable in $\Gamma$ and $\csp(\Gamma)$ is NP-complete or
\item $\csp(\Gamma)$ is tractable.
\end{itemize}
\end{theorem}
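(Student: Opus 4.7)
The plan is to do a case analysis on $\Aut(\Gamma)$. Since $\Gamma$ is a model-complete core, $\End(\Gamma)$ equals the topological closure of $\Aut(\Gamma)$, and Theorem \ref{theorem:groups} tells us $\Aut(\Gamma)$ is one of the five closed supergroups of $\Aut(\Po)$. Combined with Lemma \ref{lemma:relations}, this identifies $\End(\Gamma)$ in each case as one of $\overline{\Aut(\Po)}$, $\overline{\langle\updownarrow\rangle}$, $\overline{\langle\circlearrowright\rangle}$, $\overline{\langle\updownarrow,\circlearrowright\rangle}$, or $\overline{\Sym(P)}$. I would dispose of the last case immediately: if $\End(\Gamma) = \overline{\Sym(P)}$, then $\Gamma$ is essentially a reduct of $(P;=)$ and the dichotomy follows from the equality-CSP classification of \cite{BKeqSAT}.

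For $\End(\Gamma) \in \{\overline{\langle\updownarrow\rangle}, \overline{\langle\circlearrowright\rangle}, \overline{\langle\updownarrow,\circlearrowright\rangle}\}$ I would argue uniformly: the corresponding relation $R \in \{\Betw, \Cycl, \Sept\}$ is a union of just one (or a few) orbits of $\Aut(\Gamma)$, and Lemma \ref{lemma:relations} says it is preserved by every endomorphism of $\Gamma$. Theorem \ref{theorem:violate}, applied with $m$ equal to the number of orbits, then yields that $R$ is pp-definable in $\Gamma$. NP-completeness of $\csp(\Gamma)$ follows directly from Theorems \ref{theorem:betwhard}, \ref{theorem:cyclhard}, and \ref{theorem:septhard}, respectively; each of these produces a pp-interpretation of $(\{0,1\};\rm 1IN3)$ in an expansion of $\Gamma$ by finitely many constants, which is legal by Theorem \ref{theorem:coreconst} since $\Gamma$ is a core.

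The substantial case is $\End(\Gamma) = \overline{\Aut(\Po)}$. Here both $<$ and $\bot$ are single orbits of $\Aut(\Gamma)$ that are preserved by every endomorphism of $\Gamma$, so Theorem \ref{theorem:violate} again yields that both relations are pp-definable in $\Gamma$. I would then invoke Theorem \ref{theorem:symcanexistence} to obtain the sharp internal dichotomy: either $e_<$ or $e_\leq$ is a polymorphism of $\Gamma$, giving tractability via the Horn algorithm of Theorem \ref{theorem:horn2}, or else $\Low$ is pp-definable in $\Gamma$, and Theorem \ref{theorem:lowhard} produces a pp-interpretation of $(\{0,1\};\rm 1IN3)$ in an expansion of $\Gamma$ by two constants, witnessing NP-hardness. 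This is the real obstacle of the whole program: the canonization and binary-polymorphism analysis of Sections \ref{sect:lowbotpp}-\ref{sect:cycl}, culminating in Theorem \ref{theorem:symcanexistence}, are what make it possible. The remaining cases of Theorem \ref{theorem:compdic} reduce to recognizing the appropriate pp-definable hard relation and citing the matching NP-hardness reduction, so once Theorem \ref{theorem:symcanexistence} is available the proof is essentially bookkeeping.
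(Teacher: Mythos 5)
Your proposal is correct and takes essentially the same route as the paper: the published proof is just a compressed version of your case analysis, obtaining NP-completeness whenever one of $\Low$, $\Betw$, $\Cycl$, $\Sept$ is pp-definable via Theorems \ref{theorem:lowhard}, \ref{theorem:betwhard}, \ref{theorem:cyclhard}, \ref{theorem:septhard}, and tractability in the remaining case (where $<$ and $\bot$ are pp-definable but $\Low$ is not) via Theorem \ref{theorem:symcanexistence} together with Theorem \ref{theorem:horn2}. You are in fact slightly more explicit than the paper, for instance in invoking Theorem \ref{theorem:violate} to extract the pp-definitions from the monoid pre-classification and in mentioning the $\overline{\Sym(P)}$ case, which the paper's terse proof passes over (and which, strictly speaking, is the equality-language situation handled separately via $g_\bot$ in Corollary \ref{corollary:monoids}).
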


\begin{proof}
If $\Low$, $\Betw$, $\Cycl$ or $\Sept$ is pp-definable in $\Gamma$, the $\csp(\Gamma)$ is NP-complete by the Theorems \ref{theorem:betwhard}, \ref{theorem:septhard}, \ref{theorem:cyclhard} and \ref{theorem:lowhard}.

By Theorem \ref{theorem:monoids} the only remaining case is the one, where $<$ and $\bot$ are pp-definable, but $\Low$ is not. In this case $e_<$ or $e_\leq$ is a polymorphism of $\Gamma$ by Theorem \ref{theorem:symcanexistence}. Theorem \ref{theorem:horn2} then implies that the problem is tractable.
\end{proof}

\begin{corollary}
Let $\Gamma$ be a reduct of $\Po$ in a finite relational language. Under the assumption P$\neq$NP the problem $\csp(\Gamma)$ is either NP-complete or solvable in polynomial time. Further the ``meta-problem'' of deciding whether a given problem $\csp(\Gamma)$ is tractable or NP-complete, is decidable.
\end{corollary}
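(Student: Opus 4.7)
The complexity dichotomy follows by combining Corollary \ref{corollary:monoids} with Theorem \ref{theorem:compdic}. Given a reduct $\Gamma$ of $\Po$, Corollary \ref{corollary:monoids} splits $\csp(\Gamma)$ into one of four cases: trivial (a constant lies in $\End(\Gamma)$, so every instance is accepted), reducible to an equality satisfaction problem on a reduct of $(\omega;=)$, reducible to a temporal satisfaction problem on a reduct of $(\mathbb{Q};<)$, or a model-complete core with $\End(\Gamma)$ equal to the closure of one of $\Aut(\Po)$, $\langle \updownarrow \rangle$, $\langle \circlearrowright \rangle$, $\langle \updownarrow,\circlearrowright \rangle$. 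In the trivial case $\csp(\Gamma)$ is in P; in the equality and temporal cases the dichotomies of \cite{BKeqSAT} and \cite{Temp-SAT} apply; in the remaining case Theorem \ref{theorem:compdic} supplies the dichotomy directly.

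For the decidability of the meta-problem, the plan is to execute this same case analysis algorithmically on an arbitrary input $\Phi$. Every relation in $\Gamma = \Po_\Phi$ is specified by an explicit quantifier-free $\leq$-formula, and since $\Po$ has quantifier elimination, checking whether a particular fixed operation on $P$ preserves such a relation reduces to a finite combinatorial check on finite posets. By Proposition \ref{theorem:monoids} and Lemma \ref{lemma:relations}, $\End(\Gamma)$ is forced to lie in a short explicit list, and each candidate can be detected by checking preservation of the relations of $\Gamma$ under the generators of the corresponding closed monoid (a constant map, $g_<$, $g_\bot$, $\updownarrow$, $\circlearrowright$). These tests fully determine which case of Corollary \ref{corollary:monoids} applies.

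In the equality and temporal sub-cases I would invoke the decidable meta-problems of \cite{BKeqSAT} and \cite{Temp-SAT} on the computable presentations of the core. In three of the four model-complete-core sub-cases the relations $\Betw$, $\Cycl$, $\Sept$ are orbits of $\End(\Gamma)$ and hence, via Theorem \ref{theorem:violate}, automatically pp-definable in $\Gamma$, so Theorem \ref{theorem:compdic} yields NP-completeness without any further check. The only genuinely remaining sub-case is $\End(\Gamma)=\overline{\Aut(\Po)}$, and there Theorem \ref{theorem:symcanexistence} reduces the pp-definability of $\Low$ (and hence the complexity of $\csp(\Gamma)$) to the single question of whether $e_<$ or $e_\leq$ preserves every relation of $\Gamma$ — a finite check on $2$-types of pairs of elements of $(P;\leq,\prec)$. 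The main technical point is to verify that each such preservation test is genuinely effective on the syntactic input $\Phi$; once that bookkeeping is in place, every branch of the decision procedure terminates in finite time, so the meta-problem is decidable.
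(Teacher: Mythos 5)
Your derivation of the complexity dichotomy itself is the same as the paper's: Proposition~\ref{theorem:monoids}/Corollary~\ref{corollary:monoids} splits off the trivial, equality and temporal cases, and Theorem~\ref{theorem:compdic} handles the model-complete cores, with \cite{Temp-SAT} and \cite{BKeqSAT} supplying the remaining dichotomies. For the meta-problem, however, you take a genuinely different route. The paper does not build an explicit algorithm out of preservation tests: it invokes the decidability of primitive positive definability over finitely bounded homogeneous structures (\cite{DecOfDefi}) to decide whether $<$, $\bot$, $\Low$, $\Betw$, $\Cycl$ or $\Sept$ is pp-definable in $\Gamma$, uses Lemma~\ref{lemma:relations} to read off from these answers whether $\Gamma$ is a model-complete core, and then concludes by Theorem~\ref{theorem:compdic} together with Corollary~52 of \cite{Temp-SAT}. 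Your plan — testing whether a fixed finite list of canonical operations (a constant, $g_<$, $g_\bot$, $\updownarrow$, $\circlearrowright$, $e_<$, $e_\leq$) preserves the input relations — is more concrete and avoids the black-box appeal to \cite{DecOfDefi}, and the individual tests are indeed effective, since for a canonical operation preservation of a quantifier-free definable relation reduces to a finite computation on types.

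The gap is that you only address \emph{effectiveness} of these tests ("bookkeeping"), while the real missing step is their \emph{completeness}. Proposition~\ref{theorem:monoids} asserts that $\End(\Gamma)$ contains \emph{some} constant, \emph{some} function $g_<$, or \emph{some} function $g_\bot$ — not that the particular canonical representatives you test are endomorphisms. If $\End(\Gamma)$ contained only a non-canonical chain-collapsing map, your procedure as written would wrongly conclude that case~(4) applies. To close this you need the canonization machinery: by Lemma~\ref{lemma:blackbox} (or Theorem~\ref{theorem:ramsey2}) any such function together with $\Aut(\Po)$ generates a function that is canonical from $(P;\leq,\prec)$ to $(P;\leq)$, and a short behaviour analysis shows that the only consistent canonical behaviour of a $<$-preserving map onto a chain is the linear-extension behaviour (and similarly only the injective antichain behaviour for $g_\bot$), so the fixed representative does lie in $\End(\Gamma)$ and your generator tests really do decide which case of Corollary~\ref{corollary:monoids} applies. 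A second, smaller omission: in the equality and temporal cases you must actually \emph{compute} a quantifier-free presentation of the core over $(\omega;=)$ respectively $(\Q;<)$ before you can feed it to the meta-problem results of \cite{BKeqSAT} and \cite{Temp-SAT}; this again follows from the canonical behaviour of $g_\bot$ respectively $g_<$ (the image of each orbit is determined), but it has to be said. With these two points supplied, your algorithmic route works and is a legitimate alternative to the paper's appeal to \cite{DecOfDefi}.
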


\begin{proof}
By Theorem \ref{theorem:monoids} we know that either $\Gamma$ is a model-complete core or $g_<$ or $g_\bot$ are endomorphisms of $\Gamma$. In the first case the dichotomy holds by Theorem \ref{theorem:compdic}, in the second case $\Gamma$ is homomorphically equivalent to a reduct of $(\Q,<)$ and the dichotomy holds by the result in \cite{Temp-SAT} respectively \cite{BKeqSAT}.

The main result in \cite{DecOfDefi} imply that it is decidable if the relations $<$, $\bot$, $\Low$, $\Betw$, $\Cycl$ or $\Sept$ are pp-definable in $\Gamma$. By Lemma \ref{lemma:relations} the question whether $\Gamma$ is model-complete core or not is then also decidable. By Theorem \ref{theorem:compdic} and Corollary 52 of  \cite{Temp-SAT} we have that the meta-problem is decidable.
\end{proof}

We finish with an algebraic version of our dichotomy that is a direct implication of Theorem \ref{theorem:algdic}:

\begin{corollary}
Let $\Gamma$ be a reduct of $\Po$ in a finite relational language and let $\Delta$ be its model complete core. Under the assumption P$\neq$NP either
\begin{itemize}
\item $\csp(\Gamma)$ is NP-complete and all finite structures are pp-interpretable in $\Delta$, extended by finitely many constant, or
\item $\csp(\Gamma)$ is tractable and the conditions (1)-(6) in Theorem \ref{theorem:algdic} hold. \hfill $\square$

\end{itemize} \end{corollary}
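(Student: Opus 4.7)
The plan is to combine the complexity dichotomy of Theorem~\ref{theorem:compdic} with the algebraic dichotomy of Theorem~\ref{theorem:algdic}, treating separately the case where $\Gamma$ is already a model-complete core and the case where it is not. Since $\csp(\Gamma)$ and $\csp(\Delta)$ coincide up to polynomial time and since pp-interpretability of finite structures is also preserved by homomorphic equivalence, it suffices throughout to argue on $\Delta$.

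First I would invoke Proposition~\ref{theorem:monoids} to split into the cases of Corollary~\ref{corollary:monoids}. If $\End(\Gamma)$ contains a constant, or if $\Delta$ is a reduct of $(\Q;<)$ or $(\omega;=)$, then the complexity dichotomy and the corresponding pp-interpretability statement follow directly from the temporal/equality classifications of \cite{Temp-SAT} and \cite{BKeqSAT}: in those papers the NP-complete cases are precisely characterized by the pp-interpretability of $(\{0,1\};\mathrm{1IN3})$ (equivalently, of all finite structures) in the core after naming finitely many constants, and the tractable cases correspond precisely to the failure of this condition, which in turn matches items (1)--(6) of Theorem~\ref{theorem:algdic}.

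Second, in the remaining case where $\Gamma = \Delta$ is a model-complete core with $\End(\Gamma)$ the closure of $\Aut(\Po)$, $\langle \updownarrow\rangle$, $\langle \circlearrowright\rangle$, or $\langle \updownarrow,\circlearrowright\rangle$, I apply Theorem~\ref{theorem:compdic} to see that $\csp(\Gamma)$ is either NP-complete (because one of $\Low$, $\Betw$, $\Cycl$, $\Sept$ is pp-definable in $\Gamma$) or tractable (because $e_<$ or $e_\leq$ is a polymorphism). For the NP-complete branch, Theorems~\ref{theorem:lowhard}, \ref{theorem:betwhard}, \ref{theorem:cyclhard} and \ref{theorem:septhard} directly furnish a pp-interpretation of $(\{0,1\};\mathrm{1IN3})$ in some $(\Gamma,c_1,\ldots,c_n)$; by the standard fact that from $(\{0,1\};\mathrm{1IN3})$ all finite structures are pp-interpretable (cf.\ Theorem~\ref{theorem:topbirk} and the remarks after Theorem~\ref{theorem:doubleshrink}), composing interpretations gives a pp-interpretation of every finite structure in $(\Delta,c_1,\ldots,c_n)$. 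For the tractable branch, the existence of $e_<$ or $e_\leq$ as a polymorphism is condition (3) of Lemma~\ref{lemma:algdic}, which via the equivalences of that lemma and Theorem~\ref{theorem:algdic} yields all the items (1)--(6) stated there.

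The only mildly delicate point, and where I expect most care is needed, is glueing the two cases together cleanly: one must observe that ``all finite structures pp-interpretable in $(\Delta,\bar c)$'' and ``no continuous clone homomorphism $\Pol(\Delta,\bar c)\to\prcl$'' are each other's negations (by Theorem~\ref{theorem:topbirk} applied to $(\{0,1\};\mathrm{1IN3})$), so that the dichotomy is genuinely exclusive; and that in the reduct-of-$(\Q;<)$ or reduct-of-$(\omega;=)$ case the conditions in Theorem~\ref{theorem:algdic} specialize correctly, which is exactly the content of Theorem~10.1.1 of \cite{manuelhabil} cited in the proof of Theorem~\ref{theorem:algdic}. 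Once these ingredients are assembled, the corollary follows immediately.
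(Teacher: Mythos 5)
Your proposal is correct and follows essentially the route the paper intends: the paper states this corollary as a direct consequence of Theorem~\ref{theorem:algdic} together with the complexity dichotomy of Theorem~\ref{theorem:compdic}, and your case split (core a reduct of $(\Q;<)$ or $(\omega;=)$ versus $\Gamma=\Delta$, with the hardness theorems giving the pp-interpretation of all finite structures via Theorem~\ref{theorem:topbirk}, and $e_<$/$e_\leq$ feeding Lemma~\ref{lemma:algdic} in the tractable branch) is exactly the argument already contained in the proofs of those two theorems. Your extra remarks (e.g.\ on homomorphic equivalence) are harmless, and the only implicit point worth flagging is that Lemma~\ref{lemma:algdic} is invoked in the branch where $<$ and $\bot$ are pp-definable, which is indeed the only tractable core case remaining.
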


\bibliography{poset_sat}

\end{document}